\documentclass[11pt,notitlepage,tightenlines,nofootinbib]{revtex4-1}

\bibliographystyle{unsrt}
\usepackage{graphicx}
\usepackage[latin1]{inputenc}
\usepackage{amsthm}
\usepackage{amssymb}
\usepackage{amsmath}
\usepackage{bbold}
\usepackage{bbm}
\usepackage{nonfloat}
\usepackage{hyperref}
\usepackage{braket}
\usepackage{dsfont}
\usepackage{mathdots}
\usepackage{mathtools}
\usepackage{enumerate}
\usepackage{csquotes}
\usepackage{stmaryrd}
\usepackage[cal=boondox]{mathalfa}
\usepackage{graphicx}
\usepackage{stackengine}
\usepackage{scalerel}
\usepackage[table]{xcolor}
\usepackage{array}
\usepackage{makecell}
\newcolumntype{x}[1]{>{\centering\arraybackslash}p{#1}}
\usepackage{tikz}
\usetikzlibrary{shapes.geometric, arrows, decorations.pathreplacing, angles, quotes}
\usepackage{booktabs}
\usepackage{xfrac}
\usepackage{subcaption}
\usepackage{comment}

\let\savecorresponds\corresponds
\let\corresponds\relax
\let\saveboxplus\boxplus
\let\boxplus\relax
\usepackage{mathabx}
\let\corresponds\savecorresponds
\let\boxplus\saveboxplus

\newtheorem{thm}{Theorem}
\newtheorem*{thm*}{Theorem}
\makeatletter
\newcommand{\setthmtag}[1]{
  \let\oldthethm\thethm
  \newcommand{\thethm}{#1}
  \g@addto@macro\endthm{
    \addtocounter{thm}{-1}
    \global\let\thethm\oldthethm}
  }
\makeatother

\newtheorem{prop}[thm]{Proposition}
\newtheorem*{prop*}{Proposition}
\newtheorem{lemma}[thm]{Lemma}
\newtheorem*{lemma*}{Lemma}
\newtheorem{cor}[thm]{Corollary}
\newtheorem*{cor*}{Corollary}

\newtheorem*{cj*}{Conjecture}
\newtheorem{Def}[thm]{Definition}
\newtheorem*{Def*}{Definition}

\theoremstyle{definition}
\newtheorem{rem}{Remark}
\newtheorem*{rem*}{Remark}

\newtheorem{ex}{Example}

\newcommand{\bb}{\begin{equation}}
\newcommand{\bbb}{\begin{equation*}}
\newcommand{\ee}{\end{equation}}
\newcommand{\eee}{\end{equation*}}
\newcommand*{\coloneqq}{\mathrel{\vcenter{\baselineskip0.5ex \lineskiplimit0pt \hbox{\scriptsize.}\hbox{\scriptsize.}}} =}
\newcommand*{\eqqcolon}{= \mathrel{\vcenter{\baselineskip0.5ex \lineskiplimit0pt \hbox{\scriptsize.}\hbox{\scriptsize.}}}}
\newcommand\floor[1]{\left\lfloor#1\right\rfloor}
\newcommand\ceil[1]{\left\lceil#1\right\rceil}
\newcommand{\texteq}[1]{\stackrel{\mathclap{\scriptsize \mbox{#1}}}{=}}
\newcommand{\textleq}[1]{\stackrel{\mathclap{\scriptsize \mbox{#1}}}{\leq}}

\newcommand{\textgeq}[1]{\stackrel{\mathclap{\scriptsize \mbox{#1}}}{\geq}}
\newcommand{\ketbra}[1]{\ket{#1}\!\bra{#1}}
\newcommand{\ketbraa}[2]{\ket{#1}\!\bra{#2}}
\newcommand{\sumno}{\sum\nolimits}

\newcommand{\G}{\mathrm{\scriptscriptstyle G}}
\newcommand{\smallonen}{\text{\scalebox{0.8}{$1\!/\!n$}}}
\newcommand{\oneN}{\mathcal{N}_{\rho,\, \lambda^{\text{\scalebox{0.8}{$1/n$}}}}}
\newcommand{\att}{\mathcal{E}_{N,\lambda}}
\newcommand{\emp}[1]{{\em{#1}}}

\newcommand{\RR}{{\mathbb{R}}}
\newcommand{\NN}{\mathbb{N}}
\newcommand{\CC}{\mathbb{C}}
\newcommand{\LL}{{L}}
\newcommand{\E}{\mathds{E}}

\DeclareMathAlphabet{\pazocal}{OMS}{zplm}{m}{n}

\DeclareMathOperator{\dom}{Dom}

\newcommand{\lsmatrix}{\left(\begin{smallmatrix}}
\newcommand{\rsmatrix}{\end{smallmatrix}\right)}

\stackMath
\newcommand\xxrightarrow[2][]{\mathrel{%
  \setbox2=\hbox{\stackon{\scriptstyle#1}{\scriptstyle#2}}%
  \stackunder[5pt]{%
    \xrightarrow{\makebox[\dimexpr\wd2\relax]{$\scriptstyle#2$}}%
  }{%
   \scriptstyle#1\,%
  }%
}}

\newcommand{\tends}[2]{\xxrightarrow[\! #2 \!]{\mathrm{#1}}}
\newcommand{\tendsn}[1]{\xxrightarrow[\! n\rightarrow \infty\!]{\mathrm{#1}}}

\def\indic{\operatorname{1\hskip-2.75pt\relax l}}

\def\smallsection#1{\smallskip\noindent\textbf{#1}.}

\stackMath

\makeatletter


\newcommand{\cH}{{\pazocal{H}}}

\newcommand{\cT}{{\pazocal{T}}}
\newcommand{\cB}{{\pazocal{B}}}
\newcommand{\cD}{{\pazocal{D}}}
\newcommand{\cX}{{\pazocal{X}}}
\newcommand{\cY}{{\pazocal{Y}}}

\newcommand{\eps}{{\varepsilon}}

\newcommand{\tr}{\operatorname{Tr}}
\newcommand{\D}{\mathcal{D}}

\usepackage[normalem]{ulem}

\makeatletter
\def\thmhead@plain#1#2#3{%
  \thmname{#1}\thmnumber{\@ifnotempty{#1}{ }\@upn{#2}}%
  \thmnote{ {\the\thm@notefont#3}}}
\let\thmhead\thmhead@plain
\makeatother

\begin{document}

\title{Convergence rates for the quantum central limit theorem}

\author{Simon Becker}
\email{simon.becker@damtp.cam.ac.uk}
\affiliation{Department of Applied Mathematics and Theoretical Physics, Centre for Mathematical Sciences University of Cambridge, Cambridge CB3 0WA, United Kingdom}

\author{Nilanjana Datta}
\email{n.datta@damtp.cam.ac.uk}
\affiliation{Department of Applied Mathematics and Theoretical Physics, Centre for Mathematical Sciences University of Cambridge, Cambridge CB3 0WA, United Kingdom}

\author{Ludovico Lami}
\email{ludovico.lami@gmail.com}
\affiliation{School of Mathematical Sciences and Centre for the Mathematics and Theoretical Physics of Quantum Non-Equilibrium Systems, University of Nottingham, University Park, Nottingham NG7 2RD, United Kingdom}
\affiliation{Institute of Theoretical Physics and IQST, Universit\"{a}t Ulm, Albert-Einstein-Allee 11D-89069 Ulm, Germany}

\author{Cambyse Rouz\'{e}}
\email{rouzecambyse@gmail.com}
\affiliation{Zentrum Mathematik, Technische Universit\"{a}t M\"{u}nchen, 85748 Garching, Germany}

\date{December 2019}

\begin{abstract}
Various quantum analogues of the central limit theorem, which is one of the cornerstones of probability theory, are known in the literature. One such analogue, due to Cushen and Hudson, is of particular relevance for quantum optics. It implies that the state in any single output arm of an $n$-splitter, which is fed with $n$ copies of a centred state $\rho$ with finite second moments, converges to the Gaussian state with the same first and second moments as $\rho$. Here we exploit the phase space formalism to carry out a refined analysis of the rate of convergence in this quantum central limit theorem. For instance, we prove that the convergence takes place at a rate $\mathcal{O}\left(n^{-1/2}\right)$ in the Hilbert--Schmidt norm whenever the third moments of $\rho$ are finite. Trace norm or relative entropy bounds can be obtained by leveraging the energy boundedness of the state. Via analytical and numerical examples we show that our results are tight in many respects. An extension of our proof techniques to the non-i.i.d.\ setting is used to analyse a new model of a lossy optical fibre, where a given $m$-mode state enters a cascade of $n$ beam splitters of equal transmissivities $\lambda^{1/n}$ fed with an arbitrary (but fixed) environment state. Assuming that the latter has finite third moments, and ignoring unitaries, we show that the effective channel converges in diamond norm to a simple thermal attenuator, with a rate $\mathcal{O}\Big(n^{-\frac{1}{2(m+1)}}\Big)$. This allows us to establish bounds on the classical and quantum capacities of the cascade channel. Along the way, we derive several results that may be of independent interest. For example, we prove that any quantum characteristic function $\chi_\rho$ is uniformly bounded by some $\eta_\rho<1$ outside of any neighbourhood of the origin; also, $\eta_\rho$ can be made to depend only on the energy of the state $\rho$.
\end{abstract}

\maketitle

\tableofcontents

\section{Introduction} \label{sec:introduction}

The Central Limit Theorem (CLT) is one of the cornerstones of probability theory. This theorem and its various extensions have found numerous applications in diverse fields including mathematics, physics, information theory, economics and psychology. Any limit theorem becomes more valuable if it is accompanied by estimates for rates of convergence. The Berry--Esseen theorem (see e.g.~\cite{Feller}), which gives the rate of convergence of the distribution of the scaled sum of independent and identically distributed (i.i.d.)\ random variables to a normal distribution, thus provides an important refinement of the CLT.

\medskip
The first results on quantum analogues of the CLT were obtained in the early 1970s by Cushen and Hudson~\cite{Cushen1971}, and Hepp and Lieb~\cite{HL73-1, HL73-2}. The approach of~\cite{HL73-1} was generalised by Giri and von Waldenfels~\cite{GvW78} a few years later. These papers were followed by numerous other quantum versions of the CLT in the context of quantum statistical mechanics~\cite{estat-1, estat-1bis, estat-2, ne-stat-1, ne-stat-2, ne-stat-3,Arous2013,fern2015equivalence,Brandao-Gour}, quantum field theory~\cite{qft-1, qft-2, qft-3}, von Neumann algebras \cite{Goderis1989,Jak10}, free probability~\cite{free}, noncommutative stochastic processes~\cite{accardi1994quantum} and quantum information theory~\cite{qit-2, qit-1, Campbell2013}. For a more detailed list of papers on noncommutative or quantum central limit theorems (QCLT), see for example \cite{Jak10,Lenczewski1995} and references therein. A partially quantitative central limit theorem for unsharp measurements has been obtained in \cite{DD}.

\medskip

\medskip
An important pair of non-commuting observables is the pair $(x,p)$ of canonically conjugate operators, which obey Heisenberg's canonical commutation relations (CCR) $[x,p] = i I$, where $I$ denotes the identity operator.\footnote{Throughout this paper we set $\hbar = 1$.} These observables could be, for example, the position and momentum operators of a quantum particle, or the so-called position and momentum quadratures of a single-mode bosonic field, described in the quantum mechanical picture by the Hilbert space $\cH_1 \coloneqq L^2(\RR)$ -- the space of square integrable functions on $\RR$. The corresponding annihilation and creation operators are constructed as $a\coloneqq (x+ip)/\sqrt2$ and $a^\dag \coloneqq (x-ip)/\sqrt2$. When expressed in terms of $a,a^\dag$, the CCR take the form $[a,a^\dag]=I$. 

\medskip
Quantum states are represented by density operators, i.e.\ positive semi-definite trace class operators with unit trace. A state $\rho$ of a continuous variable quantum system is uniquely identified by its characteristic function, defined for all $z \in \CC$ by $\chi_\rho(z) \coloneqq \tr\big[\rho\, e^{z a^\dagger - z^* a}\big]$. The special class of Gaussian states comprises all quantum states whose characteristic function is the (classical) characteristic function of a normal random variable on $\CC$.\footnote{The characteristic function of a complex-valued random variable $X$ is defined by $\chi_X(z)\coloneqq \E\big[ e^{zX^* - z^* X}\big]$.} Exactly as in the classical case, a quantum Gaussian state is uniquely defined by its mean and covariance matrix.

\medskip
Cushen and Hudson~\cite{Cushen1971} proved a quantum CLT for a sequence of pairs of such canonically conjugate operators $\{(x_n, p_n): n=1,2,\ldots\}$, with each pair acting on a distinct copy of the Hilbert space $\cH_1$. More precisely, they showed that sequences that are stochastically independent and identically distributed, and have finite covariance matrix and zero mean with respect to a quantum state $\rho$ (given by a density operator on $\cH_1$), are such that their scaled sums converge in distribution to a normal limit distribution~\cite[Theorem~1]{Cushen1971}.

\medskip
Their result admits a physical interpretation in terms of a passive quantum optical element known as the $n$-splitter. This can be thought of as the unitary operator $U_{\text{$n$-split}}$ that acts on $n$ annihilation operators of $n$ independent optical modes as $U_{\text{$n$-split}}\, a_j\, U_{\text{$n$-split}}^\dag = \sum_{k} F_{jk} a_k$, where $F_{jk}\coloneqq e^{\frac{2\pi jk}{n}\, i}$ is the discrete Fourier transform matrix. Passivity here means that $U_{\text{$n$-split}}$ commutes with the canonical Hamiltonian of the field, i.e.\ $\left[ U_{\text{$n$-split}},\, \sumno_j a_j^\dag a_j\right]=0$. When $n$ identical copies of a state $\rho$ are combined by means of an $n$-splitter, and all but the first output modes are traced away, the resulting output state is called the {\em{$n$-fold quantum convolution}} of $\rho$, and denoted by $\rho^{\boxplus n}$. This nomenclature is justified by the fact that the characteristic function $\chi_{\rho\, \boxplus\, \sigma}$ of two states $\rho$ and $\sigma$ is equal to the product of the characteristic functions of $\rho$ and $\sigma$, a relation analogous to that satisfied by characteristic functions of convolutions of classical random variables. Observe state $\rho^{\boxplus n}$ can also be obtained as the output of a cascade of $n-1$ beam splitters with suitably tuned transmissivities $\lambda_j = j/(j+1)$ for $j=1,2, \ldots n-1$ (see Figure~\ref{fig1}).

\medskip
Cushen and Hudson's result is that if $\rho$ is a centred state (i.e.~with zero mean) 
and has finite second moments, its convolutions $\rho^{\boxplus n}$ converge to the Gaussian state $\rho_\G$ with the same first and second moments as $\rho$ in the limit $n\to\infty$ (Theorem~\ref{thm:CushHud}). In~\cite[Theorem~1]{Cushen1971}, the convergence is with respect to the weak topology of the Banach space of trace class operators, which translates to pointwise convergence of the corresponding characteristic functions, by a quantum analogue of Levy's lemma that is also proven in~\cite{Cushen1971}. This in turn implies that the convergence actually is with respect to the strong topology, i.e.\ in trace norm (see~\cite{Davies1969}, or~\cite[Lemma~4]{G-dilatable}).

\medskip
In this paper, we focus on the framework proposed by Cushen and Hudson, and provide a refinement of their result by deriving estimates for the associated rates of convergence. We consider a quantum system composed of $m$ modes of the electromagnectic field, each modelled by an independent quantum harmonic oscillator, so that the corresponding Hilbert space becomes $\cH_1^{\otimes m} = \LL^2\left(\RR^m\right)$. The main contribution of this paper consists of estimates on rate of convergence of $\rho^{\boxplus n}$ to the `Gaussification' $\rho_\G$ of $\rho$, obtained under suitable assumptions on $\rho$ -- typically, the finiteness of higher-order moments. In analogy with the classical case, we refer to our Theorems~\ref{thm:QBE'} and~\ref{thm:QBElow} as \emp{quantum Berry--Esseen theorems}. Our estimates are given in the form of bounds on the Schatten $p$-norms (for $p=1$ and $2$) of the difference $(\rho^{\boxplus n} - \rho_\G)$ in the limit of large $n$, as well as bounds on the relative entropy of $\rho^{\boxplus n}$ with respect to $\rho_\G$ in the same limit. 

\medskip
We also show that the assumption of finiteness of the second moments cannot be removed from the Cushen--Hudson theorem. Namely, we construct a simple example of a single-mode quantum state $\sigma$ such that $\tr\big[\sigma\, (a a^\dag)^{1-\delta}\big]$ is finite for all $\delta>0$ (and infinite for $\delta=0$), yet $\sigma^{\boxplus n}$ does not converge to any quantum state as $n\to \infty$.

\medskip
As an application, we propose and study a new model of optical fibre, represented as a cascade of $n$ beam splitters, each with transmissivity $\lambda^{1/n}$ and fed with a fixed environment state $\rho$, which is assumed to have bounded energy and thermal Gaussification. Such a model may be relevant to the mathematical modelisation of a channel running across an integrated optical circuit~\cite{Carolan2015, Rohde2015}. We are able to show that for $n\to \infty$ the cascade channel converges in diamond norm, up to irrelevant symplectic unitaries, to a thermal attenuator channel with transmissivity $\lambda$ and the same photon number as that of the environment state $\rho$. Furthermore, an extension of our results to the non-i.i.d.~setting allows us to bound the rate of convergence in terms of the diamond norm distance. Finally, combining existing continuity bounds on entropies and energy-constrained channel capacities~\cite{Holevo-energy-constrained, Mark-energy-constrained}, obtained by Winter~\cite{tightuniform, winter2017energy} and Shirokov~\cite{shirokov2017tight, shirokov2018energy}, with the known formulae expressing or estimating energy-constrained classical~\cite{Giovadd, Giovadd-CMP} and quantum~\cite{holwer, PLOB, MMMM, Rosati2018, Sharma2018, Noh2019, Noh2020} capacities of thermal attenuator channels, we derive bounds on the same capacities for the cascade channel.

\medskip
Finally, along the way we derive several novel results concerning quantum characteristic functions, which we believe to be of independent interest. First, we prove the simple yet remarkable fact that \textit{convolving any two quantum states (i.e.\ mixing them in a $50:50$ beam splitter) always results in a state with non-negative Wigner function (Lemma~\ref{positive W lemma}).} This allows us to interpret the quantum central limit theorem as a result on classical random variables, in turn enabling us to transfer techniques from classical probability theory to the quantum setting.
Secondly, we derive new decay bounds on the behaviour of the quantum characteristic function both at the origin and at infinity. For instance, we prove that for any $m$-mode quantum state $\rho$ and for any $\varepsilon>0$ there exists a constant $\eta=\eta(\rho,\varepsilon)<1$ such that $|\chi_\rho(z)|\leq \eta(\rho,\varepsilon)$ for all $z\in \CC^m$ with $|z|\geq \varepsilon$ (Proposition~\ref{prop1}). Moreover, we show that such a constant can be made to depend only on the second moments of the state, assuming they are finite (Proposition~\ref{prop:decay-energy}). As an explicit example, consider a single-mode state $\rho$ with mean energy $E$. We then prove that $|\chi_\rho(z)| \leq 1 - \frac{c}{E^{2}}$ for all $z$ with $|z| \geq \frac{C}{\sqrt{E}}$, where $c,C$ are universal constants. Note that any such bound must depend on the energy, as one can construct a sequence of highly squeezed Gaussian states for which the modulus of the characteristic function approaches one at any designated point in phase space (Example~\ref{ex:squeezed-state}).


\begin{figure}
\centering
\begin{subfigure}{\textwidth}
    \includegraphics[width=0.5\textwidth]{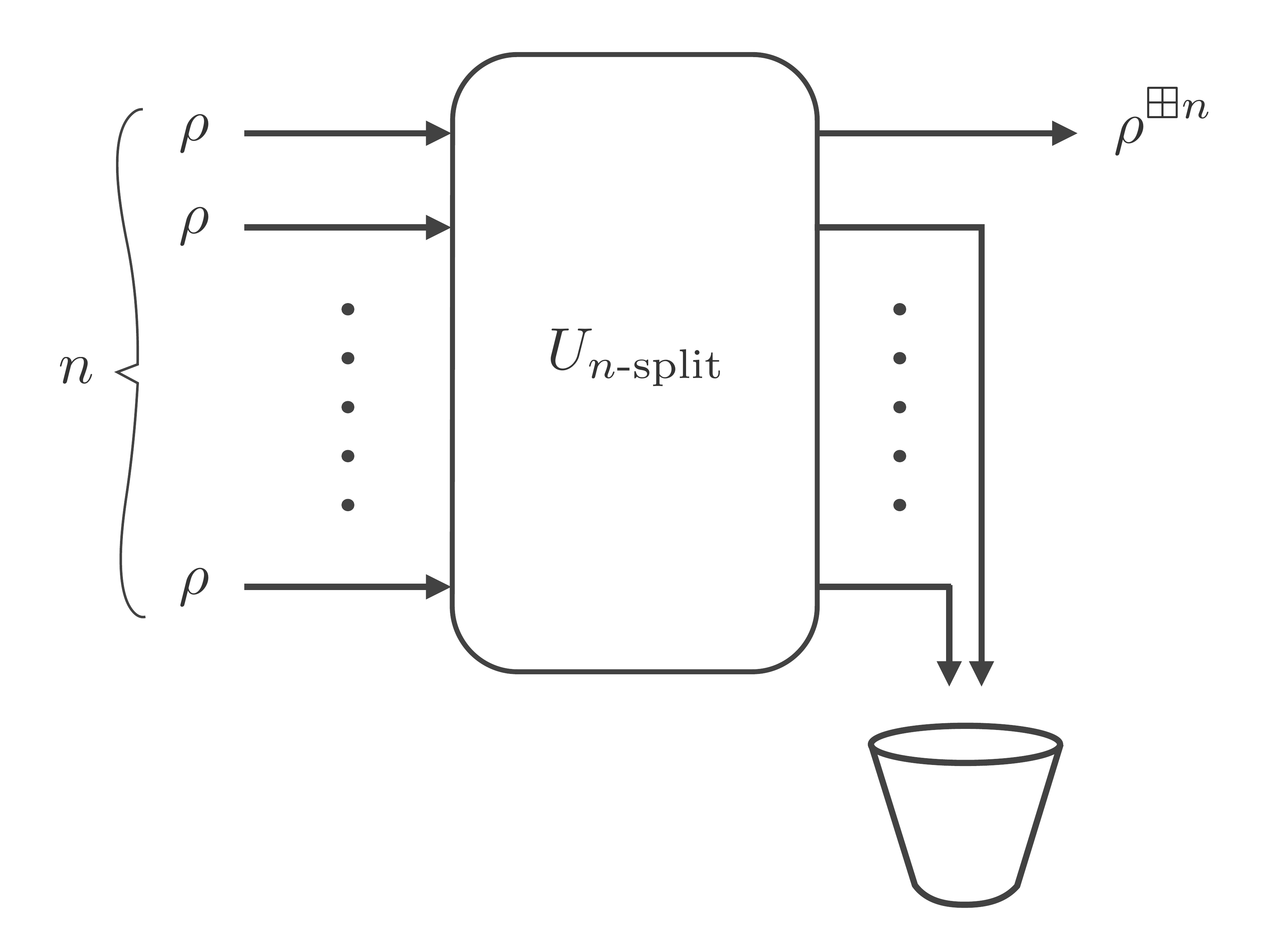}
    \caption{}
    \label{fig1}
\end{subfigure}
\begin{subfigure}{\textwidth}
    \includegraphics[width=0.9\textwidth]{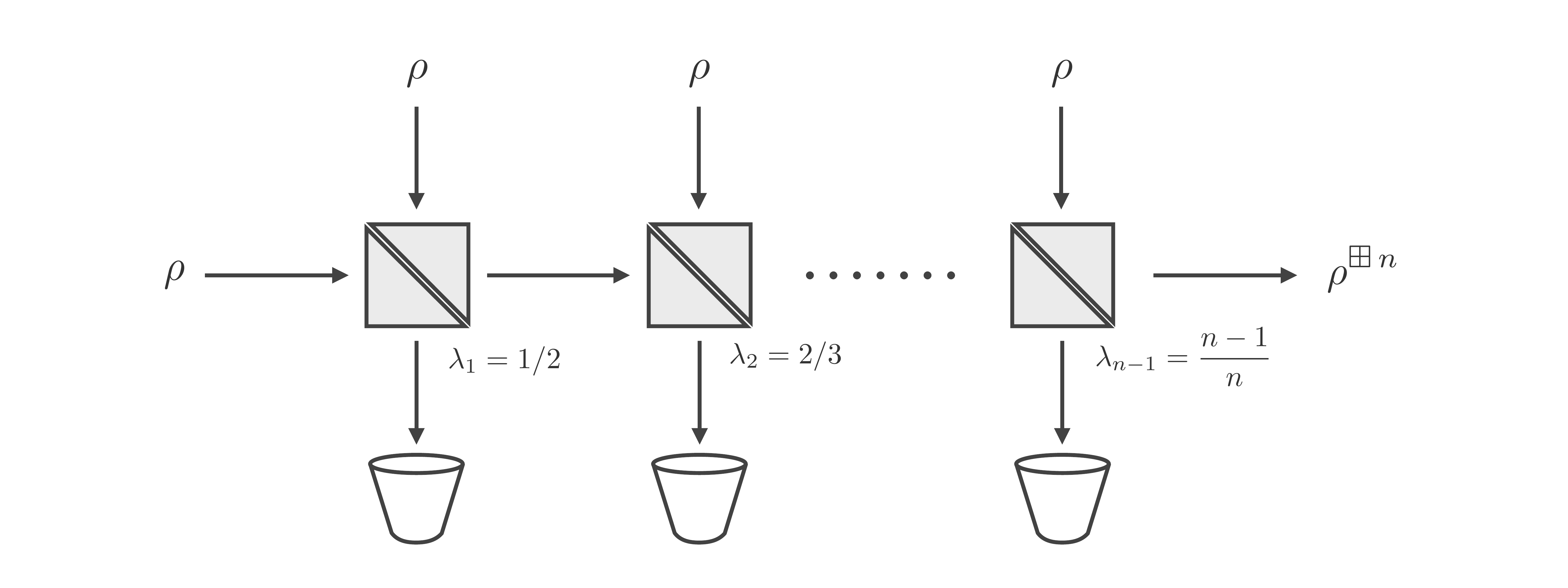}
    \caption{}
    \label{fig2}
\end{subfigure}
    \caption{The $n$-fold convolution $\rho^{\boxplus n}$ of a state $\rho$ can be realised by mixing $n$ copies of it either: (a) in an $n$-splitter; or (b) in a cascade of beam splitters with suitably tuned transmissivities.}
\end{figure}

\smallskip

\noindent
{\textbf{Layout of the paper:}} In Section~\ref{sec:notation} we introduce the notation and definitions used in the paper. In Section~\ref{sec:CH-CLT} we recall the Cushen and Hudson quantum central limit theorem.
Our main results are presented in Section~\ref{sec:main-results}. The rest of the paper is devoted to the proofs of these results. We start with the novel properties of quantum characteristic functions (Section~\ref{sec-proofs:new-results}), which lie at the heart of our approach. Then, in Section \ref{sec-proofs:QCLT} we prove our quantum Berry--Esseen theorems. Section~\ref{sec-proofs:optimality} is devoted to the discussion of the optimality and sharpness of our results. In Section \ref{sec-proofs:cascade} we apply our quantitative non-i.i.d.~extension of the Cushen--Hudson theorem to an optical fibre subject to non-Gaussian environment noise. The paper contains a technical appendix (Appendix \ref{app:moments}) that makes the connection between moments and the regularity of the quantum characteristic function and shows that our definition of moments induces a canonical family of interpolation spaces. 
\medskip

\section{Notation and Definitions} \label{sec:notation}
In this section, we fix the basic notations used in the paper, and introduce the necessary definitions.

\subsection{Mathematical notation} \label{subsec:notation}

Let ${\cH}$ denote a separable Hilbert space, and let $\cB(\cH)$ denote the set of bounded linear operators acting on $\cH$. 
Let $\cD(\cH)$ denote the set of {\it{quantum states}} of a system with Hilbert space $\cH$, that is the set of density operators $\rho$ (positive semi-definite, i.e.~$\rho \ge 0$, trace class operators\footnote{That is, operators $A \in \cB(\cH)$ for which $\|A\|_1 \coloneqq \tr|A| <\infty$.} with unit trace) acting on $\cH$. We denote by $\|\cdot\|_{\mathcal{T}_p(\cH)} \equiv \|\cdot\|_p$ the Schatten $p$-norm, defined as $\|X\|_{p}=\left(\tr|X|^p\right)^{1/p}$. The \emp{Schatten $p$-class} $\cT_p(\cH)$ is the Banach subspace of $\cB(\cH)$ formed by all bounded linear operators whose Schatten $p$-norm is finite. We shall hereafter refer to $\cT_1(\cH)$ as the set of \emp{trace class} operators, to the corresponding norm $\|\cdot\|_1$ as the {trace norm}, and to the induced distance (e.g.\ between quantum states) as the {trace distance}. The case $p=2$ is also special, as the norm $\|\cdot\|_2$ coincides with the \emp{Hilbert--Schmidt norm}.

Let $A,B$ be positive semi-definite operators defined on some domains $\dom(A),\dom(B) \subseteq \cH$. According to \cite[Definition~10.15]{schmuedgen}, we write that $A\geq B$ if and only if $\dom\left(A^{1/2}\right)\subseteq \dom\left(B^{1/2}\right)$ and $\left\|A^{1/2}\ket{\psi}\right\|^2\geq \left\|B^{1/2}\ket{\psi}\right\|^2$ for all $\ket{\psi}\in \dom\left(A^{1/2}\right)$. Now, let $A$ be a positive semi-definite operator, and let $\rho$ be a quantum state with spectral decomposition $\rho=\sum_i p_i \ketbra{e_i}$. We define the \emp{expected value} of $A$ on $\rho$ as
\bb
\tr[\rho A]\coloneqq \sum_{i:\, p_i>0} p_i \left\|A^{1/2}\ket{e_i}\right\|^2 \in \RR_+\cup \{+\infty\}\, ,
\label{expected positive}
\ee
with the convention that $\tr[\rho A]=+\infty$ if the above series diverges or if there exists an index $i$ such that $p_i>0$ and $\ket{e_i}\notin \dom\left(A^{1/2}\right)$. To extend this definition to a generic densely defined self-adjoint operator $X$ on $\cH$, it is useful to consider its decomposition $X=X_+-X_-$ into positive and negative part \cite[Example~7.1]{schmuedgen}. We will say that $X$ \emp{has finite expected value on $\rho$} if $\ket{e_i}\in \dom\big(X_+^{1/2}\big)\cap \dom\big(X_-^{1/2}\big)$ for all $i$ such that $p_i>0$, and moreover the two series $\sum_i p_i \big\|X_\pm^{1/2} \ket{e_i}\big\|^2$ both converge. In this case, we call
\bb
\tr[\rho X]\coloneqq \sum_{i:\, p_i>0} p_i \left\|X_+^{1/2} \ket{e_i}\right\|^2 - \sum_{i:\, p_i>0} p_i \left\|X_-^{1/2} \ket{e_i}\right\|^2
\label{expected}
\ee
the \emp{expected value} of $X$ on $\rho$. Clearly, given two operators $A\geq B\geq 0$, we have that $\tr[\rho A]\geq \tr[\rho B]$.

\medskip
For two real sequences $\left(a_n(\lambda)\right)_n,\, \left(b_n(\lambda)\right)_n$ that depend on some parameter $\lambda$, we write $a_n(\lambda) = \mathcal{O}_\lambda \left(b_n(\lambda)\right)$ if there exists a constant $c_\lambda>0$ that only depends on $\lambda$ such that $|a_n(\lambda)|\leq c_\lambda |b_n(\lambda)|$ holds in the limit $n\to\infty$. We also write $a_n(\lambda) =\mathcal O_{\lambda}\left(b_n(\lambda)^{\infty}\right)$ if for every $N \in \NN$ we have that $a_n(\lambda) =\mathcal O_{\lambda}\left(b_n(\lambda)^{N}\right)$. 

\medskip
For an $n$-linear tensor $A:\times_{i=1}^n \mathbb C^m \rightarrow \mathbb C^k$, we write $A(x^{\times n})\coloneqq A\underbrace{(x,...,x)}_{\text{$n$ times}}$ if the vector we apply the tensor to is the same in every component. For functions $f$, we sometimes abuse the notation by denoting the norm of this function as $\Vert f(z) \Vert$ instead of $\Vert f \Vert.$ We denote with $*$ the entry-wise complex conjugation, with $\intercal$ the standard transposition of vectors, and with $\dag$ the combination of the two.

For partial derivatives with respect to complex variables $z,z^*$ we write $\partial_z$ and $\partial_{z^*}.$ 
Consider an $m$-dimensional multi-index $\alpha = (\alpha_1, \alpha_2,\ldots,\alpha_m)$ with $| \alpha | = \alpha_1 + \alpha_2 + \cdots + \alpha_m$. Then $\partial_{z}^\alpha \coloneqq \partial_{z_1}^{\alpha_1} \partial_{z_2}^{\alpha_2} \ldots \partial_{z_m}^{\alpha_m},$
and analogously for $z^*$. The total derivatives of order $k$ of a function $f:\mathbb C^{m} \rightarrow \mathbb C$ we denote by $D^k f\coloneqq \left( \partial^{\alpha}_z \partial^{\beta}_{z^*}f\right)_{\vert \alpha \vert+\vert \beta \vert =k}.$
We then recall the definition of the Fr\'echet derivative for functions $f:\mathbb{C}^m \rightarrow \mathbb{C}$ such that $D^k f: \mathbb{C}^m \rightarrow \cB(\underbrace{\mathbb{C}^m \times....\times \mathbb{C}^m}_{k \text{ times} },\mathbb{C})$ and therefore
\bb
D^k f(z)\left(v^{(1)},..,v^{(k)}\right) = \sum_{\vert \alpha\vert +\vert \beta \vert = k} \partial^{\alpha}_z\partial^{\beta}_{z^*}f(z) \left(\prod_{\ell=1}^{|\alpha|} v_{j_{\alpha}(\ell)}^{(\ell)}\right) \left( \prod_{\ell=|\alpha|+1}^k v_{j_\beta(\ell-|\alpha|)}^{(\ell)}\right)^*\, ,
\label{Frechet}
\ee
with $j_\alpha(\ell)\coloneqq \min\left\{j\in \{1,\ldots, m\}: \ell\leq \sumno_{j'=1}^j \alpha_{j'}\right\}$. Let $C_0(\CC^m)$ denote the space of continuous functions $f:\CC^m \to \CC$ that tend to zero as $|z| \rightarrow \infty$, where for $z\in \CC^m$ we set
\bb
|z|\coloneqq \sqrt{\sumno_{j=1}^m |z_j|^2}\, .
\label{Euclidean norm}
\ee
We write $C_c^{\infty}(\CC^m)$ to denote the space of smooth and compactly supported functions on $\CC^m$. For some open set $\Omega\subseteq \CC^m$ with closure $\widebar{\Omega}$, a function $f:\widebar{\Omega} \to \CC$, and a non-negative integer $k\in \NN_0$, we denote by $C^k(\widebar{\Omega})$ the space of functions for which the norm 
\bb
\|f\|_{C^k(\widebar{\Omega})} \coloneqq \max_{|\alpha|+|\beta| \leq k} \sup_{z\in\Omega} \left| \partial_z^\alpha \partial_{z^*}^\beta f (z) \right|
\label{Ck norm 1}
\ee
is finite. Here, $\alpha,\beta\in \NN_0^m$ are multi-indices. When $k\geq 0$ is \textit{not} an integer, we define instead
\bb
\|f\|_{C^k(\widebar{\Omega})} \coloneqq \|f\|_{C^{\floor{k}}(\widebar{\Omega})} + \max_{|\alpha|+|\beta| = \floor{k}} \sup_{\substack{z,w \in \Omega, \\ z\neq w }} \frac{\, \left| \partial_z^\alpha \partial_{z^*}^\beta f(z) - \partial_z^\alpha \partial_{z^*}^\beta f (w) \right| \, }{\|z-w\|^{k-\floor{k}}} .
\label{Ck norm 2}
\ee
This extension allows us to consider the normed spaces $C^k\big(\widebar{\Omega}\big)$ for all $k\geq 0$. Typically, we will deal with the case where $\Omega$ is bounded, so that  $C^k\big(\widebar{\Omega}\big)$ is in fact a Banach space. Finally, $L^2(\Omega)$ will denote the space of equivalence classes of measurable functions $f:\Omega \to \CC$ whose $L^2$ norm $ \| f \|^2_{L^2(\Omega)}\coloneqq\int_{\Omega} | f(z) |^2 \ d^{2m}z$ is finite.

\subsection{Definitions}
\subsubsection{Quantum information with continuous variables} \label{subsec:quantum-info-CV}

In this paper, we focus on continuous variable quantum systems. The Hilbert space of a set of $m$ harmonic oscillators, in this context called `modes', is the space $\cH_m\coloneqq \LL^2(\RR^m)$ of square-integrable functions on $\RR^m$. Let $x_j,p_j$ be the canonical position and momentum operators on the $j^{\text{th}}$ mode. The $m$ annihilation and creation operators, denoted by $a_j\coloneqq (x_j+ip_j)/\sqrt2$ and $a_j^\dagger \coloneqq (x_j-ip_j)/\sqrt2$ ($j=1,\ldots, m$), satisfy the commutation relations
\bb
[a_j, a_k]=0\, ,\qquad [a_j,a_k^\dag]= \delta_{jk} I\, ,
\label{CCR}
\ee
where $I$ is the identity on $\cH_m$. An $m$-mode quantum state $\rho$ is said to be \emp{centred} if 
\bb
\tr[\rho\, a_j] \coloneqq \frac{1}{\sqrt2} \left( \tr[\rho\,x_j] + i\tr[\rho\, p_j]\right) \equiv 0\qquad \forall\ j=1,\ldots, m\, ,
\label{centred}
\ee
i.e.\ if all expected values of the canonical operators on $\rho$, defined according to \eqref{expected}, vanish.
For an $m$-tuple of non-negative integers $n = (n_1,\ldots, n_m)\in \NN_0^m$, the corresponding {\em{Fock state}} is defined by $\ket{n}\coloneqq (n_1!\ldots n_m !)^{-1/2} \bigotimes_{j=1}^m \big( a_j^\dag \big)^{n_j} \ket{0}$, where $\ket{0} \in \cH_m$ denotes the (multi-mode) vacuum state. In what follows, we often consider $m=1$.

The (von Neumann) {\em{entropy}} of a quantum state $\rho$ is defined as
\bb
S(\rho) \coloneqq - \tr\left[\rho\log \rho\right] ,
\label{eq:entropy}
\ee
which is well defined although possibly infinite.\footnote{One way to define it is via the infinite sum $S(\rho) = \sum_i (- p_i \log p_i)$, where $\rho=\sum_i p_i \ketbra{e_i}$ is the spectral decomposition of $\rho$. Since all terms of this sum are non-negative, the sum itself can be assigned a well-defined value, possibly $+\infty$.} The \emp{relative entropy} between two states $\rho$ and $\sigma$ is usually written as follows~\cite{Umegaki1962}
\begin{equation} \label{eq:rel-entropy}
    D(\rho\|\sigma) \coloneqq \tr\left[\rho \left(\log \rho - \log \sigma\right)\right] .
\end{equation}
Again, the above expression is well defined and possibly infinite~\cite{Lindblad1973}.\footnote{To define it one considers the infinite sum $D(\rho\|\sigma) \coloneqq \sum_{i,j} \left| \braket{e_i | f_j}\right|^2 \left(p_i \log p_i - p_i \log q_j + q_j - p_i \right)$, where $\rho = \sum_i p_i \ketbra{e_i}$ and $\sigma = \sum_j q_j \ketbra{f_j}$ are the spectral decompositions of $\rho$ and $\sigma$, respectively. As detailed in~\cite{Lindblad1973}, the convexity of $x\mapsto x\log x$ implies that all terms of this sum are non-negative, which makes the expression well defined.} 

\medskip
For two Hilbert spaces $\cH, \cH'$, a quantum channel $\mathcal{N}: \cT_1(\cH) \to \cT_1(\cH')$ is a completely positive, trace-preserving linear map. For a linear map $\mathcal{L}: \cT_1(\cH) \to \cT_1(\cH')$, we define its \emp{diamond norm} as
\begin{equation} \label{eq:diamond}
\left\|\mathcal{L}\right\|_\diamond\coloneqq\sup_{k\in\NN}\sup_{X\in \cT_1(\cH\otimes \mathbb{C}^k)}\,\frac{\left\|(\mathcal{L}\otimes \operatorname{id}_{\mathbb{C}^k})(X)\right\|_1}{\|X\|_1},
\end{equation} 
where the supremum is over all non-zero trace class operators $X$ on $\cH\otimes \CC^k$. 

\medskip
Consider a quantum system with Hilbert space $\cH$, governed by a \emp{Hamiltonian} $H$, which is taken to be a positive (possibly unbounded) operator on $\cH$. The energy of a state $\rho \in \cD(\cH)$ is the quantity $\tr[\rho H]\in \RR_+\cup \{+\infty\}$ defined as in \eqref{expected positive}.

Given two Hilbert spaces $\cH$ and $\cH'$, a Hamiltonian $H$ on $\cH$, and some energy bound $E > \inf_{\ket{\psi}\in \cH} \braket{\psi|H|\psi}\geq 0$, the corresponding \emp{energy-constrained classical capacity} of a channel $\mathcal{N}: \cT_1(\cH) \to \cT_1(\cH')$ is given by~\cite{Holevo1979, proto-HSW, H-Schumacher-Westmoreland, Holevo-S-W, Holevo-energy-constrained}
\bb
\begin{aligned}
\mathcal{C}_{H}\left(\mathcal{N}, E\right) &= \lim_{n\to\infty} \frac1n\, \chi_{H^{(n)}}\left(\mathcal{N}^{\otimes n},\, nE\right) ,\\
\chi_{H} \left(\mathcal{N}, E\right) &\coloneqq \sup_{\substack{\\[-.1ex] \{p_i,\rho_i\}_i \\[.3ex] \sumno_i p_i \tr[\rho_i H]\leq E}} \left\{ S\left( \sumno_i p_i \rho_i \right) - \sumno_i p_i\, S\left(\rho_i\right) \right\} ,
\end{aligned}
\label{classical capacity}
\ee
where it is understood that the Hamiltonian $H^{(n)}$ on $\cH^{\otimes n}$ is given by $H^{(n)}\coloneqq \sum_{j=1}^n H_j$, where $H_j$ acts on the $j^{\text{th}}$ tensor factor, and tensor products with the identity operator are omitted for notational simplicity. With the same notation, one can also define the \emp{energy-constrained quantum capacity} of $\mathcal{N}$, given by~\cite{Lloyd-S-D, L-Shor-D, L-S-Devetak, winter2017energy, Mark-energy-constrained}
\bb
\begin{aligned}
\mathcal{Q}_{H}\left(\mathcal{N}, E\right) &\coloneqq \lim_{n\to\infty} \frac1n\, Q_{H^{(n)}}^{(1)}\left(\mathcal{N}^{\otimes n},\, nE\right) ,\\
Q_{H}^{(1)} \left(\mathcal{N}, E\right) &\coloneqq \sup_{\substack{ \\[-.4ex] \ket{\Psi}\in \cH\otimes \widetilde{\cH} \\[.4ex] \braket{\Psi | H\otimes I | \Psi}\leq E}} \left\{ S \left( \mathcal{N} \left( \tr_{\widetilde{\cH}} \ketbra{\Psi} \right)\right) - S \left( (\mathcal{N}\otimes I) \left( \ketbra{\Psi} \right)\right) \right\} ,
\end{aligned}
\label{quantum capacity}
\ee
where $\tr_{\widetilde{\cH}}$ is the partial trace over the entirely arbitrary ancillary Hilbert space $\widetilde{\cH}$. 
In this paper we are interested in the simple case $\cH=\cH_{m}=\LL^2(\RR^m)$ and $\cH'=\cH_{m'}=\LL^2(\RR^{m'})$, so that there is a natural choice for $H$, namely, the \emp{canonical Hamiltonian} 
\bb
H_m \coloneqq \sum_{j=1}^m a^\dag_j a_j
\label{canonical Hamiltonian}
\ee
of $m$ modes. In this case, we will omit the subscripts and simply write the energy-constrained capacities as $\mathcal{C}\left(\mathcal{N}, E\right)$ and $\mathcal{Q}\left(\mathcal{N}, E\right)$. 



\subsubsection{Phase space formalism} \label{subsec:phase-space}

We define the \emp{displacement operator} $\D(z)$ associated with a complex vector $z\in\CC^m$ as
\bb
\D(z) = \exp \left[ \sumno_j (z_j a^\dag_j - z_j^* a_j) \right].
\label{D}
\ee
Thus, $\D(z)$ is a unitary operator and satisfies $\D(z)^\dag=\D(-z)$ and 
\bb
\D(z) \D(w) = \D(z+w) \,e^{\frac12 (z^\intercal w^* - z^\dag w)} ,
\label{CCR Weyl}
\ee
valid for all $z,w\in \CC^m$.

Let $H_{\operatorname{quad}} = \sum_{j,k} \left(X_{jk} a_j^\dag a_k + Y_{jk} a_j a_k +  Y_{jk}^* a_j^\dag a_k^\dag \right)$, where $X=X^\dag$ is an $m\times m$ Hermitian matrix, and $Y=Y^\intercal$ is an $m\times m$ complex symmetric matrix. The unitaries $e^{-iH_{\operatorname{quad}}}$ generated by such Hamiltonians, and products thereof,\footnote{While not all products of unitaries of the form $e^{-iH_{\operatorname{quad}}}$ can be written as a single exponential, two such factors always suffice. See~\cite[p.37]{GOSSON}, combined with~\cite[Propositions~2.12,~2.18, and~2.19]{GOSSON} and with the observation that the exponential Lie map of the unitary group is surjective.} are called \emp{symplectic unitaries}, because they induce a symplectic linear transformation at the phase space level $(z_R, z_I)\in \RR^{2m}$, where $z_R\coloneqq \Re z$ and $z_I\coloneqq \Im z$ \cite{BUCCO, GOSSON}. A symplectic unitary is called \emp{passive} if it commutes with the number operator $\sum_j a_j^\dag a_j$, which happens whenever the generating Hamiltonian $H_{\text{quad}}$ satisfies $Y=0$. A passive symplectic unitary $V$ acts on annihilation operators as $Va_j V^\dag = \sum_k U_{jk} a_k$, where $U$ is an $m\times m$ unitary matrix.

\medskip
For trace class operators $T\in \cT_1(\cH_m)$, the \emp{quantum characteristic function} $\chi_T:\CC^m\to \CC$ is given by
\bb
\chi_T(z) \coloneqq \tr\left[ T\, \D(z) \right] .
\label{chi}
\ee
Conversely, the operator $T$ can be reconstructed from $\chi_T$ via the weakly defined identity
\bb
T = \int \frac{d^{2m}z}{\pi^{m}}\, \chi_T(z) \D(-z)\,.
\label{chi inverse}
\ee
Observe that the adjoint $T^\dag$ of $T$ satisfies $\chi_{T^\dag}(z)=\chi_T(-z)^*$ for all $z\in \CC^m$, so that $T$ is self-adjoint if and only if $\chi_T(-z)\equiv \chi_T(z)^*$. The characteristic function $\chi_T$ of a trace class operator $T$ is bounded and uniformly continuous \cite[\S~5.4]{HOLEVO}. If $T$ is positive semi-definite (e.g.\ if $T$ is a density operator), then $\max_{\alpha} |\chi_T(\alpha)| = \chi_T(0)=\tr [T]$. 

\medskip

We write $\ket{\psi_f}$ to denote the pure state corresponding to the wave function $f\in \LL^2(\RR^m)$, so that the corresponding rank-one state $\psi_f\coloneqq \ketbra{\psi_f}$ has the following characteristic function:
\bb
\chi_{\psi_f} (z) = e^{-i z_I^\intercal z_R} \int d^m x\, f^*(x)\, f\big( x -\sqrt2 z_R \big)\, e^{\sqrt2\, i\, z_I^\intercal x}\, ,
\label{chi psi f}
\ee
where as usual $z = z_R + i z_I$.

The Fourier transform of the characteristic function is known as the \emp{Wigner function}. For a trace class operator $T$, the Wigner function is given by~\cite[Eq.~(4.5.12) and~(4.5.19)]{BARNETT-RADMORE}
\begin{align}
W_T(z) \coloneqq&\ \int \frac{d^{2m}w}{\pi^{2m}}\, \chi_T(w)\, e^{z^\intercal w^* - z^\dag w} \label{Wigner 1} \\[0.5ex]
=&\ \frac{2^m}{\pi^m} \tr\left[ \D(-z) T \D(z)\, (-1)^{\sumno_j a_j^\dag a_j}\right] . \label{Wigner 2}
\end{align}
Observe that $W_{T^\dag}(z)=W_T(z)^*$, so that $T$ is self-adjoint if and only if $W_T(z)\in \RR$ for all $z\in\CC^m$. From \eqref{Wigner 2} it is not difficult to see that $|W_T(z)|\leq \frac{2^m}{\pi^m} \|T\|_1$, where $\|T\|_1=\tr|T|$ reduces to $1$ when $T$ is a density operator. By taking the Fourier transform of \eqref{chi psi f}, one can show that
\bb
W_{\psi_f} (z) = \frac2\pi \int d^m x\, f^* \big( x+ \sqrt2 z_R \big)\, f\big( -x +\sqrt2 z_R \big)\, e^{2\sqrt2\, i\, z_I^\intercal x}\, .
\label{W psi f}
\ee
Moreover, the energy of any density matrix, $\rho$, can be obtained as a phase space integral 
\bb
\int d^{2m} z\, \|z\|^2\, W_\rho(z) = \tr\left[ \rho \left(H_m +\frac{m}{2}\, I\right) \right]
\label{energy  Wigner}
\ee

The displacement operator $\D(z)$ induces a translation or {\em{displacement}} of the Wigner function as follows, hence the nomenclature:
\bb
\chi_{\D(z)\, \rho\, \D(z)^\dag}(u) = e^{z^\dag u - z^\intercal u^*} \chi_\rho(u)\, ,\qquad W_{\D(z)\, \rho\, \D(z)^\dag}(u) = W_\rho(u-z)\, . \label{chi and W displacement}
\ee

The map $T\mapsto \chi_T$, defined for trace class operators $T$ in \eqref{chi}, extends uniquely to an isomorphism between the space of Hilbert--Schmidt operators and that of square-integrable functions $\LL^2(\CC^m)$. In fact, the \emp{quantum Plancherel theorem} guarantees that this is also an isometry, namely
\bb
\tr[S^\dag T] = \int \frac{d^{2m}z}{\pi^m}\, \chi_S(z)^*\, \chi_T(z) = \pi^m \int d^{2m}z\, W_S(z)^* W_T(z)\, 
\label{Plancherel}
\ee
and therefore
\bb
\|\rho-\sigma\|_2^2 =  \int \frac{d^{2m}z}{\pi^m}\, \left|\chi_\rho(z) - \chi_\sigma(z)\right|^2 = \pi^m \int d^{2m}z\, \left( W_\rho(z) - W_\sigma(z) \right)^2\, .
\label{Plancherel HS norm}
\ee
Henceforth, we refer to \eqref{Plancherel HS norm} as the \emp{quantum Plancherel identity}.

\medskip

{\em{Gaussian states}} on $\cH_m$ are the density operators $\rho\in \cD(\cH_m)$ such that $W_\rho(z)$ is a Gaussian probability distribution on the real space $(z_R, z_I)\in \RR^{2m}$ and are uniquely defined by their first and second moments. A particularly simple example of a single-mode Gaussian state is a \emp{thermal state} with mean photon number $N\in [0,\infty)$, given by
\bb
\tau_N \coloneqq \frac{1}{N+1} \sum_{n=0}^\infty \left( \frac{N}{N+1}\right)^n \ketbra{n}\, .
\label{tau}
\ee
The thermal state is the maximiser of the entropy among all states with a fixed maximum average energy:
\bb
\max\left\{ S(\rho):\ \rho \in\cD(\cH_1),\ \tr\left[ \rho\, a^\dag a\right] \leq N\right\}  = S(\tau_N) = g(N)
\label{tau maximises entropy}
\ee
for all $N\geq 0$, where the function $g$ is defined by
\bb
g(x)\coloneqq (x+1)\log (x+1) - x\log x\, .
\label{g}
\ee
The characteristic function and Wigner function of the thermal state evaluate to \cite[Eq.~(4.4.21) and~(4.5.31)]{BARNETT-RADMORE}
\bb
\chi_{\tau_N}(z) = e^{-(2N+1)|z|^2/2}\, ,\qquad W_{\tau_N}(z) = \frac{2}{\pi(2N+1)}\, e^{-2|z|^2/(2N+1)}\, ,
\label{chi and W tau}
\ee
respectively, so that $\tau_N$ is easily seen to be a centred Gaussian state.

\subsection{Moments}

\begin{Def}[(Standard Moments)] \label{def:moments}
An $m$-mode quantum state $\rho$ is said to have {\em{finite standard moments}} of order up to $k$, for some $k\in [0,\infty)$, if
\bb
M_k(\rho) \coloneqq \tr\left[\rho H_m^{k/2} \right] 
< \infty\, ,
\label{eq:moments}
\ee
where $H_m$ is the canonical Hamiltonian \eqref{canonical Hamiltonian}, and the above trace is defined as in \eqref{expected positive}.
\end{Def}
\smallskip
\begin{rem*}
The above condition is fairly easy to check once the matrix representation of $\rho$ in the Fock basis is given. Namely, resorting to \eqref{expected positive} and exchanging the order of summation for infinite series with non-negative terms, we see that \eqref{eq:moments} is equivalent to
\bb
M_k(\rho) = \sum_{n\in \NN_0^m} (m+|n|)^{k/2}\! \braket{n|\rho|n} <\infty\, ,
\label{eq:moments-Fock}
\ee
where as usual $|n|=\sumno_j n_j$.
\end{rem*}

Given $k>0$ and $m\in\mathbb{N}$, we can also define, by analogy with classical harmonic analysis, the \emp{$m$-mode bosonic Sobolev space} of order $k$ as follows
\begin{align*}
\mathcal{W}^{k,1}(\cH_m)\coloneqq \left\{X\in\mathcal{T}_1\left(\cH_m\right) ;\,\|X\|_{\mathcal{W}^{k,1}(\cH_m)}<\infty \right\},
    \end{align*}
where as usual $\cH_m = L^2(\RR^m)$. Here, we set
\begin{align*}
   \|X\|_{\mathcal{W}^{k,1}(\cH_m)} \coloneqq \left\| \left(H_m + m I \right)^{k/4}\,X\,\left(H_m + mI\right)^{k/4}\right\|_1\, ,
\end{align*}
with the canonical Hamiltonian on $m$ modes being defined by \eqref{canonical Hamiltonian}. For density operators $\rho$ it holds, using monotone convergence and cyclicity of the trace, that
\begin{equation}
    \begin{split}
\left\|\rho\right\|_{\mathcal{W}^{k,1}(\cH_m)} 
&=\sup_E\tr\left[\indic_{[0,E]}(H_m)\left(H_m + mI \right)^{k/4} \rho \,\left(H_m + mI \right)^{k/4}\indic_{[0,E]}(H_m)\right] \\ 
&=\sup_E\tr\left[\rho \,\left(H_m + mI \right)^{k/2}\indic_{[0,E]}(H_m)\right] \\
&= \tr\left[\rho \,\left(H_m + mI \right)^{k/2}\right] 
\end{split}
\end{equation}
where $\indic_{[0,E]}$ is the indicator function of the interval $[0,E].$

It is well known that the characteristic function of any classical random variable with finite moments of order up to $k$ (with $k$ being a positive integer) is continuously differentiable $k$ times everywhere. We can draw inspiration from this fact to devise an alternative way to introduce moments, relying on the regularity of the quantum characteristic function, in the quantum setting as well. We refer to moments defined in this manner as {\em{phase space moments}}.

\begin{Def}[(Phase space moments)] \label{def:ps-moments}
An $m$-mode quantum state $\rho$ is said to have \emp{finite phase space moments} of order up to $k$, for some $k\in [0,\infty)$, if
\bb
M'_{k}(\rho,\varepsilon) \coloneqq \left\| \chi_\rho \right\|_{C^{k}\left(B(0,\varepsilon)\right)} < \infty
\label{eq:ps-moments}
\ee
for some $\varepsilon>0$, where $B(0,\varepsilon) \coloneqq \{z\in \CC^m: |z|\leq \varepsilon\}$ is the Euclidean ball of radius $\varepsilon$ centred in $0$, and the norm on the space $C^{k}\left(B(0,\varepsilon)\right)$ is defined by \eqref{Ck norm 1} and \eqref{Ck norm 2}. 
\end{Def}

In complete analogy with the classical case, finiteness of standard moments implies local differentiability of the characteristic function, and hence finiteness of phase space moments. See Theorem~\ref{thm:all-of-us} of Section~\ref{sec:main-results}.

However, the converse is not true in general. This is not surprising, as the same phenomenon is observed for classical random variables. In fact, a famous example by Zygmund \cite{Zygmund1947} shows the existence of classical random variables with continuously differentiable characteristic function whose first absolute moments do not exist. We can swiftly carry over his example to the quantum realm, e.g.\ by considering a particular displaced vacuum state $\rho\coloneqq c \sum_{n=2}^\infty \frac{1}{n^2\log n} \left(\D(n)\ketbra{0}\D(n)^\dag + \D(n)^\dag\ketbra{0}\D(n)\right)$. One can show that its characteristic function is $\chi_\rho(z)=e^{-|z|^2}\sum_{n=2}^\infty \frac{\cos(2n z_I)}{n^2\log n}$, which turns out to be continuously differentiable everywhere \cite{Zygmund1947, Zygmund1947}. However, 
\[\tr[\rho |x|]\geq c \sum_{n=2}^\infty \frac{1}{n^2\log n}\, 2\sqrt2 n = +\infty,\] which implies that $\rho$ has no finite first-order moments (see Lemma \ref{messy inequality lemma}). 

In spite of the above counterexample, we show in Theorem \ref{thm:converse} that at least if $k$ is an even integer, then the existence of $k^{\text{th}}$ order phase space moment implies the existence of the $k^{\text{th}}$ order standard moment. Again, this is in total analogy with the classical case \cite[Theorem~1.8.16]{Ushakov}.
\smallskip

{\bf Remark:} Due to the above, for even $k$, we simply use the word {\em{moment}} in the statements of our theorems, instead of differentiating between standard moments and phase space moments.

\subsection{Quantum convolution} \label{subsec:BS}

A \emp{beam splitter} with transmissivity $\lambda\in [0,1]$ acting on two sets of $m$ modes is a particular type of a passive symplectic unitary, which we express as\footnote{Tensor products are omitted here.}
\bb
U_\lambda \coloneqq \exp \left[\left( \arccos\sqrt{\lambda}\right)\sumno_j (a_j^\dag b_j - a_j b_j^\dag )\right]\, ,
\label{BS}
\ee
where $a_j$ and $b_j$ ($j=1,\ldots, m$) are the creation operators of the first and second sets of modes, respectively. Its action on annihilation operators can be represented as follows
\bb
U_\lambda a_j U_\lambda^\dag = \sqrt\lambda \, a_j - \sqrt{1-\lambda}\,b_j\, ,\qquad U_\lambda b_j U_\lambda^\dag = \sqrt{1-\lambda} 
\,a_j + \sqrt\lambda\,b_j\, \quad \forall j \in \{1,..,m \}.
\label{BS action creation}
\ee
Accordingly, displacement operators are transformed by
\bb
U_\lambda \left( \D(z)\otimes \D(w)\right) U_\lambda^\dag = \D\left(\sqrt\lambda z + \sqrt{1-\lambda}\, w\right)\otimes \D\left(-\sqrt{1-\lambda}\, z +\, \sqrt\lambda \, w\right)\, .
\label{BS action displacement}
\ee

\medskip
The beam splitter unitary can be used to define the following ($\lambda$-dependent) \emp{quantum convolution}: for two $m$-mode quantum states $\rho,\sigma$ and $\lambda\in [0,1]$, their ($\lambda$-dependent) \emp{quantum convolution} is given by the state $\rho\boxplus_\lambda \sigma$ which is defined according to \cite{Koenig2014} as
\bb
\rho \boxplus_\lambda \sigma \coloneqq \tr_2 \left[ U_\lambda (\rho\otimes \sigma) U_\lambda^\dag\right] .
\label{boxplus}
\ee
In terms of characteristic functions, this definition corresponds to
\bb
\chi_{\rho\, \boxplus_\lambda \sigma} (z) = \chi_\rho \left( \sqrt{\lambda}\, z \right)\, \chi_\sigma\left( \sqrt{1-\lambda}\, z \right) .
\label{boxplus characteristic functions}
\ee
It is not difficult to verify that for all symplectic unitaries $V$ and all $\lambda\in [0,1]$, the beam splitter unitary $U_\lambda$ of \eqref{BS} satisfies $\left[ V \otimes V,\ U_\lambda \right] = 0$. In particular, 
\bb
V(\rho\boxplus_\lambda \sigma) V^\dag = (V\rho V^\dag) \boxplus_\lambda (V\sigma V^\dag)
\label{boxplus covariant symplectic}
\ee
for any state $\sigma$. Also, using \eqref{BS action creation} it can be shown that the mean photon number of a quantum convolution is just the convex combination of those of the input states, i.e.
\bb
\tr\left[ (\rho\boxplus_\lambda \sigma) H_m \right] = \lambda \tr\left[ \rho H_m \right] + (1-\lambda) \tr\left[ \sigma H_m \right] ,
\label{energy convolution}
\ee
where the canonical Hamiltonian is defined by \eqref{canonical Hamiltonian}.

\medskip
For all $m$-mode quantum states $\sigma$ and all $\lambda\in [0,1]$, we can use the corresponding convolution to define a quantum channel $\mathcal{N}_{\sigma, \lambda}:\cT_1(\cH_1)\to \cT_1(\cH_1)$, whose action is given by
\bb
\mathcal{N}_{\sigma, \lambda}(\rho) \coloneqq \rho\boxplus_\lambda \sigma\, .
\label{channel N}
\ee
When $\sigma=\tau_N$ is a thermal state (with mean photon number $N$), the channel $\mathcal{N}_{\tau_N, \lambda} \eqqcolon \att$ is called a \emp{thermal attenuator channel}. Its action, obtained by combining \eqref{boxplus characteristic functions} and \eqref{chi and W tau}, is given by
\bb
\att :\chi_\rho(z) \longmapsto \chi_{\att(\rho)}(z) \coloneqq \chi_\rho\left(\sqrt\lambda\, z\right)\, e^{-\frac{1-\lambda}{2}\, (2N+1)|z|^2}\, .
\label{thermal attenuator}
\ee
For the thermal attenuator channel, the energy-constrained classical capacity (defined in \eqref{classical capacity}) can be shown to reduce to
can be shown to be given by \cite{Giovadd, Giovadd-CMP}
\bb
\mathcal{C}\left(\att, E\right) = g\left(\lambda E+(1-\lambda)N\right) - g\left( (1-\lambda) N\right) ,
\label{classical capacity thermal attenuator}
\ee
where $g$ is given by \eqref{g}.


\medskip
In what follows, we will be interested in the symmetric quantum convolutions $\rho_1\boxplus\ldots \boxplus \rho_n$, iteratively defined for a positive integer $n$ and states $\rho_1, \ldots, \rho_n$,
by the relations $\rho\boxplus \sigma\coloneqq \rho \boxplus_{1/2}\sigma$ and
\bb
\rho_1 \boxplus \ldots \boxplus \rho_n \coloneqq \left( \rho_1\boxplus \ldots \boxplus \rho_{n-1}\right) \boxplus_{1-1/n} \rho_n\, .
\label{boxplus iterated}
\ee

We will also use the shorthand
\bb
\rho^{\boxplus n} \coloneqq \underbrace{\rho\boxplus\ldots \boxplus \rho}_{\text{$n$ times}}\, .
\label{boxplus n}
\ee
In terms of characteristic and Wigner functions, we can also write 
\begin{align}
\chi_{\rho_1\,\boxplus\ldots \boxplus\, \rho_n}(z) = \chi_{\rho_1}\left(z/\sqrt{n}\right) \ldots \chi_{\rho_n}\left( z/\sqrt{n}\right) ,  \label{box chi} \\[0.5ex]
W_{\rho_1\,\boxplus\ldots\boxplus\, \rho_n}(z) = n^m  \left(W_{\rho_1}\star\ldots\star W_{\rho_n}\right) \left(\sqrt{n} z\right) . \label{box W}
\end{align}
Here, $\star$ denotes convolution, which is defined for $n$ functions $f_1,\ldots, f_n:\CC^m\to \RR$ by
\bb
(f_1\star\ldots  \star f_n)(x) \coloneqq \int d^m y_1\ldots d^m y_{n-1}\, f_1(y_1)\ldots f_{n-1}(y_{n-1}) f_n\left(x-y_1-\ldots - y_{n-1}\right) \label{convolution}
\ee
Equation \eqref{box chi} shows that the quantum characteristic function of the symmetric quantum convolution satisfies the same scaling property as a sum of classical i.i.d. (independent and identically distributed) random variables. The important special case $\rho_i\equiv \rho$ of \eqref{box chi} for all $i \in \{i,2,\ldots,n\}$, on which we will focus most of our efforts, reads
\bb
\chi_{\rho^{\boxplus n}}(z) = \chi_\rho\left( z/\sqrt{n}\right)^n\, .
\label{chi CH state}
\ee
Iterating \eqref{boxplus covariant symplectic}, using \eqref{boxplus iterated}, shows that
\bb
V\left( \rho^{\boxplus n}\right)V^{\dag} = (V\rho V^\dag)^{\boxplus n}
\label{symmetric boxplus covariant symplectic}
\ee
holds for all symplectic unitaries $V$.

\section{Cushen and Hudson's quantum central limit theorem} \label{sec:CH-CLT}

In \cite{Cushen1971}, Cushen and Hudson proved the following quantum mechanical analogue of the central limit theorem, which is the starting point of our study.

\begin{thm}[{\cite[Theorem~1]{Cushen1971}}] \label{thm:CushHud}
Let $\rho\in \cD(\cH_m)$ be a centred $m$-mode quantum state with 
finite second moments. Then the sequence $(\rho^{\boxplus n})_{n\in\NN}$ converges weakly to the Gaussian state $\rho_\G$ of same first and second moments as $\rho$:
\begin{align}
    \tr \left[ \rho^{\boxplus n} X\right] \tendsn{} \tr\left[\rho_\G X\right]\, , \qquad \forall\ X\in \cB(\cH)\, ,
\end{align}
where $\cB(\cH_m)$ is the set of bounded operators on $\cH_m$.
\end{thm}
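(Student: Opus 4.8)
The plan is to argue entirely at the level of characteristic functions and then invoke the quantum analogue of L\'{e}vy's continuity theorem to transfer the convergence back to the level of states. The starting point is the scaling identity \eqref{chi CH state}, which expresses the characteristic function of the $n$-fold convolution as $\chi_{\rho^{\boxplus n}}(z) = \chi_\rho\!\left(z/\sqrt{n}\right)^n$. Thus the entire behaviour of $\rho^{\boxplus n}$ is controlled by the local behaviour of $\chi_\rho$ near the origin. Since $\rho$ has finite second moments, Theorem~\ref{thm:all-of-us} guarantees that $\chi_\rho$ is twice continuously differentiable in a neighbourhood of $0$, so a second-order Taylor expansion is available there.

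First I would compute the low-order Taylor coefficients. One has $\chi_\rho(0)=\tr[\rho]=1$, while the first-order derivatives $\partial_{z_j}\chi_\rho(0)=\tr[\rho\, a_j^\dag]$ and $\partial_{z_j^*}\chi_\rho(0)=-\tr[\rho\, a_j]$ all vanish because $\rho$ is centred in the sense of \eqref{centred}. The second-order derivatives reproduce the covariance data of $\rho$ (symmetrised products of the form $\tr[\rho\, a_j^\dag a_k]$, $\tr[\rho\, a_j a_k]$, and their conjugates); collecting them into a single Hermitian quadratic form $Q$ on $\CC^m$, the expansion reads $\chi_\rho(z) = 1 - \tfrac12\, Q(z) + o\!\left(|z|^2\right)$ as $z\to 0$. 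By construction $Q$ is precisely the quadratic form whose Gaussian $e^{-Q(z)/2}$ is the characteristic function of the centred Gaussian state $\rho_\G$ having the same second moments as $\rho$ (the first moments both being zero).

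Next I would substitute $z\mapsto z/\sqrt{n}$ and take the $n$-th power. Fixing $z\in\CC^m$ and letting $n\to\infty$, the expansion gives $\chi_\rho(z/\sqrt{n}) = 1 - Q(z)/(2n) + o(1/n)$, whence
\[
\chi_{\rho^{\boxplus n}}(z) = \left(1 - \frac{Q(z)}{2n} + o\!\left(\tfrac1n\right)\right)^{\!n} \tendsn{} e^{-Q(z)/2} = \chi_{\rho_\G}(z)\, ,
\]
by the elementary $(1+x/n)^n\to e^x$ limit, valid pointwise in $z$. This establishes pointwise convergence of the characteristic functions of $\rho^{\boxplus n}$ to that of $\rho_\G$. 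To conclude, I would invoke the quantum L\'{e}vy continuity theorem of \cite{Cushen1971}: since the limit function $e^{-Q(z)/2}$ is continuous at the origin and equals $1$ there, it is the characteristic function of a bona fide state (namely $\rho_\G$), and pointwise convergence of the $\chi_{\rho^{\boxplus n}}$ then forces $\tr[\rho^{\boxplus n} X]\to\tr[\rho_\G X]$ for every $X\in\cB(\cH_m)$.

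The main obstacle is exactly this last step. Pointwise convergence of characteristic functions is by itself strictly weaker than weak convergence of the corresponding operators, and the quantum continuity theorem needs to rule out any escape of mass: one must verify that the limit is genuinely a state (a tightness-type condition supplied here by continuity of the limit function at $0$) and then combine this with the inversion formula \eqref{chi inverse} relating operators to their characteristic functions. The Taylor-expansion bookkeeping is routine once the $C^2$ regularity provided by Theorem~\ref{thm:all-of-us} is in hand; the genuine analytic content lies in the continuity theorem, which is precisely the ingredient Cushen and Hudson establish separately in \cite{Cushen1971}.
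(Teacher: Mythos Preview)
Your proposal is correct and follows exactly the strategy the paper attributes to Cushen and Hudson: establish pointwise convergence of characteristic functions via a second-order Taylor expansion of $\chi_\rho$ at the origin (available thanks to finite second moments), and then invoke the quantum L\'{e}vy continuity theorem to upgrade this to weak convergence of states. Note that the paper does not itself supply a detailed proof of this theorem but merely cites \cite{Cushen1971} and summarises the argument in the sentence following the statement; your write-up is precisely a fleshing-out of that summary.
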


\begin{rem*}
The state $\rho_\G$ is commonly called the \emp{Gaussification} of $\rho$.
\end{rem*}

In fact, the proof of Theorem \ref{thm:CushHud} relies on the equivalence between weak convergence of states and pointwise convergence of their characteristic functions. More precisely, the following holds:

\begin{lemma}[{(\cite[Lemma~4.3]{Davies1969} and~\cite[Lemma~4]{G-dilatable})}] 
Let $(\rho_n)_{n\in\NN}$ be a sequence of density operators on $\cH_m$. The following are equivalent:
\begin{itemize}
    \item $(\rho_n)_{n\in\NN}$ converges to a density operator in the weak operator topology, namely, it holds that $\lim_{n \to \infty} \braket{x|\rho_n|y} = \braket{x|\rho|y}$ for all $\ket{x}, \ket{y}\in \cH_m$;
    \item $(\rho_n)_{n\in\NN}$ converges in trace distance to a trace class operator;
    \item the sequence $(\chi_{\rho_n})_{n\in\NN}$ of characteristic functions converges pointwise to a function that is continuous at $0$.
\end{itemize}
\end{lemma}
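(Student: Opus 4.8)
The plan is to establish the equivalence by isolating the one genuinely hard step --- a quantum analogue of Lévy's continuity theorem --- and deriving the remaining implications from standard trace-ideal facts. Label the three conditions (A) (weak operator convergence to a density operator $\rho$), (B) (trace-norm convergence to a trace class operator), and (C) (pointwise convergence of $\chi_{\rho_n}$ to a limit $\chi$ that is continuous at the origin).

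First, the easy implications. (B) $\Rightarrow$ (A) is immediate: trace-norm convergence dominates convergence of every matrix element $\braket{x|\rho_n|y}$, and the limit inherits positivity and unit trace, hence is a density operator. (B) $\Rightarrow$ (C) is equally direct, since $|\chi_{\rho_n}(z)-\chi_\rho(z)| = |\tr[(\rho_n-\rho)\D(z)]| \leq \|\rho_n-\rho\|_1\,\|\D(z)\|_\infty = \|\rho_n-\rho\|_1 \to 0$, and $\chi_\rho$ is continuous everywhere because $\rho$ is trace class. For the converse (A) $\Rightarrow$ (B), I would invoke Grümm's convergence theorem: since $\rho_n\geq 0$, $\rho\geq 0$, the matrix elements converge, and $\tr\rho_n = 1 = \tr\rho$, weak convergence together with convergence of the traces upgrades to trace-norm convergence for positive operators.

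The core of the argument is (C) $\Rightarrow$ (A), the quantum Lévy continuity step. Because each $\rho_n$ lies in the unit ball of $\cT_1(\cH_m) = \cK(\cH_m)^*$, Banach--Alaoglu furnishes a weak-$*$ convergent subnet with limit $\rho\geq 0$ and $\tr\rho\leq 1$; weak-$*$ convergence tests $\rho_n$ against compact operators, in particular against all rank-one operators, giving weak operator convergence along the subnet. The decisive point is to rule out escape of mass to infinity, i.e.\ to prove $\tr\rho = 1$. For this I would establish a tightness estimate bounding the high-energy deficiency $\tr[\rho_n(I - P_E)]$, with $P_E$ the spectral projection of $H_m$ onto $[0,E]$, by a quantity of the form $\int_{|z|\leq \delta}\big(1 - \operatorname{Re}\chi_{\rho_n}(z)\big)\,d^{2m}z$ over a small ball around the origin. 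Since $\chi_{\rho_n}(0)=1$ for every $n$ and the limit $\chi$ is continuous at $0$ with $\chi(0)=1$, this integral is uniformly small for $\delta$ small, yielding uniform tightness and hence $\tr\rho=1$.

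The hard part will be exactly this tightness estimate, which is where continuity at the origin is essential and which is the quantum counterpart of the classical bound controlling tail probabilities by the behaviour of the characteristic function near zero. Once tightness is in hand, two consequences follow: the truncations $P_E\rho_n P_E$ approximate $\rho_n$ uniformly in trace norm up to $\varepsilon$, so testing against the non-compact $\D(z)$ reduces to testing against the compact $P_E\D(z)P_E$ and identifies $\chi_\rho = \chi$; and $\rho$ is a genuine density operator, so Grümm's theorem promotes the subnet convergence to trace norm. Finally, since the characteristic function determines the state uniquely via \eqref{chi inverse}, every weak-$*$ cluster point coincides with $\rho$, and a standard subsequence argument upgrades this to convergence of the full sequence. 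This closes the cycle and proves the three conditions equivalent.
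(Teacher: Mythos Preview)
The paper does not supply its own proof of this lemma; it is quoted verbatim from the cited references (Davies~\cite{Davies1969} and~\cite{G-dilatable}), so there is no in-paper argument to compare against. Your outline is sound and matches the standard route taken in those references: the implications (B)$\Rightarrow$(A), (B)$\Rightarrow$(C), and (A)$\Rightarrow$(B) via Gr\"umm's theorem are routine, and you correctly isolate the quantum L\'evy continuity step (C)$\Rightarrow$(A) as the substantive content, with the tightness estimate (controlling the high-energy tail $\tr[\rho_n(I-P_E)]$ by the behaviour of $\chi_{\rho_n}$ near the origin) as the one piece requiring genuine work.

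One small remark: you acknowledge that the tightness inequality is the hard part without writing it down, which is honest but leaves the proof incomplete as stated. The precise mechanism in the quantum setting is to integrate $\chi_{\rho_n}(z)$ against a Gaussian (or similar) test function over a small ball, yielding the expectation of a compact operator close to the identity on low-energy states; the details are in the cited papers. Everything else in your argument --- Banach--Alaoglu for the subnet, uniqueness via the inversion formula~\eqref{chi inverse}, and the subsequence upgrade --- is correct and standard.
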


Together, the above lemma and Theorem \ref{thm:CushHud} allow us to conclude the following seemingly stronger convergence:

\begin{thm} \label{CH thm}
Under the assumptions of Theorem \ref{thm:CushHud}, we have that
\bb
\lim_{n\to\infty} \left\| \rho^{\boxplus n} - \rho_\G\right\|_1 = 0\, .
\label{CH eq}
\ee
\end{thm}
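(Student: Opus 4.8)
The plan is to obtain the trace-norm convergence as an essentially immediate consequence of the weak convergence already granted by Theorem~\ref{thm:CushHud}, feeding it into the equivalence of convergence modes recorded in the Lemma above. No new analytic input is required: the only points to address are verifying that the hypotheses of the Lemma are met and correctly identifying the resulting trace-norm limit.

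First I would note that the conclusion of Theorem~\ref{thm:CushHud}, namely $\tr\big[\rho^{\boxplus n} X\big] \to \tr\big[\rho_\G X\big]$ for every bounded operator $X\in\cB(\cH_m)$, specialises to the rank-one choices $X=\ketbraa{y}{x}$ with $\ket{x},\ket{y}\in\cH_m$ arbitrary, giving
\bb
\braket{x|\rho^{\boxplus n}|y} = \tr\big[\rho^{\boxplus n}\,\ketbraa{y}{x}\big] \;\tendsn{}\; \tr\big[\rho_\G\,\ketbraa{y}{x}\big] = \braket{x|\rho_\G|y}\, .
\ee
Thus $(\rho^{\boxplus n})_{n\in\NN}$ converges to $\rho_\G$ in the weak operator topology, and since the Gaussification $\rho_\G$ is itself a density operator, the first of the three equivalent conditions of the Lemma is verified. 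Invoking the equivalence between its first and second conditions, I then conclude that $(\rho^{\boxplus n})_{n\in\NN}$ converges in trace distance to \emph{some} trace class operator $T$.

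It remains to identify $T$ with $\rho_\G$. Since trace-norm convergence dominates weak operator convergence---one has $\big|\braket{x|(\rho^{\boxplus n}-T)|y}\big|\leq \|\ket{x}\|\,\|\ket{y}\|\,\big\|\rho^{\boxplus n}-T\big\|_1$---the sequence converges weakly to $T$ as well, so uniqueness of the weak limit forces $T=\rho_\G$. Combining these observations gives $\big\|\rho^{\boxplus n}-\rho_\G\big\|_1\to 0$, which is the claim.

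I do not expect any genuine obstacle here, as the substantive content is carried entirely by the Lemma. The only subtlety worth flagging is that Theorem~\ref{thm:CushHud} is phrased in the weak-$*$ topology (test operators in $\cB(\cH_m)$) while the Lemma is phrased in the weak operator topology; bridging the two via rank-one test operators, and then matching the weak and trace-norm limits, are the two elementary steps that require a little care. Alternatively, one could bypass the operator-topology formulation and verify the third condition of the Lemma directly, using $\chi_{\rho^{\boxplus n}}(z)=\chi_\rho\big(z/\sqrt n\big)^n\to\chi_{\rho_\G}(z)$ together with the continuity of $\chi_{\rho_\G}$ at the origin; but this retraces part of the proof of Theorem~\ref{thm:CushHud} and is therefore less economical.
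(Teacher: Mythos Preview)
Your proposal is correct and follows exactly the approach the paper indicates: the paper simply states that ``the above lemma and Theorem~\ref{thm:CushHud} allow us to conclude'' Theorem~\ref{CH thm}, and your argument is a faithful and complete expansion of that one-line remark. Your extra care in identifying the trace-norm limit with $\rho_\G$ via uniqueness of weak limits is appropriate, since the Lemma as stated does not explicitly assert that the limits in its three equivalent conditions coincide.
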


\section{Main results}\label{sec:main-results}

The main objective of this paper is to refine Theorem~\ref{CH thm} of the previous section in the following directions: 
\begin{itemize}
    \item Firstly, in the case in which the state $\rho$ satisfies the conditions of the Cushen--Husdon theorem, we provide quantitative bounds on the rate at which the sequence of states $(\rho^{\boxplus n})_{n\in\NN}$ converges to $\rho_\G$, under the assumption of finiteness of certain phase space moments of $\rho$. We also show how finiteness of phase space moments is implied by finiteness of the corresponding standard moments, the latter having the advantage of being a more easily verifiable condition. Moreover, we show that finiteness of even integer phase space moments implies finiteness of even integer standard moments  (Section \ref{subsec:quantitative}).
    \item Secondly, we provide an example to show that the assumption that the second moments be finite in the Cushen--Hudson theorem cannot be weakened (Section \ref{subsec:optimality}).
	\item Thirdly, we extend our results to the non-i.i.d.~setting, i.e.\ we consider a scaling in the quantum convolution different from \eqref{boxplus iterated}. This allows us to analyse the propagation of states through cascades of beam splitters with varying transmissivities (Section \ref{subsec:applications}).
    \item Finally, we provide a precise asymptotic analysis of the behaviour of quantum characteristic functions at zero and at infinity (Section \ref{subsec:new-results}).
\end{itemize}

\subsection{Quantitative bounds in the QCLT} \label{subsec:quantitative}

In this section, we state our results on rates of convergence in the Cushen--Hudson quantum central limit theorem. We call them quantum Berry--Esseen theorems, as is customary in the literature. Our first theorem provides convergence rates $\mathcal O\left(n^{-1/2}\right)$ in the quantum central limit theorem under a fourth-order moment condition. The rate of convergence is boosted to $\mathcal O\left(n^{-1}\right)$ if the third derivative of the characteristic function at zero vanishes:

\begin{thm}[(Quantum Berry--Esseen theorem; High regularity)] \label{thm:QBE'}
Let $\rho$ be a centred $m$-mode quantum state with finite fourth-order phase space moments. Then, the convergence in the quantum central limit theorem in Hilbert--Schmidt norm satisfies
\bb
\left\|\rho^{\boxplus n}-\rho_\G\right\|_2 = \mathcal{O}_{M_4'}\left(n^{-1/2}\right).
\ee
Here, $M_4'=M'_4(\rho,\varepsilon)$ is the moment defined in \eqref{eq:ps-moments}, and $\varepsilon>0$ is sufficiently small. Moreover, if $D^3\chi_{\rho}(0)= 0$ then the convergence is at least with rate $\mathcal{O}_{M_4'}\left(n^{-1}\right)$.
\end{thm}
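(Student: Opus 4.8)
The plan is to work entirely in phase space. By the quantum Plancherel identity \eqref{Plancherel HS norm} together with \eqref{chi CH state},
\[
\left\|\rho^{\boxplus n}-\rho_\G\right\|_2^2 = \frac{1}{\pi^m}\int_{\CC^m}\left|\chi_\rho\!\left(z/\sqrt n\right)^{n} - \chi_{\rho_\G}(z)\right|^2 d^{2m}z ,
\]
and since $\rho$ is centred we may write $\chi_{\rho_\G}(z)=e^{-\frac12 Q(z)}$, where $Q$ is the real quadratic form fixed by the common second moments. Because $\rho_\G$ is a bona fide Gaussian state, its covariance matrix obeys the uncertainty principle and $Q$ is \emph{positive definite}, $Q(z)\ge c|z|^2$ for some $c>0$ -- a fact I would use repeatedly. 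I then split the integral in Berry--Esseen fashion into a central region $\{|z|\le\delta\sqrt n\}$ and a tail $\{|z|>\delta\sqrt n\}$, for a small fixed $\delta>0$ to be chosen.

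First I would dispose of the tail. There $|z/\sqrt n|>\delta$, so Proposition~\ref{prop1} yields $|\chi_\rho(z/\sqrt n)|\le\eta<1$. Using $|\chi_\rho(z/\sqrt n)|^{2n}\le\eta^{2n-2}|\chi_\rho(z/\sqrt n)|^2$ and substituting $w=z/\sqrt n$ bounds the convolution contribution by $\eta^{2n-2}n^m\|\chi_\rho\|_{L^2(\CC^m)}^2$, which is exponentially small because $\chi_\rho\in L^2$ (as $\rho$ is Hilbert--Schmidt, via \eqref{Plancherel}). The Gaussian contribution $\int_{|z|>\delta\sqrt n}e^{-Q(z)}\,d^{2m}z$ is super-exponentially small since $Q\ge c|z|^2$. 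Both are $o(n^{-2})$ and hence negligible for either claimed rate.

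The substance is the central region. Finiteness of the fourth phase space moment means $\chi_\rho\in C^4(B(0,\varepsilon))$; as $\chi_\rho(0)=1$ and $\chi_\rho$ is continuous, for $\delta\le\varepsilon$ small the branch $h\coloneqq\log\chi_\rho$ is well defined and $C^4$ on $B(0,\delta)$. Centredness forces $D\chi_\rho(0)=0$, so $h(0)=0$ and $Dh(0)=0$, while the quadratic Taylor coefficient of $h$ is precisely $-\tfrac12 Q$ (this is the meaning of matching second moments). Taylor's theorem with Lagrange remainder -- legitimate exactly because $h\in C^4$ -- gives $h(w)=-\tfrac12 Q(w)+\tfrac{1}{3!}D^3 h(0)(w^{\times3})+R_4(w)$ with $|R_4(w)|\le C_4|w|^4$, $C_4$ controlled by $M_4'$. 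Setting $w=z/\sqrt n$ and multiplying by $n$,
\[
\chi_\rho\!\left(z/\sqrt n\right)^{n}=\exp\!\Big(-\tfrac12 Q(z)+\tfrac{1}{\sqrt n}\,T(z)+S_n(z)\Big),\qquad T(z)\coloneqq\tfrac{1}{3!}D^3 h(0)(z^{\times3}),
\]
where $|T(z)|\le C_T|z|^3$ and $|S_n(z)|\le C_4|z|^4/n$. Thus $\chi_\rho(z/\sqrt n)^n-\chi_{\rho_\G}(z)=e^{-\frac12 Q(z)}(e^{u}-1)$ with $u\coloneqq\tfrac{1}{\sqrt n}T(z)+S_n(z)$, and $|e^{u}-1|\le|u|\,e^{|u|}$.

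The hard part is that the central region grows with $n$, so the naive bound $|e^u-1|\le|u|e^{|u|}$ is useless near the boundary, where $|u|\sim\delta^3 n$. This is where positive definiteness of $Q$ is essential: on $\{|z|\le\delta\sqrt n\}$ one has $|u|\le C_T\delta|z|^2+C_4\delta^2|z|^2$, so choosing $\delta$ small enough that $C_T\delta+C_4\delta^2\le c/4$ gives $e^{-\frac12 Q(z)}e^{|u|}\le e^{-\frac c4|z|^2}$ uniformly in $n$. With $|u|^2\le 2\big(\tfrac{C_T^2}{n}|z|^6+\tfrac{C_4^2}{n^2}|z|^8\big)$ and enlarging the domain to all of $\CC^m$,
\[
\int_{|z|\le\delta\sqrt n}\!\!\left|\chi_\rho(z/\sqrt n)^n-\chi_{\rho_\G}(z)\right|^2 d^{2m}z \le 2\!\int_{\CC^m}\!\Big(\tfrac{C_T^2}{n}|z|^6+\tfrac{C_4^2}{n^2}|z|^8\Big)e^{-\frac c2|z|^2}\,d^{2m}z=\frac{A}{n}+\frac{B}{n^2},
\]
with finite Gaussian moments $A,B$. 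Adding the negligible tails yields $\|\rho^{\boxplus n}-\rho_\G\|_2^2=\mathcal O_{M_4'}(n^{-1})$, i.e.\ the rate $\mathcal O_{M_4'}(n^{-1/2})$. For the improved rate, note that when $D^3\chi_\rho(0)=0$ the cubic coefficient of $h=\log\chi_\rho$ coincides with that of $\chi_\rho$ (the quadratic$\times$linear cross term drops since $D\chi_\rho(0)=0$), so $T\equiv0$; the $A/n$ term vanishes and only $B/n^2$ survives, giving $\|\rho^{\boxplus n}-\rho_\G\|_2^2=\mathcal O_{M_4'}(n^{-2})$ and hence $\mathcal O_{M_4'}(n^{-1})$.
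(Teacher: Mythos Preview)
Your argument is correct and follows essentially the same strategy as the paper's proof: Plancherel reduces the problem to an $L^2$ estimate on characteristic functions, one splits into a central region $|z|\le\delta\sqrt n$ and a tail, the tail is killed by Proposition~\ref{prop1} together with $\chi_\rho\in L^2$, and on the central region one controls $\chi_\rho(z/\sqrt n)^n$ by a fourth-order Taylor expansion, absorbing the error terms into the Gaussian weight by choosing $\delta$ small. Two minor organisational differences are worth recording: (i) the paper first reduces to Williamson form so that $Q(z)=\sum_j\nu_j|z_j|^2$ with $\nu_j\ge1$, whereas you work with the general quadratic form and invoke positive definiteness $Q(z)\ge c|z|^2$ from the uncertainty relation; (ii) the paper Taylor expands $\chi_\rho$ itself (Lemma~\ref{lem:tech2}) and then handles $\log\chi_\rho$ via the auxiliary series $a(x)=-\tfrac{4}{x^2}(\log(1-\tfrac{x}{2})+\tfrac{x}{2})$, whereas you Taylor expand $h=\log\chi_\rho$ directly, which is slightly more economical. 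Both routes yield the same bounds, and your observation that $D^3(\log\chi_\rho)(0)=D^3\chi_\rho(0)$ because $D\chi_\rho(0)=0$ is exactly what makes the $T\equiv0$ case go through.
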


The proof of Theorem \ref{thm:QBE'} is provided in 
Section \ref{sec-proofs:QCLT}. In the next Theorem, we weaken the assumption on the moments of the state $\rho$, which leads to a slower rate of convergence.

\begin{thm}[(Quantum Berry--Esseen theorem; Low regularity)] \label{thm:QBElow}
Let $\rho$ be a centred $m$-mode quantum state with finite $(2+\alpha)$-order phase space moments, where $\alpha \in (0,1]$. The convergence in the quantum central limit theorem in Hilbert--Schmidt norm is given by
\begin{align*}
\|\rho^{\boxplus n}-\rho_\G\|_2 = \mathcal{O}_{M_{2+\alpha}'}\left(n^{-\alpha/2}\right).
\end{align*}
Here, $M_{2+\alpha}'=M'_{2+\alpha}(\rho,\varepsilon)$ is the phase space moment defined in \eqref{eq:ps-moments}, and $\varepsilon>0$ is sufficiently small.
\end{thm}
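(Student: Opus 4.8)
The plan is to transfer the classical characteristic-function proof of the Berry--Esseen theorem to this setting, using the quantum Plancherel identity \eqref{Plancherel HS norm} to turn the Hilbert--Schmidt distance into an $L^2$-distance between characteristic functions. Since the Gaussification $\rho_\G$ has the same first and second moments as $\rho$ and is Gaussian, its characteristic function is convolution-stable, $\chi_{\rho_\G}(z) = \chi_{\rho_\G}(z/\sqrt n)^n$; combining this with \eqref{chi CH state} and \eqref{Plancherel HS norm} gives
\bb
\left\|\rho^{\boxplus n} - \rho_\G\right\|_2^2 = \int \frac{d^{2m}z}{\pi^m}\, \left| \chi_\rho(z/\sqrt n)^n - \chi_{\rho_\G}(z/\sqrt n)^n \right|^2 .
\ee
After the change of variables $w = z/\sqrt n$ (producing a Jacobian factor $n^m$), I would split the integration domain into a central region $|w| \le \varepsilon$ and a tail region $|w| > \varepsilon$, where $\varepsilon>0$ is chosen small enough that both the phase space moment bound \eqref{eq:ps-moments} and the uniform bound of Proposition~\ref{prop1} apply.

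In the central region I would Taylor expand both functions. Because $\rho$ is centred the first-order term of $\chi_\rho$ vanishes, and because $\rho$ and $\rho_\G$ share second moments we may write $\chi_\rho(w) = 1 - Q(w) + R(w)$ and $\chi_{\rho_\G}(w) = e^{-Q(w)}$ with the same real positive-definite quadratic form $Q$. Finiteness of the $(2+\alpha)$-order phase space moments means precisely $\chi_\rho \in C^{2+\alpha}\!\big(B(0,\varepsilon)\big)$, so the second derivatives are $\alpha$-Hölder and the remainder obeys $|R(w)| \le C\, M'_{2+\alpha}\, |w|^{2+\alpha}$; since $2+\alpha \le 3 < 4$, the Gaussian remainder $e^{-Q(w)} - (1-Q(w)) = O(|w|^4)$ is dominated, giving $|\chi_\rho(w) - \chi_{\rho_\G}(w)| \le C'|w|^{2+\alpha}$ for $|w|\le\varepsilon$. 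Factoring the difference of $n$-th powers as $|a^n - b^n| \le n\,|a-b|\,\max(|a|,|b|)^{n-1}$, and using $Q(w) \ge c_0 |w|^2$ (positive-definiteness of the covariance, guaranteed by the uncertainty principle) to dominate both factors by $e^{-c|w|^2}$ on $|w|\le\varepsilon$, I obtain the pointwise bound $n\,C'\,|w|^{2+\alpha}\,e^{-c(n-1)|w|^2}$. Squaring, multiplying by the Jacobian $n^m$, and substituting $u=\sqrt n\, w$ converts this into $n^{2-(2+\alpha)} = n^{-\alpha}$ times a convergent Gaussian moment integral, so the central region contributes $\mathcal{O}_{M'_{2+\alpha}}\!\big(n^{-\alpha}\big)$.

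For the tail region I would bound the two terms separately via $|a-b|^2 \le 2|a|^2 + 2|b|^2$. For the convolution term, Proposition~\ref{prop1} gives $|\chi_\rho(w)| \le \eta < 1$ for $|w|\ge\varepsilon$, so $|\chi_\rho(w)|^{2n} \le \eta^{2(n-1)}\,|\chi_\rho(w)|^2$; since every state satisfies $\|\rho\|_2 \le 1$, Plancherel bounds $\int |\chi_\rho(w)|^2\, d^{2m}w = \pi^m\|\rho\|_2^2 \le \pi^m$, so the surviving factor $n^m \eta^{2(n-1)}$ is super-polynomially small. The Gaussian term integrates $e^{-2nQ(w)}$ over $|w|>\varepsilon$, which is likewise super-polynomially small. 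Both are therefore $\mathcal{O}\!\big(n^{-\alpha}\big)$, and combining the two regions and taking a square root yields the claimed rate $\mathcal{O}_{M'_{2+\alpha}}\!\big(n^{-\alpha/2}\big)$.

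The main obstacle is the central-region estimate, where one must simultaneously (i) extract the sharp Hölder remainder $|w|^{2+\alpha}$ from the $C^{2+\alpha}$ norm while verifying that the second-order coefficients of $\chi_\rho$ and $\chi_{\rho_\G}$ genuinely coincide, and (ii) secure the uniform sub-Gaussian domination $\max(|\chi_\rho|,|\chi_{\rho_\G}|) \le e^{-c|w|^2}$ on $|w|\le\varepsilon$, which hinges on the strict positivity $Q(w) \ge c_0|w|^2$. The exponent $n^{-\alpha}$ then emerges exactly from the bookkeeping of the power-counting factor $n$ against the rescaling $u=\sqrt n\, w$. The whole argument runs in parallel to the proof of Theorem~\ref{thm:QBE'}, with the third-order Taylor term used there replaced by the $\alpha$-Hölder remainder here.
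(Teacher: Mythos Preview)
Your proof is correct and follows the paper's overall architecture---Plancherel identity, local--tail splitting, H\"older-remainder estimate near the origin, and Proposition~\ref{prop1} for the tail---but differs in one technical respect worth noting. In the central region the paper passes through the logarithm: it bounds $\bigl|\log\chi_\rho(z/\sqrt n)^n + \tfrac12\sum_j\nu_j|z_j|^2\bigr|$ (eq.~\eqref{eq:estm4}) and then invokes $|e^u-1|\le |u|e^{|u|}$ to reach~\eqref{eq:estm5}. You instead use the algebraic factoring $|a^n-b^n|\le n\,|a-b|\,\max(|a|,|b|)^{n-1}$, which is more elementary and sidesteps the auxiliary function $a(\cdot)$ of~\eqref{a}; both routes yield the same pointwise bound of order $n^{-\alpha/2}|z|^{2+\alpha}e^{-c|z|^2}$ after rescaling. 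One small caveat: the paper first reduces to Williamson form (Section~\ref{subsubsec:Williamson}), so that $\nu_j\ge 1$ automatically gives $Q(w)\ge\tfrac12|w|^2$. Your appeal to the uncertainty principle does yield $Q(w)\ge c_0|w|^2$ for some $c_0>0$, but $c_0$ then depends on the smallest eigenvalue of the covariance matrix (which can be arbitrarily small under squeezing) rather than on $M'_{2+\alpha}$ alone; to justify the $\mathcal{O}_{M'_{2+\alpha}}$-notation as stated you should also invoke the Williamson reduction.
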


The proof of Theorem \ref{thm:QBElow} is provided in Section \ref{sec-proofs:QCLT}. The variable $\alpha$ allows us to obtain a convergence rate under the assumption of finiteness of phase space moments of order all the way down to $2$ (excluded), which is the assumption required in the Cushen--Hudson QCLT. The above results can further be used to find convergence rates in other, statistically more relevant, distance measures:

\begin{cor}[(Convergence in trace distance and relative entropy)] \label{cor:tracerel}
Assume that an $m$-mode quantum state $\rho$ has finite third-order phase space moments. Then,
\bb
\left\|\rho^{\boxplus n}-\rho_\G\right\|_1=\mathcal{O}_{M_3'}\left(n^{-\frac{1}{2(m+1)}}\right),\qquad D\left(\rho^{\boxplus n}\big\|\rho_\G\right)= \mathcal{O}_{M_3'}\left(n^{-\frac{1}{2(m+1)}}\right) ,
\ee
where $M_3'=M'_3(\rho,\varepsilon)$ is defined in \eqref{eq:ps-moments}, and $\varepsilon>0$ is sufficiently small. The above rates are replaced by $\mathcal{O}_{M'_{2+\alpha}}\left(n^{-\alpha/(2m+2)}\right)$ when $\rho$ only satisfies the conditions of Theorem~\ref{thm:QBElow}.
\end{cor}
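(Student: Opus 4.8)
The plan is to deduce both bounds from the Hilbert--Schmidt estimate that is already available, upgrading it first to trace distance by an energy-truncation argument and then to relative entropy by an entropy-continuity argument. Since $\rho$ has finite third-order phase space moments we are in the regime of Theorem~\ref{thm:QBElow} with $\alpha=1$, which yields $\|\rho^{\boxplus n}-\rho_\G\|_2=\mathcal{O}_{M_3'}(n^{-1/2})$. The crucial additional ingredient is that both states carry an energy bounded uniformly in $n$: by \eqref{energy convolution} the symmetric convolution preserves the mean photon number, so $\tr[\rho^{\boxplus n} H_m]=\tr[\rho H_m]\eqqcolon E$, while $\rho_\G$ shares the first and second moments of $\rho$ and hence also has energy $E$.

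For the trace-norm bound I would introduce the finite-rank projector $P_L$ onto the span of the Fock states $\ket{n}$ with total photon number $|n|\leq L$; its rank is $d_L=\binom{L+m}{m}=\mathcal{O}(L^m)$. Writing $\Delta\coloneqq\rho^{\boxplus n}-\rho_\G$ and $P_L^\perp\coloneqq I-P_L$, I split $\Delta$ into its four blocks and estimate them separately. The central block is controlled by the Hilbert--Schmidt norm through the finite-dimensional inequality $\|P_L\Delta P_L\|_1\leq\sqrt{d_L}\,\|P_L\Delta P_L\|_2\leq\sqrt{d_L}\,\|\Delta\|_2$. For the tails I use Markov's inequality in the form $\tr[\rho^{\boxplus n}P_L^\perp],\,\tr[\rho_\G P_L^\perp]\leq E/L$, which bounds $\|P_L^\perp\Delta P_L^\perp\|_1\leq 2E/L$, together with the Cauchy--Schwarz estimate $\|P_L\,\omega\,P_L^\perp\|_1\leq\sqrt{\tr[P_L\omega]}\,\sqrt{\tr[P_L^\perp\omega]}\leq\sqrt{E/L}$ for each off-diagonal block (valid for any state $\omega$). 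Collecting these gives, up to constants,
\[
\|\Delta\|_1 \;\leq\; \sqrt{d_L}\,\|\Delta\|_2 + \mathcal{O}\!\left(L^{-1/2}\right) \;=\; \mathcal{O}\!\left(L^{m/2}n^{-1/2}\right)+\mathcal{O}\!\left(L^{-1/2}\right),
\]
and optimising by balancing the two terms, i.e.\ choosing $L\sim n^{1/(m+1)}$, yields $\|\Delta\|_1=\mathcal{O}_{M_3'}(n^{-1/(2(m+1))})$. The same computation with $n^{-1/2}$ replaced by the rate $n^{-\alpha/2}$ of Theorem~\ref{thm:QBElow} reproduces the stated $\mathcal{O}(n^{-\alpha/(2m+2)})$ in the low-regularity case.

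For the relative entropy I would first reduce it to an entropy difference. Because $\rho_\G$ is Gaussian, $\log\rho_\G$ is an affine function of a quadratic polynomial in the canonical operators, so $-\tr[\rho^{\boxplus n}\log\rho_\G]$ depends only on the first and second moments of $\rho^{\boxplus n}$; as these coincide with those of $\rho_\G$ (centredness is preserved, and the covariance is invariant under the symmetric convolution, as one reads off the quadratic Taylor coefficient of \eqref{chi CH state}), it equals $-\tr[\rho_\G\log\rho_\G]=S(\rho_\G)$. Hence $D(\rho^{\boxplus n}\|\rho_\G)=S(\rho_\G)-S(\rho^{\boxplus n})\geq 0$, the non-negativity reflecting that $\rho_\G$ maximises entropy at fixed covariance. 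It then remains to bound this entropy difference by the trace distance, which is exactly what the energy-constrained continuity bounds for the von Neumann entropy of Winter~\cite{tightuniform,winter2017energy} and Shirokov~\cite{shirokov2017tight,shirokov2018energy} provide: with the uniform energy bound $E$ and the trace-distance rate just established, they transfer the rate $n^{-1/(2(m+1))}$ to the relative entropy.

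The main obstacle is the Hilbert--Schmidt-to-trace-norm passage, which in infinite dimensions genuinely requires the energy cut-off: there is no dimension-free inequality $\|\cdot\|_1\leq C\|\cdot\|_2$, and it is precisely the $L^m$ growth of the dimension of the low-energy sector that produces the $m$-dependence $\tfrac{1}{2(m+1)}$ in the exponent. One must check that the off-diagonal blocks (which contribute the dominant $L^{-1/2}$ tail and thereby drive the optimisation) are handled correctly, and that the energy bound $E$ enters the entropy continuity estimates only through an $n$-independent constant, so that the polynomial rate survives.
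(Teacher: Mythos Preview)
Your proposal is correct and follows essentially the same strategy as the paper. The only cosmetic difference is in how the tail is controlled: the paper packages the off-diagonal and high-energy diagonal contributions together via the Gentle Measurement Lemma, obtaining $\|\omega-P_{E/\eps}\,\omega\,P_{E/\eps}\|_1\le 2\sqrt{\eps}$ directly, whereas you split $\Delta$ into four blocks and bound the off-diagonal pieces by hand via $\|P_L\omega P_L^\perp\|_1\le\|P_L\sqrt{\omega}\|_2\|\sqrt{\omega}P_L^\perp\|_1$. Both routes give the same $L^{-1/2}$ tail and lead to the same optimisation $L\sim n^{1/(m+1)}$; the relative-entropy step (reduction to $S(\rho_\G)-S(\rho^{\boxplus n})$ via matching second moments, then Winter's continuity bound) is identical to the paper's.
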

The proof of this Corollary is given in Section~\ref{sec-proofs:QCLT}.

\begin{rem*}[(Condition on the existence of moments)] The error bounds in Theorems \ref{thm:QBE'} and \ref{thm:QBElow} are stated in terms of assumptions on the phase space moments $M'_k$ given by \eqref{eq:ps-moments}, of the state. It is possible to bound the phase space moments $M'_k$ directly in terms of the standard moments $M_k$ defined in \eqref{eq:moments}. This is stated in the following Theorem, whose proof is given in Appendices~\ref{app:moments}--\ref{sec:partconv}
\end{rem*}

\begin{thm} \label{thm:all-of-us}
Let $k\in [0,\infty)$, $m$ a positive integer, and $\varepsilon>0$ be given. Then every $m$-mode quantum state with finite standard moments of order up to $k$ also has finite phase space moments of the same order. More precisely, there is a constant $c_{k,m}(\varepsilon)<\infty$ such that  
\bb
M'_{k}(\rho,\varepsilon) = \left\|\chi_\rho \right\|_{C^{k}\left(B(0,\varepsilon)\right)} \leq c_{k,m}(\varepsilon) M_k(\rho)\, .
\ee

Conversely, if the characteristic function is $2k$ times totally differentiable at $z=0$ for some integer $k$, then the $2k^{\text{th}}$ standard moment is finite as well.
\end{thm}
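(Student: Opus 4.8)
\emph{Strategy.} The two implications require different tools. For the forward direction I would convert regularity of $\chi_\rho$ into expectation values of polynomials in the ladder operators and dominate these by $M_k(\rho)$; for the converse I would run a finite-difference argument modelled on the classical result \cite[Theorem~1.8.16]{Ushakov}.

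\emph{Forward direction.} Starting from $\chi_\rho(z)=\tr[\rho\,\D(z)]$ as in \eqref{chi}--\eqref{D}, a computation using the normal- and anti-normal-ordered forms of the displacement operator yields the operator identities
\[
\partial_{z_j}\D(z)=\bigl(a_j^\dag-\tfrac{z_j^*}{2}\bigr)\D(z),\qquad \partial_{z_j^*}\D(z)=\bigl(-a_j-\tfrac{z_j}{2}\bigr)\D(z).
\]
Iterating with the product rule, for every multi-index with $|\alpha|+|\beta|=j\le\floor{k}$ one obtains $\partial_z^\alpha\partial_{z^*}^\beta\D(z)=R_{\alpha\beta}(a,a^\dag,z)\,\D(z)$, where $R_{\alpha\beta}$ is a noncommutative polynomial in $a,a^\dag$ of degree at most $j$ whose coefficients are polynomials in $z,z^*$, bounded on $B(0,\varepsilon)$ by a constant depending only on $j,m,\varepsilon$. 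A first technical point is to justify differentiating under the trace; this is legitimate once the relevant expectations are known to be finite, which the moment hypothesis supplies. One is then left to bound, uniformly for $z\in B(0,\varepsilon)$, each term $\tr[\rho\,P\,\D(z)]$ with $P$ a monomial of degree $\le j$, thereby controlling \eqref{Ck norm 1}.

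\emph{Reduction to standard moments.} For such a $P$ I would use $|\tr[XY]|\le\|X\|_1\|Y\|_\infty$ with the splitting
\[
\tr[\rho\,P\,\D(z)]=\tr\bigl[(H_m+I)^{j/4}\rho(H_m+I)^{j/4}\cdot(H_m+I)^{-j/4}P\,\D(z)(H_m+I)^{-j/4}\bigr].
\]
The first factor has trace norm $\tr[\rho(H_m+I)^{j/2}]$, finite and bounded by a constant times $M_k(\rho)$ since $j\le k$ forces $M_j\le\max\{1,M_k\}$. For the second factor one uses that a degree-$j$ monomial is relatively bounded by $(H_m+I)^{j/2}$, so that $(H_m+I)^{-j/4}P(H_m+I)^{-j/4}$ is bounded, together with $\D(z)^\dag(H_m+I)\D(z)=H_m+I+(\text{linear in }a,a^\dag)+|z|^2I\asymp H_m+I$ uniformly for $|z|\le\varepsilon$, to conclude that the relevant operator norm is bounded uniformly on $B(0,\varepsilon)$. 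When $k\notin\NN_0$ one must additionally estimate the H\"older seminorm in \eqref{Ck norm 2}: I would write the difference of two top-order derivatives as $\tr[\rho\,(R_{\alpha\beta}(z)\D(z)-R_{\alpha\beta}(w)\D(w))]$ and extract the factor $|z-w|^{k-\floor{k}}$ by real interpolation between the integer estimates at orders $\floor{k}$ and $\floor{k}+1$ — exactly the interpolation-space structure of the appendix. I expect this fractional estimate, together with the uniform bound $\D(z)^\dag(H_m+I)\D(z)\asymp H_m+I$ for powers exceeding one (where operator monotonicity fails), to be the main obstacle of this direction.

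\emph{Converse.} Fix an integer $k$ and assume $\chi_\rho$ is $2k$-times totally differentiable at $0$. Along a single quadrature direction — say the real axis of $z_j$, along which $\D(t e_j)$ is the unitary group generated by a multiple $q$ of a quadrature — the $2k$-th symmetric finite difference of $\chi_\rho$ at $0$ equals $(-1)^k\tr[\rho\,\Delta_h]$, where
\[
\Delta_h=\Bigl(\tfrac{2\sin(hq/2)}{h}\Bigr)^{2k}\ge 0
\]
is a nonnegative operator converging to $q^{2k}$ as $h\to0$. Since the difference quotient converges to the assumed derivative, Fatou's lemma applied to the spectral measure of $q$ in the state $\rho$ gives $\tr[\rho\,q^{2k}]<\infty$ for each quadrature $q\in\{x_j,p_j\}_{j=1}^m$. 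Finally, writing $H_m+\tfrac m2 I=\tfrac12\sum_j(x_j^2+p_j^2)$, an operator-ordering lemma bounds $\tr[\rho(H_m+\tfrac m2 I)^k]$ by a finite combination of mixed moments of total degree $2k$, each finite by Cauchy--Schwarz from the pure-power moments $\tr[\rho\,q^{2k}]$; hence $M_{2k}(\rho)=\tr[\rho\,H_m^{k}]<\infty$. A cleaner alternative is to pass to $\rho\boxplus\rho$, whose Wigner function is a genuine probability density by Lemma~\ref{positive W lemma} and whose characteristic function $\chi_\rho(\cdot/\sqrt2)^2$ inherits $2k$-fold differentiability at $0$ through \eqref{chi CH state}; the classical theorem \cite[Theorem~1.8.16]{Ushakov} then yields a finite classical $2k$-th moment for $W_{\rho\boxplus\rho}$, which transfers back to $\rho$ via \eqref{energy convolution} and its higher-order analogues. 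The main obstacle here is the noncommutative ordering step assembling $H_m^k$ from the pure quadrature moments (respectively, the transfer of moments across the convolution in the classical route).
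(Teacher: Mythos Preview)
Your converse argument is essentially the paper's: the paper (Appendix~\ref{sec:partconv}) fixes the self-adjoint generators $H_{\mathrm{lin}}^{\pm}=a\pm a^\dag$ (equivalently $x$ and $p$), passes to the spectral measure $\mu_\rho$, builds a nonnegative function $f_k(t)=(-1)^k(2k)!\,t^{-2k}\big(\cos t-\sum_{i<k}(-1)^i t^{2i}/(2i)!\big)$ playing exactly the role of your $(2\sin(hq/2)/h)^{2k}$, and applies Fatou together with the existence of the $(2k)$-th derivative to get $\tr[\rho\,x^{2k}],\tr[\rho\,p^{2k}]<\infty$. The paper then dismisses the assembly of $\tr[\rho(aa^\dag)^k]$ from these as ``straightforward'', precisely your ordering step. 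Your alternative route via $\rho\boxplus\rho$ and the classical theorem is not the one the paper takes, but it is a legitimate (and arguably slicker) option.

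For the forward direction your approach is \emph{different} from the paper's, and the obstacle you flag is real. The paper (Appendix~\ref{app:moments}) avoids the operator-monotonicity problem altogether by a detour through the Wigner picture: set $\sigma\coloneqq\rho\boxplus\ketbra{0}$, so that $W_\sigma\ge0$ by Lemma~\ref{positive W lemma}; then $\chi_\sigma$ is a \emph{classical} characteristic function, and the classical result \cite[Theorem~1.8.15]{Ushakov} gives $\|\chi_\sigma\|_{C^k}\le 1+\int|u|^k W_\sigma(u)\,d^{2m}u$. The right-hand side is then controlled by $M_k(\sigma)\le M_k(\rho)$ via an operator inequality $(\sum_j a_ja_j^\dag)^{k/2}\ge d_{k,m}\sum_j(|x_j|^k+|p_j|^k)$ (Lemma~\ref{messy inequality lemma}), and finally $\|\chi_\rho\|_{C^k}$ is comparable to $\|\chi_\sigma\|_{C^k}$ because $\chi_\sigma(z)=\chi_\rho(z/\sqrt2)\,e^{-|z|^2/4}$. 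This works uniformly in $k\in\NN_0$ with no restriction. By contrast, your direct sandwich $\big\|(H_m+I)^{-j/4}P\,\D(z)(H_m+I)^{-j/4}\big\|_\infty$ requires bounding $\big\|(H_m+I)^{j/4}\D(z)(H_m+I)^{-j/4}\big\|_\infty$, and the comparability $\D(z)^\dag(H_m+I)\D(z)\asymp H_m+I$ only transfers cleanly to fractional powers $\le1$; beyond that (indeed already around $j>4$) one needs an extra argument that you do not supply. The paper's vacuum-convolution trick is precisely what buys you a clean proof for all integer $k$; the fractional case is then handled, as you say, by the real-interpolation machinery of Appendix~\ref{sec:SMF}.
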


The importance of Theorem \ref{thm:all-of-us} for us comes from the fact that most of our proofs rest upon local differentiability properties of the characteristic function. While mathematically useful, such properties have no direct physical meaning and may be hard to verify in practice. Instead, the condition of finiteness of higher-order standard moments, as given in Definition \ref{def:moments}, bears a straightforward physical meaning, related to the properties of the photon number distribution of the state, and is often easier to verify.

The key to proving Theorem \ref{thm:all-of-us} for fractional $k$ lies in an interpolation argument. To state it precisely, we briefly recall some basic facts about real interpolation theory (see \cite{bergh2012interpolation} for more details): given two Banach spaces $(\cX_0,\|.\|_{\cX_0})$ and $(\cX_1,\|.\|_{\cX_1})$, and a parameter $0\le \theta\le 1$, define the {\em{$K$-function}} as follows:
\begin{equation}
\label{eq:K}
K(t,X) :=  \inf_{X=X_0+X_1} \left( \left\lVert X_0 \right\rVert_{\cX_0}+t\left\lVert X_1 \right\rVert_{\cX_1} \right) \quad {\forall}\, t>0,
\end{equation}
and derive from this the function $\Phi_{\theta} (K(X)) = \sup_{t>0} t^{-\theta}K(t,X).$ The real interpolation spaces, parametrised by $\theta \in (0,1)$, are then defined as 
\[(\cX_0,\cX_1)_{\theta}:= \left\{ X \in \cX_0;\, \Phi_{\theta}(K(X))< \infty \right\}\text{ with norm }\left\lVert X \right\rVert_{(\cX_0,\cX_1)_\theta} = \Phi_{\theta}(K(X)).\]
Now, given two couples of Banach spaces $\cX_0, \cX_1$ and $\cY_0,\cY_1$, and a map $\Psi:\cX_0+\cX_1\to\cY_0+\cY_1$ such that $\Psi:\cX_0\to\cY_0$ and $\Psi:\cX_1\to\cY_1$ are bounded, the map $\Psi:(\cX_0,\cX_1)_{\theta}\to (\cY_0,\cY_1)_\theta$ is bounded and:
\begin{align*}
    \|\Psi:(\cX_0,\cX_1)_{\theta}\to (\cY_0,\cY_1)_\theta\|\le \|\Psi:\cX_0\to\cY_0\|^{1-\theta}\,\|\Psi:\cX_1\to\cY_1\|^{\theta}\,.
\end{align*}
We want to apply this to the map $\rho\mapsto \chi_\rho$.

The following interpolation result for density operators then holds: 

\begin{prop}\label{theo:moments}
Let $k_1 \ge k_0 \ge 0$ be real numbers. The \emp{$m$-mode bosonic Sobolev spaces} $\mathcal{W}^{k_0,1}(\cH_m)$ and $\mathcal{W}^{k_1,1}(\cH_m)$ form a compatible couple such that for any $m$-mode quantum state $\rho$ and $\theta \in (0,1)$ the real interpolation norm satisfies
\begin{equation}
\Vert \rho \Vert_{(\mathcal{W}^{k_0,1}(\cH_m),\mathcal{W}^{k_1,1}(\cH_m))_{\theta,\infty}} \le \Vert \rho \Vert_{\mathcal{W}^{(1-\theta)k_0+ \theta k_1,1}(\cH_m)}.
\end{equation}
\end{prop}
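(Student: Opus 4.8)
Throughout write $G\coloneqq H_m+mI$, so that $G\geq I$ (since $m\geq 1$) and $\|X\|_{\mathcal{W}^{k,1}(\cH_m)}=\big\|G^{k/4}XG^{k/4}\big\|_1$. The plan is to prove the (formally stronger) contractive embedding of $\mathcal{W}^{k_\theta,1}(\cH_m)$ into the \emph{complex} interpolation space and then pass to the real $(\theta,\infty)$ method; the whole point is that the unitaries $G^{i\tau}$ make the relevant boundary estimates \emph{exact}. First I would record compatibility: with $\delta\coloneqq(k_1-k_0)/4\geq 0$ one has $G^{k_0/4}XG^{k_0/4}=G^{-\delta}\big(G^{k_1/4}XG^{k_1/4}\big)G^{-\delta}$ and $\|G^{-\delta}\|\leq 1$, so $\|X\|_{\mathcal{W}^{k_0,1}}\leq\|X\|_{\mathcal{W}^{k_1,1}}$; thus $\mathcal{W}^{k_1,1}(\cH_m)\hookrightarrow\mathcal{W}^{k_0,1}(\cH_m)$ continuously and the couple is compatible inside the Banach space $\mathcal{W}^{k_0,1}(\cH_m)$. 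Since $\|\rho\|_{(\cdots)_{\theta,\infty}}=\Phi_\theta(K(\rho))=\sup_{t>0}t^{-\theta}K(t,\rho)$, it suffices to exhibit, for each $t>0$, a splitting $\rho=a_0+a_1$ with $\|a_0\|_{\mathcal{W}^{k_0,1}}+t\,\|a_1\|_{\mathcal{W}^{k_1,1}}\leq t^\theta\|\rho\|_{\mathcal{W}^{k_\theta,1}}$ (assuming the right-hand side finite, else there is nothing to prove).

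\textbf{The analytic family.} On the strip $S=\{0\leq\Re z\leq 1\}$ I would consider
\[
F(z)\coloneqq G^{a(z)}\,\rho\,G^{a(z)},\qquad a(z)\coloneqq (\theta-z)\,\delta,
\]
which satisfies $F(\theta)=\rho$. The key computation is that the boundary norms are \emph{constant and equal}. On the left edge $z=i\tau$ one has $k_0/4+a(i\tau)=k_\theta/4-i\tau\delta$, so $G^{k_0/4}F(i\tau)G^{k_0/4}=G^{k_\theta/4}G^{-i\tau\delta}\,\rho\,G^{-i\tau\delta}G^{k_\theta/4}$; because $G^{-i\tau\delta}$ is unitary and commutes with $G^{k_\theta/4}$, this equals $U_\tau\big(G^{k_\theta/4}\rho\,G^{k_\theta/4}\big)U_\tau$ with $U_\tau=G^{-i\tau\delta}$ unitary, and unitary invariance of the trace norm gives $\|F(i\tau)\|_{\mathcal{W}^{k_0,1}}=\|\rho\|_{\mathcal{W}^{k_\theta,1}}$. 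The identical cancellation $k_1/4+a(1+i\tau)=k_\theta/4-i\tau\delta$ yields $\|F(1+i\tau)\|_{\mathcal{W}^{k_1,1}}=\|\rho\|_{\mathcal{W}^{k_\theta,1}}$ on the right edge.

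\textbf{Conclusion via harmonic measure.} With both boundary bounds equal to $M\coloneqq\|\rho\|_{\mathcal{W}^{k_\theta,1}}$, I would invoke the Poisson (harmonic-measure) representation for the strip: writing $P_0(\theta,\cdot),P_1(\theta,\cdot)$ for the harmonic measures of the two edges at the interior point $\theta$, whose total masses are $1-\theta$ and $\theta$ respectively, the integrals $a_0=\int P_0(\theta,\tau)F(i\tau)\,d\tau\in\mathcal{W}^{k_0,1}$ and $a_1=\int P_1(\theta,\tau)F(1+i\tau)\,d\tau\in\mathcal{W}^{k_1,1}$ are Bochner integrals with $a_0+a_1=F(\theta)=\rho$, and $\|a_0\|_{\mathcal{W}^{k_0,1}}\leq(1-\theta)M$, $\|a_1\|_{\mathcal{W}^{k_1,1}}\leq\theta M$. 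Applying the same construction to the rescaled family $e^{\gamma(z-\theta)}F(z)$ (which leaves $F(\theta)=\rho$ unchanged and multiplies the boundary bounds by $e^{-\gamma\theta}$ and $e^{\gamma(1-\theta)}$) and choosing $e^{\gamma}=1/t$ gives
\[
K(t,\rho)\leq (1-\theta)\,t^{\theta}M+t\,\theta\,t^{-(1-\theta)}M=t^{\theta}M,
\]
so $\sup_{t>0}t^{-\theta}K(t,\rho)\leq M=\|\rho\|_{\mathcal{W}^{k_\theta,1}}$, which is exactly the claim with constant $1$. (Equivalently, this is the standard embedding $[\cX_0,\cX_1]_\theta\hookrightarrow(\cX_0,\cX_1)_{\theta,\infty}$ applied to the family $F$.)

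\textbf{Main obstacle.} The delicate point is the rigorous justification that $F$ is admissible for the interpolation/Poisson machinery: $G^{a(z)}$ is an \emph{unbounded} operator whenever $\Re a(z)>0$, so analyticity and boundedness of $z\mapsto F(z)$ with values in $\mathcal{W}^{k_0,1}+\mathcal{W}^{k_1,1}$, together with the validity of the Poisson representation $F(\theta)=\int P_0 F(i\tau)+\int P_1 F(1+i\tau)$, need to be established with care. I would first verify everything for $\rho$ with finite photon number (finite rank in the Fock basis), where each matrix element $g_n^{a(z)}g_{n'}^{a(z)}\rho_{nn'}=e^{a(z)(\log g_n+\log g_{n'})}\rho_{nn'}$ is entire and uniformly bounded on $S$, so $F$ is a finite-rank-valued entire function and the representation holds; then I would extend to general $\rho$ with $\|\rho\|_{\mathcal{W}^{k_\theta,1}}<\infty$ by truncating the photon number (which converges in $\mathcal{W}^{k_\theta,1}$) and using lower semicontinuity of the interpolation norm. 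This density argument is routine but is where the real work lies.
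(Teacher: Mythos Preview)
Your approach is correct and takes a genuinely different route from the paper. The paper constructs the splitting for the $K$-functional by hand: for each $t>0$ it cuts the Fock space at the threshold $N_0(t)$ determined by $(n+1)^{(k_1-k_0)/2}\approx t^{-1}$, setting $X_1(t)=\Pi_{N_0(t)}\rho\,\Pi_{N_0(t)}$ and $X_0(t)=(I-\Pi_{N_0(t)})\rho\,(I-\Pi_{N_0(t)})$. For Fock-diagonal states this yields the inequality with constant~$1$ via the elementary termwise bound $t^{-\theta}\min(1,t\gamma_n)\le\gamma_n^\theta$; for general $\rho$ the off-diagonal block is then controlled by a Bourin--Lee block-matrix inequality, and the paper ends up with constant~$3$ rather than~$1$. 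Your Stein-type analytic-family argument is more conceptual: because the imaginary powers $G^{i\tau}$ are unitary, the boundary norms of $F(z)=G^{(\theta-z)\delta}\rho\,G^{(\theta-z)\delta}$ are \emph{exactly} $\|\rho\|_{\mathcal{W}^{k_\theta,1}}$, and the standard embedding $[\cX_0,\cX_1]_\theta\hookrightarrow(\cX_0,\cX_1)_{\theta,\infty}$ (equivalently, your Poisson splitting with the scalar twist $e^{\gamma(z-\theta)}$) delivers the sharp constant~$1$ directly, for arbitrary operators and not just states. The technical point you flag---admissibility of $F$ as a strip-analytic family when $G^{a(z)}$ is unbounded---is genuine but resolved exactly as you outline: for $\rho$ of finite Fock support each matrix element of $F$ is entire and one checks $\|F(z)\|_{\mathcal{W}^{k_0,1}}\le M$ uniformly on the closed strip, after which the general case follows by truncation. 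In short, your route is slicker and sharper; the paper's is more elementary and makes the near-optimal Fock cutoff explicit.
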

The proof of Proposition \ref{theo:moments} is stated in Appendix \ref{sec:SMF}.

\subsection{Optimality of convergence rates and necessity of finite second moments in the QCLT}\label{subsec:optimality}

The results stated in the previous section lead naturally to the following questions: 

(i) Can the assumption of finiteness of second moments in the Cushen--Hudson theorem be weakened?

(ii) Are the convergence rates of Theorems \ref{thm:QBE'} and \ref{thm:QBElow} and Corollary \ref{cor:tracerel} optimal? 
\medskip

\noindent
 We start by answering the first question in the negative: there exists a state with finite moments of all orders $2(1-\delta)$ (for $\delta>0$) for which neither 
 Theorem \ref{thm:CushHud} nor Theorem \ref{CH thm} holds.

\begin{prop} \label{crazy state prop}
Consider the one-mode state $\psi_f\coloneqq \ketbra{\psi_f}$ with wave function 
\bb
f(x) \coloneqq \frac{1}{\sqrt2}\, \frac{1}{(1+x^2)^{3/4}}\, .
\label{crazy wave function}
\ee
Then: (a)~$\psi_f$ is centred; (b)~$M_{2(1-\delta)}(\psi_f) = \braket{\psi_f|(a a^\dag)^{1-\delta}|\psi_f}<\infty$ for all $\delta>0$; yet (c)~the sequence $\left( \ketbra{\psi_f}^{\oplus n}\right)_n$ does not converge to any quantum state. Hence, the assumption of finiteness of second moments in the Cushen--Hudson QCLT (Theorems \ref{thm:CushHud} and \ref{CH thm}) cannot be weakened.
\end{prop}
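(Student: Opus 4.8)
The plan is to establish the three claims (a), (b), (c) by explicit computation on the wave function $f$, exploiting the fact that all the relevant quantities have closed or at least tractable integral forms. Throughout I would work with the single-mode characteristic function formula \eqref{chi psi f}, which for a real even wave function $f$ simplifies considerably.

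First, for part (a), I would verify that $\psi_f$ is centred. Since $f(x) = \tfrac{1}{\sqrt 2}(1+x^2)^{-3/4}$ is real and even, the position expectation $\tr[\psi_f\, x] = \int x\, |f(x)|^2\, dx$ vanishes by parity, and the momentum expectation $\tr[\psi_f\, p] = \int f(x)\, (-i\, f'(x))\, dx$ is the integral of an odd function (a real even times its odd derivative), hence also zero. By \eqref{centred} this gives $\tr[\psi_f\, a]=0$. I should also check $f\in L^2(\RR)$ with the right normalisation, i.e.\ $\int (1+x^2)^{-3/2}\,dx = 2$, so that $\|\psi_f\|_1=1$; this is a standard Beta-function integral.

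For part (b), the key point is that $aa^\dag = a^\dag a + I$ and $a^\dag a = \tfrac12(x^2+p^2-1)$, so $\braket{\psi_f|(aa^\dag)^{1-\delta}|\psi_f}$ is controlled by the $(1-\delta)$-power of the harmonic oscillator energy. The cleanest route is to bound $(aa^\dag)^{1-\delta}$ by an operator built from $x^2+p^2$ using operator monotonicity of $t\mapsto t^{1-\delta}$, and then estimate $\braket{\psi_f|x^2|\psi_f}$ and $\braket{\psi_f|p^2|\psi_f}$ directly. Here $\braket{\psi_f|x^2|\psi_f}=\int x^2 (1+x^2)^{-3/2}\,dx$ \emph{diverges} (the integrand decays like $1/|x|$), which is exactly why the second moment is infinite and the Cushen--Hudson hypothesis fails; but the fractional moment $\tr[\psi_f\,(x^2+p^2)^{1-\delta}]$ is finite because concavity/operator-monotonicity of the $(1-\delta)$-power tames the logarithmic-type divergence. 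The momentum contribution $\int |f'(x)|^2\,dx$ is finite since $f'(x)=\mathcal O(|x|^{-5/2})$ is square-integrable. I would likely invoke the interpolation machinery of Definition~\ref{def:moments} and Theorem~\ref{thm:all-of-us}, or a direct Fock-basis estimate via \eqref{eq:moments-Fock}, to make the fractional bound rigorous. This step — showing finiteness of the $2(1-\delta)$ moment while the second moment blows up — is the main technical obstacle, since it requires quantifying the borderline divergence precisely.

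For part (c), I would use \eqref{chi CH state}, namely $\chi_{\psi_f^{\boxplus n}}(z)=\chi_{\psi_f}(z/\sqrt n)^n$, together with the quantum Lévy lemma (the Lemma after Theorem~\ref{thm:CushHud}): convergence to a state is equivalent to pointwise convergence of characteristic functions to a function continuous at the origin. I would compute the local expansion of $\chi_{\psi_f}(z)$ near $z=0$ and show that, because the second moment diverges logarithmically, the behaviour is of the form $\chi_{\psi_f}(z)=1-c\,|z_I|^2\log(1/|z_I|)+\ldots$ rather than the smooth quadratic $1-\tfrac12\sigma^2|z_I|^2$ that finite variance would give. Raising this to the $n$-th power at argument $z/\sqrt n$ then yields $\chi_{\psi_f}(z/\sqrt n)^n \approx \exp\!\big(-c\,|z_I|^2(\log n)/2 + \ldots\big)$, which tends to zero pointwise for $z_I\neq 0$ while equalling $1$ at $z=0$; the limit is therefore discontinuous at the origin and is not the characteristic function of any state. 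By the Lévy lemma this rules out convergence to any quantum state, establishing (c) and hence the necessity of the finite-second-moment assumption. The delicate part here is pinning down the $|z_I|^2\log(1/|z_I|)$ term in the small-$z$ expansion of $\chi_{\psi_f}$, which I would extract from \eqref{chi psi f} by analysing $\int f(x) f(x-\sqrt2 z_R)\,e^{\sqrt2 i z_I x}\,dx$ and isolating the singular contribution coming from the heavy $|x|^{-3/2}$ tails of $f^2$.
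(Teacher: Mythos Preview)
Your overall strategy matches the paper's for all three parts. Part~(a) is identical. For part~(c), the paper takes the same route---show that the pointwise limit of $\chi_{\psi_f^{\boxplus n}}$ is discontinuous at the origin---but streamlines the computation by recognising that on the imaginary axis $\chi_{\psi_f}(it)=\int|f(x)|^2 e^{\sqrt2\, itx}\,dx$ is exactly $\sqrt2\,|t|\,K_1(\sqrt2\,|t|)$, a modified Bessel function whose known small-argument expansion immediately yields the $t^2\log t$ behaviour you anticipated.

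Part~(b) is where your argument has a genuine gap. Operator monotonicity of $t\mapsto t^{1-\delta}$ takes you from $aa^\dag\le x^2+p^2$ to $(aa^\dag)^{1-\delta}\le (x^2+p^2)^{1-\delta}$, but since $x$ and $p$ do not commute this does not let you separate the position and momentum contributions, and $x^2+p^2=2aa^\dag-I$ just brings you back where you started. The phrase ``concavity tames the logarithmic divergence'' is not a proof; nor do the fallbacks you mention help (Theorem~\ref{thm:all-of-us} goes in the wrong direction, and a Fock-basis computation of $|\braket{n|\psi_f}|^2$ is not tractable here). The paper's key step is the operator \emph{subadditivity} inequality $(A+B)^{r}\le A^{r}+B^{r}$ for $r\in(0,1)$ and positive self-adjoint $A,B$, established via the integral representation of fractional powers (Lemma~\ref{integral representation lemma}). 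Applying it with $A=x^2$, $B=p^2$, $r=1-\delta$ gives $(aa^\dag)^{1-\delta}\le |x|^{2(1-\delta)}+|p|^{2(1-\delta)}\le |x|^{2(1-\delta)}+1+p^2$, after which both $\braket{\psi_f|\,|x|^{2(1-\delta)}\,|\psi_f}=\tfrac12\int |x|^{2(1-\delta)}(1+x^2)^{-3/2}\,dx$ and $\braket{\psi_f|p^2|\psi_f}=\int|f'|^2$ are finite elementary integrals.
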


The proof of the above proposition is given in Section \ref{sec-proofs:optimality}.

\medskip
We now come to the second question (ii) regarding  tightness of the estimates in Theorems \ref{thm:QBE'} and \ref{thm:QBElow} and Corollary \ref{cor:tracerel}. In Section \ref{sec-proofs:optimality} below, we study several explicit examples and provide convincing numerical evidence that our estimates are indeed tight, at least as far as the Hilbert--Schmidt convergence rates are concerned. Our findings are summarised as follows.
\begin{itemize}
\item We start by looking at the pure state $\ket{\psi}=(\ket{0}+\ket{3})/\sqrt2$, with density matrix $\psi = \ketbra{\psi}$ and thermal Gaussification $\psi_\G = \tau_{3/2}$. Our findings indicate that $\left\|\psi^{\boxplus n} - \psi_\G\right\|_2 \sim c\, n^{-1/2}$, in the sense that the ratio between the two sides tends to $1$ as $n\to \infty$, for some absolute constant $c$ (Example \ref{ex:Simon} and Figure~\ref{fig:rates}). Hence, the $\mathcal O(n^{-1/2})$ convergence rate of Theorem \ref{thm:QBElow} is attained.
\item Next, we focus on the second estimate of Theorem \ref{thm:QBE'}, and show that it is also tight. Namely, we compute the differences $\left\|\psi^{\boxplus n} - \psi_\G\right\|_\alpha$ for the simple case of a single-photon state $\psi=\ketbra{1}$ and for $\alpha=1,2$, and find numerical evidence that again $\left\|\psi^{\boxplus n} - \psi_\G\right\|_\alpha\sim c\, n^{-1}$ for some absolute constant $c$ (Example \ref{ex:Ludovico} and Figure~\ref{fig:rates}). This shows that the $\mathcal O(n^{-1})$ convergence rate stated in Theorem \ref{thm:QBE'}, under the assumption that $D^3\chi_\rho(0)=0$, is also attained. 
\end{itemize}

\subsection{Applications to capacity of cascades of beam splitters with non-Gaussian environment} \label{subsec:applications}

We now discuss applications of our results to the study of channels that arise naturally in the analysis of lossy optical fibres. We model a physical fibre of overall transmissivity $\lambda$ as a cascade of $n$ beam splitters, in each of which the signal state $\omega$ is mixed via an elementary beam splitter of transmissivity $\lambda^{1/n}$ with a fixed state $\rho$, modelling the environmental noise (Figure~\ref{fig:optical-fibre}). Each step corresponds to the action of the channel $\oneN:\omega\mapsto \omega\boxplus_{\lambda^{1/n}} \rho$ (cf.\ the definition \eqref{channel N}), so that the whole cascade can be represented by the $n$-fold composition $\oneN^n$. Note that this is in general a non-Gaussian channel, albeit it is Gaussian \textit{dilatable}~\cite{KK-VV,G-dilatable}. We are interested in the asymptotic expression of the output state $\oneN^n(\omega)$ as the number $n$ tends to infinity, as a function of the input state $\omega$. In other words, we want to study the asymptotic channel $\lim_{n\to\infty} \oneN^n$.

At this point, it should not come as a surprise that such a channel exists and coincides with $\mathcal{N}_{\rho_\G,\,\lambda}$.

Before we see why, let us justify why the above model may be relevant to applications. The recently flourishing field of integrated quantum photonics sets as its goal that of implementing universal quantum computation on miniaturised optical chips~\cite{OBrien2009, Politi2009, Carolan2015, Rohde2015}. A quantum channel that runs across such a circuit is susceptible to noise generated by other active elements of the same circuit, e.g.\ single-photon sources. While we expect such noise to be far from thermal, it may become so in the limit $n\to \infty$ of many interactions. In a regime where $n$ is finite, albeit large, our setting will thus be the appropriate one. The forthcoming Corollary \ref{cor:capac} allows us to study the classical and quantum capacity of the effective channel in such a regime.

Let us note in passing that the cascade architecture we are investigating now, in spite of some apparent resemblance, is different from that depicted in Figure \ref{fig2}. While we regard the former as more operationally motivated, the latter is mathematically convenient, as the transmissivities are tuned in such a way as to yield the symmetric convolution $\rho^{\boxplus n}$ at the output.

\begin{figure}[h!]
    \centering
    \includegraphics[width=1\textwidth]{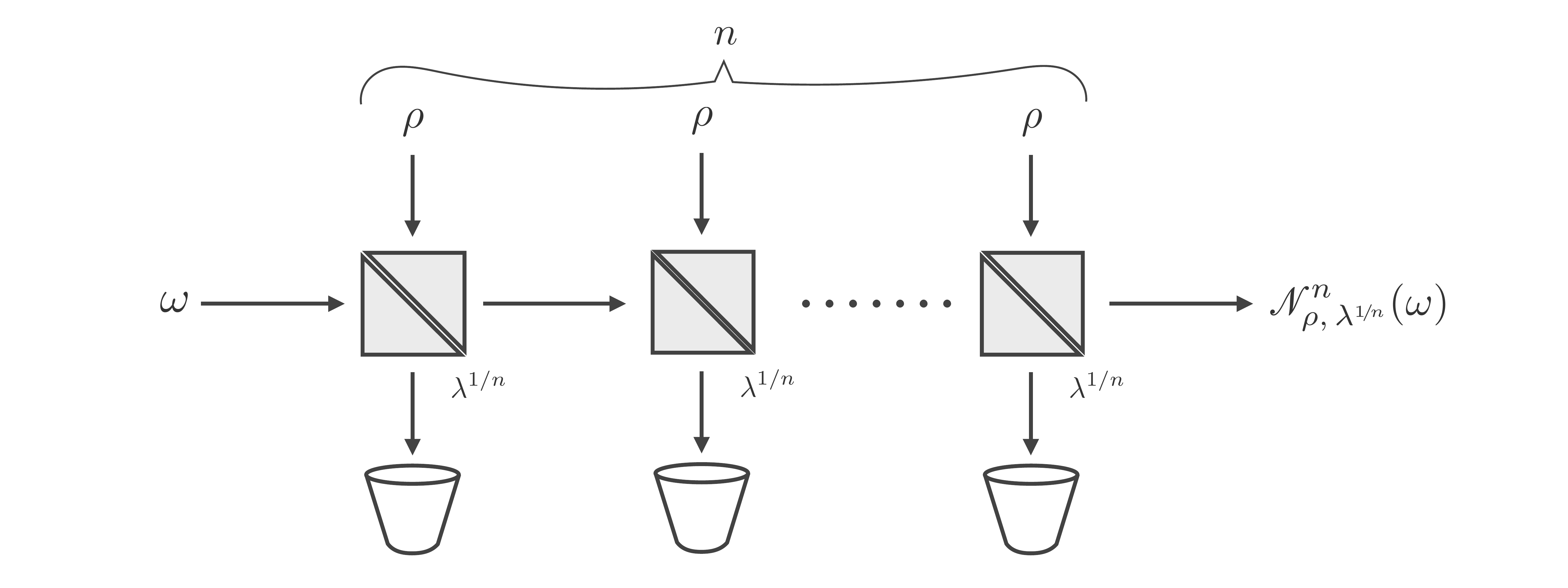}
    \caption{An input state $\omega$ enters an optical fibre modelled by a cascade of $n$ beam splitters with equal transmissivities $\lambda^{1/n}$ and environment states $\rho$.}
    \label{fig:optical-fibre}
\end{figure}

\begin{thm}[(Approximation of thermal attenuators channels by cascades of beam splitters)]\label{thm:approxchann}
Let $\rho$ be a centred $m$-mode quantum state with finite third-order phase space moments $M_3'$, cf. \eqref{eq:ps-moments}, and denote by $\rho_\G$ its Gaussification. Then, 
\begin{align*}
    \left\|\oneN^{n}-\mathcal{N}_{\rho_\G,\,\lambda}\right\|_{\diamond}=\mathcal{O}_{M_3'}\left(n^{-\frac{1}{2(m+1)}}\right)\,,
\end{align*}
where $\|\cdot\|_{\diamond}$ stands for the diamond norm \eqref{eq:diamond}.
\end{thm}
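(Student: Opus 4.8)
The plan is to collapse the entire cascade into a single beam splitter with an effective environment, reduce the diamond norm to a trace norm via a contraction estimate, and then prove the resulting trace-norm statement by the same non-i.i.d.\ Berry--Esseen machinery that underlies Corollary~\ref{cor:tracerel}. Iterating the convolution identity \eqref{boxplus characteristic functions} along the cascade, one finds that for every input state $\omega$,
\[
\chi_{\oneN^n(\omega)}(z) = \chi_\omega\big(\sqrt\lambda\, z\big)\,\prod_{k=0}^{n-1}\chi_\rho\Big(\lambda^{k/(2n)}\sqrt{1-\lambda^{1/n}}\;z\Big)\, .
\]
Setting $d_k \coloneqq \lambda^{k/(2n)}\sqrt{(1-\lambda^{1/n})/(1-\lambda)}$, a geometric-series computation gives the crucial normalisation $\sumno_{k=0}^{n-1} d_k^2 = 1$, so that $\prod_k \chi_\rho(d_k\,\cdot\,)$ is itself the characteristic function of a bona fide state $E_n$ -- the non-i.i.d.\ convolution of $n$ copies of $\rho$ with weights $d_k$, realised by a passive interferometer and hence a state. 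Consequently the cascade acts exactly as a single thermal-type beam splitter, $\oneN^n = \mathcal{N}_{E_n,\,\lambda}$, and the problem becomes that of comparing the environments $E_n$ and $\rho_\G$.

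Next I would invoke a contraction lemma for the convolution channel. Since $\mathcal{N}_{\sigma,\lambda}(\omega) = \tr_2[U_\lambda(\omega\otimes\sigma)U_\lambda^\dag]$, the map $(\mathcal{N}_{\sigma,\lambda}-\mathcal{N}_{\sigma',\lambda})\otimes\operatorname{id}$ factors through $X\mapsto (U_\lambda\otimes I)(X\otimes(\sigma-\sigma'))(U_\lambda\otimes I)^\dag$ followed by a partial trace; unitary invariance and monotonicity of the trace norm then give $\|\mathcal{N}_{\sigma,\lambda}-\mathcal{N}_{\sigma',\lambda}\|_\diamond \le \|\sigma-\sigma'\|_1$. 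Applying this with $\sigma=E_n$ and $\sigma'=\rho_\G$ reduces the theorem to the trace-norm bound $\|E_n - \rho_\G\|_1 = \mathcal{O}_{M_3'}\big(n^{-1/(2(m+1))}\big)$.

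The remaining, and main, task is this non-i.i.d.\ estimate, proved in two stages exactly as Corollary~\ref{cor:tracerel}. For the Hilbert--Schmidt bound I use the Plancherel identity \eqref{Plancherel HS norm} and split $\int |\chi_{E_n}-\chi_{\rho_\G}|^2$ into a ball $|z|\le \delta\sqrt n$ and its complement. On the ball I expand $\log\chi_\rho(w) = -\tfrac12 Q(w) + O_{M_3'}(|w|^3)$ to third order, with the linear term vanishing by centredness and $Q$ the covariance form of $\rho$ (equivalently $\rho_\G$), so that $\log\prod_k\chi_\rho(d_k z) = -\tfrac12 Q(z)\sumno_k d_k^2 + O\big(|z|^3\sumno_k d_k^3\big)$; the two facts $\sumno_k d_k^2=1$ and $\sumno_k d_k^3 \le \max_k d_k = O(n^{-1/2})$ then yield an $n^{-1/2}$ rate after integrating against the Gaussian weight $e^{-\frac12 Q(z)}=\chi_{\rho_\G}(z)$. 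On the complement I exploit that all weights are mutually comparable and of size $\Theta(n^{-1/2})$, so for $|z|\ge\delta\sqrt n$ every argument $d_k z$ leaves a fixed ball around the origin; Proposition~\ref{prop1} then bounds each factor by some $\eta<1$, giving $|\chi_{E_n}(z)|\le \eta^{n}$, while retaining one undamped factor $|\chi_\rho(d_0 z)|$ (which rescales to $\|\chi_\rho\|_{L^2}<\infty$) controls integrability at infinity, making the tail super-polynomially small. This establishes $\|E_n-\rho_\G\|_2 = \mathcal{O}_{M_3'}(n^{-1/2})$.

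Finally I upgrade this to trace norm as in Corollary~\ref{cor:tracerel}: projecting $E_n-\rho_\G$ onto a finite-energy subspace, whose dimension grows like $E^m$, and combining $\|\cdot\|_1\le\sqrt{\dim}\,\|\cdot\|_2$ with a tail bound converts the exponent $1/2$ into $1/(2(m+1))$, provided the energies are bounded uniformly in $n$. Here the non-i.i.d.\ structure is actually benign: iterating \eqref{energy convolution} gives the exact identity $\tr[E_n H_m] = \big(\sumno_k d_k^2\big)\tr[\rho H_m] = \tr[\rho H_m] = \tr[\rho_\G H_m]$, which is finite since finite third moments imply finite energy. I expect the genuine difficulty to lie in the Hilbert--Schmidt stage, namely making the Taylor expansion and the tail estimate uniform in the varying weights $d_k$ -- in particular verifying that, despite $\max_k d_k\to 0$, sufficiently many factors in the product decay to kill the tail -- whereas the contraction lemma and the energy-based interpolation are essentially direct transcriptions of the i.i.d.\ arguments.
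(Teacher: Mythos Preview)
Your proposal is correct and follows essentially the same route as the paper: the cascade is collapsed into a single $\lambda$-beam splitter with effective environment $E_n$ (the paper's $\rho(\lambda,n)$, cf.\ Lemma~\ref{lem:cascade-equivalence}), the diamond norm is reduced to $\|E_n-\rho_\G\|_1$ via the contraction estimate (Lemma~\ref{lem:channeltostate}), and the latter is bounded by a non-i.i.d.\ Hilbert--Schmidt argument (Proposition~\ref{propnon-i.i.d.}) plus the energy-based $\|\cdot\|_2\to\|\cdot\|_1$ interpolation of Corollary~\ref{cor:tracerel}. The specific mechanisms you identify---$\sum_k d_k^2=1$, $\sum_k d_k^3\le\max_k d_k=O(n^{-1/2})$, retaining one undamped factor for $L^2$ integrability in the tail, and the exact energy identity $\tr[E_n H_m]=\tr[\rho H_m]$---are precisely those used in the paper.
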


One can further make use of the recently derived continuity bounds under input energy constraints \cite{tightuniform, winter2017energy, shirokov2017tight, shirokov2018energy} in order to find bounds on capacities of the cascade channel $\oneN^n$ in the physically relevant case where the Gaussification $\rho_\G$ of $\rho$ is a thermal state.\footnote{This amounts to assuming that $\rho$ can be brought to its so-called Williamson form (see~\eqref{Williamson-rho} of Section \ref{sec-proofs:QCLT}) by a passive symplectic unitary only.}

\begin{cor} \label{cor:capac}
Consider a single-mode quantum state $\rho$ with finite third-order phase space moments $M_3'$ (cf.~\eqref{eq:ps-moments}) and thermal Gaussification $\rho_\G=\tau_N$ as in~\eqref{tau}. Then, for $\lambda\in [0,1]$, mean photon number $N \coloneqq \tr\big[\rho\, a^\dag a\big]<\infty$, and some input energy $E>0$, the energy-constrained classical and quantum capacity of the cascade channel $\oneN^n$ relative to the canonical Hamiltonian $a^\dag a$ satisfy
\bb
\left|\mathcal{C}\big(\oneN^n,E\big) - g(\lambda E+(1-\lambda)N) + g((1-\lambda)N)\right| \leq \Delta_c (n; N, M_3', \lambda, E)
\label{classical capacity cascade}
\ee
and
\bb
\left|\mathcal{Q}\big(\oneN^n, E\big) - \mathcal{Q}\big(\att, E\big) \right| \leq \Delta_q(n; N, M_3', \lambda, E)\, ,
\label{quantum capacity cascade}
\ee
where $g(x)\coloneqq (x+1) \log(x+1)-x\log x$ (as in \eqref{g}), and $\mathcal{Q}\big(\att, E\big)$ is the quantum capacity of the thermal attenuator.\footnote{An analytical formula for this quantity is currently not known. We report the best lower~\cite{holwer, Noh2020} and upper~\cite{PLOB, MMMM, Rosati2018, Sharma2018, Noh2019} bounds known to date in \eqref{lower bound Q th att 1}--\eqref{lower bound Q th att 2} and \eqref{upper bound Q th att 1}--\eqref{upper bound Q th att 3}, respectively. These results can be used together with \eqref{quantum capacity cascade} to find bounds on $\mathcal{Q}\big(\oneN^n, E\big)$.}

The remainder terms are such that
\bb
\begin{aligned}
\Delta_c(n; N, M_3', \lambda, E) &\leq C(M_3')\, n^{-1/4} \log n\, ,\\
\Delta_q(n; N, M_3', \lambda, E) &\leq C(M_3')\, n^{-1/8} \log n\, .
\end{aligned}
\label{remainder term cascade}
\ee
for some constant $C=C(M_3')$ and all sufficiently large $n\geq n_0\left(\lambda E +(1-\lambda) N, M_3'\right)$.
\end{cor}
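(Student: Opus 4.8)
The plan is to prove Corollary~\ref{cor:capac} by combining the diamond-norm convergence of Theorem~\ref{thm:approxchann} with the energy-constrained continuity bounds for capacities. The strategy rests on the observation that, up to the irrelevant symplectic unitaries, $\oneN^n$ is close in diamond norm to the thermal attenuator $\att = \mathcal{N}_{\tau_N, \lambda}$, and that classical and quantum capacities are continuous functionals of the channel with respect to the (energy-constrained) diamond norm. First I would specialise Theorem~\ref{thm:approxchann} to the single-mode case $m=1$, yielding $\left\|\oneN^n - \att\right\|_\diamond = \mathcal{O}_{M_3'}\left(n^{-1/4}\right)$, since $\rho_\G = \tau_N$ by hypothesis and $\frac{1}{2(m+1)} = \frac14$. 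The target error exponents $n^{-1/4}$ and $n^{-1/8}$ in \eqref{remainder term cascade} strongly suggest that the classical bound uses the diamond distance directly, whereas the quantum bound pays an extra square-root penalty, presumably arising from a Fuchs--van-de-Graaf-type or purification-based argument in the coherent-information continuity estimate.

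The key steps, in order, are as follows. First, establish a uniform energy bound: from \eqref{energy convolution}, the output of each elementary convolution has mean photon number a convex combination of the input energy and $N$, so the output energy of $\oneN^n(\omega)$ on any input with $\tr[\omega\, a^\dagger a]\leq E$ stays bounded by some $E' = E'(\lambda, E, N)$ uniformly in $n$; the same holds for $\att$. This is essential because the continuity bounds of Winter and Shirokov are only valid under an energy constraint. Second, invoke the continuity bound for the energy-constrained classical capacity: writing $\|\oneN^n - \att\|_\diamond = \delta_n = \mathcal{O}(n^{-1/4})$, the results of~\cite{tightuniform, shirokov2017tight} give $\left|\mathcal{C}(\oneN^n, E) - \mathcal{C}(\att, E)\right| \leq f(\delta_n, E')$, where $f$ scales like $\delta_n \log(1/\delta_n)$ up to energy-dependent prefactors; substituting $\delta_n \sim n^{-1/4}$ produces the $n^{-1/4}\log n$ bound, and the explicit value $\mathcal{C}(\att, E) = g(\lambda E + (1-\lambda)N) - g((1-\lambda)N)$ from \eqref{classical capacity thermal attenuator} gives \eqref{classical capacity cascade}. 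Third, for the quantum capacity, apply the analogous energy-constrained continuity estimate from~\cite{winter2017energy, shirokov2018energy}: here the coherent-information continuity typically scales like $\sqrt{\delta_n}\,\log(1/\delta_n)$ (or uses $\sqrt{\delta_n}$ as the relevant small parameter because the quantum capacity is controlled via the output-state trace distance of the channel acting on one half of a purification, which relates to $\sqrt{\|\cdot\|_\diamond}$ through Fannes--Audenaert-type bounds), yielding $\sqrt{n^{-1/4}}\log n = n^{-1/8}\log n$ and hence \eqref{quantum capacity cascade}.

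The main obstacle I anticipate is making the energy-constrained continuity bounds applicable with the correct dependence on the energy parameter, and in particular tracking how the threshold $n_0$ and the constant $C(M_3')$ arise. The subtlety is twofold: the continuity bounds require a \emph{fixed} finite input energy constraint, but the output energy $E'$ feeding into Winter's/Shirokov's estimates depends on $\lambda, E, N$, so I must verify that the constraint is uniform in $n$ (which the convexity in \eqref{energy convolution} secures). Additionally, the diamond-norm estimate from Theorem~\ref{thm:approxchann} is only valid for $n$ large enough that the hidden constant controls the error, which is precisely the role of $n_0(\lambda E + (1-\lambda)N, M_3')$; I would need to check that the energy-dependent thresholds in the continuity bounds and in Theorem~\ref{thm:approxchann} can be combined into a single such $n_0$. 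Tracking the exact power of the logarithm (whether $\log n$ or $\log(1/\delta_n) \sim \log n$) and confirming that the quantum capacity indeed incurs the square-root loss rather than inheriting the $n^{-1/4}$ rate directly is the delicate quantitative point; everything else is a matter of assembling cited bounds and substituting the rate from Theorem~\ref{thm:approxchann}.
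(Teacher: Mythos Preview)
Your proposal is correct and follows essentially the same approach as the paper: specialise Theorem~\ref{thm:approxchann} to $m=1$ to get $\varepsilon_n \coloneqq \tfrac12\|\oneN^n - \att\|_\diamond = \mathcal{O}_{M_3'}(n^{-1/4})$, check via \eqref{energy convolution} that both channels send energy-$E$ inputs to outputs of energy $\lambda E + (1-\lambda)N$, and then plug into the energy-constrained continuity bounds---Shirokov's \cite[Proposition~6]{shirokov2018energy} for the classical capacity (linear in $\varepsilon_n$, hence $n^{-1/4}\log n$) and Winter's \cite[Theorem~9]{winter2017energy} for the quantum capacity (which indeed carries the $\sqrt{\varepsilon_n}$ dependence you anticipated, hence $n^{-1/8}\log n$). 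The only minor sharpening relative to your sketch is that the paper cites the specific results \cite[Theorem~9]{winter2017energy} and \cite[Proposition~6]{shirokov2018energy} rather than the earlier \cite{tightuniform, shirokov2017tight}, but the mechanism is exactly as you describe.
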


The proofs of Theorem \ref{thm:approxchann} and Corollary \ref{cor:capac} are postponed to Section \ref{sec-proofs:cascade}.

\subsection{New results on quantum characteristic functions} \label{subsec:new-results}

In this subsection we state our refined asymptotic analysis of the decay of quantum characteristic functions that we employ in the proofs of our main theorems. For arbitrary quantum states, we have the following asymptotic result on the quantum characteristic function at infinity. It states that the quantum characteristic function can, in absolute value, only attain the value one at zero and decays to zero at infinity. Both these properties do not hold for general classical random variables, see Section \ref{subsec-proofs:decay}.

\begin{prop} \label{prop1}
The quantum characteristic function of an $m$-mode quantum state $\rho$ is a continuous function that is arbitrarily small in absolute value outside of a sufficiently large compact set, i.e.\ $\chi_{\rho}$ belongs to the Banach space $C_0(\mathbb \CC^{m})$ of asymptotically vanishing functions. Moreover, for any $\varepsilon>0$ we have
\bb
\max_{z \in \CC^{m} \backslash B(0,\varepsilon)} \left| \chi_{\rho}(z) \right| <1,
\label{strictly non-lattice chi}
\ee
where $B(0,\varepsilon)\coloneqq \{z\in \CC^m: |z|\leq \varepsilon\}$ denotes a Euclidean ball of radius $\varepsilon$ centred at the origin.
\end{prop}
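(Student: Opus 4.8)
The plan is to prove the two assertions separately---membership of $\chi_\rho$ in $C_0(\CC^m)$ and the strict bound \eqref{strictly non-lattice chi}---and then to combine them by a compactness argument. For the vanishing at infinity, I would first record that $\chi_\rho$ is bounded and uniformly continuous (as recalled below \eqref{chi inverse}) and, crucially, square-integrable: any density operator is Hilbert--Schmidt, since its eigenvalues $p_i$ obey $\sum_i p_i^2 \le \sum_i p_i = 1$, so $\tr[\rho^2]\le 1$, and the quantum Plancherel theorem \eqref{Plancherel} with $S=T=\rho$ gives $\int \frac{d^{2m}z}{\pi^m}\,|\chi_\rho(z)|^2 = \tr[\rho^2]\le 1<\infty$, i.e.\ $\chi_\rho\in L^2(\CC^m)$. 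It then suffices to invoke the elementary fact that a uniformly continuous, square-integrable function on $\RR^{2m}$ vanishes at infinity: were there $\delta>0$ and $z_n\to\infty$ with $|\chi_\rho(z_n)|\ge\delta$, uniform continuity would produce a fixed radius $r>0$ on which $|\chi_\rho|\ge\delta/2$ throughout each ball $B(z_n,r)$, and extracting a subsequence with pairwise disjoint balls would force $\int|\chi_\rho|^2=\infty$, a contradiction. Hence $\chi_\rho\in C_0(\CC^m)$.

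The heart of the matter is the pointwise claim that $|\chi_\rho(z)|<1$ for every $z\neq 0$. I would first treat a pure state $\psi_f=\ketbra{\psi_f}$ with $\|f\|_2=1$ via the explicit formula \eqref{chi psi f}. Since its scalar prefactor is unimodular, writing $g(x)\coloneqq f(x)\,e^{-\sqrt2\,i\,z_I^\intercal x}$ and $h(x)\coloneqq f(x-\sqrt2\,z_R)$ gives $|\chi_{\psi_f}(z)|=\left|\int d^m x\, g^*(x)\,h(x)\right|\le \|g\|_2\|h\|_2=1$ by Cauchy--Schwarz, with equality only if $h=c\,g$ almost everywhere for some $|c|=1$, i.e.\ $f(x-\sqrt2\,z_R)=c\,e^{-\sqrt2\,i\,z_I^\intercal x}f(x)$. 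If $z_R\neq 0$, taking moduli shows $|f|$ is invariant under the nonzero translation $\sqrt2\,z_R$, incompatible with $0\neq f\in L^2(\RR^m)$; if $z_R=0$ but $z_I\neq 0$, the relation forces $f$ to be supported on the measure-zero set where $c\,e^{-\sqrt2\,i\,z_I^\intercal x}=1$, again giving $f=0$. Either way equality is impossible, so $|\chi_{\psi_f}(z)|<1$ for $z\neq0$. For a general $\rho=\sum_i p_i\ketbra{e_i}$, linearity of \eqref{chi} yields $\chi_\rho=\sum_i p_i\,\chi_{\ketbra{e_i}}$ with each summand strictly below $1$ in modulus; since $\sum_i p_i\big(1-|\chi_{\ketbra{e_i}}(z)|\big)$ is a sum of non-negative terms, at least one of which (any index with $p_i>0$) is strictly positive, I conclude $|\chi_\rho(z)|\le\sum_i p_i\,|\chi_{\ketbra{e_i}}(z)|<1$.

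To combine, fix $\varepsilon>0$ and use the first step to choose $R>\varepsilon$ with $|\chi_\rho(z)|\le\tfrac12$ for $|z|\ge R$. On the compact annulus $\{\varepsilon\le|z|\le R\}$ the continuous function $|\chi_\rho|$ attains its maximum at some $z_0\neq 0$, and $|\chi_\rho(z_0)|<1$ by the second step, so $\sup_{|z|\ge\varepsilon}|\chi_\rho(z)|=\max\{|\chi_\rho(z_0)|,\tfrac12\}<1$, which is \eqref{strictly non-lattice chi}. Everything outside the pointwise bound is soft; the real content, and the main obstacle, lies in ruling out $|\chi_\rho(z)|=1$ at $z\neq0$. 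Conceptually this is the statement that $\D(z)$ has no eigenvectors for $z\neq0$---a reflection of the fact that its generator $\sum_j(z_j a_j^\dag-z_j^* a_j)$ is, up to a factor $i$, a nonzero quadrature operator with purely continuous spectrum. The two points demanding care are the split into the cases $z_R\neq0$ and $z_R=0,\,z_I\neq0$ in the functional equation, and the passage from the pure-state bound to the mixed-state bound, where one must ensure the strict inequality survives a possibly infinite convex combination.
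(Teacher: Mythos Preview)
Your proof is correct. It differs from the paper's in two notable ways, both of which make your argument more self-contained.

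For the $C_0$ assertion, the paper reduces to pure states via the spectral theorem and dominated convergence, then approximates the wave function by $C_c^\infty$ functions and invokes the Riemann--Lebesgue lemma together with the closedness of $C_0(\CC^m)$ under uniform limits. Your route---uniform continuity plus square-integrability via the quantum Plancherel identity---is shorter and avoids any approximation; it also works directly for mixed states without first passing to eigenvectors.

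For the strict inequality at $z\neq 0$, the paper observes that $|\chi_{\psi_f}(z)|=1$ would make $\ket{\psi_f}$ an eigenvector of $\D(z)$ and cites an external lemma (\cite[Lemma~10]{G-dilatable}) to rule this out. You instead unpack the equality case of Cauchy--Schwarz in the explicit formula \eqref{chi psi f} and derive a contradiction by elementary measure-theoretic reasoning (periodicity of $|f|$ versus $L^2$-integrability when $z_R\neq 0$; support on a null set when $z_R=0$). This is precisely the content of the cited lemma, made explicit. You also spell out the passage from pure to mixed states via the convex combination, which the paper leaves implicit. Your final compactness argument to upgrade the pointwise strict inequality to the uniform bound \eqref{strictly non-lattice chi} is the same as what the paper tacitly uses.
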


The proof of Proposition \ref{prop1} is given in Section \ref{subsec-proofs:decay}. Interestingly, we can obtain a much more refined asymptotic on the decay of quantum characteristic functions if we assume that the state has finite second order moments.

\begin{prop} \label{prop:decay-energy}
Let $\rho$ be an $m$-mode state with finite average energy $E\coloneqq \tr\big[\rho\, \left( H_m+\frac{m}{2}\, I \right)\big]$, where we have explicitly accounted for the non-zero energy of the vacuum state. Then, for all $z\in \CC^m$ and all $\delta\in [0,1]$ it holds that
\bb
\left|\chi_\rho(z)\right| \leq 1 - \frac{(1-\delta)^3\delta^{2m-1} \left((2m+1)!!\right)^2}{6\cdot 2^{4m}\cdot E^{2m-1}} \min\left\{|z|^2, \frac{\pi^2 \delta}{4E}\right\} .
\ee
\end{prop}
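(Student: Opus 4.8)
The plan is to reduce the bound on $|\chi_\rho(z)|$ to an estimate on a one-dimensional classical characteristic function, exploiting the key structural fact (Lemma \ref{positive W lemma}) that convolving a state with itself yields a non-negative Wigner function. Concretely, I would first consider the self-convolved state $\omega \coloneqq \rho \boxplus \rho$, whose Wigner function $W_\omega \geq 0$ is then a genuine probability density on $\RR^{2m}$, and whose characteristic function satisfies $\chi_\omega(z) = \chi_\rho(z/\sqrt2)^2$ by \eqref{boxplus characteristic functions}. Since $|\chi_\rho(z)|^2 = |\chi_\rho(z)^2|$, controlling $|\chi_\omega(\sqrt2\, z)|$ from above controls $|\chi_\rho(z)|^2$, and a bound of the form $|\chi_\omega| \leq 1 - c$ translates into $|\chi_\rho|^2 \leq 1-c$, hence $|\chi_\rho| \leq 1 - c/2$ for small $c$. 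The energy is additive under convolution by \eqref{energy convolution}, so $\omega$ has the same normalised energy $E$ as $\rho$.

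Next I would treat $\chi_\omega$ as the classical characteristic function of a centred random vector $Z$ on $\RR^{2m}$ with density $W_\omega$. For such a variable one has the elementary pointwise inequality $\Re\,\chi_\omega(z) = \E[\cos\theta_z(Z)] \leq 1 - \tfrac12 \E[\theta_z(Z)^2] + \ldots$, but rather than a Taylor bound (which would blow up for large $|z|$) I would use the uniform estimate $1 - \cos t \geq \tfrac{2}{\pi^2} t^2$ valid for $|t| \leq \pi$, together with a truncation: restrict the expectation to the ball where the relevant phase $\theta_z(Z)$ stays below $\pi$. This is precisely the role of the parameter $\delta$ and the cutoff $\min\{|z|^2, \pi^2\delta/(4E)\}$ in the statement. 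The free parameter $\delta \in [0,1]$ will be used to split the probability mass of $Z$ into a low-energy core (on which the phase is controlled and the cosine inequality applies) and a high-energy tail (discarded via Markov's inequality using $\E[|Z|^2] \lesssim E$).

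Carrying this out, I would fix a direction and write $1 - |\chi_\rho(z)|^2 \geq 1 - \Re\,\chi_\omega(\sqrt2\,z) \geq \E\big[(1-\cos)\,\indic_{\text{core}}\big]$, bound the indicator's complementary probability by $\delta$-worth of mass using Markov and the second-moment bound $\E[|Z|^2] = \tr[\omega(H_m + \tfrac m2 I)]/(\text{const}) \sim E$ from \eqref{energy Wigner}, and bound the cosine term below by $\tfrac{2}{\pi^2}$ times the squared phase on the core. Optimising the constants then produces the explicit prefactor $(1-\delta)^3\delta^{2m-1}((2m+1)!!)^2/(6\cdot 2^{4m} E^{2m-1})$; the powers of $\delta$ and the double factorial should emerge from the $2m$-dimensional tail estimate (a Gaussian-type or moment concentration bound in $\RR^{2m}$) combined with the $(1-\delta)$ factor coming from the discarded mass and the factor of $2$ lost in passing from $|\chi_\rho|^2$ back to $|\chi_\rho|$.

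The main obstacle I anticipate is obtaining the correct dependence on the dimension $m$ and the energy $E$ with the stated explicit constants, rather than merely qualitatively. The delicate point is the $2m$-dimensional truncation: one must lower-bound the expectation of $(1-\cos)$ restricted to a region where the density $W_\omega$ is not known beyond its total second moment, so a naive one-dimensional argument will not directly give the factor $\delta^{2m-1}$ and the double factorial. I expect the cleanest route is to integrate over the sphere, use the energy constraint $\int |z|^2 W_\omega = O(E)$ to produce a Markov-type tail bound in the radial variable, and track the surface-area / Gamma-function constants carefully — this is exactly where the $((2m+1)!!)^2$ and the power $E^{2m-1}$ will be pinned down. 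The interplay between keeping the constant universal (independent of $\rho$ beyond its energy) and sharp in $m$ is the technical heart of the argument, and matches the paper's stated emphasis that such a bound \emph{must} depend on the energy.
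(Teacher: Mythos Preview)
Your quantum-to-classical reduction is exactly the paper's approach: pass to $\omega = \rho\boxplus\rho$ via Lemma~\ref{positive W lemma} so that $W_\omega\geq 0$ is a genuine probability density on $\RR^{2m}$, note that $|\chi_\rho(z)|=|\chi_\omega(\sqrt2\,z)|^{1/2}$ and that $\omega$ has the same energy $E$ by~\eqref{energy convolution}, and finish with $\sqrt{1-x}\leq 1-x/2$. The divergence comes at the classical step: the paper does not derive the bound on $|\chi_\omega|$ from scratch but invokes a ready-made inequality from the theory of classical characteristic functions (Ushakov, Corollary~2.7.2), applied with the uniform density bound $\sup W_\omega\leq 2^m/\pi^m$ from~\eqref{Wigner 2} and the second moment $\int |u|^2 W_\omega(u)\,d^{2m}u = E$. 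All the constants you worry about --- the $((2m+1)!!)^2$, the $\delta^{2m-1}(1-\delta)^3$, the power $E^{2m-1}$ --- are produced by that classical lemma, not by a bespoke truncation/Markov argument. Your anticipated ``main obstacle'' (pinning down the explicit $m$- and $E$-dependence) is thus bypassed entirely by citation.

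One concrete slip in your sketch: the chain $1-|\chi_\rho(z)|^2 \geq 1-\Re\,\chi_\omega(\sqrt2\,z)$ goes the wrong way, since $|\chi_\omega|\geq \Re\chi_\omega$ always, and here $\chi_\omega=\chi_\rho(\cdot/\sqrt2)^2$ need not be real. To salvage a cosine-based argument you would have to symmetrise once more (e.g.\ work with $|\chi_\omega|^2$, which \emph{is} the characteristic function of a symmetric law and hence real), paying another factor of~$2$. This is fixable, but it illustrates why quoting a classical bound on $|\chi|$ directly is cleaner than routing through $\Re\chi$.
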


The proof of Proposition \ref{prop:decay-energy} is given in Section \ref{subsec-proofs:decay}.

\section{New results on quantum characteristic functions: Proofs} \label{sec-proofs:new-results}

Quantum characteristic functions constitute a central tool in our approach. Therefore, the first step in our path towards the quantum Berry--Esseen theorems is to prove the results stated in Section \ref{subsec:new-results}. The structure of this section is as follows:
\begin{itemize}
    \item \textit{Quantum--classical correspondence:} We derive a quantum--classical correspondence of the central limit theorems by showing that the quantum convolution of two arbitrary density operators naturally induces a classical random variable (Section \ref{subsec-proofs:classical-trick}).
    \item \textit{Decay bounds:} We derive new decay estimates and asymptotic properties of the quantum characteristic function at infinity (Section \ref{subsec-proofs:decay}).
\end{itemize}

\subsection{Quantum--classical correspondence} \label{subsec-proofs:classical-trick}

In this section we show that the quantum convolution $\rho \boxplus \sigma$ of any two states $\rho$ and $\sigma$ has a non-negative Wigner function. While the mathematics behind this is known (see e.g.~\cite[Proposition~(1.99)]{Folland},~\cite[Proposition~5]{Cushen1971}, and~\cite[Eq.~(8)]{Jagannathan1987}), we believe that its physical implications have not been appreciated to the extent they deserve.

\begin{lemma} \label{positive W lemma}
Let $\rho$ and $\sigma$ be arbitrary $m$-mode quantum states. Then the Wigner function of their convolution $\rho\boxplus\sigma$ defined by \eqref{boxplus}, with $\lambda=1/2,$ is given by
\bb
W_{\rho\,\boxplus\, \sigma}(z) = \frac{2^m}{\pi^m} \tr\left[ \rho\, \D\big(\sqrt2 z\big) J \sigma J \D\big(\sqrt2 z\big)^\dag\right],
\label{Wigner convolution}
\ee
where $J\coloneqq (-1)^{\sumno_j a_j^\dag a_j}$ is the unitary and self-adjoint operator that implements a phase space inversion (in the sense of eq.~(\ref{W inversion}) below). In particular,
\bb
\label{eq:positivity}
W_{\rho\, \boxplus\, \sigma}(z) \geq 0\qquad \forall\ z\in \CC^m\, .
\ee
\end{lemma}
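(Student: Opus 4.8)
The plan is to establish the explicit formula \eqref{Wigner convolution} first, and then read off non-negativity \eqref{eq:positivity} as an immediate consequence. I would start from the general expression \eqref{Wigner 2} for the Wigner function, namely
\bb
W_T(z) = \frac{2^m}{\pi^m}\, \tr\left[ \D(-z)\, T\, \D(z)\, J \right],
\eee
applied to the operator $T = \rho \boxplus \sigma = \tr_2\big[ U_{1/2}\, (\rho\otimes\sigma)\, U_{1/2}^\dag \big]$, where $J = (-1)^{\sum_j a_j^\dag a_j}$ is the parity operator. The idea is to push the displacements inside the partial trace and then use the beam-splitter transformation rule \eqref{BS action displacement} to convert a single-mode displacement into a product of displacements acting on the two input modes.

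First I would substitute the definition of $\rho\boxplus\sigma$ into \eqref{Wigner 2}, using that the trace against $J$ on the first factor can be rewritten, via linearity and the cyclicity of the trace, as a trace over both modes against $U_{1/2}(\cdots)U_{1/2}^\dag$. The key algebraic step is that $\D(-z)$ on the output arm, when conjugated through $U_{1/2}$, splits into displacements on the two physical input modes: by \eqref{BS action displacement} with $\lambda = 1/2$, a displacement by $z$ on the first output arm corresponds to displacements by $z/\sqrt2$ on each of the two input modes. Carrying the parity operator through as well (it factorises as $J\otimes J$ on the two modes, and $U_{1/2}$ is passive so it commutes with the total number operator and hence with $J\otimes J$), I expect the trace to factor into a product over the two tensor factors. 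Collecting the $\sigma$-factor and using the reflection identity
\bb
\D(w)\, J = J\, \D(-w)\,,
\eee
which follows from the parity operator implementing phase-space inversion, one should be able to bundle the $\sigma$-dependent part into the form $J\,\sigma\,J$ sandwiched between displacements, yielding exactly \eqref{Wigner convolution} after relabelling $z \mapsto \sqrt2\, z$.

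Once \eqref{Wigner convolution} is in hand, positivity is essentially free: the right-hand side is $\tfrac{2^m}{\pi^m}\tr[\rho\, A\, \sigma'\, A^\dag]$ with $A = \D(\sqrt2 z)$ unitary and $\sigma' = J\sigma J$ a density operator (since $J$ is unitary), so $A\sigma' A^\dag \geq 0$, and the trace of a product of two positive operators is non-negative. The main obstacle I anticipate is bookkeeping the displacement and parity factors correctly through the partial trace and the beam-splitter conjugation: keeping track of signs, of the $\sqrt2$ scaling, and of the fact that $J$ factorises and commutes with the passive $U_{1/2}$. None of the individual steps is deep, but the reflection identity $\D(w)J = J\D(-w)$ and the passivity of the $50{:}50$ beam splitter are the two facts that make the single trace collapse cleanly into the desired sandwiched form.
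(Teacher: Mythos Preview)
Your strategy of starting from \eqref{Wigner 2} and pushing displacements and parity through $U_{1/2}$ is workable, but the handling of the parity operator is where it breaks down. The operator appearing in \eqref{Wigner 2} is $J$ on the \emph{single} output mode, i.e.\ $J\otimes I$ on the two-mode space---not $J\otimes J$. It is true that $U_{1/2}$ commutes with $J\otimes J$ by passivity, but that is not the object you have. The correct conjugation is
\[
U_{1/2}^\dag\,(J\otimes I)\,U_{1/2}\;=\;(J\otimes J)\cdot\mathrm{SWAP},
\]
as one verifies by computing the induced action on annihilation operators (it sends $a\mapsto -b$ and $b\mapsto -a$) and fixing the overall phase on the vacuum. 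Hence the two-mode trace does \emph{not} factor as a product over the tensor factors; instead the SWAP collapses it to a single-mode trace via the identity $\tr[\mathrm{SWAP}\,(A\otimes B)]=\tr[AB]$. With this correction, together with $U_{1/2}^\dag(\D(2z)\otimes I)U_{1/2}=\D(\sqrt2\,z)\otimes\D(\sqrt2\,z)$ and the reflection identity $\D(w)J=J\D(-w)$, one does arrive at \eqref{Wigner convolution}.

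For comparison, the paper's proof takes a rather different route and never unfolds the beam-splitter unitary at the operator level. It uses the convolution identity \eqref{box W} to write $W_{\rho\boxplus\sigma}(z)=2^m(W_\rho\star W_\sigma)(\sqrt2\,z)$, then rewrites the integrand via the phase-space inversion rule \eqref{W inversion} and the displacement rule \eqref{chi and W displacement} so that the convolution becomes the overlap integral $\int d^{2m}u\, W_\rho(u)\,W_{\D(\sqrt2 z)J\sigma J\D(\sqrt2 z)^\dag}(u)$, and finally invokes the Plancherel identity \eqref{Plancherel} to convert this into the trace in \eqref{Wigner convolution}. That argument avoids the $\mathrm{SWAP}$ computation altogether.
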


\begin{proof}
We start by verifying that $J$ actually corresponds to a phase space inversion, in the sense that
\bb
W_{J \rho J}(z) = W_\rho(-z)
\label{W inversion}
\ee
for all $m$-mode quantum states $\rho$ and all $z\in \CC^m$. This follows from the easily verified fact that $J a_j J=-a_j$ for all $j$, which also implies that $J\D(z)J = \D(-z)$. In fact, using \eqref{Wigner 2} we find that
\bbb
W_{J\rho J}(z) = \frac{2^m}{\pi^m} \tr\left[ \D(-z) J\rho J \D(z)\, J\right] = \frac{2^m}{\pi^m} \tr\left[ J\D(z) \rho \D(-z)\right] = W_\rho(-z)\, .
\eee
We now compute
\begin{align*}
    W_{\rho\,\boxplus\, \sigma}(z) &\texteq{\eqref{box W}} 2^m \left(W_\rho\star W_\sigma\right)\left(\sqrt2 z\right) \\
    &\texteq{\eqref{convolution}} 2^m \int d^{2m}u\, W_\rho(u)\, W_\sigma\left(\sqrt2 z- u\right) \\
    &\texteq{\eqref{W inversion}} 2^m \int d^{2m}u\, W_\rho(u)\, W_{J\sigma J}\left(u- \sqrt2 z\right) \\
    &\texteq{\eqref{chi and W displacement}} 2^m \int d^{2m}u\, W_\rho(u)\, W_{\D\left(\sqrt2 z\right) J \sigma J \D\left(\sqrt2 z\right)^\dag}\left(u\right) \\
    &\texteq{\eqref{Plancherel}} \frac{2^m}{\pi^m} \tr\left[ \rho\, \D\big(\sqrt2 z\big) J \sigma J \D\big(\sqrt2 z\big)^\dag \right] .
\end{align*}
The above equalities are labelled by the equation numbers corresponding to the identities that justify them.
\end{proof}


\begin{rem*}
It is not difficult to see that $\lambda=1/2$ is the only special value for which Lemma~\ref{positive W lemma} can hold, i.e.\ such that $W_{\rho\, \boxplus_\lambda \sigma} (z)\geq 0$ for all $m$-mode states $\rho,\sigma$ and for all $z\in \CC^m$. To see why, consider the case where $m=1$ and $\rho,\sigma$ are the first two Fock states. The action of the beam splitter unitary on the annihilation operators, as expressed by \eqref{BS action creation}, leads to the identity $\ketbra{0} \boxplus_\lambda \ketbra{1} = \lambda \ketbra{0} + (1-\lambda) \ketbra{1}$. Using the expression for the Wigner function of Fock states~\cite[Eq.~(4.5.31)]{BARNETT-RADMORE}, we see that
\begin{equation*}
W_{\ket{0}\bra{0}\, \boxplus_\lambda \ket{1}\bra{1}}(z) = W_{\lambda \ket{0}\bra{0} + (1-\lambda) \ket{1}\bra{1}}(z) = \frac{2}{\pi}\, e^{-2|z|^2} \left( \lambda - (1-\lambda)\left(1-4|z|^2\right)\right) .
\end{equation*}
Hence, $W_{\ket{0}\bra{0}\, \boxplus_\lambda \ket{1}\bra{1}}(0)<0$ as soon as $0\leq \lambda < 1/2$. For $1/2<\lambda\leq 1$, we arrive at the same conclusion by looking at the state $\ketbra{1}\boxplus_\lambda \ketbra{0} = \ketbra{0} \boxplus_{1-\lambda} \ketbra{1}$, obtained by sending $\lambda\mapsto 1-\lambda$.
\end{rem*}

We proceed by showing how the above result bridges the gap between classical and quantum central limit theorems. We now fix an $m$-mode quantum state $\rho$, and notice that $\rho^{\boxplus 2n} = (\rho\boxplus \rho)^{\boxplus n}$. Consider the probability density function $f_X\coloneqq W_{\rho\,\boxplus\, \rho}\geq 0$, where positivity holds by \eqref{eq:positivity}. Let $X$ be a random variable with density $f_X$. The mean and covariance matrix of $X$ coincide with those of $\rho\boxplus\rho$, which are in turn the same as those of $\rho$. Hence, at the level of Gaussifications, $f_\G = W_{\rho_\G}$. We write for an i.i.d.~family of random variables $X_i$ with law $f_X$
\begin{align*}
W_{\rho^{\boxplus 2n}}(u) &= W_{\left(\rho\,\boxplus\, \rho\right)^{\boxplus n}}(u) \\
&\texteq{1} n^m W_{\rho\, \boxplus\, \rho}^{\star n}\left( \sqrt{n} u\right) \\
&= n^m f_X^{\star n}\left(\sqrt{n}u\right) \\
&\texteq{2} f_{(X_1+\ldots+X_n)/\sqrt{n}}(u)\,
\end{align*}
where 1 follows from \eqref{box W} and 2 follows from the change of variables $u\mapsto \sqrt{n}u.$
This implies by applying the classical and quantum Plancherel identities (\ref{Plancherel HS norm}) that
\bb
    \begin{split}
    \label{eq:Plancherel}
\left\|\rho^{\boxplus 2n} - \rho_\G\right\|_{2}^2 &= \pi^{-m} \left\|\chi_{\rho^{\boxplus 2n}} - \chi_{\rho_\G}\right\|^2_{\LL^2(\mathbb R^{2m})} \\
&=\pi^{m} \left\|W_{\rho^{\boxplus 2n}} - W_{\rho_\G}\right\|^2_{\LL^2(\mathbb R^{2m})} \\
&=\pi^{m} \left\|f_{(X_1+\ldots +X_n)/\sqrt{n}} - f_\G\right\|^2_{\LL^2(\mathbb R^{2m})}
\end{split}
\ee
which shows that the QCLT is equivalent to a certain CLT for classical i.i.d. random variables. The problem with this approach is that the right classical tool to use here would be an estimate on the rate of convergence of $(X_1+\ldots +X_n)/\sqrt{n}$ to the normal variable $X_\G$ with respect to the $\LL^2$ norm. However, it is known that convergence fails to hold in general, and even under some finiteness of moments assumption there does not seem to be a readily available result in the literature, that is powerful enough to be successfully employed here. Therefore, we do not pursue this route further here.

\subsection{Decay estimates on the quantum characteristic function} \label{subsec-proofs:decay}

Before studying the rate of convergence in the quantum central limit theorem, we show that quantum characteristic functions have the so-called \emp{strict non-lattice property}. To motivate this property, we start by recalling some basic properties of characteristic functions from classical probability theory.

The characteristic function $\chi_X^{\text{cl}}$ of a classical random variable $X$ always attains the value one at zero. However, it can also attain the value one, in absolute value, at any other point. The random variables that exhibit this latter behaviour are precisely those that are \emp{lattice-distributed};\footnote{These are discrete random variables with probability distributions supported on a lattice.} see also \cite[Section~3.5]{durrett_2019}. Examples include the Dirac, Bernoulli, geometric and Poisson distributions.

\medskip
Knowing that $ \left| \chi^{\text{cl}}_X(t) \right| <1$ for all values $t\neq 0$ however does not imply that $\limsup_{t \rightarrow \infty}  \left| \chi^{\text{cl}}_X(t) \right| <1$. This latter condition is known as the \emp{strict non-lattice} property of a random variable. An example of a non-lattice distributed random variable which does not satisfy the strict non-lattice property is as follows.

\begin{ex}[{\cite[Section~3.5]{durrett_2019}}]
Consider an enumeration of the positive rationals $q_1,q_2,... \in \mathbb{Q}_{+}$ with $q_i \le i$ and a non-lattice random variable $X$ defined by
\bbb
\mathbb{P}(X = q_n) = \mathbb{P}(X = -q_n)  = 2^{-(n+1)} .
\eee
One easily shows that the characteristic function of the random variable $X$ is given by $\chi^{\text{cl}}_X(t) = \sum_{i=1}^{\infty} \frac{\cos\left(tq_i\right)}{2^{i}}$. One has $\limsup_{t \rightarrow \infty}  \left| \chi^{\text{cl}}_X(t) \right| =1.$
\end{ex}

We now show the surprising fact that quantum characteristic functions do not exhibit this somewhat pathological behaviour. Instead, for any quantum state $\rho$ it holds that $\limsup_{\vert z \vert \rightarrow \infty}  \left| \chi_{\rho}(z) \right| =0$, as the proof of Proposition~\ref{prop1} below shows.

\begin{proof}[Proof of Proposition~\ref{prop1}]
Thanks to the spectral theorem and by the dominated convergence theorem, it suffices to prove that $\lim_{|z|\to\infty} \chi_{\psi_f}(z)=0$ for all wave function $f\in \LL^2(\RR^m)$, where $\psi_f\coloneqq \ketbra{\psi_f}$, and $\ket{\psi_f}$ is the pure state with wave function $f$. We rephrase this as the requirement that $\chi_{\psi_f}$ belongs to the Banach space $C_0\left(\CC^m\right)$, where the norm on $C_0\left(\CC^m\right)$ is the supremum norm.

We consider smooth compactly supported functions $f$ first. For such functions, the claim follows by combining (i)~Eq.~\eqref{chi psi f}; (ii) the fact that $f$ is normalised, i.e.\ $\int d^mx |f(x)|^2=1$; and (iii)~the Riemann--Lebesgue lemma. For general $f \in \LL^2(\RR^m)$, the result then follows by a density argument: for an arbitrary $f \in \LL^2(\RR^m)$ there is a sequence of smooth and compactly supported functions $f_n \in C_c^{\infty}(\RR^m)$ converging to $f \in \LL^2(\RR^m)$, so that 
\bbb
\begin{aligned}
&\sup_{z \in \CC^m} \left| \left(\chi_{\psi_{f_n}} - \chi_{\psi_f}\right)(z) \right| \\
&\qquad = \sup_{z \in \CC^m} \left| \braket{\psi_{f_n}|\D(z)|\psi_{f_n}} - \braket{\psi_f|\D(z)|\psi_f} \right| \\
&\qquad \le \sup_{z \in \CC^m} \Big\{ \left| \braket{\psi_{f_n}|\D(z)|\psi_{f_n}} - \braket{\psi_{f_n}|\D(z)|\psi_f} \right| + \left| \braket{\psi_{f_n}|\D(z)|\psi_{f}} - \braket{\psi_f|\D(z)|\psi_f} \right| \Big\} \\
&\qquad \le 2 \left\| \ket{\psi_{f_n}} - \ket{\psi_f}\right\| = 2 \left\|f_n - f \right\|_{\LL^2(\RR^m)}\tends{}{n \to \infty} 0 .
\end{aligned}
\eee
Since $C_0(\CC^m)$ is a Banach space and $\chi_{\psi_{f_n}} \in C_0(\CC^m)$, this implies that also the limit $\chi_{\psi_f} \in C_0(\CC^m)$. Thus, to complete the proof of \eqref{strictly non-lattice chi} it suffices to show that for every $\eps>0$ and any $z \in \CC^m \backslash B(0,\varepsilon)$ one has that $\left| \chi_{\psi_f}(z) \right| <1.$ If this were not the case, then $\ket{\psi_f}$ would be an eigenvector of the displacement operator $\D(z)$. This is well known to be impossible, see e.g.~\cite[Lemma~10]{G-dilatable}.
\end{proof}

For a given state $\rho$ and some fixed $\varepsilon>0$, Proposition~\ref{prop1} tells us that there exists a constant $\eta(\rho,\varepsilon)<1$ such that $\max_{z\in \CC^m\setminus B(0,\varepsilon)} \left|\chi_\rho(z)\right|\leq \eta(\rho,\varepsilon)$ (cf.~\eqref{strictly non-lattice chi}). However, the problem of characterising the quantity $\eta(\rho,\varepsilon)$ in terms of some physically meaningful property of the state $\rho$ remains. To this end, a natural candidate turns out to be the energy of the state. To see why this is the case, consider the following simple example.

\begin{ex}[(Squeezed states)] \label{ex:squeezed-state}
For every $z\in \CC^m$ and every $\delta\in (0,1)$ there is a (Gaussian) state $\rho_\G$ of mean photon number $\tr\left[\rho_\G H_m \right] \leq \frac{t^2}{8 \ln \frac{1}{1-\delta}} - \frac14$ such that $\left|\chi_{\rho_\G}(z)\right|\geq 1-\delta$.

To see that this is the case, up to the application of passive symplectic unitaries, it suffices to consider the case $z=(t,0,\ldots, 0)$, where $t>0$. Consider the `squeezed' Gaussian state defined by the characteristic function
\bb
\chi_{\rho_\G}(z) \coloneqq e^{-\frac{\eta}{2} (\Re z_1)^2 - \frac{1}{2\eta} (\Im z_1)^2 - \frac12 \sumno_{j>1} |z_j|^2}\, ,
\ee
where we set $\eta\coloneqq \min\left\{\frac{2}{t^2} \ln \frac{1}{1-\delta},\ 1\right\}>0$. 
The mean photon number of $\rho_\G$ is well known to be given by $\tr\left[\rho_\G H_m \right] = \frac14 \left( \eta + \frac1\eta\right) - \frac12 \leq \frac{1}{4\eta}-\frac14$, where we used the fact that $\eta\leq 1$.
\end{ex}


The above example shows that any estimate on $\eta(\rho,\varepsilon)$ can be reasonably expected to depend on the energy.
We now show that our preliminary work on the quantum--classical correspondence allows us to derive a general upper estimate for $|\chi_\rho(z)|$ at any designated point $z\in \CC^m$ in terms of the energy of the state $\rho$. For this purpose, we draw upon some important mathematical results from the well-developed theory of \textit{classical} characteristic functions. Proposition \ref{prop:decay-energy}, whose proof we present now, implies e.g.\ that for a one-mode state $\rho$, we can take $\eta(\rho,\varepsilon) = 1 - \frac{c}{E}\, \min\left\{\eps^2, \frac{C}{E}\right\}$, where $E$ is the energy of $\rho$, and $c,C$ are universal constants.

\begin{proof}[Proof of Proposition \ref{prop:decay-energy}]
Denoting as usual with $|z|$ the Euclidean norm~\eqref{Euclidean norm} of $z\in \CC^m$, we write the following chain of inequalities.
\begin{align*}
\left|\chi_\rho(z)\right| &\texteq{1} \left| \chi_{\rho\,\boxplus\, \rho}\left(\sqrt2 z\right) \right|^{1/2} \\
&\texteq{2} \left| \chi_{X(\rho\,\boxplus\, \rho)}^{\text{cl}}\left(\sqrt2 z\right) \right|^{1/2} \\
&\textleq{3} \left( 1 - \frac{2(1-\delta)^3\delta^{2m-1} \left((2m+1)!!\right)^2}{3\cdot 2^{4m}\cdot E^{2m-1}} \min\left\{|z|^2, \frac{\pi^2 \delta}{4E}\right\}\right)^{1/2} \\
&\textleq{4} 1 - \frac{2(1-\delta)^3\delta^{2m-1} \left((2m+1)!!\right)^2}{6\cdot 2^{4m}\cdot E^{2m-1}} \min\left\{|z|^2, \frac{\pi^2 \delta}{4E}\right\} .
\end{align*}
Here, 1 is an application of the quantum convolution rule (cf.\ the $n=2$ case of \eqref{chi CH state}). In 2 we introduced the classical random vector $X(\rho\boxplus \rho)$ taking values in $\CC^m$,  with probability distribution given by the Wigner function $W_{\rho\, \boxplus\, \rho}$, which is everywhere non-negative by Lemma \ref{positive W lemma}. The inequality in 3, which is the non-trivial one, follows from \cite[Corollary~2.7.2]{Ushakov}: we set $a\coloneqq \sup_{z\in \CC^m} W_{\rho\, \boxplus\, \rho}(z) \leq \frac{2^m}{\pi^m}$, with the latter estimate coming from~\eqref{Wigner 2}, and $\alpha=2$, so that 
\bbb
\gamma_\alpha=\gamma_2=\int d^m z\, |z|^2 \,W_{\rho\, \boxplus\, \rho}(z)\, \texteq{\eqref{energy Wigner}}\, \tr\left[ (\rho\boxplus \rho) \left( H_m + \frac{m}{2}\, I \right)\right]\, \texteq{\eqref{energy convolution}}\, \tr\left[ \rho \left( H_m +\frac{m}{2}\, I\right) \right]  \eqqcolon E\, ;
\eee
also, we substituted $m\mapsto 2m$, because our phase space $\CC^m$ has real dimension $2m$; finally, we used the well-known formula $\Gamma(m+1/2) = \sqrt\pi\, 2^{-m} (2m-1)!!$, where $(\cdot)!!$ is the bi-factorial. Lastly, the inequality in 4 is just an application of the elementary estimate $\sqrt{1-x}\leq 1-\frac{x}{2}$ for $0\leq x <1$.
\end{proof}

\begin{rem*}
In \cite[Section~2.7]{Ushakov}, several other estimates for $\left| \chi_X^{\text{cl}}(t)\right|$ are derived. While we decided to stick to the simplest one, as it is already very instructive, it is possible to substantially improve over it, e.g.\ by resorting to non-isotropic estimates (cf.\ for instance \cite[Theorem~2.7.14]{Ushakov}). Notably, our quantum--classical correspondence allows us to translate \textit{all} of these inequalities to the quantum setting, up to an irrelevant factor of $1/2$ in the associated constants (see step 4 in the above proof). We do not pursue this approach further, though we want to stress that it immediately leads to a plethora of further results.
\end{rem*}


\section{Quantitative bounds in the QCLT: Proofs} \label{sec-proofs:QCLT}

In this section, we provide proofs of the convergence rates in our quantum Berry--Esseen theorems. We also provide proofs of some of the statements in Section \ref{subsec:applications} on the convergence rate for cascades of beam splitters converging to thermal attenuator channels.

\medskip
\smallsection{Outline of this section:} To fix ideas, we give a high-level outline of our proofs:
\begin{itemize}
    \item \textit{Williamson form}: We apply a suitable symplectic unitary to the state, so as to make the Hessian of its characteristic function diagonal and larger than the identity. Subsequently, we use the quantum Plancherel identity to express the difference of the convolved state and its Gaussification in Hilbert--Schmidt norm as a difference of quantum characteristic functions in $\LL^2$ norm (Section \ref{subsec-proofs:preliminaries}).
    
    \item \textit{Local-tail decomposition}: We then split the integral of the $\LL^2$ norm of the difference of the quantum characteristic functions of the convolved state and the Gaussification of the original state into a regime around zero (Lemma \ref{lem:tech}), in which we can control the behaviour of the quantum characteristic function by its Taylor expansion, and a tail-regime in which we estimate the difference using Proposition \ref{prop1}. The error in the Taylor expansion is controlled by the phase space moments of the state, cf.~Lemma \ref{lem:tech2}.
    
    \item \textit{Hilbert--Schmidt convergence}: We implement the above ideas to prove Theorems \ref{thm:QBE'} and \ref{thm:QBElow}, and Proposition \ref{propnon-i.i.d.} (Section \ref{subsec-proofs:convrates}).
    
    \item \textit{Trace norm and entropic convergence:} We then use the preservation of the boundedness of the second moment under quantum convolutions to obtain a quantitative estimate of convergence in trace distance, employing Markov's inequality and the Gentle Measurement Lemma~\cite{VV1999}, and in relative entropy, using entropic continuity bounds \cite{tightuniform} (Section \ref{sec-proofs:traceentr}).
    
    \item \textit{Convergence rates for cascades of beam splitters:} In the final subsection, we prove the results claimed in Section \ref{subsec:applications}, namely convergence rates for cascades of beam splitters converging to thermal attenuator channels (Section \ref{sec-proofs:cascade}).
\end{itemize}

\subsection{Preliminary steps} \label{subsec-proofs:preliminaries}

\subsubsection{Williamson form} \label{subsubsec:Williamson}
Let $\rho$ be a centred $m$-mode quantum state with finite second moments, as in the Cushen--Hudson theorem. It is known that one can find a symplectic unitary $V$ and numbers $\nu_1,\ldots, \nu_m\geq 1$ such that 
\bb\label{Williamson-rho}
\rho' \coloneqq V^\dag \rho V
\ee
satisfies
\bb
\chi_{\rho'}(z) = 1 - \frac12 \sum_j  \nu_j |z_j|^2 + o\left(|z|^2\right)\qquad (z\to 0) .
\label{Williamson chi}
\ee
With a slight abuse of terminology, we will call $\rho'$ the \emp{Williamson form} of $\rho$ \cite{willy}. Bringing a state to its Williamson form allows us to assume that (i) the smallest eigenvalue of its covariance matrix is at least one. Also, (ii) the transformation in \eqref{Williamson-rho} does not change the first moments of the state, so that if $\rho$ is centred then $\rho'$ remains centred. Finally, (iii) the same unitary $V$ brings not only $\rho$ but also its Gaussification $\rho_\G$ to their Williamson forms simultaneously, so that
\bb
\chi_{\rho'_\G}(z) = \exp \left[-\frac12 \sumno_j \nu_j |z_j|^2 \right] ,\qquad W_{\rho'_\G}(z) = \left( \frac2\pi\right)^m \exp\left[ -2\sumno_j \nu_j |z_j|^2 \right] .
\label{Williamson chi Gaussian}
\ee
holds as well. Thanks to the covariance of the quantum convolution with respect to symplectic unitaries \eqref{symmetric boxplus covariant symplectic}, we see that
\begin{align*}
    \left\|\rho^{\boxplus n} - \rho_\G \right\|_2 &= \left\|V^\dag \left(\rho^{\boxplus n} - \rho_\G \right) V \right\|_2 \\
    &= \left\|(\rho')^{\boxplus n} - \rho'_\G \right\|_2 .
\end{align*}
Combining this with the quantum Plancherel identity~\eqref{Plancherel HS norm} yields
\begin{align}
    \left\|\rho^{\boxplus n} - \rho_\G \right\|^2_2 &= \int \frac{d^{2m}z}{\pi^m}\, \left|\chi_{(\rho')^{\boxplus n}}(z) - \chi_{\rho'_\G}(z)\right|^2 \label{HS distance QCLT chi} \\
    &= \pi^m \int d^{2m}z\, \left( W_{(\rho')^{\boxplus n}}(z) - W_{\rho'_\G}(z) \right)^2 . \label{HS distance QCLT W}
\end{align}
In short, when estimating any unitarily invariant distance of $\rho^{\boxplus n}$ from its limit $\rho_\G$, we can assume without loss of generality that all states are in their Williamson forms. When the Hilbert--Schmidt norm is employed, we can compute the distance as an $\LL^2$ norm at the level of characteristic functions, or equivalently at that of Wigner functions.

\subsubsection{Local-Tail decomposition} \label{subsubsec:local-tail}
We continue with an important technical lemma that reduces the convergence in the quantum central limit theorem to the behaviour of the quantum characteristic function around zero.

\begin{lemma} \label{lem:tech}
Let $\rho$ be an $m$-mode quantum state with finite second-order phase space moment. Without loss of generality, we assume that $\rho$ is centred and in Williamson form, and that its Gaussification $\rho_\G$ has characteristic function as in \eqref{Williamson chi Gaussian}. Then for every $\varepsilon>0$ we have that
\begin{equation} \label{eq:firstid}
\begin{aligned}
\left\|\rho^{\boxplus n}-\rho_\G\right\|_{2}^2 &=\frac{1}{\pi^{m}}\int_{|z|\leq \sqrt{n}\, \varepsilon} d^{2m}z \left|  \chi_{\rho}\left(\tfrac{z}{\sqrt{n}}\right)^n - e^{-\frac12 \sumno_j \nu_j |z_j|^2} \right|^2 + \mathcal O\left(n^{-\infty}\right)
\end{aligned}
\end{equation}
as $n\to\infty$. If $\rho$ has also finite third-order phase space moments, then
\begin{equation} \label{eq:secondid}
\begin{aligned}
\|\rho^{\boxplus n}-\rho_\G\|_2 &\le \frac{\sqrt{m(m+1)(m+2)}}{6}\,\frac{\Vert D^3 \chi_{\rho}(0) \Vert}{\sqrt{n}}+\mathcal O(n^{-\infty})\\
&\quad +\frac{1}{\pi^{m/2}}\left(\int_{|z|\leq \sqrt{n}\, \varepsilon} d^{2m}z \left| \chi_\rho \left(\tfrac{z}{\sqrt{n}}\right)^n - e^{-\frac12 \sumno_j \nu_j |z_j|^2} \left(1+\tfrac{1}{6\sqrt{n}} D^3\chi_\rho (0)\left(z^{\times 3}\right)\right) \right|^2  \right)^{1/2} ,
\end{aligned}
\end{equation}
where the Fr\'echet derivative of $\chi_\rho$ is defined by \eqref{Frechet}.
\end{lemma}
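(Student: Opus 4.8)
The starting point is the Williamson-form reduction of Section~\ref{subsubsec:Williamson}: combining \eqref{HS distance QCLT chi} with the scaling identity \eqref{chi CH state} writes the squared Hilbert--Schmidt distance as the full phase-space integral
\[
\left\|\rho^{\boxplus n}-\rho_\G\right\|_2^2 = \frac{1}{\pi^m}\int_{\CC^m} \left| \chi_\rho\!\left(\tfrac{z}{\sqrt n}\right)^{\!n} - e^{-\frac12\sumno_j \nu_j|z_j|^2}\right|^2 d^{2m}z\, .
\]
The plan is to split this integral into a local region $|z|\le \sqrt n\,\varepsilon$ and a tail $|z|>\sqrt n\,\varepsilon$. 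The local piece is exactly the integral appearing in \eqref{eq:firstid}, so the content of the first identity is precisely the claim that the tail contributes only $\mathcal O(n^{-\infty})$. (The finite second phase-space moment is used only to guarantee the Williamson form and the quadratic expansion of $\chi_\rho$ at the origin; the tail bound itself needs no moment assumption.)

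For the tail I would first use $|a-b|^2\le 2|a|^2+2|b|^2$ to separate the two summands. The Gaussian summand gives $\int_{|z|>\sqrt n\varepsilon} e^{-\sumno_j\nu_j|z_j|^2}d^{2m}z$, which, since $\nu_j\ge1$ forces the weight below $e^{-|z|^2}$, is a Gaussian tail decaying like $e^{-n\varepsilon^2}$ and hence is $\mathcal O(n^{-\infty})$. For the characteristic-function summand the key step is the change of variables $w=z/\sqrt n$, turning the region into $|w|>\varepsilon$ and producing a Jacobian $n^m$; there I invoke Proposition~\ref{prop1}, which supplies a constant $\eta=\eta(\rho,\varepsilon)<1$ with $|\chi_\rho(w)|\le\eta$ on $|w|>\varepsilon$. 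Writing $|\chi_\rho(w)|^{2n}\le \eta^{2(n-1)}|\chi_\rho(w)|^2$ and using $\chi_\rho\in\LL^2(\CC^m)$ (indeed $\|\chi_\rho\|_{\LL^2}^2=\pi^m\tr[\rho^2]\le\pi^m$ by the quantum Plancherel identity \eqref{Plancherel}, as $\rho^2\le\rho$), the tail is controlled by $n^m\eta^{2(n-1)}\tr[\rho^2]$, which vanishes faster than any power of $n$. This proves \eqref{eq:firstid}.

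For \eqref{eq:secondid} I would add and subtract the correction $C(z)\coloneqq e^{-\frac12\sumno_j\nu_j|z_j|^2}\,\tfrac{1}{6\sqrt n}D^3\chi_\rho(0)\!\left(z^{\times 3}\right)$ inside the $\LL^2(\CC^m)$ norm (note that the finite third phase-space moment makes $D^3\chi_\rho(0)$ well defined, with $\|D^3\chi_\rho(0)\|\le M_3'$) and apply the triangle inequality, giving $\|\rho^{\boxplus n}-\rho_\G\|_2\le \pi^{-m/2}\|C\|_{\LL^2(\CC^m)}+\pi^{-m/2}\|R_n\|_{\LL^2(\CC^m)}$ with remainder $R_n\coloneqq \chi_\rho(\cdot/\sqrt n)^n - e^{-\frac12\sumno_j\nu_j|\cdot_j|^2}-C$. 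The first term is evaluated exactly: using $|D^3\chi_\rho(0)(z^{\times3})|\le\|D^3\chi_\rho(0)\|\,|z|^3$, dominating the weight by $e^{-|z|^2}$ via $\nu_j\ge1$, and the Gaussian moment $\frac1{\pi^m}\int_{\RR^{2m}}e^{-|x|^2}|x|^6\,dx=\frac{\Gamma(m+3)}{\Gamma(m)}=m(m+1)(m+2)$, one obtains exactly $\tfrac{\sqrt{m(m+1)(m+2)}}{6}\,\|D^3\chi_\rho(0)\|/\sqrt n$, the first term of \eqref{eq:secondid}. For the remainder $R_n$ I split once more into $|z|\le\sqrt n\varepsilon$ and its complement; the local part is precisely the integral displayed in \eqref{eq:secondid}, while the tail splits via $|R_n|^2\le 3(|\chi_\rho(z/\sqrt n)^n|^2+e^{-\sumno_j\nu_j|z_j|^2}+|C|^2)$ into three contributions (from $\chi_\rho^n$, the Gaussian, and the polynomially-weighted Gaussian $C$), each exponentially small by the estimates above, hence $\mathcal O(n^{-\infty})$.

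The only genuinely non-routine ingredient is the tail estimate: without the strict non-lattice property of Proposition~\ref{prop1} there would be no uniform bound $\eta<1$ on $|\chi_\rho|$ away from the origin, and the tail of $|\chi_\rho(w)|^{2n}$ need not decay at all, as the lattice examples in Section~\ref{subsec-proofs:decay} illustrate. Everything else reduces to triangle inequalities and explicit Gaussian integrals.
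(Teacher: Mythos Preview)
Your proof is correct and follows essentially the same route as the paper's: Plancherel identity~\eqref{HS distance QCLT chi} to pass to the $\LL^2$ integral, triangle inequality after adding/subtracting the third-order correction, explicit computation of the Gaussian integral $\int e^{-|z|^2}|z|^6\,d^{2m}z=\pi^m m(m+1)(m+2)$ for the leading term, and the local--tail split with Proposition~\ref{prop1} supplying the crucial uniform bound $|\chi_\rho|\le\eta<1$ on $|w|>\varepsilon$. The only cosmetic differences are that the paper splits the corrected integrand $\chi_\rho^n-\Phi(1+\tilde C)$ directly into local and tail pieces (rather than first isolating $R_n$) and bounds the tail via $|a-b|^2\le 2|a|^2+2|b|^2$ instead of your three-term split; the resulting estimates are the same up to harmless constants.
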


\begin{proof}
The first identity \eqref{eq:firstid} follows along the lines of the second one \eqref{eq:secondid} and so we focus on verifying the latter. Using the quantum Plancherel identity \eqref{Plancherel HS norm} and the relation \eqref{box chi}, we apply the triangle inequality and split the integration domain into two disjoint sets such that
\begin{equation} \label{eq:estm} \begin{aligned}
\pi^{\frac{m}{2}}\|\rho^{\boxplus n}-\rho_\G\|_{2} & =\left(\int d^{2m}z   \left\lvert \chi_{\rho}\left(\tfrac{z}{\sqrt{n}}\right)^n - e^{-\frac12 \sumno_j \nu_j |z_j|^2} \right\rvert^2 \ \right)^{1/2} \\
& \le \left(\int_{|z|\leq \sqrt{n}\, \varepsilon} d^{2m}z \left| \chi_{\rho}\left(\tfrac{z}{\sqrt{n}}\right)^n - e^{-\frac12 \sumno_j \nu_j |z_j|^2} \left(1+\tfrac{1}{6\sqrt{n}}D^3\chi_{\rho}(0)\left(z^{\times 3}\right)\right) \right|^2 \right)^{1/2}\\
&\quad +\left(\int_{|z|>\sqrt{n}\, \varepsilon} d^{2m}z \left| \chi_{\rho}\left(\tfrac{z}{\sqrt{n}}\right)^n - e^{-\frac12 \sumno_j \nu_j |z_j|^2} \left(1+\tfrac{1}{6\sqrt{n}}D^3\chi_{\rho}(0)\left(z^{\times 3}\right)\right)  \right|^2 \right)^{1/2}  \\
&\quad +\frac{1}{\sqrt{n}}\left(\int d^{2m}z \left| e^{-\frac12 \sumno_j \nu_j |z_j|^2} \left(\frac{1}{6}D^3\chi_{\rho}(0)\left(z^{\times 3}\right)\right)\right|^2 \right)^{1/2}\, .
\end{aligned} \end{equation}
The last term on the rightmost side of \eqref{eq:estm} can be estimated explicitly using spherical coordinates. Namely, combining the fact that the coefficients appearing in the Williamson form satisfy $\nu_j\geq 1$ with the bound $\left| D^3\chi_{\rho}(0)\left(z^{\times 3}\right)\right| \leq \left\| D^3\chi_{\rho}(0) \right\| |z|^3$, we obtain that
\begin{align*}
\int d^{2m}z\left\lvert e^{-\frac12 \sumno_j \nu_j |z_j|^2} \left(\frac{1}{6}D^3\chi_{\rho}(0)(z^{\times 3})\right)\right\rvert^2 &\le \frac{ \operatorname{vol}(\mathbb S^{2m-1})  \Vert D^3 \chi_{\rho}(0) \Vert^2}{36} \int_{0}^{\infty}dr\, e^{-r^2} r^{2m+5}  \\
&= \frac{\Gamma\left(m+3\right) \operatorname{vol}(\mathbb S^{2m-1})}{72}\, \Vert D^3 \chi_{\rho}(0) \Vert^2 \\
&= \frac{\pi^m}{36}\, m(m+1)(m+2)\, \Vert D^3 \chi_{\rho}(0) \Vert^2\, ,
\end{align*}
where we used that $\int_0^{\infty}dr\, e^{-r^2} r^{2m+5}= \frac{\Gamma(m+3)}{2}$, and recalled the expression $\operatorname{vol}\left(\mathbb S^{N-1}\right)=\frac{2\pi^{N/2}}{\Gamma(N/2)}$ for the volume of the $(N-1)$-sphere. Furthermore, the second-to-last term in \eqref{eq:estm} can be shown to be exponentially small. In fact,
\begingroup
\allowdisplaybreaks
\begin{align*}
&\int_{|z|> \sqrt{n}\, \varepsilon} d^{2m}z \left| \chi_{\rho} \left(\tfrac{z}{\sqrt{n}}\right)^n - e^{-\frac12 \sumno_j \nu_j |z_j|^2} \left(1+\tfrac{1}{6\sqrt{n}}D^3\chi_{\rho}(0)\left(z^{\times 3}\right)\right) \right|^2 \\
&\qquad \le 2 \int_{|z|> \sqrt{n}\, \varepsilon} d^{2m}z \left(\left| \chi_{\rho}\left(\tfrac{z}{\sqrt{n}}\right)^n \right|^2 + \left| e^{-\frac12 \sumno_j \nu_j |z_j|^2} \left(1+\tfrac{|z|^3}{6\sqrt{n}}\left\|D^3\chi_{\rho}(0)\right\|\right) \right|^2 \right)  \\
&\qquad\le 2 \left( \sup_{|z|>\varepsilon} \left| \chi_\rho(z) \right| \right)^{2(n-1)} \int_{|z|> \sqrt{n}\, \varepsilon} d^{2m}z \left| \chi_{\rho}\left(\tfrac{z}{\sqrt{n}}\right) \right|^2 \\
&\qquad \quad + 2 \int_{|z|> \sqrt{n}\, \varepsilon} d^{2m}z \left| e^{-\frac12 \sumno_j \nu_j |z_j|^2} \left(1+\tfrac{\vert z \vert^3}{6\sqrt{n}}\left\| D^3\chi_{\rho}(0)\right\| \right)  \right|^2 \\
&\qquad= 2 n^m \left( \sup_{|z|>\varepsilon} \left| \chi_\rho(z) \right| \right)^{2(n-1)} \int_{|u|>\varepsilon} d^{2m}u \left| \chi_{\rho}(u) \right|^2 \\
&\qquad \quad + 2 \int_{|z|> \sqrt{n}\, \varepsilon} d^{2m}z \left| e^{-\frac12 \sumno_j \nu_j |z_j|^2} \left(1+\tfrac{\vert z \vert^3}{6\sqrt{n}}\left\| D^3\chi_{\rho}(0)\right\| \right)  \right|^2 \\
&\qquad\textleq{1} 2 n^m \left( \sup_{|z|>\varepsilon} \left| \chi_\rho(z) \right| \right)^{2(n-1)} + 2\operatorname{vol}\left(\mathbb S^{2m-1}\right)\int_{\sqrt{n}\, \varepsilon}^{\infty} dr\ e^{-r^2} r^{2m-1} \left(1+\frac{r^3\Vert D^3\chi_{\rho}(0)\Vert}{6\sqrt{n}}\right)^2 \\
&\qquad\textleq{2} 2 n^m \left( \sup_{|z|>\varepsilon} \left| \chi_\rho(z) \right| \right)^{2(n-1)} + \frac{4 \pi^{m}}{(m-1)!}\, e^{-\frac{\varepsilon^2}{2}\, n} \int_{\sqrt{n}\,\varepsilon}^\infty dr \, e^{-\frac{r^2}{2}} r^{2m-1}\left(1+\frac{r^3\left\| D^3\chi_{\rho}(0)\right\|}{6\sqrt{n}}\right)^2\, ,
\end{align*}
\endgroup
where in~1 we used the fact that the $L^2$ norm of the characteristic function is at most one and switched to spherical coordinates to compute the second integral. In~2, instead, we estimated $e^{-r^2}< e^{-\frac{\varepsilon^2}{2}\, n} e^{-\frac{r^2}{2}}$ for $r>\sqrt{n}\,\varepsilon$. Note that the first addend goes to zero faster than any inverse power of $n$ for $n\to\infty$ by Proposition \ref{prop1}. The second decays exponentially, essentially because the integral is bounded in $n$ (in fact, it tends to $0$ as $n\to\infty$).
This concludes the proof.
\end{proof}

The first term on the right-hand side of \eqref{eq:secondid} features an explicit dependence on $n$, while the second decays faster than any inverse power of $n$. Therefore, all that is left to do is to estimate the third term, which can be done by looking at the behaviour of the characteristic function in a neighbourhood of the origin. The first step in this direction, rather unsurprisingly, involves a Taylor expansion of $\chi_\rho$ around $0$. In the subsequent lemma we record various important estimates of this sort, which will play a key role in the proofs of our quantum Berry--Esseen theorems.

\begin{lemma} \label{lem:tech2}
For $\varepsilon>0$ and $k\in [0,\infty)$, let $\rho$ be an $m$-mode state with finite phase space moments of order up to $k$ (namely, with the notation of Definition \ref{def:ps-moments}, assume that $M'_k(\rho,\varepsilon)<\infty$). Then for all $z\in \CC^m$ with $|z|\leq \sqrt{n}\, \varepsilon$ it holds that
\bb
\left| \chi_\rho\left(\tfrac{z}{\sqrt{n}}\right) - \sum_{h=0}^{\ceil{k}-1} \frac{1}{h!\, n^{h/2}}\, D^h\chi_\rho(0)\left( z^{\times h} \right) \right| \leq \frac{1}{\floor{k}!}\binom{2m+\floor{k}-1}{\floor{k}}\, M'_k(\rho,\varepsilon)\, \frac{|z|^k}{n^{k/2}}\, . 
\label{Taylor}
\ee
In particular, if $\rho$ is centred and in Williamson form,
\begin{align}
\left| \chi_{\rho}\left(\tfrac{z}{\sqrt{n}}\right) - 1 \right|&\leq \frac{m(2m\!+\!1)}{2}\,M'_2(\rho,\varepsilon)\,\frac{|z|^2}{n}\, , \label{Taylor 1} \\
\left| \chi_{\rho}\left(\tfrac{z}{\sqrt{n}}\right) - 1 + \frac{1}{2n} \sumno_j \nu_j |z_j|^2\right| &\leq \frac{m(2m\!+\!1)}{2}\, M'_{2+\alpha}(\rho,\varepsilon)\, \frac{|z|^{2+\alpha}}{n^{1+\frac{\alpha}{2}}} \, , \label{Taylor alpha} \\
\left| \chi_{\rho}\left(\tfrac{z}{\sqrt{n}}\right) - 1 + \frac{1}{2n} \sumno_j \nu_j |z_j|^2\right| &\leq \frac{m(m\!+\!1)(2m\!+\!1)}{9}\, M'_3(\rho,\varepsilon)\, \frac{|z|^3}{n^{3/2}} \, , \label{Taylor 2} \\
\left| \chi_{\rho}\left(\tfrac{z}{\sqrt{n}}\right) - 1 + \frac{1}{2n} \sumno_j \nu_j |z_j|^2 - \frac{1}{6n^{3/2}} D^3 \chi_\rho(0)\left( z^{\times 3}\right) \right| &\leq \frac{m(m\!+\!1)(2m\!+\!1)(2m\!+\!3)}{144}\, M'_4(\rho,\varepsilon)\, \frac{|z|^4}{n^2}\, , \label{Taylor 3}
\end{align}
depending on what phase space moments are finite. In \eqref{Taylor alpha}, we assumed that $\alpha\in(0,1)$.
\end{lemma}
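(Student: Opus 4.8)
The whole lemma reduces to the master estimate \eqref{Taylor}, since \eqref{Taylor 1}--\eqref{Taylor 3} will follow from it by specialisation. To prove \eqref{Taylor} I would fix $z$ with $|z|\le\sqrt n\,\varepsilon$, set $w\coloneqq z/\sqrt n$ so that $|w|\le\varepsilon$ and the whole segment $\{tw:t\in[0,1]\}$ lies inside the ball $B(0,\varepsilon)$ on which $\chi_\rho$ is controlled by $M'_k(\rho,\varepsilon)=\|\chi_\rho\|_{C^{k}(B(0,\varepsilon))}$ (this is precisely why the hypothesis $|z|\le\sqrt n\,\varepsilon$ is imposed). The plan is then to Taylor-expand the scalar function $\phi(t)\coloneqq\chi_\rho(tw)$ of the single real variable $t\in[0,1]$ around $t=0$, using that $\phi^{(h)}(t)=D^h\chi_\rho(tw)(w^{\times h})$ by the chain rule, and that by \eqref{Frechet} this diagonal evaluation expands as $\sum_{|\alpha|+|\beta|=h}\partial_z^\alpha\partial_{z^*}^\beta\chi_\rho(tw)\,w^\alpha(w^*)^\beta$. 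The two combinatorial inputs I will need are that the number of pairs of multi-indices $(\alpha,\beta)$ in $2m$ variables with $|\alpha|+|\beta|=h$ equals $\binom{2m+h-1}{h}$ (stars and bars), and that $|w^\alpha(w^*)^\beta|\le|w|^{|\alpha|+|\beta|}$.

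I would treat the integer and non-integer cases in parallel, since this dichotomy is the only genuinely delicate point. If $k$ is an integer I expand to order $\ceil{k}-1=k-1$ and use the integral form of the remainder, $\phi(1)-\sum_{h=0}^{k-1}\phi^{(h)}(0)/h!=\frac{1}{(k-1)!}\int_0^1(1-t)^{k-1}\phi^{(k)}(t)\,dt$. Each top-order partial is bounded in modulus by $M'_k(\rho,\varepsilon)$ on $B(0,\varepsilon)$, so $|\phi^{(k)}(t)|\le\binom{2m+k-1}{k}M'_k(\rho,\varepsilon)\,|w|^k$; since $\int_0^1(1-t)^{k-1}\,dt=1/k$ the prefactor collapses to $1/k!$, giving exactly \eqref{Taylor} with $\floor{k}=k$. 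If $k$ is not an integer I instead expand to the higher order $\floor{k}=\ceil{k}-1$ and (for $\floor{k}\ge1$) rewrite the remainder as $\frac{1}{(\floor{k}-1)!}\int_0^1(1-t)^{\floor{k}-1}\big(\phi^{(\floor{k})}(t)-\phi^{(\floor{k})}(0)\big)\,dt$, absorbing the $h=\floor{k}$ term into the partial sum via $\frac{1}{(\floor{k}-1)!}\int_0^1(1-t)^{\floor{k}-1}\,dt=\frac{1}{\floor{k}!}$. Now the increment $\partial_z^\alpha\partial_{z^*}^\beta\chi_\rho(tw)-\partial_z^\alpha\partial_{z^*}^\beta\chi_\rho(0)$ of each $\floor{k}$-th derivative is controlled by the Hölder part of the $C^k$ norm \eqref{Ck norm 2} with exponent $k-\floor{k}$, giving the bound $M'_k(\rho,\varepsilon)\,(t|w|)^{k-\floor{k}}$; estimating $t^{k-\floor{k}}\le1$ and integrating the remaining kernel reproduces the identical constant $\frac{1}{\floor{k}!}\binom{2m+\floor{k}-1}{\floor{k}}$ and the power $|w|^{k}=|z|^k/n^{k/2}$. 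The base case $k\in(0,1]$ is simply the Hölder continuity of $\chi_\rho$ itself. Making both branches land on the same constant in \eqref{Taylor} is the step requiring the most care.

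Finally, \eqref{Taylor 1}--\eqref{Taylor 3} follow by inserting $k=2,\,2+\alpha,\,3,\,4$ into \eqref{Taylor} and simplifying the low-order Taylor data. Here I use that $\chi_\rho(0)=\tr\rho=1$, that centredness forces the first-order term $D^1\chi_\rho(0)$ to vanish, and that the Williamson normalisation \eqref{Williamson chi} identifies the quadratic term as $\tfrac12 D^2\chi_\rho(0)(z^{\times2})=-\tfrac12\sum_j\nu_j|z_j|^2$. The surviving partial sums evaluated at $z/\sqrt n$ are thus $1$ for $k=2$, then $1-\frac{1}{2n}\sum_j\nu_j|z_j|^2$ for $k=2+\alpha$ and $k=3$, and $1-\frac{1}{2n}\sum_j\nu_j|z_j|^2+\frac{1}{6n^{3/2}}D^3\chi_\rho(0)(z^{\times3})$ for $k=4$, matching the left-hand sides of \eqref{Taylor 1}, \eqref{Taylor alpha}, \eqref{Taylor 2}, and \eqref{Taylor 3}. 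It then only remains to evaluate the binomial coefficients $\frac{1}{2!}\binom{2m+1}{2}=\frac{m(2m+1)}{2}$ for \eqref{Taylor 1} and \eqref{Taylor alpha}, $\frac{1}{3!}\binom{2m+2}{3}=\frac{m(m+1)(2m+1)}{9}$ for \eqref{Taylor 2}, and $\frac{1}{4!}\binom{2m+3}{4}=\frac{m(m+1)(2m+1)(2m+3)}{144}$ for \eqref{Taylor 3}, which are elementary.
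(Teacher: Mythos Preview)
Your proposal is correct and follows essentially the same approach as the paper, which merely states that ``the estimate in \eqref{Taylor} follows immediately from using H\"older continuity of the derivative.'' You have supplied the details the paper omits---the one-variable reduction via $\phi(t)=\chi_\rho(tw)$, the integral remainder for integer $k$ versus the H\"older remainder for fractional $k$, the stars-and-bars count $\binom{2m+\floor{k}-1}{\floor{k}}$ of partial derivatives, and the specialisations to \eqref{Taylor 1}--\eqref{Taylor 3}---all of which are exactly what is needed and check out numerically.
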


The estimate in \eqref{Taylor} follows immediately from using H\"older continuity of the derivative.

\subsection{Proofs of convergence rates in Hilbert--Schmidt distance} \label{subsec-proofs:convrates}

We start with the proof of Theorem \ref{thm:QBE'} assuming fourth-order moments.

\begin{proof}[Proof of Theorem~\ref{thm:QBE'}]
By the discussion in Section \ref{subsubsec:Williamson}, we can assume that $\rho$ is in Williamson form, namely, that its characteristic function satisfies \eqref{Williamson chi}, with $\nu_1,\ldots, \nu_m\geq 1$. Since $M'_2(\rho,\varepsilon)$ is monotonically non-decreasing in $\varepsilon$, for any fixed $\mu\in (0,2)$ we can chose $\varepsilon>0$ small enough so that for any $z\in B \coloneqq B\left(0,\sqrt{n}\, \eps\right)$ it holds that
\bb
\frac{m(2m+1)}{2}\, M'_2(\rho,\varepsilon)\, \frac{|z|^2}{n} \leq \frac{m(2m+1)}{2}\, \varepsilon^2\, M'_2(\rho,\varepsilon)\leq \frac{\mu}{2}\, .
\label{eq:mu}
\ee
Looking at \eqref{Taylor 1}, this implies that $2\left|1-\chi_\rho \left( \tfrac{z}{\sqrt{n}} \right) \right|\leq \mu$. Now, for $x\in \CC$ with $|x|<2$ define the function
\bb
a(x) \coloneqq -\frac{4}{x^2}\left(\log\left(1-\frac{x}{2}\right) + \frac{x}{2} \right) = \sum_{\ell=0}^{\infty} \frac{x^\ell}{2^\ell\, \ell}\, .
\label{a}
\ee
Substituting $x=2\left( 1-\chi_\rho \left( \tfrac{z}{\sqrt{n}} \right)\right)$, we then have that
\bb
\begin{aligned}
\left| \log \left( \chi_\rho \left( \tfrac{z}{\sqrt{n}} \right) \right) + \left(1-\chi_\rho\left( \tfrac{z}{\sqrt{n}}\right) \right) \right| &= \left| -\left(1-\chi_{\rho}\left(\tfrac{z}{\sqrt{n}}\right)\right)^2 a\left( 2\left(1-\chi_{\rho}\left(\tfrac{z}{\sqrt{n}}\right)\right)\right) \right| \\
&\leq \frac{m^2(2m+1)^2}{4}\, M'_2(\rho,\varepsilon)^2\, a(\mu)\, \frac{|z|^4}{n^2}\, ,
\end{aligned}
\label{eq:estim0}
\ee
where to deduce the last inequality we observed that $|x|\leq \mu$ implies that $|a(x)|\leq a(\mu)$. Then, thanks to \eqref{eq:estim0} and \eqref{Taylor 2}, an application of the triangle inequality yields
\bb
\begin{aligned}
&\left| \log \left( \chi_{\rho}\left(\tfrac{z}{\sqrt{n}}\right)^n \right) + \frac12 \sumno_j \nu_j |z_j|^2 \right| \\
&\qquad \leq n \left| \log \left( \chi_{\rho}\left(\tfrac{z}{\sqrt{n}}\right) \right) + 1 - \chi_\rho \left( \tfrac{z}{\sqrt{n}}\right) \right| + n \left| \chi_\rho \left( \tfrac{z}{\sqrt{n}}\right) -1 + \frac12 \sumno_j \nu_j |z_j|^2 \right| \\
&\qquad \leq \frac{m^2(2m+1)^2}{4}\, M'_2(\rho,\varepsilon)^2\, a(\mu)\, \frac{|z|^4}{n} + \frac{m(m+1)(2m+1)}{9}\, M'_3(\rho,\varepsilon)\, \frac{|z|^3}{\sqrt{n}} \\
&\qquad \leq \frac{C_1 |z|^3}{\sqrt{n}}\, ,
\end{aligned}
\label{eq:estm2}
\ee
where for fixed $m$ the constant $C_1$ depends only on $M'_3$ (remember that $M'_2\leq M'_3$ by construction). Using again \eqref{eq:estim0} but now in conjunction with \eqref{Taylor 3}, by a swift application of the triangle inequality we see that
\bb
\begin{aligned}
&\left| \log \left( \chi_{\rho}\left(\tfrac{z}{\sqrt{n}}\right)^n \right) + \frac12 \sumno_j \nu_j |z_j|^2 -\frac{1}{6\sqrt{n}}\, D^3\chi_\rho (0)\left( z^{\times 3} \right) \right| \\
&\qquad \leq n \left| \log \left( \chi_{\rho}\left(\tfrac{z}{\sqrt{n}}\right) \right) + 1 - \chi_\rho \left( \tfrac{z}{\sqrt{n}}\right) \right| + n \left| \chi_\rho \left( \tfrac{z}{\sqrt{n}}\right) - 1 + \frac12 \sumno_j \nu_j |z_j|^2 -\frac{1}{6\sqrt{n}}\, D^3\chi_\rho (0)\left( z^{\times 3} \right) \right| \\
&\qquad \textleq{2} \frac{m^2(2m+1)^2}{4}\, M'_2(\rho,\varepsilon)^2\, a(\mu)\, \frac{|z|^4}{n} + \frac{m(m+1)(2m+1)(2m+3)}{144}\, M'_4(\rho, \varepsilon)\, \frac{|z|^4}{n} \\
&\qquad \leq \frac{C_2 |z|^4}{n}\, ,
\end{aligned}
\label{eq:estm1}
\ee
where for fixed $m$ the constant $C_2$ depends only on $M'_4$ (remember that $M'_2\leq M'_4$ by construction). We now estimate
\bbb
\begin{aligned}
&e^{\frac12 \sum_j \nu_j |z_j|^2} \left|\chi_\rho \left(\tfrac{z}{\sqrt{n}}\right)^n - e^{-\frac12 \sumno_j \nu_j |z_j|^2} \left( 1+\frac{1}{6\sqrt{n}}\, D^3\chi_\rho(0)\left( z^{\times 3}\right) \right) \right| \\
&\qquad = \left| \exp\left( \log \left( \chi_\rho \left(\tfrac{z}{\sqrt{n}}\right)^n \right) + \frac12 \sumno_j \nu_j |z_j|^2 \right) - \left( 1+\frac{1}{6\sqrt{n}}\, D^3\chi_\rho(0)\left( z^{\times 3}\right) \right) \right| \\
&\qquad \textleq{1} \left| \exp\left( \log \left( \chi_\rho \left(\tfrac{z}{\sqrt{n}}\right)^n \right) + \frac12 \sumno_j \nu_j |z_j|^2 \right) - \left( 1 + \log \left( \chi_\rho \left(\tfrac{z}{\sqrt{n}}\right)^n\right) + \frac12 \sum_j \nu_j |z_j|^2 \right) \right| \\
&\qquad \quad + \left| \log \left( \chi_\rho \left(\tfrac{z}{\sqrt{n}}\right)^n\right) + \frac12 \sum_j \nu_j |z_j|^2 - \frac{1}{6\sqrt{n}}\, D^3\chi_\rho(0)\left( z^{\times 3}\right) \right| \\
&\qquad \textleq{2} \frac{C_1^2 |z|^6}{n}\, e^{C_1 |z|^3/\sqrt{n}} + \frac{C_2 |z|^4}{n} \\
&\qquad \textleq{3} \frac1n \, e^{\frac14 |z|^2} \left( C_1^2 |z|^6 + C_2 |z|^4 \right) .
\end{aligned}
\eee
Here, 1~follows simply by the triangle inequality. In~2, we (i)~observed that $\left| e^u - (1+u)\right|\leq |u|^2 e^{|u|}$; (ii)~operated the substitution $u=\log \left( \chi_\rho\left( \tfrac{z}{\sqrt{n}}\right)^n\right) + \frac12 \sum_j \nu_j |z_j|^2$; (iii)~noted that $\RR\ni x\mapsto x^2 e^x$ is a monotonically increasing function; and (iv)~used the fact -- proved in \eqref{eq:estm2} -- that $|u|\leq \frac{C_1 |z|^3}{\sqrt{n}}$. Finally, in~3 we remembered that $|z|\leq \sqrt{n}\, \varepsilon$ and assumed that $\varepsilon>0$ is small enough so that $\varepsilon C_1\leq \frac14$. Now, since $\nu_1,\ldots, \nu_m\geq 1$, we can rephrase the above estimate as
\bb
\left|\chi_\rho \left(\tfrac{z}{\sqrt{n}}\right)^n - e^{-\frac12 \sum_j \nu_j |z_j|^2} \left( 1+\frac{1}{6\sqrt{n}}\, D^3\chi_\rho(0)\left( z^{\times 3}\right) \right) \right| \leq \frac1n\, e^{-\frac14 |z|^2} \left( C_1^2 |z|^6 + C_2 |z|^4 \right) .
\label{eq:estim3}
\ee
Upon integration, \eqref{eq:estim3} naturally yields an upper bound for the second term on the right-hand side of \eqref{eq:secondid}. We obtain that
\bb
\begin{aligned}
&\int_{|z|\leq \sqrt{n}\, \varepsilon} d^{2m} z \left|\chi_\rho \left(\tfrac{z}{\sqrt{n}}\right)^n - e^{-\frac12 \sum_j \nu_j |z_j|^2} \left( 1+\frac{1}{6\sqrt{n}}\, D^3\chi_\rho(0)\left( z^{\times 3}\right) \right) \right|^2 \\
&\qquad \leq \frac{1}{n^2} \int_{|z|\leq \sqrt{n}\, \varepsilon} d^{2m} z\ e^{-\frac12 |z|^2} \left( C_1^2 |z|^6 + C_2 |z|^4 \right)^2 \\
&\qquad \leq \frac{1}{n^2} \int d^{2m} z\ e^{-\frac12 |z|^2} \left( C_1^2 |z|^6 + C_2 |z|^4 \right)^2 \\
&\qquad \texteq{4} \frac{1}{n^2} \operatorname{vol}\left(\mathbb{S}^{2m-1}\right) \int_0^\infty dr\ r^{2m-1} e^{-\frac12 r^2} \left( C_1^2 r^6 + C_2 r^4 \right)^2 \\
&\qquad \texteq{5} \frac{1}{n^2}\, 2^{m+3} \operatorname{vol}\left(\mathbb{S}^{2m-1}\right) \int_0^\infty ds\ e^{-s} \left( 4 C_1^4 s^{m+5} + 4C_1^2 C_2 s^{m+4} + C_2^2 s^{m+3}\right) \\
&\qquad \texteq{6} \frac{2^{m+4}\pi^m m(m+1)(m+2)(m+3)}{n^2} \left( 4C_1^4 (m+4)(m+5) + 4C_1^2 C_2 (m+4) + C_2^2 \right) \\
&\qquad \textleq{7} \frac{C_3^2}{n^2}\, .
\end{aligned}
\label{eq:estim4}
\ee
The justification of the above steps goes as follows: in~4 we switched to spherical coordinates; in~5 we performed the change of variables $s\coloneqq \frac12 r^2$; in 6 we computed the gamma integrals, also remembering that $\operatorname{vol}\left( \mathbb{S}^{2m-1}\right) = \frac{2\pi^m}{(m-1)!}$; finally, the constant $C_3$ introduced in~7 depends -- for fixed $m$ -- only on $M'_4$ (note that $M'_3\leq M'_4$ by construction). The proof of the first claim is completed once one inserts \eqref{eq:estim4} into \eqref{eq:secondid}. In particular, if $D^3\chi_{\rho}(0) = 0$ we see that the convergence rate is $\mathcal O_{M'_4}\left(n^{-1}\right)$. This proves also the second claim.
\end{proof}

We continue with the proof of the low-regularity QCLT that assumes finiteness of phase space moments of order up to $2+\alpha$, for some $\alpha\in (0,1]$.

\begin{proof}[Proof of Theorem~\ref{thm:QBElow}]
We just deal with the case where $\alpha\in (0,1)$. As above, we start by fixing $\mu\in (0,2)$ and choosing a sufficiently small $\varepsilon>0$ so that for any $z\in B \coloneqq B\left(0,\sqrt{n}\, \eps\right)$ the inequality \eqref{eq:mu} holds. By a similar estimate as in \eqref{eq:estm2}, but now leveraging \eqref{Taylor alpha} instead of \eqref{Taylor 2}, we have that for any $z\in B\left(0,\sqrt{n}\, \varepsilon\right)$
\bb \label{eq:estm4}
\begin{aligned}
&\left| \log \left( \chi_{\rho}\left(\tfrac{z}{\sqrt{n}}\right)^n \right) + \frac12 \sumno_j \nu_j |z_j|^2 \right| \\
&\qquad \leq n \left| \log \left( \chi_{\rho}\left(\tfrac{z}{\sqrt{n}}\right) \right) + 1 - \chi_\rho \left( \tfrac{z}{\sqrt{n}}\right) \right| + n \left| \chi_\rho \left( \tfrac{z}{\sqrt{n}}\right) -1 + \frac12 \sumno_j \nu_j |z_j|^2 \right| \\
&\qquad \leq \frac{m^2(2m+1)^2}{4}\, M'_2(\rho,\varepsilon)^2\, a(\mu)\, \frac{|z|^4}{n} + \frac{m(2m+1)}{2}\, M'_{2+\alpha}(\rho,\varepsilon)\, \frac{|z|^{2+\alpha}}{n^{\alpha/2}} \\
&\qquad \leq \frac{C_4 |z|^{2+\alpha}}{n^{\alpha/2}}\, ,
\end{aligned} 
\ee
where the constant $C_4$ introduced in the last line depends only on $M'_{2+\alpha}$ (note that $M'_2\leq M'_{2+\alpha}$).
\begin{align*}
e^{\frac12 \sum_j \nu_j |z_j|^2} \left|\chi_\rho \left(\tfrac{z}{\sqrt{n}}\right)^n - e^{-\frac12 \sumno_j \nu_j |z_j|^2} \right| &= \left| \exp\left( \log\left( \chi_\rho\left( \tfrac{z}{\sqrt{n}} \right)^n\right) + \frac12 \sumno_j \nu_j |z_j|^2\right) - 1\right| \\
&\textleq{1} \frac{C_4 |z|^{2+\alpha}}{n^{\alpha/2}}\, e^{C_4 |z|^{2+\alpha} \big/ n^{\alpha/2}} \\
&\textleq{2} \frac{C_4 |z|^{2+\alpha}}{n^{\alpha/2}}\, e^{\frac14 |z|^{2}}\, .
\end{align*}
Here, in 1 we used the elementary estimate $\left|e^u - 1\right| \leq |u| e^{|u|}$, together with the observation that the function $\RR \ni x\mapsto x e^x$ is monotonically increasing. In 2 we used the fact that $|z|\leq \sqrt{n}\, \varepsilon$, and chose $\varepsilon>0$ sufficiently small so that $\varepsilon^{\alpha} C_4\leq \frac14$. Combining the above estimate with the fact that $\nu_1,\ldots, \nu_m\geq 1$ yields
\bb
\left|\chi_\rho \left(\tfrac{z}{\sqrt{n}}\right)^n - e^{-\frac12 \sumno_j \nu_j |z_j|^2} \right| \leq \frac{C_4 |z|^{2+\alpha}}{n^{\alpha/2}}\, e^{-\frac14 |z|^{2}}\, ,
\label{eq:estm5}
\ee
which upon integration in turn leads to
\bb
\begin{aligned}
\int_{|z|\leq \sqrt{n}\, \varepsilon} d^{2m} z \left|\chi_\rho \left(\tfrac{z}{\sqrt{n}}\right)^n - e^{-\frac12 \sumno_j \nu_j |z_j|^2} \right|^2 &\leq \frac{C_4^2}{n^\alpha} \int_{|z|\leq \sqrt{n}\, \varepsilon} d^{2m} z\, e^{-\frac12 |z|^{2}} |z|^{4+2\alpha} \\
&\leq \frac{C_4^2}{n^\alpha} \int d^{2m} z\, e^{-\frac12 |z|^{2}} |z|^{4+2\alpha} \\
&\texteq{3} \frac{C_4^2}{n^\alpha}\, \operatorname{vol}\left( \mathbb{S}^{2m-1}\right) \int_0^\infty dr\, e^{-\frac12 r^{2}}\, r^{2m+3+2\alpha} \\
&\texteq{4} \frac{2^{m+\alpha+2}\, C_4^2\, \pi^m\, \Gamma\left( m+\alpha+1\right)}{n^\alpha\, (m-1)!} \\
&\textleq{5} \frac{C_5^2}{n^\alpha}\, .
\end{aligned}
\label{eq:estm6}
\ee
Here, in~3 we switched to spherical coordinates; in~4 we operated the change of variables $s\coloneqq \frac12 r^2$ and computed the gamma integrals; the constant introduced in~5 depends, for fixed $\alpha$, only on $M'_{2+\alpha}$. Inserting \eqref{eq:estm6} into the right-hand side of \eqref{eq:secondid} completes the proof.
\end{proof}

\subsection{Convergence in trace distance and relative entropy} \label{sec-proofs:traceentr}

In this section, we further use the assumption of finiteness of the second moments of the state in order to find convergence rates in trace distance and in relative entropy.

\begin{proof}[Proof of Corollary~\ref{cor:tracerel}]
The hypothesis implies in particular that $\rho$ has finite phase space moments of the second order. By Theorem \ref{thm:converse}, this amounts to saying that $\rho$ has also finite standard moments of the second order, that is, that $\tr\left[\rho H_m \right]\leq E<\infty$. Iterating \eqref{energy convolution} and passing to the limit, we see that in fact
\bbb
\tr\left[\rho^{\boxplus n} H_m \right] = \tr\left[\rho_\G H_m \right] = \tr\left[\rho H_m \right] \leq E\, .
\eee
Now, for any $E'>0$, denote by $P_{E'}$ the projection onto the finite dimensional subspace generated by the eigenvectors of the canonical Hamiltonian $H_m$ of eigenvalue less than $E'$. Then, by Markov's inequality, for any $\eps>0$,
\begin{align*}
\tr\left[\rho^{\boxplus n} P_{E/\eps}\right],\ \tr\left[\rho_\G P_{E/\eps}\right] \ge 1-\eps\,.
\end{align*}
From the so-called `gentle measurement lemma' \cite[Lemma~9]{VV1999}, we have that
\begin{align*}
\left\|\rho^{\boxplus n}- P_{E/\eps}\,\rho^{\boxplus n} P_{E/\eps}\right\|_1,\ \left\|\rho_\G - P_{E/\eps}\, \rho_\G\, P_{E/\eps}
\right\|_1 \le 2\sqrt{\eps}\,.
\end{align*}
Then, 
\begin{align*}
\left\|\rho^{\boxplus n}-\rho_\G\right\|_1 &\le \left\|\rho^{\boxplus n}-P_{E/\eps}\, \rho^{\boxplus n} P_{E/\eps}\right\|_1+\left\|P_{E/\eps}\left(\rho^{\boxplus n}-\rho_\G\right) P_{E/\eps}\right\|_1 + \left\|P_{E/\eps}\,\rho_\G\, P_{E/\eps}-\rho_\G\right\|_1 \\
& \le 4\sqrt{\eps} + \left\|P_{E/\eps}\left(\rho^{\boxplus n}-\rho_\G\right)P_{E/\eps}\right\|_1 \\
&\le 4\sqrt{\eps} + \left\|P_{E/\eps}\right\|_2\, \left\|P_{E/\eps}\left(\rho^{\boxplus n}-\rho_\G\right)P_{E/\eps}\right\|_2 \\
&\le 4\sqrt{\eps} + (E/\eps)^{m/2}\left\|\rho^{\boxplus n}-\rho_\G\right\|_2
\end{align*}
The result follows after optimising over $\eps>0$. In particular, if $\left\|\rho^{\boxplus n}-\rho_\G\right\|_2=\mathcal{O}\left(n^{-\alpha}\right)$, we find that $\left\|\rho^{\boxplus n}-\rho_\G\right\|_1=\mathcal{O}\left(n^{-\frac{\alpha}{m+1}}\right)$.

We now turn to the proof of the convergence in relative entropy. Observe that, since $\rho^{\boxplus n}$ and $\rho_\G$ share the same first and second moments, $\tr\left[\rho^{\boxplus n}\log \rho_\G\right]=\tr\left[\rho_\G\log\rho_\G\right]$ and thus 
$D\left(\rho^{\boxplus n}\big\|\rho_\G\right) = S\left(\rho_\G\right) - S\left(\rho^{\boxplus n}\right)$. The result follows directly from \cite[Lemma~18]{tightuniform}.
\end{proof}

\section{Optimality of convergence rates and necessity of finite second moments in the QCLT: Proofs} \label{sec-proofs:optimality}

In this section we discuss the optimality of our results in two different directions: 

\begin{itemize}
    \item First, we provide examples of states $\rho$ that do not have finite second moments and for which $\rho^{\boxplus n}$ does not converge to any quantum state. This shows the necessity of the assumptions on finite second moments in the Cushen--Hudson Theorem (Section \ref{subsec-proofs:unbounded}).
    \item Secondly, we provide examples of explicit states which saturate our convergence rates in Theorems \ref{thm:QBE'} and \ref{thm:QBElow} (Section \ref{subsec-proofs:optimality-rates}).
\end{itemize}

\subsection{Failure of convergence for states with unbounded energy} \label{subsec-proofs:unbounded}

We now show that the assumption of finiteness of second moments in Theorems \ref{thm:CushHud} and \ref{CH thm} cannot be weakened, e.g.\ by replacing it with finiteness of some lower-order moments. Some examples of states with undefined moments that do not satisfy Theorems \ref{thm:CushHud} and \ref{CH thm} can be obtained by drawing inspiration from probability theory. For instance, remembering that a classical Cauchy-distributed random variable does not satisfy the central limit theorem, we construct the following example.

\begin{ex}[(Cauchy-based wave function)] \label{Cauchy wave function ex}
Consider the pure state $\ket{\psi_f}$ with wave function $f(x)\coloneqq \frac{1}{\sqrt{\pi}}\frac{1}{x+i}$. The characteristic function of this state can be computed thanks to \eqref{chi psi f}, which in this case evaluates to
\bb
\chi_{\ket{\psi_f}\bra{\psi_f}}(z) = \frac{\sqrt2\, e^{- |z_I| \left( \sqrt2 + i z_R\right)}}{\sqrt2 + i z_R}\, .
\ee
The absolute value of this characteristic function is illustrated in Figure \ref{fig:qcf}.
\begin{figure}
    \centering
    \includegraphics[width=0.5\textwidth]{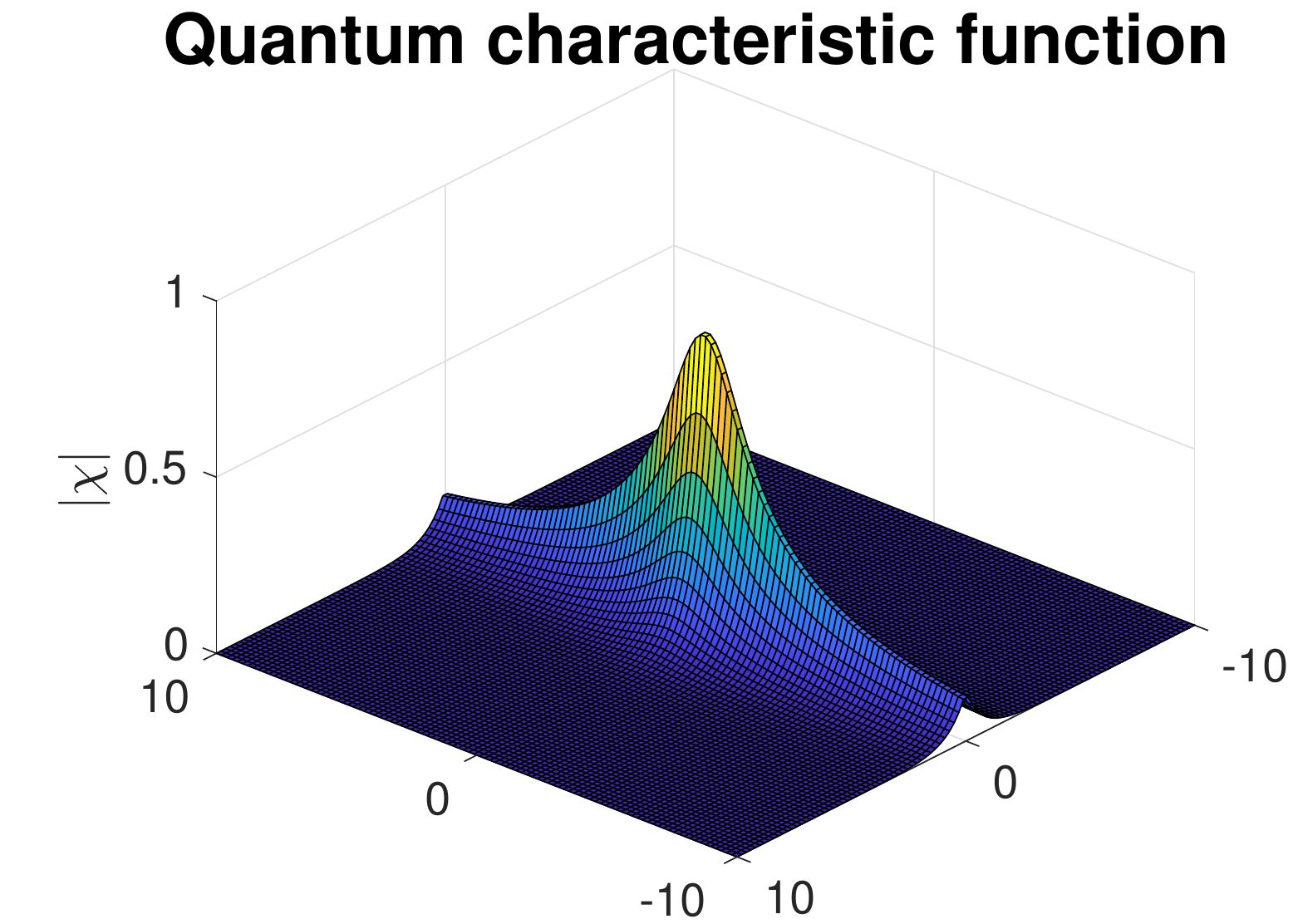}
    \caption{Example of the modulus of a quantum characteristic function, taken from Example \ref{Cauchy wave function ex}, with heavy tails in a single direction.}
    \label{fig:qcf}
\end{figure}

We then find the pointwise limit $\lim_{n \to \infty} \chi_{\ket{\psi_f}\bra{\psi_f}}\left( z/\sqrt{n}\right)^n = \delta_{z,0}$ which again is not continuous at $0$ and hence is not the characteristic function of any quantum state.
\end{ex}

The main drawback of the above state is that it does not have even first order moments. We can fix this by considering a slightly more sophisticated example. To proceed further, we first need to recall a well-known integral representation of fractional matrix powers.

\begin{lemma}[{\cite[Proposition~5.16]{schmuedgen}}] \label{integral representation lemma}
For all $r\in (0,1)$, all positive (possibly unbounded) operators $A$, and all $\ket{\psi}\in \dom\left(A^{1/2}\right)$, we have that
\bb
\left\|A^{r/2}\ket{\psi}\right\|^2 = \frac{\sin(\pi r)}{\pi} \int_0^\infty t^{r-1} \braket{\psi|\frac{A}{t I+A}|\psi} \, dt\, ,
\label{integral representation}
\ee
where all functions of $A$ are defined by means of its spectral decomposition.
\end{lemma}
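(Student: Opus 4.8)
The plan is to reduce the operator identity to a scalar one via the spectral theorem and the functional calculus for the (possibly unbounded) positive operator $A$. The elementary ingredient is the integral representation of the power function: for every $x\geq 0$ and every $r\in(0,1)$ one has $x^r = \frac{\sin(\pi r)}{\pi}\int_0^\infty t^{r-1}\frac{x}{t+x}\,dt$. First I would establish this scalar formula. For $x=0$ both sides vanish; for $x>0$ I would substitute $t=xs$ to obtain $\int_0^\infty t^{r-1}\frac{x}{t+x}\,dt = x^r\int_0^\infty \frac{s^{r-1}}{1+s}\,ds$, and then invoke the classical Beta-function evaluation $\int_0^\infty \frac{s^{r-1}}{1+s}\,ds = \Gamma(r)\Gamma(1-r)=\frac{\pi}{\sin(\pi r)}$, which holds precisely because $0<r<1$ guarantees convergence at both endpoints.

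Next I would pass to the operator level. Writing the spectral decomposition $A=\int_0^\infty \lambda\,dE(\lambda)$ and setting $\mu_\psi(\cdot)\coloneqq \braket{\psi|E(\cdot)|\psi}$, the functional calculus gives $\big\|A^{r/2}\ket{\psi}\big\|^2 = \braket{\psi|A^r|\psi} = \int_0^\infty \lambda^r\,d\mu_\psi(\lambda)$. One has to verify that this is finite and that $\ket{\psi}\in\dom(A^{r/2})$: since $r<1$ we have $\lambda^r\leq 1+\lambda$, so $\int_0^\infty \lambda^r\,d\mu_\psi \leq \|\psi\|^2 + \braket{\psi|A|\psi} <\infty$ using the hypothesis $\ket{\psi}\in\dom(A^{1/2})$, which simultaneously yields the domain inclusion $\dom(A^{1/2})\subseteq\dom(A^{r/2})$.

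Then I would substitute the scalar representation of $\lambda^r$ inside the integral and exchange the order of the $t$- and $\lambda$-integrations. Because the integrand $(t,\lambda)\mapsto t^{r-1}\frac{\lambda}{t+\lambda}$ is non-negative on $(0,\infty)\times[0,\infty)$, Tonelli's theorem applies with no further justification, giving $\int_0^\infty \lambda^r\,d\mu_\psi(\lambda) = \frac{\sin(\pi r)}{\pi}\int_0^\infty t^{r-1}\left(\int_0^\infty \frac{\lambda}{t+\lambda}\,d\mu_\psi(\lambda)\right)dt$. Finally, for each fixed $t>0$ the function $\lambda\mapsto\frac{\lambda}{t+\lambda}$ is bounded and continuous, so the functional calculus identifies the inner integral with $\braket{\psi|\frac{A}{tI+A}|\psi}$, yielding the claimed formula.

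The only genuine subtlety is the bookkeeping around domains: confirming $\ket{\psi}\in\dom(A^{r/2})$ and the finiteness of all the integrals before invoking Tonelli. Since the non-negativity of the integrand makes the interchange automatic, there is no hard analytic obstacle here; the proof is essentially the spectral-theorem lift of the scalar Beta-integral, which is why it can be quoted directly from \cite[Proposition~5.16]{schmuedgen}.
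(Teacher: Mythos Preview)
Your proposal is correct and follows essentially the same route as the paper's own proof: write $\braket{\psi|A^r|\psi}=\int_0^\infty \lambda^r\,d\mu^A_\psi(\lambda)$ via the spectral theorem, substitute the scalar identity $\lambda^r=\frac{\sin(\pi r)}{\pi}\int_0^\infty t^{r-1}\frac{\lambda}{t+\lambda}\,dt$, swap the integrals by Tonelli, and identify the inner integral by functional calculus. Your derivation is in fact a bit more careful than the paper's, since you explicitly verify $\ket{\psi}\in\dom(A^{r/2})$ and the finiteness needed before the interchange, and you prove the scalar formula via the Beta integral rather than citing it.
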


\begin{proof}[Proof of Proposition \ref{crazy state prop}]
The state is clearly centred, for instance because the wave function is symmetric under inversion $x\mapsto -x$. We proceed to prove claim (b). Note that, since $x^2+p^2=I+ 2a^\dagger a\ge I$, $2 a a^\dag = x^2+p^2 + I \leq 2(x^2+p^2)$, where $p\coloneqq -i \frac{d}{dx}$ is the momentum operator. We now apply the operator inequality $(A+B)^{r} \leq A^{r} + B^{r}$, which can be shown to hold for all $r \in [0,1]$ and all positive (possibly unbounded) self-adjoint operators $A,B$. To prove this explicitly in the non-trivial case where $r\in (0,1)$, we apply \eqref{integral representation} to $A+B$. For a generic $\ket{\psi}\in \dom\left(A^{r/2}\right)\cap \dom\left( B^{r/2}\right)$, we obtain that
\begin{align*}
\left\|(A+B)^{r/2}\ket{\psi}\right\|^2 &= \frac{\sin(\pi r)}{\pi} \int_0^\infty t^{r-1} \braket{\psi|\frac{A+B}{t I+A+B}|\psi} \, dt \\
&= \frac{\sin(\pi r)}{\pi} \int_0^\infty t^{r-1} \braket{\psi|\left(\frac{A}{t I+A+B} + \frac{B}{t I+A+B} \right)|\psi} \, dt \\
&\leq \frac{\sin(\pi r)}{\pi} \int_0^\infty t^{r-1} \left(\braket{\psi|\frac{A}{t I+A}|\psi} + \braket{\psi|\frac{B}{t I+B}|\psi} \right) \, dt \\
&= \left\|A^{r/2}\ket{\psi}\right\|^2 + \left\|B^{r/2}\ket{\psi}\right\|^2 ,
\end{align*}
where the inequality in the above derivation follows e.g.\ from \cite[Corollary~10.13]{schmuedgen}. Now, setting $A=x^2$, $B=p^2$ and $r=1-\delta$, we obtain that
\bbb
(a a^\dag)^{1-\delta} \leq |x|^{2(1-\delta)} + |p|^{2(1-\delta)} \leq |x|^{2(1-\delta)} + 1+ p^{2}\, .
\eee
Computing the expectation value on $\ket{\psi_f}$ yields
\begin{align*}
\braket{\psi_f | (a a^\dag)^{1-\delta} | \psi_f} &\leq \braket{\psi_f | \left(|x|^{2(1-\delta)} + 1+ p^2\right) | \psi_f} = \frac{1}{\sqrt\pi}\, \Gamma \left(\frac32-\delta\right) \Gamma (\delta) + 1 + \frac{7}{10}\, ,
\end{align*}
where the last step is by explicit computation. This proves~(b). We now move on to~(c). For this we evaluate the characteristic function of the convolution $\ketbra{\psi_f}^{\boxplus n}$ on the purely imaginary line. For $t\in \RR$, using \eqref{chi psi f} we obtain that
\bbb
\chi_{\ket{\psi_f}\bra{\psi_f}}(i t) = \int_{-\infty}^{+\infty} dx\, |f(x)|^2 e^{\sqrt2\, itx} = \sqrt2\,|t|\, K_1\left(\sqrt2\, |t|\right) ,
\eee
were $K_1$ is a modified Bessel function of the second kind, and the last equality follows from \eqref{crazy wave function} and \cite[Eq.~(9.6.25)]{ABRAMOWITZ}. Therefore, for any fixed $t > 0$ it holds that
\bbb
\begin{aligned}
\lim_{n\to\infty}\chi_{\ket{\psi_f}\bra{\psi_f}^{\boxplus\, n}}(i t) &= \lim_{n\to\infty} \chi_{\ket{\psi_f}\bra{\psi_f}}\left(\frac{i t}{\sqrt{n}}\right)^n \\
&= \lim_{n\to\infty} \left( 1 + \left( c + \log t - \frac12 \log n\right) \frac{t^2}{n} + O\left(n^{-3/2}\right) \right)^n = 0\, ,
\end{aligned}
\eee
where we have used the expansion in \cite[Eq.~(9.6.53)]{ABRAMOWITZ} (see also \cite[Eq.~(6.3.2) and~(9.6.7)]{ABRAMOWITZ}). Since $\chi_{\ket{\psi_f}\bra{\psi_f}^{\boxplus\, n}}(0)=1$ for all $n$ because $\ket{\psi_f}\!\bra{\psi_f}^{\boxplus\, n}$ is a valid quantum state, the sequence of functions $\chi_{\ket{\psi_f}\bra{\psi_f}^{\boxplus\, n}}$ does not possess a continuous limit. Hence, it cannot converge to the characteristic function of any quantum state. This proves~(c).
\end{proof}

\subsection{Optimality of the convergence rates}  \label{subsec-proofs:optimality-rates}

The following two examples show that the bounds stated in Theorems \ref{thm:QBE'} and \ref{thm:QBElow} are indeed saturated. Both examples consist of states constructed using the Fock basis. The construction of examples saturating the bounds in Theorems~\ref{thm:QBE'} and \ref{thm:QBElow} is motivated by the following Proposition. 

\begin{prop} \label{prop:Fockdiag}
Let $\rho$ be a one-mode density operator satisfying the assumptions of Theorem \ref{thm:QBE'} and also $\braket{i| \rho|j} =0$ for $|i-j| \in \left\{1,3 \right\}$. Then the state $\rho^{\boxplus n}$ converges at least with rate $\mathcal{O}\left(n^{-1}\right)$ to its Gaussification
\bb
\left\|\rho^{\boxplus n}-\rho_\G\right\|_2 = \mathcal{O}\left(n^{-1}\right) .
\ee
In particular, every density operator satisfying the assumptions of Theorem \ref{thm:QBE'} that is diagonal in the Fock basis achieves a $\mathcal O(n^{-1})$ rate.
\end{prop}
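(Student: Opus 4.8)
The plan is to reduce everything to the second assertion of Theorem~\ref{thm:QBE'}: since $\rho$ is assumed to satisfy the hypotheses of that theorem, it suffices to verify that the vanishing of the matrix elements $\braket{i|\rho|j}$ with $|i-j|\in\{1,3\}$ forces $D^3\chi_\rho(0)=0$, whereupon the rate $\mathcal{O}(n^{-1})$ follows at once. The whole content of the proposition is therefore the single identity $D^3\chi_\rho(0)=0$, and the strategy is to compute this third Fr\'echet derivative explicitly in terms of the moments of $\rho$.

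To this end, first I would use the Baker--Campbell--Hausdorff identity to split the displacement operator as $\D(z)=e^{-|z|^2/2}\,e^{z a^\dag}e^{-z^* a}$, so that
\bbb
\chi_\rho(z)=e^{-|z|^2/2}\,\widetilde\chi_\rho(z),\qquad \widetilde\chi_\rho(z)\coloneqq \tr\big[\rho\,e^{z a^\dag}e^{-z^* a}\big]=\sum_{p,q\geq 0}\frac{(-1)^q}{p!\,q!}\,\tr\big[\rho\,(a^\dag)^p a^q\big]\,z^p (z^*)^q .
\eee
The finiteness of the fourth-order moments, guaranteed by the hypotheses of Theorem~\ref{thm:QBE'} together with the converse part of Theorem~\ref{thm:all-of-us}, makes this expansion legitimate up to order four and ensures $\chi_\rho\in C^4$ near the origin, so that $D^3\chi_\rho(0)$ is well defined and is determined exactly by the coefficients of the monomials $z^p(z^*)^q$ with $p+q=3$ in the Taylor expansion of $\chi_\rho$.

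Next I would extract precisely these degree-three coefficients. Writing $e^{-|z|^2/2}=\sum_k \frac{(-1)^k}{2^k k!}z^k(z^*)^k$ and multiplying, the coefficient of $z^p(z^*)^q$ with $p+q=3$ is a linear combination of the quantities $\tr[\rho(a^\dag)^{p'}a^{q'}]$ with $p'-q'=p-q$ and $p'+q'\in\{1,3\}$: the Gaussian prefactor only lowers the total degree by an even amount, so a degree-three monomial of $\chi_\rho$ receives contributions from the degree-three part of $\widetilde\chi_\rho$ (pure moments with $p'+q'=3$, hence $|p'-q'|\in\{1,3\}$) and, through one factor of $-|z|^2/2$, from its degree-one part ($\tr[\rho\,a^\dag]$ and $\tr[\rho\,a]$, for which $|p'-q'|=1$). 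The key observation is then that $\tr[\rho(a^\dag)^{p'}a^{q'}]$ is a weighted sum of matrix elements $\braket{i|\rho|j}$ with $|i-j|=|p'-q'|$, since $(a^\dag)^{p'}a^{q'}$ sends $\ket{n}$ to a multiple of $\ket{n+p'-q'}$. Consequently every term feeding into a degree-three coefficient of $\chi_\rho$ involves only matrix elements with $|i-j|\in\{1,3\}$, all of which vanish by hypothesis; hence $D^3\chi_\rho(0)=0$ and the first claim follows from Theorem~\ref{thm:QBE'}.

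Finally, the concluding assertion is immediate: a state diagonal in the Fock basis has $\braket{i|\rho|j}=0$ for all $i\neq j$, in particular for $|i-j|\in\{1,3\}$, and such a state is automatically centred (the first moments $\tr[\rho\,a]$ already involve only $|i-j|=1$ elements), so it meets the hypotheses of both Theorem~\ref{thm:QBE'} and the present proposition. I expect the only delicate point to be purely a matter of bookkeeping: tracking how the prefactor $e^{-|z|^2/2}$ mixes the lower-order moments $\tr[\rho\,a]$, $\tr[\rho\,a^\dag]$ into the degree-three coefficients. This is precisely why the hypothesis must exclude \emph{both} $|i-j|=3$ and $|i-j|=1$, rather than $|i-j|=3$ alone.
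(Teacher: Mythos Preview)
Your proof is correct and in fact somewhat cleaner than the paper's. Both arguments reduce the claim to showing $D^3\chi_\rho(0)=0$ and then appeal to Theorem~\ref{thm:QBE'}; the difference lies in how this vanishing is established.

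The paper expands $\chi_\rho=\sum_{i,j}\braket{i|\rho|j}\,\chi_{\ket{i}\bra{j}}$, inserts the explicit Laguerre-polynomial formula \eqref{eq:hermite} for each $\chi_{\ket{i}\bra{j}}$, writes $\chi_{\ket{i}\bra{j}}(z)=e^{-|z|^2/2}H_{ji}(z)$, and then checks by a case analysis on $|i-j|$ (even, or odd and $\geq 5$) that every third-order partial derivative of $\chi_{\ket{i}\bra{j}}$ at the origin vanishes. Your approach bypasses the Laguerre formulas entirely: by normal-ordering $\D(z)=e^{-|z|^2/2}e^{za^\dag}e^{-z^*a}$ you identify the degree-three Taylor coefficients of $\chi_\rho$ directly with the normally ordered moments $\tr[\rho\,(a^\dag)^{p}a^{q}]$ for $p+q\in\{1,3\}$, and then use the structural fact that $(a^\dag)^{p}a^{q}$ shifts Fock levels by exactly $p-q$, so only matrix elements with $|i-j|\in\{1,3\}$ contribute. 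This is the same underlying selection rule, but extracted more transparently; in particular it makes immediately clear why both $|i-j|=1$ and $|i-j|=3$ must be excluded, whereas in the paper's argument this emerges only after the casework. Your invocation of the converse part of Theorem~\ref{thm:all-of-us} to guarantee finiteness of the third-order moments (and hence the validity of the Taylor identification) is the right technical hook; the paper handles the analogous issue by passing through a finite-rank approximation in the Fock basis.
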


\begin{proof}[Proof of Proposition~\ref{prop:Fockdiag}]

By Theorem \ref{thm:QBE'} it suffices to show that $D^3\chi_{\rho}(0)=0$ under the assumptions of the Proposition. We start by recalling that any density operator $\rho$ has an expansion into the Fock basis such that 
\bb \label{eq:densityop}
\rho = \sum_{i,j=0}^\infty \braket{i|\rho|j} \ketbraa{i}{j}\, .
\ee
Hence, we find for the characteristic function that
\bb
\label{eq:charfunexp}
\chi_{\rho}(z) = \sum_{i,j =0}^\infty \braket{i|\rho|j} \chi_{\ket{i}\bra{j}}(z).
\ee
Using a finite-rank approximation of the density operator $\rho$, it suffices then by Theorem \ref{thm:all-of-us} to analyse the component-wise derivatives in \eqref{eq:charfunexp}. The functions $\chi_{\ket{i}\bra{j}}$ are explicitly given by \cite[Eq.~(4.4.46) and~(4.4.47)]{BARNETT-RADMORE} 
\bb
\label{eq:hermite}
\chi_{\ket{i}\bra{j}}(z) = \left\{ \begin{array}{ll} \sqrt{\frac{i!}{j!}}\, (-z)^{j-i} e^{-\frac{|z|^2}{2}} L_{j}^{j-i}\left( |z|^2\right) & \text{if $i\leq j$,} \\[1ex] \sqrt{\frac{j!}{i!}}\, (z^*)^{i-j} e^{-\frac{|z|^2}{2}} L_{i}^{i-j}\left( |z|^2\right) & \text{if $i>j$.} \end{array} \right.
\ee
Here, $L_n^k(x)\coloneqq \frac{d^k}{dx^k}\, L_n(x) = (-1)^k \sum_{\ell=0}^{n-k}\binom{n}{\ell+k} \frac{(-x)^\ell}{\ell!}$ are the associated Laguerre polynomials.
By assumption, it suffices to consider the case where $|i-j|$ is even or $\vert i-j \vert$ is odd and at least~$5$. We find that by writing the characteristic function in the form $\chi_{\ket{i}\bra{j}} (z)\coloneqq e^{-\frac{\vert z \vert^2}{2}}H_{ji}(z)$ for some suitable function $H_{ji}$, as in \eqref{eq:hermite}, that for the different possible third derivatives, we have
\bb
\begin{aligned}
 \partial_z^3 \chi_{\ket{i}\bra{j}}(0) &= -3 \partial_z H_{ji}(0)+ \partial_z^3 H_{ji}(0),\\
 \partial_{z^*}^3  \chi_{\ket{i}\bra{j}}(0) &= -3\partial_{z^*} H_{ji}(0)+ \partial_{z^*}^3 H_{ji}(0),\\
 \partial_{z}^2\partial_{z^*}  \chi_{\ket{i}\bra{j}}(0) &= -\partial_{z^*}H_{ji}(0)+ \partial_{z}^2\partial_{z^*} H_{ji}(0), \\
\partial_{z^*}^2\partial_{z}  \chi_{\ket{i}\bra{j}}(0) &= - \partial_{z} H_{ji}(0)+ \partial_{z^*}^2 \partial_{z} H_{ji}(0).
 \end{aligned}
\ee
Therefore, the only possible non-zero contribution to the third derivative of the quantum characteristic function $\chi_{\rho}$ at zero could be due to terms that contain either one or three derivatives of functions $H_{ji}$ evaluated at zero. 

If $|i-j| \ge 4$ then $z$ and $z^*$ appear in \eqref{eq:hermite} with a joint power of at least $4$; thus, this term's contribution necessarily has to vanish. It suffices therefore to consider the case where $\vert i-j \vert=2$. If $H_{ji}$ is only differentiated once, then it is clear that this derivative has to vanish at zero, since $z,z^*$ appear with a joint power of at least two.

If $H_{ji}$ is differentiated three times, then the term $|z|^{2}$ causes the derivative to vanish at zero unless this term is differentiated precisely two times. This, however, implies that the Laguerre polynomial is differentiated exactly once. However, by the chain rule any first order derivative of the term $L_j^{\vert j-i \vert}(\vert z\vert^2)$ vanishes at the origin. 
This concludes the proof.
\end{proof}

The following example shows that the $\mathcal O(n^{-1})$ convergence rate stated in Proposition \ref{prop:Fockdiag}, under the assumption that $D^3\chi_\rho(0)=0$, is in fact attained.

\begin{ex}[($\mathcal O(n^{-1})$-rate)]
By Proposition \ref{prop:Fockdiag} we can take $\rho=\ketbra{1}$ to obtain a convergence rate of at least $\mathcal O(n^{-1})$ in the QCLT. That the $\mathcal O(n^{-1})$ rate is actually attained is illustrated in the right figure in Figure~\ref{fig:rates}. The $\mathcal O(n^{-1})$ rate is saturated both in Hilbert--Schmidt and trace norm. 
\label{ex:Ludovico}
\end{ex}
\medskip

The following example shows that the $\mathcal O(n^{-1/2})$ of Theorem \ref{thm:QBElow} is attained.

\begin{ex}[($\mathcal O(n^{-1/2})$-rate)]
\label{ex:Simon}
Consider the state\footnote{We use states $\ket{0}$ and $\ket{3}$ rather than $\ket{0}$ and $\ket{1}$ because the latter choice does not lead to a centred state.}
\bb
\rho = \frac{\ket{0}+\ket{3}}{\sqrt2}\frac{\bra{0}+\bra{3}}{\sqrt2}\, .
\ee
Its characteristic function is explicitly given by \eqref{eq:hermite}
\bb
\chi_{\rho}(z)= \frac{1}{12}\, e^{-\frac{|z|^2}{2}} \left( 12 - 18|z|^2 + \sqrt6 \left( z^3 - (z^*)^3 \right) + 9|z|^4 - |z|^6 \right)
\ee
Now, since $\braket{0|\rho|3}\neq 0$ we see that the condition of Proposition \ref{prop:Fockdiag} does not hold. One verifies directly that $\chi_\rho(z) = 1 - 2 |z|^2 + o\left(|z|^2\right)$, so that $\rho$ is already in Williamson form (cf.~\eqref{Williamson chi}). 
Letting $\Phi(z) = e^{-2|z|^2}$, we then find that $\left\| {\chi}_{\rho} - \Phi \right\|_{\LL^2(\mathbb R^2)}$ converges with rate $n^{-1/2}$, see Figure~\ref{fig:rates}.
\end{ex}

\begin{figure}
    \begin{center}
        \includegraphics[width=0.8\textwidth]{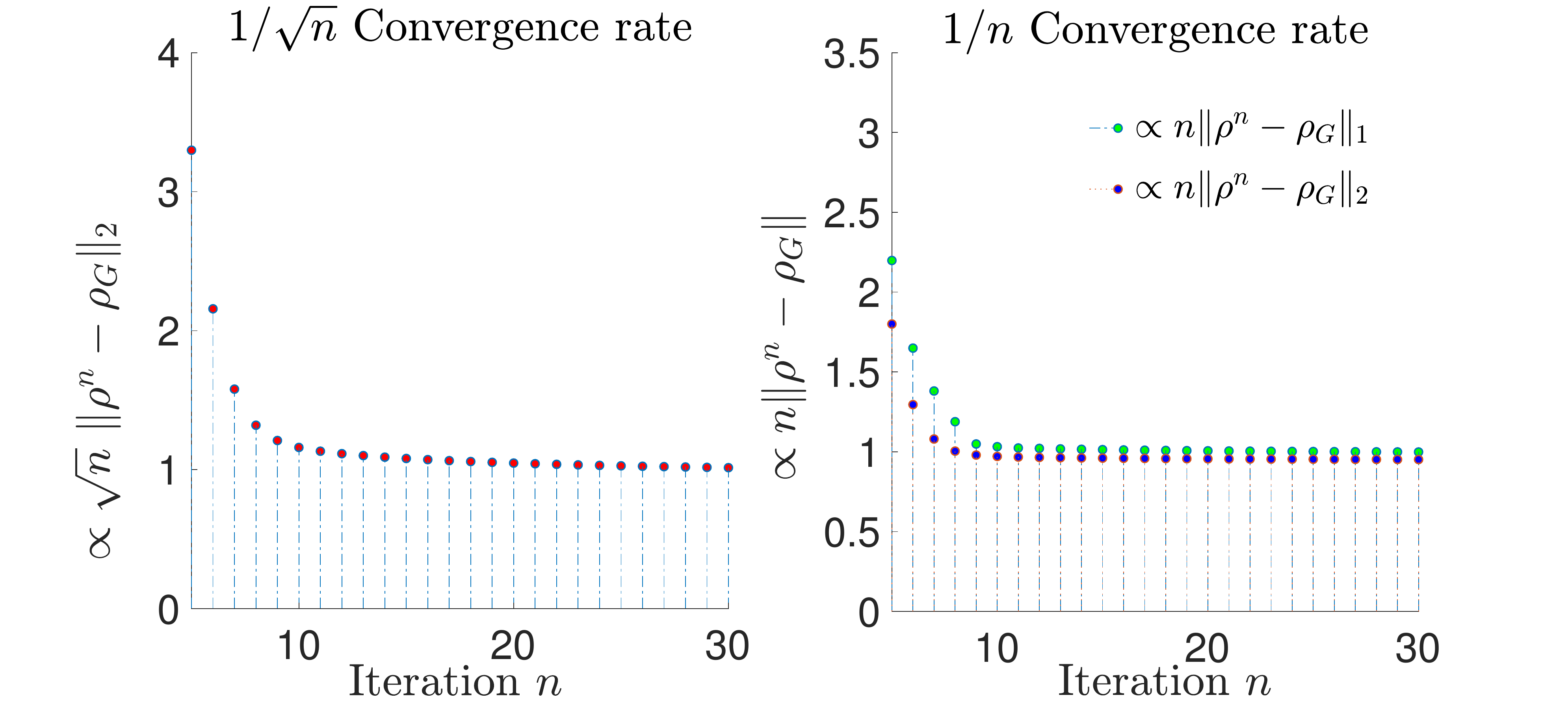}
        \end{center}
        \caption{This plot shows the expressions $cn^{\alpha} \Vert \rho^{\boxplus n}-\rho \Vert$ for a constant $c>0$ such that $\lim_{n \rightarrow \infty}cn^{\alpha} \Vert \rho^{\boxplus n}-\rho \Vert=1.$ The left figure shows that the $\mathcal O(1/\sqrt{n})$ convergence rate is sharp (Theorem~\ref{thm:QBElow}) by using the state from Ex.~\ref{ex:Simon}. 
        The right figure shows that we can obtain a rate $\mathcal O(1/n)$ if $D^3 \chi_{\rho}(0)=0$ (Theorem~\ref{thm:QBE'}) by using the state from Example~\ref{ex:Ludovico}. In both figures we write $\rho^n$ for $\rho^{\boxplus n}.$}
        \label{fig:rates}
\end{figure}

\section{Cascade of beam splitters: Proofs} \label{sec-proofs:cascade}

In this section, we prove the results claimed in Section \ref{subsec:applications}, namely convergence rates for cascades of beam splitters converging to thermal attenuator channels.

\subsection{Generalities of the cascade channels}

In order to study the convergence of the cascade channel, we start by proving the following elementary equivalence.

\begin{lemma} \label{lem:cascade-equivalence}
For an $m$-mode quantum state $\rho$, some $\lambda\in [0,1]$, and a positive integer $n$, consider the cascade channel $\oneN^n$ (cf.~\eqref{channel N}). One has that
\bb
\oneN^n = \mathcal{N}_{\rho(\lambda, n),\, \lambda}: \chi_\omega(z)\ \longmapsto\ \chi_{\oneN^n(\omega)}(z) = \chi_{\omega}\left(\sqrt{\lambda}\,z\right) \chi_{\rho(\lambda, n)}\!\left( \sqrt{1-\lambda}\, z\right) ,
\label{eq:cascade}
\ee
where the \emp{effective environment state} $\rho(\lambda,n)$ is defined via its characteristic function
\bb
\chi_{\rho(\lambda, n)}(z) \coloneqq \prod_{\ell=1}^n\,\chi_{\rho}\left(\sqrt{\frac{1-\lambda^{1/n}}{1-\lambda}}\,\lambda^{\frac{\ell-1}{2n}} z\right) .
\label{eq:charcascade}
\ee
\end{lemma}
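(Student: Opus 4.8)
The plan is to pass to characteristic functions and run a short induction on the number of composed elementary channels. By the convolution rule \eqref{boxplus characteristic functions} applied with transmissivity $\lambda^{1/n}$, one elementary step $\oneN:\omega\mapsto\omega\boxplus_{\lambda^{1/n}}\rho$ acts as $\chi_{\oneN(\omega)}(z)=\chi_\omega\!\big(\lambda^{1/(2n)}z\big)\,\chi_\rho\!\big(\sqrt{1-\lambda^{1/n}}\,z\big)$, since $\sqrt{\lambda^{1/n}}=\lambda^{1/(2n)}$.

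First I would prove by induction on $k\in\{1,\dots,n\}$ the product formula
\[
\chi_{\oneN^{k}(\omega)}(z)=\chi_\omega\!\big(\lambda^{k/(2n)}z\big)\prod_{\ell=1}^{k}\chi_\rho\!\Big(\sqrt{1-\lambda^{1/n}}\,\lambda^{(\ell-1)/(2n)}z\Big).
\]
The base case $k=1$ is exactly the display above. For the step, I write $\oneN^{k+1}=\oneN\circ\oneN^{k}$, insert the inductive hypothesis with its argument rescaled by $\lambda^{1/(2n)}$ (which multiplies the prefactor to $\lambda^{(k+1)/(2n)}$ and shifts every existing $\rho$-factor by one power of $\lambda^{1/(2n)}$), reindex the product $\ell\mapsto\ell+1$, and let the newly produced factor $\chi_\rho(\sqrt{1-\lambda^{1/n}}\,z)$ occupy the vacated $\ell=1$ slot. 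This closes the induction.

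Setting $k=n$ and using $\lambda^{n/(2n)}=\sqrt{\lambda}$ gives $\chi_{\oneN^n(\omega)}(z)=\chi_\omega(\sqrt{\lambda}\,z)\prod_{\ell=1}^n\chi_\rho(\sqrt{1-\lambda^{1/n}}\,\lambda^{(\ell-1)/(2n)}z)$. To recognise the right-hand side of \eqref{eq:cascade} I would pull a factor $\sqrt{1-\lambda}$ out of each argument, using $\sqrt{1-\lambda^{1/n}}\,\lambda^{(\ell-1)/(2n)}=\sqrt{\tfrac{1-\lambda^{1/n}}{1-\lambda}}\,\lambda^{(\ell-1)/(2n)}\cdot\sqrt{1-\lambda}$; the product then equals $\chi_{\rho(\lambda,n)}(\sqrt{1-\lambda}\,z)$ with $\chi_{\rho(\lambda,n)}$ as in \eqref{eq:charcascade}, and comparison with \eqref{channel N} yields $\oneN^n=\mathcal{N}_{\rho(\lambda,n),\,\lambda}$.

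The one point I would treat with care is the claim, implicit in calling $\rho(\lambda,n)$ an \emph{effective environment state}, that \eqref{eq:charcascade} really is the characteristic function of a bona fide density operator. This follows once I verify the normalisation of the squared scaling coefficients via a geometric sum, $\sum_{\ell=1}^n \frac{1-\lambda^{1/n}}{1-\lambda}\,\lambda^{(\ell-1)/n}=\frac{1-\lambda^{1/n}}{1-\lambda}\cdot\frac{1-\lambda}{1-\lambda^{1/n}}=1$: since these weights are non-negative and sum to one, $\chi_{\rho(\lambda,n)}$ is the characteristic function of a weighted convolution of $n$ copies of $\rho$ (equivalently, the output of a passive multiport interferometer fed with $\rho^{\otimes n}$), hence of a genuine state. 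Everything else is bookkeeping of exponents, so this normalisation identity is the only substantive check.
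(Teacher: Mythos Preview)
Your argument is correct and follows essentially the same route as the paper: an induction using the beam-splitter convolution rule \eqref{boxplus characteristic functions}, followed by the observation that the squared coefficients form a geometric sum equal to~$1$, which certifies $\rho(\lambda,n)$ as a genuine state. The only organisational difference is that the paper inducts on the cascade length~$n$ (re-parametrising the first $n-1$ steps as a shorter cascade of total transmissivity $\lambda^{(n-1)/n}$) and records the state property via the explicit two-term recursion $\rho(\lambda,n)=\rho\big(\lambda^{(n-1)/n},n-1\big)\boxplus_{\eta(\lambda,n)}\rho$, whereas you induct on the number~$k$ of partial compositions with the elementary channel held fixed; your version is, if anything, slightly cleaner.
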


\begin{proof}
We proceed by induction. The case $n=1$ follows from \eqref{boxplus characteristic functions}. Let us assume that the claim holds for $n-1$, so that
\bbb
\mathcal{N}_{\rho,\, \mu^{\text{\scalebox{0.8}{$1/(n-1)$}}}}^{n-1} = \mathcal{N}_{\rho(\mu,\,n-1),\, \mu}\, ,\qquad \chi_{\mathcal{N}_{\rho,\,\mu^{\text{\scalebox{0.8}{$1/(n-1)$}}}}^{n-1}(\omega)}(z) = \chi_\omega\left( \sqrt{\mu}\, z\right) \prod_{\ell=1}^{n-1}\,\chi_{\rho}\left(\sqrt{1-\mu^{1/(n-1)}}\,\mu^{\frac{\ell-1}{2(n-1)}} z\right) .
\eee
By setting $\mu=\lambda^{(n-1)/n}$ we see that
\bbb
\oneN^{n-1} = \mathcal{N}_{\rho\left(\lambda^{(n-1)/n},\,n-1\right),\, \lambda^{(n-1)/n}}\, ,\qquad \chi_{\oneN^{n-1}(\omega)}(z) = \chi_\omega\left( \lambda^{\frac{n-1}{2n}}\, z\right) \prod_{\ell=1}^{n-1}\,\chi_{\rho}\left(\sqrt{1-\lambda^{1/n}}\,\lambda^{\frac{\ell-1}{2n}} z\right) .
\eee
Since $\oneN^{n}(\omega) = \oneN\left( \oneN^{n-1}(\omega)\right) = \oneN^{n-1}(\omega) \boxplus_{\lambda^{1/n}} \rho$, composition with the $n^{\text{th}}$ copy of the channel $\oneN$ yields 
\begin{align*}
\chi_{\oneN^n(\omega)}(z) &= \chi_{\oneN^{n-1}(\omega)}\left( \lambda^{1/2n} z \right) \chi_\rho\left( \sqrt{1-\lambda^{1/n}}\, z\right) \\
&= \chi_\omega\left( \sqrt{\lambda}\, z\right)  \left(\prod_{\ell=1}^{n-1}\,\chi_{\rho}\left(\sqrt{1-\lambda^{1/n}}\,\lambda^{\frac{\ell}{2n}} z\right)\right) \chi_\rho\left( \sqrt{1-\lambda^{1/n}}\, z\right) \\
&= \chi_\omega\left( \sqrt{\lambda}\, z\right)\prod_{\ell=0}^{n-1}\,\chi_{\rho}\left(\sqrt{1-\lambda^{1/n}}\,\lambda^{\frac{\ell}{2n}} z\right) \\
&= \chi_\omega\left( \sqrt{\lambda}\, z\right)\prod_{\ell=1}^{n}\,\chi_{\rho}\left(\sqrt{1-\lambda^{1/n}}\,\lambda^{\frac{\ell-1}{2n}} z\right) ,
\end{align*}
which proves \eqref{eq:cascade} and \eqref{eq:charcascade}. Finally, one can also verify by induction that
\bb
\rho(\lambda, n) = \rho\left( \lambda^{\frac{n-1}{n}}, n-1\right) \boxplus_{\eta(\lambda, n)} \rho\, ,
\label{rho lambda n induction}
\ee
where $\eta(\lambda, n) \coloneqq \frac{\lambda^{1/n}\left( 1- \lambda^{(n-1)/n}\right)}{1-\lambda}\in [0,1]$, so that $\rho(\lambda, n)$ is a legitimate quantum state for all $\lambda\in [0,1]$ and all $n$.
\end{proof}

\subsection{On the effective environment state}

Thanks to Lemma~\ref{lem:cascade-equivalence}, the study of the cascade channel $\oneN^n$ boils down to that of the iteratively convolved state $\rho(\lambda, n)$ of \eqref{eq:charcascade}. Since such a convolution is \textit{not} symmetric (cf.~\eqref{chi CH state}), to proceed further we need to extend our quantum Berry--Esseen results to a non-i.i.d.~scenario. 
Note that the classical central limit theorem has indeed been extended to sequences of independent, non-identically distributed random variables \cite{lindeberg1922neue, Feller}, and even to sequences of correlated random variables \cite{bryc1993remark}. Rates of convergence for the former case can be found for instance in \cite{bhattacharya1986normal} (see e.g.\ Theorem~13.3 of \cite{bhattacharya1986normal}).

\begin{prop}\label{propnon-i.i.d.}
Let $\rho$ be a centred $m$-mode quantum state with finite second-order phase space moments. Then the sequence of quantum states $\rho(\lambda,n)$ defined via \eqref{eq:charcascade} converges to the Gaussification $\rho_\G$ of $\rho$ in trace norm. Moreover, if $\rho$ has finite third-order phase space moments then
\begin{align}
    \left\|\rho(\lambda,n) - \rho_{\G} \right\|_2 &= \mathcal{O}_{\lambda, M_3'}\left(n^{-1/2}\right) , \label{fibre states 2-norm} \\
    \left\|\rho(\lambda,n) - \rho_{\G} \right\|_1 &= \mathcal{O}_{\lambda, M_3'}\left(n^{-1/(2m+2)}\right) . \label{fibre states 1-norm}
\end{align}
Here, $M_3'=M'_3(\rho,\varepsilon)$ is defined by \eqref{eq:ps-moments}, and $\varepsilon>0$ is sufficiently small.
\end{prop}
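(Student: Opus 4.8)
The plan is to adapt the local-tail decomposition machinery developed for the i.i.d.\ case (Lemmas \ref{lem:tech} and \ref{lem:tech2}) to the non-identically-distributed product \eqref{eq:charcascade}. First I would reduce to the Williamson form exactly as in Section \ref{subsubsec:Williamson}: since $\rho(\lambda,n)$ and $\rho_\G$ are both covariant under the same symplectic unitary $V$ that brings $\rho$ to Williamson form, I may assume $\rho$ is centred with $\chi_\rho(z)=1-\tfrac12\sum_j\nu_j|z_j|^2+o(|z|^2)$, $\nu_j\ge 1$. By the quantum Plancherel identity \eqref{Plancherel HS norm}, the Hilbert--Schmidt distance becomes
\begin{equation*}
\left\|\rho(\lambda,n)-\rho_\G\right\|_2^2=\frac{1}{\pi^m}\int d^{2m}z\left|\prod_{\ell=1}^n\chi_\rho\!\left(\sqrt{\tfrac{1-\lambda^{1/n}}{1-\lambda}}\,\lambda^{\frac{\ell-1}{2n}}z\right)-e^{-\frac12\sum_j\nu_j|z_j|^2}\right|^2.
\end{equation*}
The crucial structural observation is that the scaling weights $s_\ell\coloneqq\sqrt{\tfrac{1-\lambda^{1/n}}{1-\lambda}}\,\lambda^{(\ell-1)/(2n)}$ satisfy the Lindeberg-type normalization $\sum_{\ell=1}^n s_\ell^2=1$ (a geometric sum), and that each individual weight is uniformly $\mathcal{O}_\lambda(n^{-1/2})$, since $1-\lambda^{1/n}=\mathcal{O}_\lambda(n^{-1})$. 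This is exactly the feature that makes the argument go through: in the i.i.d.\ case every factor is $\chi_\rho(z/\sqrt n)$, whereas here every factor is $\chi_\rho(s_\ell z)$ with $\max_\ell s_\ell=\mathcal{O}_\lambda(n^{-1/2})$, so the same smallness is available.

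Next I would carry out the local-tail split at radius $|z|\le n^{1/2}\varepsilon/C_\lambda$ (tuned so that every argument $s_\ell z$ lies in $B(0,\varepsilon)$). On the tail one bounds $\bigl|\prod_\ell\chi_\rho(s_\ell z)\bigr|$ using Proposition \ref{prop1}: outside the local region at least a constant fraction of the factors have argument of modulus $\ge\varepsilon$, each contributing a factor $\le\eta(\rho,\varepsilon)<1$, giving an exponentially small contribution just as in the proof of Lemma \ref{lem:tech}. On the local region I would take logarithms and Taylor-expand each factor. Using \eqref{Taylor 2} from Lemma \ref{lem:tech2} applied to each $\chi_\rho(s_\ell z)$, together with the expansion $\log\chi_\rho=-(1-\chi_\rho)-(1-\chi_\rho)^2a(\cdots)/4$ from \eqref{a}, I get
\begin{equation*}
\sum_{\ell=1}^n\log\chi_\rho(s_\ell z)=-\frac12\sum_j\nu_j|z_j|^2\sum_\ell s_\ell^2+\mathcal{O}_\lambda\!\left(\Bigl(\sum_\ell s_\ell^3\Bigr)|z|^3+\Bigl(\sum_\ell s_\ell^4\Bigr)|z|^4\right),
\end{equation*}
and since $\sum_\ell s_\ell^2=1$ the Gaussian exponent is reproduced exactly, while $\sum_\ell s_\ell^3=\mathcal{O}_\lambda(n^{-1/2})$ and $\sum_\ell s_\ell^4=\mathcal{O}_\lambda(n^{-1})$ control the remainder. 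Exponentiating and integrating the squared error against the Gaussian weight (as in \eqref{eq:estim3}--\eqref{eq:estim4}) yields the $\mathcal{O}_{\lambda,M_3'}(n^{-1/2})$ bound \eqref{fibre states 2-norm}. The trace-norm bound \eqref{fibre states 1-norm} then follows verbatim from the gentle-measurement/Markov argument of Corollary \ref{cor:tracerel}, once I note that $\tr[\rho(\lambda,n)H_m]$ stays uniformly bounded: iterating \eqref{energy convolution} through the convolution structure \eqref{rho lambda n induction} shows the energy equals $\sum_\ell s_\ell^2\tr[\rho H_m]=\tr[\rho H_m]$, a finite constant. Finally, the bare trace-norm convergence under only second-order moments follows by the same split using \eqref{Taylor alpha} with $\alpha\to 0$ in the style of Theorem \ref{thm:QBElow}.

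The main obstacle I anticipate is twofold. First, the tail estimate is genuinely more delicate than in the i.i.d.\ case: because the weights $s_\ell$ are unequal and some are as small as $\lambda^{1/2}\cdot\mathcal{O}_\lambda(n^{-1/2})$ while others are order $n^{-1/2}$, I must argue carefully that for $|z|$ large a definite positive fraction of the arguments $s_\ell z$ escape $B(0,\varepsilon)$, so that the product of $\eta(\rho,\varepsilon)$-factors is summably small. This requires a counting argument showing $\#\{\ell:s_\ell|z|\ge\varepsilon\}\gtrsim_\lambda n$ once $|z|\gtrsim_\lambda n^{1/2}\varepsilon$. Second, I must keep the dependence of all constants on $\lambda$ explicit and uniform in $n$, verifying in particular that the lower bound $\min_\ell s_\ell^2=\mathcal{O}_\lambda(n^{-1})$ does not degenerate; the boundary cases $\lambda\to 0$ and $\lambda\to 1$ should be handled separately or absorbed into the $\mathcal{O}_\lambda$ notation. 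Everything else is a routine transcription of the i.i.d.\ proof with $z/\sqrt n$ replaced by the weighted arguments $s_\ell z$.
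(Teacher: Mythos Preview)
Your plan is correct and matches the paper's proof essentially step for step: Williamson reduction, Plancherel, the identity $\sum_\ell s_\ell^2=1$, Taylor/log expansion controlled by $\sum_\ell s_\ell^3=\mathcal O_\lambda(n^{-1/2})$, and the gentle-measurement argument for the trace norm. The only place you over-worry is the tail: since $s_\ell\ge\lambda^{1/2}s_1$ for every $\ell$, once $|z|>\sqrt n\,\varepsilon$ \emph{all} arguments satisfy $|s_\ell z|\ge c_\lambda\varepsilon$ for a fixed $c_\lambda>0$ (eventually in $n$), so no counting argument is needed---the paper simply reserves one factor for the $L^2$ integral and bounds each of the remaining $n-1$ factors by the same $\eta<1$ via Proposition~\ref{prop1}.
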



\begin{proof}
The argument is a variation of that used to prove Theorem \ref{thm:QBElow} in Section \ref{subsec-proofs:convrates}. First of all, reasoning as in Section \ref{subsubsec:Williamson}, we can assume without loss of generality that $\rho$ is in its Williamson form. To simplify the notation, we introduce the re-scaled vectors $w_\ell\coloneqq \sqrt{\frac{1-\lambda^{1/n}}{1-\lambda}}\, \lambda^{\frac{\ell-1}{2n}} z\in \CC^m$, where $\ell\in \{1,\ldots, n\}$. Then clearly $\chi_{\rho(\lambda,n)}(z) = \prod_{\ell=1}^n \chi_\rho\left( w_\ell \right)$. Note that $\left|w_\ell\right|\leq \sqrt{\frac{\log \left(1/\lambda\right)}{n(1-\lambda)}}\,|z|$; substituting $z\mapsto w_\ell$ into \eqref{Taylor 1} and \eqref{Taylor 2}, we see that whenever $|z|\leq \sqrt{\frac{n(1-\lambda)}{\log\left(1/\lambda\right)}}\, \varepsilon$ it holds that
\begin{align}
\left| \chi_\rho\left( w_\ell \right) - 1 \right| &\leq \frac{m(2m+1)}{2}\, M'_2(\rho,\varepsilon)\, \frac{\log \tfrac{1}{\lambda}}{1-\lambda}\, \frac{|z|^2}{n}\, , \label{adapted Taylor 1} \\
\left| \chi_\rho\left( w_\ell \right) - 1 + \frac12 \frac{1-\lambda^{1/n}}{1-\lambda}\, \lambda^{\frac{\ell-1}{n}} \sumno_j \nu_j |z_j|^2 \right| &\leq \frac{m(m+1)(2m+1)}{9}\, M'_3(\rho,\varepsilon) \left(\frac{\log \tfrac{1}{\lambda}}{1-\lambda}\right)^{3/2} \frac{|z|^3}{n^{3/2}}\, . \label{adapted Taylor 2}
\end{align}
We start by choosing $\varepsilon>0$ small enough so that \eqref{eq:mu} holds for some $\mu\in (0,2)$. We can now mimic the calculations in \eqref{eq:estm2}, obtaining
\bb
\begin{aligned}
&\left| \log \chi_{\rho(\lambda,n)}(z) + \frac12 \sumno_j \nu_j |z_j|^2 \right| \\
&\quad = \left| \log \prod_{\ell=1}^n \chi_\rho\left( w_\ell\right) + \frac12 \sumno_j \nu_j |z_j|^2 \right| \\
&\quad \textleq{1} \sum_{\ell=1}^n \left| \log \chi_\rho\left( w_\ell\right) + \frac12 \frac{1-\lambda^{1/n}}{1-\lambda}\, \lambda^{\frac{\ell-1}{n}} \sumno_j \nu_j |z_j|^2\right| \\
&\quad \leq \sum_{\ell=1}^n \Bigg\{ \left| \log \chi_\rho\left( w_\ell\right) + 1 - \chi_\rho\left( w_\ell\right) \right| + \left| \chi_\rho\left( w_\ell\right) - 1 + \frac12 \frac{1-\lambda^{1/n}}{1-\lambda}\,\lambda^{\frac{\ell-1}{n}} \sumno_j \nu_j |z_j|^2\right| \Bigg\} \\
&\quad \textleq{2} \sum_{\ell=1}^n \Bigg\{ \frac{m^2(2m\!+\!1)^2}{4}\, \frac{\log \tfrac{1}{\lambda}}{1-\lambda}\, M'_2(\rho,\varepsilon)^2\, a(\mu)\, \frac{|z|^4}{n^2} + \frac{m(m\!+\!1)(2m\!+\!1)}{9}\, M'_3(\rho,\varepsilon) \left(\frac{\log \tfrac{1}{\lambda}}{1-\lambda}\right)^{3/2} \!\frac{|z|^3}{n^{3/2}} \Bigg\} \\
&\quad \textleq{3} \frac{C_6 |z|^3}{\sqrt{n}}\, ,
\end{aligned}
\label{eq:lambda-estimate}
\ee
Here, in~1 we observed that $\sum_{\ell=1}^n \frac{1-\lambda^{1/n}}{1-\lambda}\,\lambda^{\frac{\ell-1}{n}} = 1$ and applied the triangle inequality. To deduce~2, instead, we proceeded as for~\eqref{eq:estim0}. Namely, on the first addend we used the identity $\log \left( 1-\frac{x}{2}\right) + \frac{x}{2} = -\frac{x^2}{4}\, a(x)$ satisfied by the function $a(x)$ defined by \eqref{a}, we set $x = 2\left( 1-\chi_\rho\left( w_\ell\right)\right)$, we noted that $|x|\leq \mu$ implies that $|a(x)|\leq a(\mu)$, and lastly we employed \eqref{adapted Taylor 1}. The second addend, instead, has been estimated thanks to \eqref{adapted Taylor 2}. Finally, for fixed $m$ the constant introduced in~3 depends only on $M'_3$ and $\lambda$ (again, $M'_2\leq M'_3$ by construction).

Proceeding as usual, we continue to estimate
\begin{align*}
e^{\frac12 \sum_j \nu_j |z_j|^2} \left|\chi_{\rho(\lambda,n)}(z) - e^{-\frac12 \sum_j \nu_j |z_j|^2} \right| &= e^{\frac12 \sum_j \nu_j |z_j|^2} \left| \prod_{\ell=1}^n \chi_\rho \left( w_\ell \right) - e^{-\frac12 \sum_j \nu_j |z_j|^2} \right| \\
&= \left| \exp\left(\log \left(\prod_{\ell=1}^n \chi_\rho \left( w_\ell \right) + \frac12 \sum_j \nu_j |z_j|^2\right)\right) - 1 \right| \\
&\textleq{4} \frac{C_6 |z|^3}{\sqrt{n}}\, e^{C_6 |z|^3/\sqrt{n}} \\
&\textleq{5} \frac{C_6 |z|^3}{\sqrt{n}}\, e^{\frac14 |z|^2}\, .
\end{align*}
Note that in~4 we applied the elementary inequality $\left|e^u-1\right|\leq |u|e^{|u|}$, observed that $\RR\ni x\mapsto x e^x$ is a monotonically increasing function, and leveraged the bound in \eqref{eq:lambda-estimate}. In~5, instead, we wrote $\frac{C_6 |z|^3}{\sqrt{n}}\leq C_6\, \sqrt{\frac{1-\lambda}{\log\tfrac{1}{\lambda}}}\, \varepsilon \, |z|^2\leq \frac14 |z|^2$, where the last estimate holds provided that $\varepsilon>0$ is small enough.

Remembering that $\nu_1,\ldots, \nu_m\geq 1$, we can massage the above relation so as to get
\bb
\left| \chi_{\rho(\lambda,n)}(z) - e^{-\frac12 \sum_j \nu_j |z_j|^2} \right| \leq \frac{C_6 |z|^3}{\sqrt{n}}\, e^{-\frac14 |z|^2}\, .
\label{eq:lambda-estimate2}
\ee
Now, we can repeat the steps that led to \eqref{eq:firstid}. We obtain that
\bb
\begin{aligned}
&\pi^{m}\left\|\rho(\lambda,n) - \rho_\G\right\|_2^2 \\
&\qquad = \int d^{2m}z \left| \chi_{\rho(\lambda,n)}(z) - e^{-\frac12 \sum_j \nu_j |z_j|^2} \right|^2 \\
&\qquad \textleq{6} \int_{|z|\leq \sqrt{n}\, \varepsilon} d^{2m}z \left| \chi_{\rho(\lambda,n)}(z) - e^{-\frac12 \sum_j \nu_j |z_j|^2} \right|^2 + \int_{|z|>\sqrt{n}\, \varepsilon} d^{2m}z \left| \chi_{\rho(\lambda,n)}(z) - e^{-\frac12 \sum_j \nu_j |z_j|^2} \right|^2 \\
&\qquad \textleq{7} \int_{|z|\leq \sqrt{n}\, \varepsilon} d^{2m}z\, \frac{C_6 |z|^6}{n}\, e^{-\frac12 |z|^2} + 2 \int_{|z|>\sqrt{n}\, \varepsilon} d^{2m}z \left| \chi_{\rho(\lambda,n)}(z)\right|^2 + 2 \int_{|z|>\sqrt{n}\, \varepsilon} d^{2m}z e^{- \sum_j \nu_j |z_j|^2} \\
&\qquad \textleq{8} \frac{C_7}{n} + 2 \left( \sup_{|z|>\sqrt{n}\, \varepsilon} \prod_{\ell=2}^n \left|\chi_\rho\left(w_\ell\right)\right|^2 \right) \int_{|z|>\sqrt{n}\, \varepsilon} d^{2m}z \left| \chi_\rho(w_1)\right|^2 + 2 e^{-\frac{\varepsilon^2}{2}\, n} \int_{|z|\geq \sqrt{n}\,\varepsilon} e^{-\frac12 |z|^2} \\
&\qquad \textleq{9} \frac{C_7}{n} + 2 \left( \sup_{|v|>\frac{1}{\sqrt{2}}\sqrt{\lambda \log (1/\lambda)}\, \varepsilon} \left|\chi_\rho(v)\right| \right)^{2(n-1)} \left( \frac{2n(1-\lambda)}{\log \tfrac{1}{\lambda}} \right)^m + 2 (2\pi)^m e^{-\frac{\varepsilon^2}{2}\, n}
\end{aligned}
\label{eq:lambda-final}
\ee
The justification of the above steps is as follows. The estimate in~6 is just an application of the triangle inequality. In~7 we used \eqref{eq:lambda-estimate2} and the elementary fact that $|u+v|^2\leq 2|u|^2+2|v|^2$ on the second addend. As for~8, we: (i)~performed the integral and introduced a constant $C_7$ that depends on $m$ only on the first addend; (ii)~decomposed $\chi_{\rho(\lambda, n)}(z) = \chi_\rho(w_1) \cdot\prod_{\ell=2}^n \chi_\rho(w_\ell)$ on the second; and~(iii) used the fact that $e^{-\sum_j \nu_j |z_j|^2}\leq e^{-|z|^2}< e^{-\frac{\varepsilon^2}{2}\, n} e^{-\frac12 |z|^2}$ in the prescribed range on the third. Finally, in~9 we noted that if $|z|>\sqrt{n}\,\varepsilon$ then eventually in $n$
\bbb
|w_\ell| = \sqrt{\frac{1-\lambda^{1/n}}{1-\lambda}}\lambda^{\frac{\ell-1}{2n}} |z| > \frac{1}{\sqrt{2}}\sqrt{\lambda \log \tfrac{1}{\lambda}}\,\varepsilon
\eee
for all $\ell\in \{1,\ldots, n\}$; moreover, we used the fact that $\int d^{2m} u\, |\chi_\rho(u)|^2\leq 1$ to evaluate the integral in the second addend.

Since the second term in the rightmost side of \eqref{eq:lambda-final} decays faster than any inverse power of $n$ as $n\to\infty$ thanks to Proposition \ref{prop1}, the proof of \eqref{fibre states 2-norm} is complete.
Lastly, \eqref{fibre states 1-norm} follows similarly to Corollary \ref{cor:tracerel}.
\end{proof}

\subsection{Approximating cascade channels}

With this convergence at hand, we provide a quantitative bound on the approximation of thermal attenuator channels by cascades of beam splitters (with possibly non-Gaussian environment states). Recall that, to an environment state $\rho$ one can associate an attenuator channel $\mathcal{N}_{\rho,\lambda}(\omega)\coloneqq\omega\boxplus_\lambda\rho$ of transmissivity $0\le \lambda\le 1$. The following simple lemma is crucial to convert the above state approximation result (Proposition \ref{propnon-i.i.d.}) into a statement about approximations of attenuator channels.

\begin{lemma}\label{lem:channeltostate}
Given any two $m$-mode quantum states $\rho_1$ and $\rho_2$, and some $\lambda\in[0,1]$, the corresponding channels defined as in \eqref{channel N} satisfy
\bb
 \left\|\mathcal{N}_{\rho_1,\,\lambda} - \mathcal{N}_{\rho_2,\,\lambda}\right\|_{\diamond} \le \|\rho_1-\rho_2\|_1\, .
 \label{diamond as trace}
\ee
\end{lemma}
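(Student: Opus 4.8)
The plan is to unfold the definitions \eqref{channel N}--\eqref{boxplus} of the two channels, to note that their difference is nothing but the composition of three elementary operations, and to bound the diamond norm using only unitary invariance and contractivity of the trace norm. Set $\Delta \coloneqq \rho_1 - \rho_2$, a self-adjoint trace-class operator with $\tr[\Delta] = 0$. By definition, for every state $\omega$ one has $\left(\mathcal{N}_{\rho_1,\,\lambda} - \mathcal{N}_{\rho_2,\,\lambda}\right)(\omega) = \tr_2\!\left[ U_\lambda\, (\omega \otimes \Delta)\, U_\lambda^\dag\right]$, and by linearity this identity extends to an arbitrary trace-class operator in place of $\omega$.

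First I would bring in the reference system $\CC^k$ featuring in the definition \eqref{eq:diamond} of the diamond norm. For an arbitrary nonzero trace-class operator $X$ on $\cH_m \otimes \CC^k$, adjoining the fixed environment operator $\Delta$ and acting with the beam-splitter unitary $U_\lambda$ (which touches only the signal--environment pair) gives
\[
\left(\left(\mathcal{N}_{\rho_1,\,\lambda} - \mathcal{N}_{\rho_2,\,\lambda}\right) \otimes \operatorname{id}_{\CC^k}\right)(X) = \tr_2\!\left[ U_\lambda\, (X \otimes \Delta)\, U_\lambda^\dag\right],
\]
where the partial trace is taken over the environment modes only.

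Next I would chain together three standard facts about the Schatten $1$-norm: (i) the partial trace is trace-norm contractive, $\left\|\tr_2[Y]\right\|_1 \le \|Y\|_1$ (it is a quantum channel, hence a contraction, and this holds for general, not merely positive, trace-class $Y$); (ii) conjugation by a unitary preserves the trace norm, $\left\|U_\lambda\, Y\, U_\lambda^\dag\right\|_1 = \|Y\|_1$; and (iii) the trace norm is multiplicative over tensor products, $\|A \otimes B\|_1 = \|A\|_1\,\|B\|_1$. Applying these in turn yields
\[
\left\| \left(\left(\mathcal{N}_{\rho_1,\,\lambda} - \mathcal{N}_{\rho_2,\,\lambda}\right) \otimes \operatorname{id}_{\CC^k}\right)(X)\right\|_1 \le \|X \otimes \Delta\|_1 = \|X\|_1\, \|\rho_1 - \rho_2\|_1 .
\]
Dividing by $\|X\|_1$ and taking the supremum over all nonzero $X$ and all $k\in\NN$ in \eqref{eq:diamond} delivers the claim.

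The computation is essentially routine; the only point demanding care is the bookkeeping of the tensor factors. One must verify that adjoining the fixed environment operator $\Delta$ genuinely commutes with adjoining the reference $\CC^k$ --- that is, that both $U_\lambda$ and the partial trace $\tr_2$ act solely on the signal--environment pair and leave the reference system untouched --- so that the three inequalities above may be applied on the combined signal--environment--reference space without interference. Once this is granted, no moment or regularity hypotheses are needed, and the bound in fact holds for arbitrary trace-class $\rho_1,\rho_2$, since positivity is never used.
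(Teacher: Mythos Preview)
Your proof is correct and follows essentially the same route as the paper's: unfold the channel as conjugation by $U_\lambda$ followed by a partial trace, then use contractivity of the partial trace, unitary invariance, and multiplicativity of the trace norm on tensor products. The only difference is cosmetic: the paper restricts to states $\omega$ on the joint system (which suffices since the map is Hermiticity-preserving), while you work directly with an arbitrary trace-class $X$ as in the definition \eqref{eq:diamond}.
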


\begin{proof}
Let $R$ be any reference system, and let $\omega\in\cD(\cH_{AR})$ be a state on the bipartite system $AR$. Then
\begin{align*}
    \|(\mathcal{N}_{\rho_1,\,\lambda}-\mathcal{N}_{\rho_2,\,\lambda})\otimes\operatorname{id}_{R}(\omega)\|_1&=\| \tr_R[U_{\lambda}(\omega\otimes \rho_1)U_{\lambda}^\dagger] -\tr_R[U_{\lambda}(\omega\otimes\rho_2)U_{\lambda}^\dagger] \|_1\\
    &\le\| U_{\lambda}(\omega\otimes \rho_1-\omega\otimes \rho_2)U_{\lambda}^\dagger \|_1\\
    &=\|\rho_1-\rho_2\|_1\,,
    \end{align*}
where the inequality stems from the monotonicity of trace distance under quantum channels.
\end{proof}

With this lemma at hand, we are ready to prove Theorem \ref{thm:approxchann}.


\begin{proof}[Proof of Theorem \ref{thm:approxchann}]
Recall from Lemma \ref{lem:cascade-equivalence} that $\oneN^n = \mathcal{N}_{\rho(\lambda,n),\,\lambda}$, where $\rho(\lambda,n)$ is the state with characteristic function given by \eqref{eq:charcascade}. Applying Lemma \ref{lem:channeltostate} and Proposition \ref{propnon-i.i.d.}, we have that
\bbb
    \left\| \oneN^{n}-\mathcal{N}_{\rho_\G,\lambda}\right\|_{\diamond}\ \texteq{\eqref{eq:cascade}}\ \left\| \mathcal{N}_{\rho(\lambda,n),\, \lambda} - \mathcal{N}_{\rho_\G,\lambda}\right\|_{\diamond}\  \textleq{\eqref{diamond as trace}}\ \|\rho(\lambda, n) - \rho_\G\|_1\ \texteq{\eqref{fibre states 1-norm}}\ \mathcal{O}_{M_3'}\left(n^{-1/(2m+2)}\right) ,
\eee
concluding the proof.
\end{proof}

\begin{proof}[Proof of Corollary \ref{cor:capac}]
We now move on to Corollary \ref{cor:capac}. Let us start by proving the statement on quantum capacities, namely \eqref{quantum capacity cascade} and \eqref{remainder term cascade}. Our aim is to apply \cite[Theorem~9]{winter2017energy} to the two channels $\oneN^n$ and $\mathcal{N}_{\rho_\G,\lambda}=\att$, for the special case $m=1$ (cf.~\eqref{thermal attenuator}). We set 
\bbb
\varepsilon_n \coloneqq \frac12 \left\|\oneN^n - \att\right\|_\diamond \geq \frac12 \left\|\oneN^n - \att\right\|_{\diamond E}\, ,
\eee
where the energy-constrained diamond norm is defined with respect to the canonical Hamiltonian, namely the number operator $H=a^\dag a$ (see \cite[Eq.~(2)]{shirokov2018energy} and \cite[Eq.~(2)]{winter2017energy}). Note that $\varepsilon_n = \mathcal{O}_{M_3'}\left(n^{-1/4}\right)$ by Theorem \ref{thm:approxchann}.

The input--output energy relations can be easily determined for both channels thanks to \eqref{rho lambda n induction} and \eqref{energy convolution}, which together show that
$\tr\left[ \rho(\lambda, n)\, a^\dag a \right] = \tr\left[\rho\, a^\dag a\right] = N = \tr\left[ \tau_N a^\dag a\right]$.
One obtains that
\bb
\begin{aligned}
\tr\left[ \oneN^{n}(\omega)\, a^\dag a \right] &= \tr\left[ \mathcal{N}_{\rho(\lambda, n),\,\lambda}(\omega)\, a^\dag a \right] = \lambda \tr\left[ \omega\, a^\dag a \right] + (1-\lambda) N\, , \\
\tr\left[ \att (\omega)\, a^\dag a \right] &= \lambda \tr\left[ \omega\, a^\dag a \right] + (1-\lambda) N\, .
\end{aligned}
\label{input-output energy relations}
\ee
This means that we can set $\alpha=\lambda$ and $E_0=(1-\lambda)N$, and hence $\widetilde{E}=\lambda E+(1-\lambda)N$, in \cite[Theorem~9]{winter2017energy}. We obtain that
\begin{align*}
\left|\mathcal{Q}\big(\oneN^n, E\big) - \mathcal{Q}\left(\att, E\right) \right| &\textleq{1} 56 \sqrt{\varepsilon_n}\, g\left( \frac{4(\lambda E+(1-\lambda)N)}{\sqrt{\varepsilon_n}} \right) + 6\,g\left( 4\sqrt{\varepsilon_n}\right) \\
&\textleq{2} c\, \sqrt{\varepsilon_n} \log \frac{1}{\varepsilon_n} \\
&\textleq{3} C(M_3')\, \frac{\log n}{n^{1/8}}\, .
\end{align*}
Here, in step 1 we applied \cite[Theorem~9]{winter2017energy} together with the formula $S(\tau_N)=g(N)$ (see \eqref{tau maximises entropy} and \eqref{g}); the inequality in 2 holds eventually in $n$ for some universal constant $c \leq 57 + 24 \log e$, as can be seen by combining the bounds $g(x)\leq \log (x+1)+\log e$ (tight for large $x$) and $g(x)\leq -2x\log x$ (valid for sufficiently small $x$); finally, in 3 we used the fact that $\varepsilon_n\leq C(M_3')\, n^{-1/4}$ eventually in $n$ by the already proven Theorem \ref{thm:approxchann}, together with the observation that $x\mapsto -x\log x$ is an increasing function for sufficiently small $x>0$. 

To complete the first part of the proof we need to estimate the classical capacity of $\oneN^n$ in terms of that of the thermal attenuator $\att$ of \eqref{thermal attenuator}, in turn given by \eqref{classical capacity thermal attenuator}. Although we could use the estimates in \cite{winter2017energy}, we prefer to resort to the tighter ones provided in \cite{shirokov2018energy}. We obtain that
\begin{align*}
\left|\mathcal{C}\big(\oneN^n, E\big) - \mathcal{C}\left(\att, E\right) \right| &\textleq{4} 7\varepsilon_n \left( \log \left(\lambda E\!+\! (1\!-\!\lambda)N\!+\! 1\right) - \log \frac{\varepsilon_n}{2} + 1\right) + 2 g\left( \frac{5\varepsilon_n}{2} \right) + 4 h_2\left( \frac{\varepsilon_n}{2} \right) \\
&\textleq{5} c' \varepsilon_n \log \frac{1}{\varepsilon_n} \textleq{6} C'(M_3')\, \frac{\log n}{n^{1/4}}\, .
\end{align*}
The inequality in 4 is an application of \cite[Proposition~6]{shirokov2018energy}. To see why, let us re-write the result of Shirokov \cite[Proposition~6]{shirokov2018energy} for one-mode channels and with respect to the canonical Hamiltonian as
\bbb
\left|\mathcal{C}\big(\mathcal{N}_1, E\big) - \mathcal{C}\left(\mathcal{N}_2, E\right) \right| \leq 2\epsilon \left(2t+r_\epsilon(t)\right) (\log (E'+1) + 1 - \log (\epsilon t)) + 2 g\left(\epsilon r_\epsilon(t)\right) + 4h_2(\epsilon t)\, .
\eee
Here, $\mathcal{N}_i$ ($i=1,2$) are two quantum channels with $\frac12 \left\|\mathcal{N}_1-\mathcal{N}_2\right\|_{\diamond E}\leq \epsilon$, we picked $E'$ such that $\sup_{\rho:\, \tr[\rho\, a^\dag a]\leq E} \tr\left[\mathcal{N}_i(\rho)\, a^\dag a\right]\leq E'$, the function $r_\epsilon$ is defined by $r_\epsilon(t)\coloneqq\frac{1+t/2}{1-\epsilon t}$, and $h_2(x)\coloneqq -x \log x - (1-x)\log (1-x)$ is the binary entropy. Setting $\mathcal{N}_1=\oneN^n$, $\mathcal{N}_2=\att$, we see that $E'=\lambda E+(1-\lambda) N$ (cf.~\eqref{input-output energy relations} and~\cite[Eq.~(21)]{shirokov2018energy}); choosing $t=1/2$ and hence $r_\varepsilon(t) \leq r_1(t) = r_1(1/2) = 5/2$ yields the above relation 4, as claimed. The inequality in 5 holds for all sufficiently large $n$ and for some absolute constant $c'\leq 15$. Finally, 6 is analogous to 3 above.
\end{proof}

\begin{rem*}
Let us stress that the threshold in $n$ above which the inequalities in the above proof hold true depends on both $\lambda E+(1-\lambda)N$ and $M_3'$ (which dictates the rate of convergence of $\varepsilon_n\to 0$). Although this is a minor point from the point of view of the mathematical derivation, it may be important for applications.
\end{rem*}

\begin{rem*}
An analytical formula for the quantum capacity of the thermal attenuator that appears in Corollary \ref{cor:capac} is currently not known. The best lower bound to date reads~\cite[Eq.~(9)]{Noh2020}
\bb
\begin{aligned}
\mathcal{Q}\left(\att, E\right) \geq \max_{0\leq x\leq 1} x \bigg\{ g\left( \lambda\, \frac{E}{x} + (1-\lambda) N \right) &- g\left(\frac12 \left( D_{\lambda,N}(x) + (1-\lambda)\left(\frac{E}{x} - N \right) - 1 \right) \right) \\
&- g\left(\frac12 \left( D_{\lambda,N}(x) - (1-\lambda)\left(\frac{E}{x} - N \right) - 1 \right) \right) \bigg\}\, ,
\end{aligned}
\label{lower bound Q th att 1}
\ee
where
\bb
D_{\lambda, N}(x) \coloneqq \sqrt{\left( (1+\lambda)\frac{E}{x} + (1-\lambda)N + 1 \right)^2 - 4\lambda\, \frac{E}{x}\left(\frac{E}{x} +1\right)}\, .
\label{lower bound Q th att 2}
\ee
The best upper bound to date, instead, can be obtained by combining the results of~\cite[Eq.~(23)--(25)]{PLOB} (see also~\cite[Section~8]{MMMM}) with those of \cite[Theorem~9]{Noh2019} and~\cite[Theorem~46]{Sharma2018}, in turn derived by refining a technique introduced in~\cite{Rosati2018}. We look at the case where $\lambda \geq \frac{N+1/2}{N+1}$, because below that value of $\lambda$ the channel $\att$ becomes 2-extendable \cite{extendibility} (that is, anti-degradable~\cite{Devetak-Shor, Caruso2006, Wolf2007}) and therefore $\mathcal{Q}\left(\att, E\right)=0$.
\begin{align}
\mathcal{Q}\left(\att, E\right) \leq&\ \max\left\{ F_1(N,\lambda; E),\, F_2(N,\lambda; E) \right\} , \label{upper bound Q th att 1} \\[1ex]
F_1(N,\lambda, E) \coloneqq&\ g\left( \lambda E + (1-\lambda)N\right) - g\left( \frac{(1-\lambda)(N+1)}{\lambda - (1-\lambda)N} \left( \lambda E + (1-\lambda)N \right) \right) , \label{upper bound Q th att 2} \\[1ex]
F_2(N,\lambda; E) \coloneqq&\ - \log \left( (1-\lambda) \lambda^N\right) - g(N)\, . \label{upper bound Q th att 3}
\end{align}
\end{rem*}

\section*{Acknowledgements}
ND would like to thank M.~Jabbour for helpful discussions. LL acknowledges financial support from the European Research Council under the Starting Grant GQCOP (grant no.~637352) and from Universit\"{a}t Ulm; he is also grateful to V.~Giovannetti, A.~Holevo and K.~Sabapathy for discussions on Lemma \ref{positive W lemma}, and to M.B.~Plenio and M.~Wilde for sharing their thoughts on our model of optical fibre. SB thanks G.~Baverez for interesting discussions on stable laws and gratefully acknowledges support by the EPSRC grant EP/L016516/1 for the University of  Cambridge  CDT, the CCA. CR acknowledges financial support from the TUM university Foundation Fellowship and by the DFG cluster of excellence 2111 (Munich Center for Quantum Science and Technology).

\begin{appendix}

\section{Standard moments vs phase space moments: the integer case} \label{app:moments}

In this appendix we prove that a state with finite standard moments of order up to $k$ also has finite phase space moments of order up to $k$, i.e.~Theorem \ref{theo:moments}. More precisely, we show that its characteristic function is differentiable $k$ times, and that there are constants $c_{k,m}(\varepsilon)<\infty$ such that the standard moments and phase space moments, defined by \eqref{eq:moments} and \eqref{eq:ps-moments}, respectively, satisfy $M'_k(\rho,\varepsilon)\leq c_{k,m}(\varepsilon) M_k(\rho)$ for all $m$-mode quantum states $\rho$. We start with the following lemma.

\begin{lemma} \label{messy inequality lemma}
For all positive integers $m$ and real numbers $k\in [0,\infty)$, there is a universal constant $d_{k,m}>0$ such that
\bb
\left( \sumno_j a_j a_j^\dag \right)^{k/2} \geq d_{k,m} \sum_j\left(|x_j|^k + |p_j|^k\right) ,
\label{number vs xp}
\ee
where $x_j\coloneqq (a_j+a^\dag_j)/\sqrt2$ and $p_j\coloneqq -i (a_j - a^\dag_j)/\sqrt2$ are the position and momentum quadratures of the $j^{\text{th}}$ mode. 
\end{lemma}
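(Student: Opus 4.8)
The plan is to reduce the asserted inequality to a bound on a single quadrature, prove it first for even integer exponents by an explicit estimate in the Fock basis, and then bootstrap to arbitrary real $k$ by operator monotonicity. The first move is to rewrite the left-hand side: since $[a_j,a_j^\dag]=I$ by \eqref{CCR}, we have $\sum_j a_j a_j^\dag = H_m + mI =: N$, with $H_m$ the canonical Hamiltonian \eqref{canonical Hamiltonian}. As $H_m\geq 0$ and $m\geq 1$, the operator $N$ is positive with $N\geq mI\geq I$. It therefore suffices to bound each $|x_j|^k$ and each $|p_j|^k$ separately by a multiple of $N^{k/2}$ and then add up the resulting $2m$ estimates.

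The core step is the even-integer bound: for every positive integer $r$ and every mode $j$, I claim $x_j^{2r}\leq (2r)^r N^r$ and $p_j^{2r}\leq (2r)^r N^r$ in the sense of quadratic forms on the dense finite-particle domain. To see this I would expand $x_j^{\,r}=2^{-r/2}(a_j+a_j^\dag)^r$ into its $2^r$ ladder-operator monomials. Any such monomial $M$ is a product of $r$ factors, each equal to $a_j$ or $a_j^\dag$; acting on a Fock vector $\ket{n}$ it produces a scalar multiple of another Fock vector, the occupation of mode $j$ staying within $r$ of $n_j$, so that each factor contributes at most $\sqrt{n_j+r}$ and the overall coefficient has modulus at most $(n_j+r)^{r/2}\leq r^{r/2}(n_j+1)^{r/2}\leq r^{r/2}(|n|+m)^{r/2}$. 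Because $M$ maps distinct Fock vectors to orthogonal ones, this yields $\|M\psi\|^2\leq r^r\,\langle\psi|N^r|\psi\rangle$ (here $|n|+m$ is the eigenvalue of $N$ on $\ket{n}$). Summing the $2^r$ monomials by the triangle inequality gives $\|x_j^{\,r}\psi\|\leq (2r)^{r/2}\|N^{r/2}\psi\|$, i.e.\ exactly the claimed form inequality; the identical computation, with the harmless extra factors of $\pm i$, works for $p_j$.

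To pass to an arbitrary real $k>0$, set $s:=\ceil{k/2}$ and $\theta:=k/(2s)\in(0,1]$, so that $2s\geq k$. Applying the operator monotonicity of $t\mapsto t^{\theta}$ on $[0,\infty)$ (the Löwner--Heinz theorem, see e.g.~\cite{schmuedgen}) to the even-integer inequality $x_j^{2s}\leq (2s)^s N^s$ produces
\[
|x_j|^{k}=\big(x_j^{2s}\big)^{\theta}\leq \big((2s)^s N^s\big)^{\theta}=(2s)^{k/2}\,N^{k/2},
\]
and likewise for $p_j$. Summing over the $2m$ quadratures gives $\sum_j\big(|x_j|^k+|p_j|^k\big)\leq 2m\,(2s)^{k/2}N^{k/2}$, which is the assertion with $d_{k,m}=\big(2m\,(2\ceil{k/2})^{k/2}\big)^{-1}$; the degenerate case $k=0$ is immediate. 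Throughout, the unbounded operators are handled on the finite-particle domain and interpreted in the quadratic-form sense of \eqref{expected positive}.

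The main obstacle is precisely the regime $k>2$. There $t\mapsto t^{k/2}$ fails to be operator monotone, so one cannot simply raise the elementary bound $x_j^2\leq 2N$ (which is the $r=1$ case above) to the power $k/2$. The device that circumvents this is to keep the exponent inside the monotone range $[0,1]$: one first \emph{overshoots} to the nearest even integer $2\ceil{k/2}\geq k$, where the explicit Fock-basis monomial estimate is available, and only afterwards applies the fractional power $\theta\leq 1$, for which Löwner--Heinz is valid. The remaining subtleties --- the combinatorial coefficient bound for the monomials and the domain justifications for the unbounded operators --- are routine once this structure is in place.
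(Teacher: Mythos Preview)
Your argument is correct. The overall strategy---overshoot to the nearest even integer and then apply the operator monotonicity of $t\mapsto t^{\theta}$ for $\theta\in(0,1]$---is exactly the one the paper uses. The only real difference lies in how the even-integer inequality $x_j^{2r}\leq C_r\,N^r$ is established.

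The paper first reduces to a single mode, then proves $(aa^\dag)^h\geq d_h\,x^{2h}$ for integer $h$ by showing that the truncated matrix $\Pi_N\big((1+a^\dag a)^h-d_h x^{2h}\big)\Pi_N$ is diagonally dominant in the Fock basis and invoking Gershgorin's circle theorem; the resulting constant $d_h$ is implicit, involving the coefficients of a normal-ordered expansion of $x^{2h}$. Your route bypasses both the single-mode reduction and Gershgorin: you expand $x_j^{\,r}$ into its $2^r$ ladder monomials, bound each by $\|M\ket{n}\|\leq (n_j+r)^{r/2}\leq r^{r/2}(|n|+m)^{r/2}$, use that each monomial shifts Fock labels by a fixed vector (so images of distinct $\ket{n}$ are orthogonal), and sum by the triangle inequality. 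This is more elementary, gives an explicit constant $d_{k,m}=\big(2m\,(2\lceil k/2\rceil)^{k/2}\big)^{-1}$, and works directly in the multi-mode setting. The trade-off is that Gershgorin, though less explicit here, is a general-purpose tool one might prefer to keep in one's repertoire; your monomial estimate is tailored to this situation but sharper for it.
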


\begin{proof}
First of all, it suffices to consider the one-mode case. Indeed, assume that $(a a^\dag)^{k/2}\geq d_k (|x|^k + |p|^k)$ for some $d_k>0$. Then, leveraging the fact that the operators $a_j a_j^\dag$ commute with each other, and employing standard inequalities between $p$-norms, we deduce that
\bbb
\left( \sumno_j a_j a_j^\dag \right)^{k/2} \geq m^{{\min}\{k/2-1,\,0\}} \sumno_j (a_j a_j^\dag)^{k/2} \geq m^{{\min}\{k/2-1,0\}} d_k \sumno_j \left(|x_j|^k + |p_j|^k\right).
\eee
Therefore, from now on we look at the one-mode case only. The vector space $V_m\coloneqq \operatorname{span}\{ \ket{n}: n \in \mathbb{N}_0 \}$ of states with a finite expansion in the Fock basis is a core for both $\left( a a^\dag \right)^{k/2}$, as well as $|x|^k$ and $|p|^k$. Thus, it suffices for us to prove the inequality \eqref{number vs xp} on states in $V_m$.

It is enough to show that $(a a^\dag)^{k/2} \geq d_k |x|^k$ for some constants $d_k>0$, as the other inequality $(a a^\dag)^{k/2} \geq d_k |p|^k$ is obtained by performing a phase space rotation of an angle $\pi/2$, i.e.\ by conjugating both sides by the unitary operator $e^{i\frac{\pi}{2}\, a a^\dag}$.

We now prove that the inequality $(a a^\dag)^{k/2}\geq d_k |x|^k$ holds for some $d_k>0$ on all vectors in $V_m$. Write $k = 2r h$, where $r\in (0,1]$ and $h \coloneqq \ceil{k/2} \in \NN_0$. Since the function $A\mapsto A^r$ is well known to be operator monotone \cite[Proposition~10.14]{schmuedgen}, it suffices to show that $(a a^\dag)^{h}\geq d_h x^{2h}$ for all non-negative integers $h\in \NN_0$. To this end, let us take advantage of our restriction to states with a finite expansion in the Fock basis. Defining $\Pi_N$ as the projector onto the span of the first $N+1$ Fock states (from $0$ to $N$), we have to show that
\bbb
A_N\coloneqq \sum_{n=0}^N (1+n)^{h} \ketbra{n} - d_h \Pi_N x^{2h} \Pi_N^\dag = \Pi_N \left( (1+a^\dag a)^{h} - d_h x^{2h}\right) \Pi_N^\dag \geq 0\, ,
\eee
where the inequality now involves only matrices. Thanks to Gershgorin's circle theorem~\cite{Gershgorin, VARGA}, in order to show that $A_N$ is positive semi-definite, it suffices to prove that $A_N$ is diagonally dominant, i.e. that~for all $N$ and $0\leq n\leq N$ the inequality
\bb
\braket{n|A_N|n} - \sum_{\substack{n'=0,\ldots,N,\\ n'\neq n}} |\braket{n|A_N|n'}|\geq 0
\label{Gershgorin condition}
\ee
holds true. Writing down the left-hand side yields
\begin{align*}
&\braket{n|A_N|n} - \sum_{\substack{n'=0,\ldots,N,\\ n'\neq n}} |\braket{n|A_N|n'}| \\
&\qquad \ge(1+n)^h - d_h \sum_{n'=0}^N \left|\braket{n|x^{2h}|n'}\right| \\
&\qquad \textgeq{1} (1+n)^h - d_h \sum_{n':\, |n-n'|\leq 2h} \left|\braket{n|x^{2h}|n'}\right| \\
&\qquad = (1+n)^h - d_h \sum_{n':\, n\leq n'\leq n+2h} \left|\braket{n|x^{2h}|n'}\right|  - d_h \sum_{n':\, \max\{n-2h,0\}\leq n'< n} \left|\braket{n|x^{2h}|n'}\right| \\
&\qquad \textgeq{2} (1+n)^h - d_h \sum_{n':\, n\leq n'\leq n+2h} \sum_{\ell=0}^{h-(n'-n)/2} |f_{h,\ell,n'-n}| \left|\braket{n|(a^\dag a)^{\ell} a^{n'-n} |n'}\right| \\
&\qquad \qquad - d_h \sum_{n':\, \max\{n-2h,0\}\leq n'< n} \sum_{\ell=0}^{h-(n-n')/2} |f'_{h,\ell,n-n'}| \left|\braket{n|(a^\dag a)^{\ell} (a^\dag)^{n-n'} |n'}\right| \\
&\qquad = (1+n)^h - d_h \sum_{n':\, n\leq n'\leq n+2h} \sum_{\ell=0}^{h-(n'-n)/2} |f_{h,\ell,n'-n}|\, n^{\ell} \sqrt{\frac{n'!}{n!}} \\
&\qquad \qquad - d_h \sum_{n':\, \max\{n-2h,0\}\leq n'< n} \sum_{\ell=0}^{h-(n-n')/2} |f'_{h,\ell,n-n'}|\, n^\ell \sqrt{\frac{n!}{n'!}} \\
&\qquad \textgeq{3} (1+n)^h - d_h F_h (n+2h)^h \\
&\qquad \textgeq{4} 0\, .
\end{align*}
Here, in 1 we extended the sum over $n'$ to all values that yield a non-vanishing result, i.e.\ those that satisfy $|n-n'|\leq 2h$. In 2 we used the canonical commutation relations \eqref{CCR} to expand 
\bbb
x^{2h} = 2^{-h} (a+a^\dag)^{2h}=\sum_{\ell,\ell'\geq 0,\, \ell + \ell'\leq 2h} \left( f_{h,\ell,\ell'} (a^\dag a)^{\ell} a^{\ell'} + f'_{h,\ell,\ell'} (a^\dag a)^{\ell} (a^\dag)^{\ell'} \right) .
\eee
In 3 we applied standard estimates for factorials: for example, when $n\leq n'\leq n+2h$ we used the fact that $n^\ell \sqrt{\frac{n'!}{n!}}\leq n^\ell (n')^{(n'-n)/2}\leq (n')^{\ell+(n'-n)/2}\leq (n')^{2h}\leq (n+2h)^{2h}$; moreover, we defined $F_h\coloneqq \sum_{\ell,\ell'\geq 0,\, \ell + \ell'\leq 2h} \left(|f_{h,\ell,\ell'}|+ |f'_{h,\ell,\ell'}|\right)$. Finally, 4 follows by choosing e.g.\ $d_h^{-1}=(2h+1)^{h} F_h$. Since \eqref{Gershgorin condition} holds for all $n$, we conclude that $A_N\geq 0$ for all $N$, which completes the proof.
\end{proof}

\begin{rem*}
The inequality in Lemma \ref{messy inequality lemma} depends critically on the special properties of the canonical operators. In fact, there is no universal constant $d_k>0$ that makes the general relation $(A+B)^k\geq d_k\left(A^k+B^k\right)$ true for all positive matrices $A,B\geq 0$. To see why this is the case, it suffices to consider two pure states $A=\ketbra{\psi}$ and $B=\ketbra{\phi}$. Setting $\lambda\coloneqq 1 - |\!\braket{\psi|\phi}\!|\in [0,1]$, it can be shown that the minimal eigenvalue of $(A+B)^k$ is $\lambda^k$, while that of $A^k+B^k=A+B$ is clearly $\lambda$. By Weyl's principle, the conjectured matrix inequality would imply that $\lambda^k\geq d_k \lambda$ for all $\lambda\in [0,1]$, absurd.
\end{rem*}

\begin{prop}
Let $k\geq 0$ and $m\geq 1$ be integers; also, let $\varepsilon>0$ be given. Then, there is a constant $c_{k,m}(\varepsilon)<\infty$ such that every $m$-mode quantum state $\rho$ with finite $k$-moments $M_k(\rho)$, as defined by \eqref{eq:moments}, also satisfies 
\bb
M'_{k}(\rho,\varepsilon) = \left\|\chi_\rho \right\|_{C^{k}\left(B(0,\varepsilon)\right)} \leq c_{k,m}(\varepsilon) M_k(\rho)\, .
\ee
In particular, according to \eqref{eq:ps-moments}, $\rho$ has finite phase space moments of order up to $k$.
\label{prop:Ludovico}
\end{prop}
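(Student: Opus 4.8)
Throughout write $N_m \coloneqq H_m + mI = \sum_{j=1}^m a_j a_j^\dag$, which is the positive operator actually controlled by the moments: the Fock-space expression \eqref{eq:moments-Fock} gives $M_k(\rho) = \tr[\rho\, N_m^{k/2}]$, and $N_m \ge mI \ge I$. The plan is to differentiate $\chi_\rho(z)=\tr[\rho\,\D(z)]$ under the trace, express each derivative as the trace of $\rho$ against a bounded-degree polynomial in the ladder operators times a displacement, and then bound such a trace by $M_k(\rho)$ via H\"older's inequality together with a relative bound for $\D(z)$ with respect to powers of $N_m$. From the Weyl relation \eqref{CCR Weyl} one computes $\partial_{z_j}\D(z) = (a_j^\dag - \tfrac12 z_j^*)\D(z)$ and $\partial_{z_j^*}\D(z) = (-a_j + \tfrac12 z_j)\D(z)$; iterating, for $|\alpha|+|\beta|\le k$ one gets $\partial_z^\alpha\partial_{z^*}^\beta \D(z) = Q_{\alpha\beta}(z)\,\D(z)$, where $Q_{\alpha\beta}(z)$ is a noncommutative polynomial in the $a_j,a_j^\dag$ of degree at most $|\alpha|+|\beta|\le k$ with coefficients polynomial in $(z,z^*)$, hence uniformly bounded on $B(0,\varepsilon)$. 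The first thing to make rigorous is differentiation under the trace for a state with only finite $M_k(\rho)$: I would verify the operator identities on the core of finite Fock combinations, prove the uniform estimates below for the finite-rank Fock truncations of $\rho$, and then pass to the limit by dominated convergence, which simultaneously yields $\chi_\rho\in C^k$ and the derivative formula. This reduces the claim to proving, for every monomial $M$ in $a_j,a_j^\dag$ of degree $d\le k$, the uniform bound
\[ \sup_{z\in B(0,\varepsilon)} \bigl|\tr[\rho\, M\, \D(z)]\bigr| \le C(k,m,\varepsilon)\, M_k(\rho). \]

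\textbf{H\"older reduction.} Inserting $N_m^{\pm k/4}$ and using $|\tr[AB]|\le \|A\|_1\|B\|_\infty$,
\[ \bigl|\tr[\rho M\D(z)]\bigr| \le \bigl\|N_m^{k/4}\rho\, N_m^{k/4}\bigr\|_1\, \bigl\|N_m^{-k/4} M\, \D(z)\, N_m^{-k/4}\bigr\|_\infty. \]
The first factor equals $\tr[\rho\,N_m^{k/2}] = M_k(\rho)$. Factoring the second as $\|N_m^{-k/4}M N_m^{-k/4}\|_\infty\,\|N_m^{k/4}\D(z)N_m^{-k/4}\|_\infty$, the operator-norm factor involving $M$ is bounded by some $C(k,m)$: since $a_j^\dag a_j,\, a_j a_j^\dag \le N_m$, the operators $a_jN_m^{-1/2}$ and $a_j^\dag N_m^{-1/2}$ are contractions, so a degree-$d$ monomial sandwiched between two $N_m^{-k/4}$ (with $d\le k$) is bounded up to routine commutator bookkeeping.

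\textbf{The main obstacle: a relative bound.} What remains — and is the crux — is to show $\sup_{z\in B(0,\varepsilon)} \|N_m^{k/4}\,\D(z)\,N_m^{-k/4}\|_\infty < \infty$. The key structural fact is
\[ \D(z)^\dag N_m\D(z) = \sum_{j}(a_j+z_j)(a_j^\dag+z_j^*) = N_m + R(z), \]
with $R(z)$ of first order, which yields the two-sided comparison $c_1(\varepsilon)N_m \le \D(z)^\dag N_m \D(z) \le c_2(\varepsilon)N_m$ uniformly for $z\in B(0,\varepsilon)$. When $k\le 2$ the operator monotonicity of $t\mapsto t^{k/2}$ converts this comparison directly into the desired bound. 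For larger $k$ I would first establish the integer-power estimates $\|N_m^{\,p}\,\D(z)\,N_m^{-p}\|_\infty \le C_p(\varepsilon)$ for $p\in\NN$ by iterating the commutator identity $[N_m,\D(z)] = S(z)\D(z)$ with $S(z)$ first order — each commutator lowers the excess degree and is absorbed by a spare factor $N_m^{-1/2}$ — and then interpolate via the Stein/Hadamard three-lines theorem applied to the analytic family $w\mapsto N_m^{w}\,\D(z)\,N_m^{-w}$, which has norm $1$ on the imaginary axis. Controlling these conjugated powers of the number operator uniformly over the ball is the technical heart of the argument.

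\textbf{Conclusion.} Combining the three bounded factors gives $|\partial_z^\alpha\partial_{z^*}^\beta\chi_\rho(z)| \le C(k,m,\varepsilon)\,M_k(\rho)$ for all $z\in B(0,\varepsilon)$ and all $|\alpha|+|\beta|\le k$. Taking the maximum over the finitely many derivatives and the supremum over the ball yields $\|\chi_\rho\|_{C^k(B(0,\varepsilon))} \le c_{k,m}(\varepsilon)\,M_k(\rho)$, which is the integer-order statement; since $N_m\ge I$ forces $M_k(\rho)\ge\tr\rho=1$, the contributions of derivatives of order below $k$ are automatically subsumed by the same bound. As an alternative to the relative-bound step for the top-degree derivative, one may split $M=M_1M_2$ into two factors of degree $\le \lceil k/2\rceil$ and apply Cauchy--Schwarz together with the power-mean inequality $\tr[\rho N_m^{d_i}]\le M_k(\rho)^{2d_i/k}$; this works cleanly for even $k$ but breaks down for the degree-$k$ monomial when $k$ is odd, which is precisely why the conjugation estimate for $\D(z)$ seems unavoidable in general.
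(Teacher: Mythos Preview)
Your approach is sound but takes a genuinely different route from the paper's. You differentiate under the trace, insert $N_m^{\pm k/4}$, and apply H\"older, reducing everything to the conjugation estimate $\sup_{z\in B(0,\varepsilon)}\|N_m^{k/4}\D(z)N_m^{-k/4}\|_\infty<\infty$, which you correctly flag as the crux and propose to handle by commutator iteration at integer exponents plus Stein interpolation. The paper instead leverages its quantum--classical correspondence: it convolves with the vacuum to obtain $\sigma=\rho\boxplus\ketbra{0}$, whose Wigner function is non-negative by Lemma~\ref{positive W lemma}; then $\chi_\sigma$ is a \emph{classical} characteristic function, and textbook probability gives $\|\chi_\sigma\|_{C^k}\le 1+\int|u|^k\,W_\sigma(u)\,du$. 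That phase-space integral is then dominated by $M_k(\sigma)\le M_k(\rho)$ via the operator inequality $N_m^{k/2}\ge d_{k,m}\sum_j(|x_j|^k+|p_j|^k)$ of Lemma~\ref{messy inequality lemma}, proved there by Gershgorin's theorem. Your route is more self-contained and avoids the Wigner-positivity detour, at the price of the conjugation estimate, which you only sketch; the paper trades this for a different nontrivial ingredient (Lemma~\ref{messy inequality lemma}) but then reduces to classical analysis. One caveat on your closing Cauchy--Schwarz alternative: it does not truly sidestep the displacement even for even $k$, since one of the two Hilbert--Schmidt factors still carries $\D(z)$; commuting it onto $\rho$ leaves you bounding $\tr[\D(z)\rho\D(-z)\,M_2^\dag M_2]$, which again requires controlling moments of the \emph{displaced} state --- essentially the same conjugation issue in different clothing.
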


\begin{proof}
Let $\rho$ be an $m$-mode quantum state. We start by considering the modified state $\sigma\coloneqq \rho \boxplus \ketbra{0}$ that is obtained by convolving it with the (multi-mode) vacuum state according to the rule \eqref{boxplus} (for $\lambda=1/2$). A first important observation is that the moments of $\sigma$ and $\rho$ are related. Namely, 
\bb
M_k(\sigma)\leq M_k(\rho)\qquad \forall\ k\in [0,\infty)\, .
\label{Mk sigma vs rho}
\ee
To see why, we pick a multi-index $n\in \NN_0^m$ and evaluate the $n^{\text{th}}$ diagonal entries of $\sigma$ with respect to the Fock basis. We obtain that
\begin{align*}
    \braket{n|\sigma|n} &\texteq{1} \braket{n|\Delta(\rho\boxplus \ketbra{0})|n} \\
    &\texteq{2} \braket{n|\Delta(\rho)\boxplus \ketbra{0}|n} \\
    &= \sum_{\ell\in \NN_0^m} \braket{\ell|\rho|\ell} \braket{n|\left( \ketbra{\ell}\boxplus \ketbra{0}\right) |n} \\
    &\texteq{3} \sum_{\ell\geq n} \braket{\ell|\rho|\ell} 2^{-|\ell|} \binom{\ell}{n} \, .
\end{align*}
Here, in~1 we introduced the dephasing operator in the Fock basis, whose action is given by $\Delta(X)\coloneqq \sum_{k\in \NN_0^m} \ketbra{k} X \ketbra{k}$. In~2 we observed that $\Delta(\omega \boxplus \delta) = \Delta(\omega)\boxplus \delta$ for all $m$-mode quantum states $\omega$ whenever $\delta=\Delta(\delta)$ is already diagonal in the Fock basis. To show this, first exploit linearity and factorisation of $\Delta$ to reduce to the one-mode case. Then, use the representation $\Delta(X)=\int_0^{2\pi} \frac{d\varphi}{2\pi}\, e^{i \varphi\, a^\dag a} X\, e^{-i \varphi a^\dag a}$, valid for bounded $X$ and where the integrals are as usual weakly converging, and remember that $e^{i\varphi \left(a^\dagger a+b^\dag b\right)}=e^{i\varphi a^\dagger a} \otimes e^{i\varphi b^\dag b}$ is a function of the total Hamiltonian and thus commutes with the action of the beam splitter. The identity in~3 follows from the formula 
\bb
\ketbra{\ell}\boxplus \ketbra{0} = 2^{-|\ell|} \sum_{\ell'\leq \ell} \binom{\ell}{\ell'} \ketbra{\ell'}
\ee
for the convolution of a Fock state with the vacuum. Here, $\ell,\ell'\in \NN_0^m$ are multi-indices, ordered entry-wise, and $\binom{\ell}{\ell'}\coloneqq \prod_j \binom{\ell_j}{\ell'_j}$. The above expression can be obtained easily e.g.\ by first reducing to the one-mode case, and then by induction on $\ell$, employing the relations \eqref{BS action creation}. Computing the $k^{\text{th}}$ moment of $\sigma$ then yields
\begin{align*}
    M_k(\sigma) &\texteq{4} \sum_{n\in \NN_0^m} (m+|n|)^{k/2} \braket{n|\sigma|n} \\
    &\texteq{5} \sum_{\ell\in \NN_0^m} \braket{\ell|\rho|\ell} \sum_{n\leq \ell} (m+|n|)^{k/2} 2^{-|\ell|} \binom{\ell}{n} \\
    &\textleq{6} \sum_{\ell\in \NN_0^m} \braket{\ell|\rho|\ell} (m+|\ell|)^{k/2} \\
    &\texteq{7} M_k(\rho)\, .
\end{align*}
Here, 4~and~7 follow from the representation in \eqref{eq:moments-Fock}; in~5 we rearranged a double series of non-negative terms, and in~6 we observed that for a given $\ell\in \NN_0^m$ the coefficients $P_\ell(n) \coloneqq 2^{-|\ell|} \binom{\ell}{n}$ form a probability distribution over the set of multi-indices $n\in \NN_0^m$ with $n\leq \ell$. This proves that the $k^{\text{th}}$ moments of $\sigma$ are upper bounded by those of $\rho$.

The state $\sigma$ is also useful because its characteristic function is a close relative of that of $\rho$. Namely, according to \eqref{boxplus characteristic functions} we have that $\chi_\sigma (z) = \chi_\rho(z/\sqrt2)\, e^{-\|z\|^2/4}$, and hence 
\bb
\left\|\chi_\rho \right\|_{C^{k}\left(B(0,\varepsilon)\right)}\leq g_{k,m}(\varepsilon) \left\|\chi_\sigma \right\|_{C^{k}\left(B(0,\varepsilon)\right)}
\label{Ck norm chi sigma vs chi rho}
\ee
for some constants $g_{k,m}(\varepsilon)$. Thus, it suffices to find a suitable upper estimate for the norm $\left\|\chi_\sigma \right\|_{C^{k}\left(B(0,\varepsilon)\right)}$. By Lemma \ref{positive W lemma}, the Fourier transform of $\chi_\sigma$, i.e.\ the Wigner function $W_\sigma$ of $\sigma$, is everywhere non-negative. Hence, $\chi_\sigma$ can be seen as the characteristic function of a classical random variable $Z$ over $\CC^m$, with probability density function $W_\sigma$. If we show that $Z$ has finite absolute moments of order $k$, then thanks to \cite[Theorem~1.8.15]{Ushakov} we deduce that $\chi_\sigma$ is $k$-fold differentiable everywhere, and since
\bb
\left|\partial_z^\alpha \partial_{z^*}^\beta \chi_\sigma(z)\right| = \left| \int d^{2m}u\, \Big(\prod\nolimits_j (-u_j)^{\beta_j} (u_j^*)^{\alpha_j} \Big) W_\sigma(u)\, e^{z^\intercal u^* - z^\dag u} \right| \leq \int d^{2m}u\, \|u\|^{|\alpha|+|\beta|}\, W_\sigma(u)
\label{eq:interpolation}
\ee
for all multi-indices $\alpha,\beta\in \NN_0^m$, we in fact have that
\bb
\left\|\chi_\sigma \right\|_{C^{k}\left(B(0,\varepsilon)\right)}\leq 1+\int d^{2m}u\, \|u\|^{k}\, W_\sigma(u) \eqqcolon 1+ L_k(\sigma)\, .
\label{upper bound Ck norm chi sigma}
\ee

Therefore, we now look at the quantity $L_k(\sigma)$. For a vector $u=u_R+ i u_I\in \CC^m$, with $u_R,u_I\in \RR^m$, we observe that 
\bbb
\|u\|^k = \left(\sumno_j (u_{Rj}^2 + u_{Ij}^2)\right)^{k/2} \leq (2m)^{\max\left\{\frac{k}{2}-1,\, 0 \right\}} \left(\sumno_j \left(|u_{Rj}|^k + |u_{Ij}|^k\right)\right) .
\eee
Thus,
\begin{align*}
L_k(\sigma) &\leq (2m)^{\max\left\{\frac{k}{2}-1,\, 0 \right\}} \int d^{m}u_R\, d^m u_I\, \sum_j \left(|u_{Rj}|^k + |u_{Ij}|^k\right) W_\sigma(u) \\
&= (2m)^{\max\left\{\frac{k}{2}-1,\, 0 \right\}} \bigg( \int d^{m}u_R\, \left(\sumno_j |u_{Rj}|^k\right) \int d^m u_I\, W_\sigma(u) \\
&\qquad + \int d^{m}u_I\, \left(\sumno_j |u_{Ij}|^k\right) \int d^m u_R\, W_\sigma(u) \bigg) \\
&\texteq{8} (2m)^{\max\left\{\frac{k}{2}-1,\, 0 \right\}}\, 2^{-\frac12 (m+k-1)} \left( \tr\left[\sigma\, \left( \sumno_j |x_j|^k\right)\right] + \tr\left[\sigma\, \left( \sumno_j |p_j|^k\right)\right] \right) \\
&\textleq{9} (2m)^{\max\left\{\frac{k}{2}-1,\, 0 \right\}}\, 2^{-\frac12 (m+k-1)} d_{k,m}^{-1} \tr\left[\sigma \left( \sumno_j a_j a_j^\dag \right)^{k/2}\right] \\
&\textleq{10} c'_{k,m}\, M_k(\sigma)\, .
\end{align*}
In the above derivation, the identity in 8 can be verified by first reducing to the case of a pure $\sigma$, which can be done by linearity and by multiple applications of Tonelli's theorem, and by subsequently remembering that for a pure state $\ket{\psi_f}$ with wave function $f\in L^2(\RR^m)$ it holds e.g.\ that $\int d^m u_I\, W_{\ket{\psi_f}\bra{\psi_f}}(u) = \sqrt2 \left| f(\sqrt2\, u_R)\right|^2$. The inequality in 9 is just an application of Lemma \ref{messy inequality lemma}. Finally, in 10 we introduced a suitable constant $c'_{k,m}\geq 1$.

Combining the above estimate with \eqref{Ck norm chi sigma vs chi rho}, \eqref{upper bound Ck norm chi sigma}, and \eqref{Mk sigma vs rho}, we deduce that
\begin{align*}
M_k(\rho,\varepsilon) &= \left\|\chi_\rho \right\|_{C^{k}\left(B(0,\varepsilon)\right)}\leq g_{k,m}(\varepsilon) \left\|\chi_\sigma \right\|_{C^{k}\left(B(0,\varepsilon)\right)} \leq g_{k,m}(\varepsilon) (1+L_k(\sigma))\\
&\leq 2\,g_{k,m}(\varepsilon) c'_{k,m}\, M_k(\sigma) \leq 2\,g_{k,m}(\varepsilon) c'_{k,m} M_k(\rho) \eqqcolon c_{k,m}(\varepsilon) M_k(\rho)\, ,
\end{align*}
which concludes the proof.
\end{proof}

\section{ Standard moments vs phase space moments: the fractional case}
\label{sec:SMF}
In the last section, we showed that the $k^{\text{th}}$ phase space moment was controlled by the $k^{\text{th}}$ standard moment in the case of an integer constant $k$.

Here, we show that this fact still holds when $k$ is a positive real number by an interpolation argument. In principle, we could conclude this fact from the setting of Proposition \ref{prop:Ludovico}, using that for $L^p(w_0)$ spaces with weight function $w_0$ and $L^p(w_1)$ spaces with weight function $w_1$, the real interpolation spaces \cite[Theorem~5.4.1]{bergh2012interpolation} satisfy
\[ \left(L^p(w_0),L^p(w_1)\right)_{\theta} = L^p(w_{\theta}) \]
where $w_{\theta}:=w_0^{1-\theta}w_1^{\theta}$. This would allow us to extend the estimate in \eqref{eq:interpolation} to fractional powers as well. However, we want to establish the stronger result that shows that the moments themselves naturally induce an interpolating family of normed spaces. That is, we show the following:

\begin{prop}\label{prop:fractional}
Let $\rho$ be an $m$-mode quantum state and $k\ge 0$. If $\tr\left[\rho \left( \sumno_{j} a_ja_j^\dagger  \right)^{k/2}\right]<\infty$, then $\|\chi_\rho\|_{C^k(B(0,\eps))}<\infty$ for some $\eps>0$. Moreover,
\begin{align*}
     \|\chi_\rho\|_{C^k(B(0,\eps))}\le C_\eps\,\tr\left[\rho \left(\sumno_{j} a_ja_j^\dagger \right)^{k/2}\right]\,,
\end{align*}    
for some constant $C_\eps>0$.
\end{prop}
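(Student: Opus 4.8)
Since $\sum_j a_j a_j^\dagger = H_m + mI$, the hypothesis reads $\tr\big[\rho(H_m+mI)^{k/2}\big]=\|\rho\|_{\mathcal{W}^{k,1}(\cH_m)}<\infty$, so the assertion is exactly that the linear map $\Psi:\rho\mapsto\chi_\rho$ sends $\mathcal{W}^{k,1}(\cH_m)$ boundedly into $C^k(B(0,\varepsilon))$. The integer case is already Proposition \ref{prop:Ludovico}, so the plan is to reach the fractional case by real interpolation between the two integer endpoints $k_0\coloneqq\floor{k}$ and $k_1\coloneqq\floor{k}+1$, setting $\theta\coloneqq k-\floor{k}\in(0,1)$ so that $k=(1-\theta)k_0+\theta k_1$ (the case $k\in\NN_0$ being covered directly by Proposition \ref{prop:Ludovico}). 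I would then invoke the abstract interpolation bound recalled just before Proposition \ref{theo:moments}: if $\Psi:\mathcal{W}^{k_i,1}(\cH_m)\to C^{k_i}(B(0,\varepsilon))$ is bounded for $i=0,1$, then $\Psi:(\mathcal{W}^{k_0,1},\mathcal{W}^{k_1,1})_{\theta,\infty}\to(C^{k_0},C^{k_1})_{\theta,\infty}$ is bounded, with norm at most $\|\Psi_0\|^{1-\theta}\|\Psi_1\|^{\theta}$.

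\textbf{Identifying the two interpolation spaces.} On the domain side, Proposition \ref{theo:moments} supplies precisely the embedding needed: for every state $\rho$ one has $\|\rho\|_{(\mathcal{W}^{k_0,1},\mathcal{W}^{k_1,1})_{\theta,\infty}}\le\|\rho\|_{\mathcal{W}^{k,1}(\cH_m)}$. On the target side, I would appeal to the classical identification of the real interpolation spaces of Hölder (Hölder--Zygmund) spaces on a convex domain, namely $(C^{k_0}(B(0,\varepsilon)),C^{k_1}(B(0,\varepsilon)))_{\theta,\infty}=C^{k}(B(0,\varepsilon))$ with equivalent norms; this holds precisely because $k=(1-\theta)k_0+\theta k_1\notin\NN_0$ and $B(0,\varepsilon)$ is convex, so the requisite extension operators exist. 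Chaining these with the interpolation bound gives, for any state $\rho$,
\begin{equation*}
\|\chi_\rho\|_{C^{k}(B(0,\varepsilon))}\le C_\varepsilon'\,\|\chi_\rho\|_{(C^{k_0},C^{k_1})_{\theta,\infty}}\le C_\varepsilon'\,\|\Psi\|\,\|\rho\|_{(\mathcal{W}^{k_0,1},\mathcal{W}^{k_1,1})_{\theta,\infty}}\le C_\varepsilon\,\|\rho\|_{\mathcal{W}^{k,1}(\cH_m)},
\end{equation*}
which is the claim once one recalls $\|\rho\|_{\mathcal{W}^{k,1}(\cH_m)}=\tr\big[\rho(\sum_j a_j a_j^\dagger)^{k/2}\big]$.

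\textbf{The main obstacle.} The delicate point is that the abstract theorem requires boundedness of $\Psi$ at the integer endpoints as a linear map on the whole Banach space $\mathcal{W}^{k_i,1}(\cH_m)$, whereas Proposition \ref{prop:Ludovico} (and the positivity of $W_\sigma$ underlying its proof, cf.\ \eqref{eq:interpolation}) only yields the estimate on the positive cone of states; a naive Jordan-decomposition reduction fails because $\|T_+\|_{\mathcal{W}^{k_i,1}}+\|T_-\|_{\mathcal{W}^{k_i,1}}$ is not controlled by $\|T\|_{\mathcal{W}^{k_i,1}}$. I would close this gap by establishing the endpoint bound directly for an arbitrary trace-class $T$: writing each derivative as $\partial_z^\alpha\partial_{z^*}^\beta\chi_T(z)=\tr\big[T\,\partial_z^\alpha\partial_{z^*}^\beta\D(z)\big]=\tr\big[T\,\widetilde P_{\alpha\beta}(z)\,\D(z)\big]$, where $\widetilde P_{\alpha\beta}(z)$ is a polynomial in $a,a^\dagger$ of degree $|\alpha|+|\beta|\le k_i$ with coefficients bounded on $B(0,\varepsilon)$, and then factoring $T=(H_m+mI)^{-k_i/4}\widetilde T(H_m+mI)^{-k_i/4}$ with $\|\widetilde T\|_1=\|T\|_{\mathcal{W}^{k_i,1}}$, one is reduced to bounding $\big\|(H_m+mI)^{-k_i/4}\,\widetilde P_{\alpha\beta}(z)\,\D(z)\,(H_m+mI)^{-k_i/4}\big\|_\infty$ uniformly for $z\in B(0,\varepsilon)$. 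Since a degree-$j$ monomial in $a,a^\dagger$ grows like $n^{j/2}$ on the $n$-th Fock level while $(H_m+mI)^{-k_i/4}$ decays like $n^{-k_i/4}$, the net exponent $j/2-k_i/2\le 0$ renders this operator bounded, uniformly in $z$ over the compact ball (the displacement being unitary and commuting with $H_m+mI$ only up to a bounded perturbation for $z$ in $B(0,\varepsilon)$). This makes $\|\Psi_i\|<\infty$ and legitimises the interpolation step; the only remaining care is confirming the Hölder interpolation identity with its correct non-integer indices, for which the convexity of $B(0,\varepsilon)$ is the essential geometric input.
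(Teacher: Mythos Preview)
Your overall approach coincides with the paper's: interpolate between the integer endpoints using the classical identification $(C^{k_0},C^{k_1})_{\theta,\infty}=C^k$ on the target side and Proposition~\ref{theo:moments} on the domain side, then invoke the abstract interpolation bound to obtain~\eqref{eq:interpol}. The paper's proof of Proposition~\ref{prop:fractional} is in fact mostly devoted to \emph{proving} Proposition~\ref{theo:moments} by a direct $K$-functional computation, whereas you simply cite it; since it is stated as an independent result, this is legitimate.

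You go further than the paper in one respect: you correctly observe that the abstract interpolation theorem requires the endpoint bounds $\Psi:\mathcal{W}^{k_i,1}\to C^{k_i}$ to hold on the \emph{whole} Banach space, while Proposition~\ref{prop:Ludovico} (and the positivity of $W_\sigma$ used in~\eqref{eq:interpolation}) is stated only for states. The paper asserts~\eqref{eq:interpol} without addressing this point explicitly. Your proposed remedy---factoring $T=(H_m+mI)^{-k_i/4}\widetilde T(H_m+mI)^{-k_i/4}$ and bounding $\big\|(H_m+mI)^{-k_i/4}\widetilde P_{\alpha\beta}(z)\D(z)(H_m+mI)^{-k_i/4}\big\|_\infty$---is the right idea; the Fock-level counting $n^{(|\alpha|+|\beta|)/2-k_i/2}\le 1$ indeed controls the polynomial factor.

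The one step that needs more than a sentence is the claim that $\D(z)$ ``commutes with $H_m+mI$ up to a bounded perturbation'': what you actually need is $\sup_{z\in B(0,\varepsilon)}\big\|(H_m+mI)^{k_i/4}\D(z)(H_m+mI)^{-k_i/4}\big\|_\infty<\infty$. Since $\D(-z)(H_m+mI)\D(z)=H_m+mI+L(z)$ with $L(z)$ linear in $a,a^\dagger$ and $\|L(z)(H_m+mI)^{-1/2}\|_\infty=\mathcal O(|z|)$, this follows for integer $k_i/2$ by expanding $(H_m+mI+L(z))^{k_i/2}$ and for half-integer $k_i/2$ via an operator-monotonicity or resolvent-integral argument; it is routine but should be stated rather than asserted.
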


We have seen in Appendix \ref{app:moments} that the map $\rho\mapsto \chi_\rho$ is bounded from $\mathcal{W}^{k,1}(\cH_m)$ to $C^k(B(0,\eps))$ for any $k$ integer. Since the spaces $C^k(B(0,\eps))$ form an interpolation family, meaning that for any $k_0,k_1\in\mathbb{N}_0$ with $k_1:=k_0+1$, $C^{(1-\theta) k_0+\theta k_1}(B(0,\eps))=(C^{k_0}(B(0,\eps)),C^{k_1}(B(0,\eps)))_\theta$, we have from the previously mentioned interpolation method that
\begin{align}\label{eq:interpol}
    \|\chi_\rho\|_{C^{(1-\theta) k_0+\theta k_1}(B(0,\eps))}\le C_\eps\,\|\rho\|_{(\mathcal{W}^{k_0,1}(\cH_m),\mathcal{W}^{k_1,1}(\cH_m))_\theta}\,,
\end{align}
for some positive constant $C_\eps$ that comes from the bounds derived in Section \ref{app:moments} for $k_0$ and $k_1$. It only remains to prove that the interpolated norms $\|\rho\|_{(\mathcal{W}^{k_0,1}(\cH_m),\mathcal{W}^{k_1,1}(\cH_m))_\theta}$ can further be bounded above by $\|\rho\|_{\mathcal{W}^{(1-\theta)k_0+\theta k_1,1}}$. First, we recall a useful technical lemma \cite[Lemma~3.4]{bourin2012unitary}.

\begin{lemma} \label{lemm:nice}
Let $T= \begin{pmatrix} T_{11} & T_{21} \\ T_{21}^\dagger & T_{22} \end{pmatrix}$ be a positive semi-definite trace class operator such that $T_{11}:\CC^d \rightarrow \CC^d$, then
\[  \Vert T_{21} \Vert_1 \le \frac{1}{2} \left(\Vert T_{11} \Vert_1 +    \Vert T_{22} \Vert_1 \right).\]
\end{lemma}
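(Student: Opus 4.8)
The plan is to exploit the standard factorization of the off-diagonal block of a positive semi-definite $2\times 2$ operator matrix, and then to invoke H\"older's inequality for Schatten norms. First I would record the elementary fact that whenever $T=\begin{pmatrix} T_{11} & T_{21} \\ T_{21}^\dagger & T_{22}\end{pmatrix}\ge 0$, one can write $T_{21} = T_{11}^{1/2}\, K\, T_{22}^{1/2}$ for some contraction $K$ with operator norm $\|K\|_\infty \le 1$. This is the block analogue of Douglas' lemma: positivity of $T$ is equivalent to $T_{11},T_{22}\ge 0$ together with the existence of such a $K$, as one sees by conjugating $\begin{pmatrix} I & K \\ K^\dagger & I\end{pmatrix}\ge 0$ by the block-diagonal operator $\operatorname{diag}\big(T_{11}^{1/2}, T_{22}^{1/2}\big)$. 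Since $T$ is trace class and positive, both diagonal blocks $T_{11},T_{22}$ are positive and trace class, so $T_{11}^{1/2}$ and $T_{22}^{1/2}$ are Hilbert--Schmidt and all objects below are well defined; the hypothesis that $T_{11}$ acts on $\CC^d$ removes any domain subtleties in taking square roots.

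With this factorization in hand, the estimate follows from H\"older's inequality for the Hilbert--Schmidt pairing. Concretely, I would write
\begin{align*}
\|T_{21}\|_1 = \big\| T_{11}^{1/2}\, K\, T_{22}^{1/2}\big\|_1 &\le \big\|T_{11}^{1/2}\big\|_2\, \big\|K\, T_{22}^{1/2}\big\|_2 \\
&\le \big\|T_{11}^{1/2}\big\|_2\, \big\|T_{22}^{1/2}\big\|_2 ,
\end{align*}
where the first inequality is H\"older for Schatten norms with exponents $\tfrac12 + \tfrac12 = 1$, and the second uses $\|K\|_\infty\le 1$. Then I would identify $\big\|T_{ii}^{1/2}\big\|_2^2 = \tr\big[T_{ii}\big] = \|T_{ii}\|_1$, valid because $T_{ii}\ge 0$, to arrive at $\|T_{21}\|_1 \le \sqrt{\|T_{11}\|_1\,\|T_{22}\|_1}$.

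Finally, the claimed bound is obtained by the arithmetic--geometric mean inequality $\sqrt{ab}\le \tfrac12(a+b)$ applied to $a=\|T_{11}\|_1$ and $b=\|T_{22}\|_1$. I expect the only genuine point requiring care to be the justification of the factorization $T_{21}=T_{11}^{1/2}K T_{22}^{1/2}$: in the purely finite-dimensional case it is immediate, and in the present mixed setting it is guaranteed by the trace-class positivity of $T$, so that $K$ extends to a genuine contraction on the relevant closures. Everything else is a routine chain of H\"older and AM--GM estimates, so I do not anticipate further difficulties.
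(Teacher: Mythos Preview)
Your proof is correct. The paper does not supply its own argument for this lemma; it simply cites it from Bourin--Lee. Your route---the contraction factorisation $T_{21}=T_{11}^{1/2}K\,T_{22}^{1/2}$ with $\|K\|_\infty\le 1$, followed by H\"older for Schatten norms and AM--GM---is precisely the standard proof and in fact yields the sharper bound $\|T_{21}\|_1\le \sqrt{\|T_{11}\|_1\,\|T_{22}\|_1}$ along the way. One small remark: the conjugation you describe establishes only the converse implication (existence of a contraction $K$ implies $T\ge 0$); the direction you actually need is the content of Douglas' lemma, which you correctly invoke. The finite-dimensionality of the first block is not essential to any step---the factorisation and the Schatten estimates hold for bounded positive operators in general---so your caution there is unnecessary but harmless.
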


\begin{proof}[Proof of Proposition \ref{prop:fractional}] We provide the proof only for $m=1$, since the general case follows similarly. First, observe that
\begin{align}
\tr\left[\rho (aa^\dagger )^{k/2}\right] = \sum_{n=0}^\infty \braket{n|\rho|n} (n+1)^{k/2}.
\end{align}
First, we restrict attention to states $\rho$ that are orthogonal in the Fock basis. We then write $\Pi_E$ for the spectral projection onto the Fock states of energy at most $E$, that is $\Pi_{\lfloor E \rfloor}:=\sum_{j\le \lfloor E \rfloor}\ketbra{j}$. Next, we fix two parameters $0\le k_0\le k_1$ and introduce the quantity $\gamma_n:= (n+1)^{(k_1-k_0)/2}$, fix a parameter $t>0$, define $N_0(t) \in \mathbb N$ such that 
\[ \forall n \le N_0(t) : \gamma_n \le t^{-1}\text{ and }\gamma_{N_0(t)+1} \ge t^{-1}\,,\]
the two operators 
\[X_0(t):= (I-\Pi_{N_0(t)}) \rho \,(I-\Pi_{N_0(t)}) \ge 0 \text{ and }X_1(t):= \Pi_{N_0(t)}\rho \Pi_{N_0(t)}\ge 0\]
and $\rho\equiv \rho_{\operatorname{diag}}(t):=X_0(t)+X_1(t).$ Using these two operators we start estimating
\begin{equation*}
\begin{split}
K(t,\rho_{\operatorname{diag}}(t)) :=&\  \inf_{\rho_{\operatorname{diag}}(t)=X_0+X_1} \left\lVert X_0 \right\rVert_{\mathcal{W}^{k_0,1}(\cH_1)}+t\left\lVert X_1 \right\rVert_{\mathcal{W}^{k_1,1}(\cH_1)} \\
\le&\  \left\lVert X_0(t) \right\rVert_{\mathcal{W}^{k_0,1}(\cH_1)}+t\left\lVert X_1(t) \right\rVert_{\mathcal{W}^{k_1,1}(\cH_1)}\\
=&\  \left(\sum_{n > N_0(t)}  \langle n \vert  \rho_{\operatorname{diag}}(t) \vert n \rangle (n+1)^{k_0/2}  +  t \sum_{n \le N_0(t)}  \langle n \vert  \rho_{\operatorname{diag}}(t) \vert n \rangle (n+1)^{k_0/2}\gamma_n     \right) \\
=&\ \sum_{n=0}^\infty \langle n \vert \rho_{\operatorname{diag}}(t) \vert n \rangle (n+1)^{k_0/2} \left(\alpha_n + t \gamma_n \beta_n\right)
\end{split}
\end{equation*}
where $\alpha_n=\delta_{n > N_0(t)}$ and $\beta_n =\delta_{n \le N_0(t)}$ with Kronecker delta $\delta$.
Thus, we obtain for the norm $\|\rho\|_{(\mathcal{W}^{k_0,1}(\cH_1),\mathcal{W}^{k_1,1}(\cH_1))_\theta}$ the upper bound
\begin{equation*}
\begin{split}
\|\rho\|_{(\mathcal{W}^{k_0,1}(\cH_1),\mathcal{W}^{k_1,1}(\cH_1))_\theta}
&\le \sup_{t>0}   \sum_{n=0}^\infty \langle n \vert \rho_{\operatorname{diag}}(t) \vert n \rangle (n+1)^{k_0/2} t^{-\theta}\left(\alpha_n + t \gamma_n \beta_n\right).
\end{split}
\end{equation*}
We now recall that for $\gamma_n \le t^{-1}$ we have $\alpha_n=0$ and $\beta_n=1$ such that
\begin{equation*}
\begin{split}
t^{-\theta}\left(\alpha_n + t \gamma_n \beta_n\right)=t^{-\theta} t \gamma_n  =t^{1-\theta} \gamma_n^{1-\theta} \gamma_n^{\theta} \le \gamma_n^{\theta}. 
\end{split}
\end{equation*}
For $\gamma_n > t^{-1}$ we have $\alpha_n=1$ and $\beta_n=0$ such that 
\[t^{-\theta}\left(\alpha_n + t \gamma_n \beta_n\right) =t^{-\theta}\le  \gamma_n^{\theta}.\]
Thus, in either case, we have the estimate
\[ \|\rho\|_{(\mathcal{W}^{k_0,1}(\cH_1),\mathcal{W}^{k_1,1}(\cH_1))_\theta} \le \sum_{n=0}^\infty \langle n \vert \rho \vert n \rangle (n+1)^{((1-\theta)k_0 + \theta k_1)/2}. \]
This shows that for arbitrary density operators 
\[ \|\rho\|_{(\mathcal{W}^{k_0,1}(\cH_1),\mathcal{W}^{k_1,1}(\cH_1))_\theta} \le \Vert \rho \Vert_{\mathcal{W}^{(1-\theta)k_0+\theta k_1,1}(\cH_1)}\,.\]

To extend the bound to a density operator $\rho$ that is not diagonal in the Fock basis, and not only for the diagonal $\rho_{\operatorname{diag}}(t) $, we partition $\rho$ as
\[ \rho= \begin{pmatrix} X_0(t) & \rho_{21}(t)^\dagger \\ \rho_{21}(t) & X_1(t) \end{pmatrix} \]  and a self-adjoint diagonal operator $S^{(k)}(t):=\operatorname{diag}\left(S_1^{(k)}(t),S_2^{(k)}(t) \right)$ where 
$S_1^{(k)}(t):= \Pi_{N_0(t)}(aa^{\dagger})^{k/4}\Pi_{N_0(t)}$ and $S_2^{(k)}(t):=(I-\Pi_{N_0(t)})(aa^{\dagger})^{k/4}(I-\Pi_{N_0(t)}).$ This implies that 
 \[T^{(k)}:=S^{(k)}\rho S^{(k)} = \begin{pmatrix} S_1^{(k)}(t)X_0(t) S_1^{(k)}(t) & \left(S_1^{(k)}(t) \rho_{21}(t) S_2^{(k)}(t)\right)^\dagger \\ S_1^{(k)}(t) \rho_{21}(t) S_2^{(k)}(t) & S_2^{(k)}(t) \rho_{22}(t) S_2^{(k)}(t) \end{pmatrix}.\]
Let then $S_1^{(k)}(t):= \Pi_{N_0(t)}(aa^{\dagger})^{k/4}\Pi_{N_0(t)}$ and $S_2^{(k)}(t):=(I-\Pi_{N_0(t)})(aa^{\dagger})^{k/4}(I-\Pi_{N_0(t)}).$ The previous Lemma \ref{lemm:nice} then shows that 
\[ \Vert S_1^{(k)}(t) \rho_{21} S_2^{(k)}(t)\Vert_1 \le \frac{1}{2} \left( \Vert S_1^{(k)}(t) \rho_{11} S_1^{(k)}(t) \Vert_1 + \Vert S_2^{(k)}(t) \rho_{22} S_2^{(k)}(t) \Vert_1 \right) \,.\]
From here, we examine three cases separately:
\begin{itemize}
\item Case 1: $\Vert T_{11}^{(k_1)} \Vert_1 \ge \Vert T_{22}^{(k_1)} \Vert_1.$
In this case, we find from choosing $X_0:=\rho_{21}$ and $X_1:=0$ in \eqref{eq:K}
\[ K(t,\rho_{21}) \le \left\lVert \rho_{21}(t) \right\rVert_{\mathcal{W}^{k_0,1}(\cH_1)} \le \left\lVert X_0(t) \right\rVert_{\mathcal{W}^{k_0,1}(\cH_1)} . \] 

\item Case 2: $\Vert T_{22}^{(k_0)} \Vert_1 \ge \Vert T_{11}^{(k_0)} \Vert_1$
In this case, we find from choosing $X_0:=0$ and $X_1:=\rho_{21}$ in \eqref{eq:K}
\[ K(t,\rho_{21}) \le t \left\lVert \rho_{21} \right\rVert_{\mathcal{W}^{k_1,1}(\cH_1)} \le t \left\lVert X_1(t) \right\rVert_{\mathcal{W}^{k_1,1}(\cH_1)}. \] 

\item Case 3: $\Vert T_{22}^{(k_0)} \Vert_1 \le \Vert T_{11}^{(k_0)} \Vert_1$ and $\Vert T_{22}^{(k_1)} \Vert_1 \ge \Vert T_{11}^{(k_1)} \Vert_1$
In this case, we find from choosing $X_0=\rho_{21}/2$ and $X_1= \rho_{21}/2$ in \eqref{eq:K} that
\begin{equation}
\begin{split}
 K(t,\rho_{21}) 
 &\le  \frac{ \left\lVert \rho_{21} \right\rVert_{\mathcal{W}^{k_0,1}(\cH_1)} +t  \left\lVert \rho_{21} \right\rVert_{\mathcal{W}^{k_1,1}(\cH_1)}}{2} \\
 &\le \frac{1}{2} \left( \left\lVert X_0(t)  \right\rVert_{\mathcal{W}^{k_0,1}(\cH_1)}+t \left\lVert X_1(t)  \right\rVert_{\mathcal{W}^{k_1,1}(\cH_1)} \right). 
 \end{split}
\end{equation}

\end{itemize}

Hence, we have altogether that
\begin{equation*}
\begin{split}
K(t,\rho(t)) 
&\le K(t,\rho_{\operatorname{diag}}(t)) + 2  K(t,\rho_{21}(t)) \\
&\le 3 \left( \left\lVert X_0(t) \right\rVert_{\mathcal{W}^{k_0,1}(\cH_1)}+t\left\lVert X_1(t) \right\rVert_{\mathcal{W}^{k_1,1}(\cH_1)} \right)
\end{split}
\end{equation*}
which implies that 
\[ \|\rho\|_{(\mathcal{W}^{k_0,1}(\cH_1),\mathcal{W}^{k_1,1}(\cH_1))_\theta}\le 3 \Vert \rho \Vert_{\mathcal{W}^{(1-\theta)k_0+\theta k_1,1}(\cH_1)}\,.\]

The result follows from the interpolation bound (\ref{eq:interpol}).

\end{proof}

\section{Standard moments vs phase space moments: a partial converse}
\label{sec:partconv}

We now show that at least for even integers $k$, the existence of $k^{\text{th}}$ order phase space moments implies the existence of standard moments of the same order.
\begin{thm}
Let $\rho$ be an $m$-mode quantum state such that its characteristic function $\chi_{\rho}$ is $2k$ times totally differentiable at $z=0$ for some integer $k$, then the $2k^{\text{th}}$ standard moment is finite as well.
\label{thm:converse}
\end{thm}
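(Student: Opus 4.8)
The plan is to reduce the statement to the \emph{classical} converse theorem \cite[Theorem~1.8.16]{Ushakov}, exactly as the forward direction was reduced to classical harmonic analysis, by exploiting the positivity of the Wigner function of a self-convolution (Lemma~\ref{positive W lemma}). The key normalisation throughout is $\chi_{\rho\boxplus\rho}(w)=\chi_\rho(w/\sqrt2)^2$, i.e.\ the $n=2$ case of \eqref{chi CH state}. I would argue by induction on $k$, so that at stage $k$ all standard moments of order $2k'<2k$ may be assumed finite. The restriction to \emph{even} order is essential and enters at the classical step, where only even symmetric differences of the characteristic function produce the nonnegative integrand ($\propto \sin^{2k}$) required for Fatou's lemma.

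First I would reduce to $\omega\coloneqq\rho\boxplus\rho$. Since $\chi_\rho$ is $2k$ times totally differentiable at $0$ and $\chi_\rho(0)=1$, the product $\chi_\rho(\cdot/\sqrt2)^2=\chi_\omega$ is $2k$ times totally differentiable at $0$ as well. By Lemma~\ref{positive W lemma}, $W_\omega\ge 0$, so $\chi_\omega$ is the (classical) characteristic function of a random vector $X$ on $\RR^{2m}\cong\CC^m$ with probability density $W_\omega$. Total differentiability of order $2k$ forces each pure partial $\partial_{z_{Rj}}^{2k}\chi_\omega(0)$ and $\partial_{z_{Ij}}^{2k}\chi_\omega(0)$ to exist, so the one-dimensional classical converse applied to each marginal of $X$ yields $\int z_{Rj}^{2k}\,W_\omega\,d^{2m}z<\infty$ and likewise for $z_{Ij}$. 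Next I convert these phase-space moments into operator moments: a single quadrature carries no ordering ambiguity, so the elementary identity used in the proof of Proposition~\ref{prop:Ludovico} (reduce to a pure state and use $\int d^mu_I\,W_{\psi_f}(u)=\sqrt2\,|f(\sqrt2\,u_R)|^2$) gives, for even exponents, $\int z_{Rj}^{2k}\,W_\omega\,d^{2m}z=2^{-k}\tr[\omega\,x_j^{2k}]$ and similarly for $p_j$. Hence $\tr[\omega\,x_j^{2k}],\ \tr[\omega\,p_j^{2k}]<\infty$ for every $j$.

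I would then transfer these moments from $\omega$ to $\rho$. Applying the beam-splitter relation \eqref{BS action creation} with $\lambda=1/2$ gives $U_{1/2}^\dagger x_j\, U_{1/2}=\tfrac1{\sqrt2}\big(x_j^{(a)}+x_j^{(b)}\big)$ on the two-mode system, so that, since the two modes commute,
\begin{align*}
\tr[\omega\,x_j^{2k}] = 2^{-k}\sum_{l=0}^{2k}\binom{2k}{l}\,\tr[\rho\,x_j^{2k-l}]\,\tr[\rho\,x_j^{l}]\,.
\end{align*}
The two extreme terms ($l=0,2k$) each equal $2^{-k}\tr[\rho\,x_j^{2k}]$, while every intermediate term is a product of moments of order $<2k$, which are finite by the inductive hypothesis (odd orders being controlled by even ones through H\"older's inequality for the spectral measure of $x_j$ in $\rho$). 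Solving for the extreme contribution shows $\tr[\rho\,x_j^{2k}]<\infty$, and likewise $\tr[\rho\,p_j^{2k}]<\infty$.

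Finally I would pass from quadrature moments to $M_{2k}(\rho)=\tr[\rho\,H_m^k]$. Writing $A_j\coloneqq\tfrac12(x_j^2+p_j^2)$, the operators $A_j$ commute across modes, so $(H_m+\tfrac m2)^k=(\sum_j A_j)^k\le m^{k-1}\sum_j A_j^k$ by convexity of $t\mapsto t^k$ under the joint functional calculus; combined with $H_m^k\le(H_m+\tfrac m2)^k$, it remains only to bound each $A_j^k$. The crux is therefore the single-mode operator inequality
\begin{align*}
\Big(\tfrac{x^2+p^2}{2}\Big)^k \le C_k\big(x^{2k}+p^{2k}\big)+C_k'\,I\,,
\end{align*}
the reverse of Lemma~\ref{messy inequality lemma}; granting it, $\tr[\rho\,A_j^k]\le C_k\big(\tr[\rho\,x_j^{2k}]+\tr[\rho\,p_j^{2k}]\big)+C_k'<\infty$ for each $j$, whence $M_{2k}(\rho)<\infty$. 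I expect this inequality to be the main obstacle. For $k=1$ it is immediate from $x^2+p^2=2a^\dagger a+I$, and for $k=2$ it follows from $(x^2-p^2)^2\ge0$; for general $k$ I would establish it exactly as in Lemma~\ref{messy inequality lemma}, namely by verifying diagonal dominance (Gershgorin) in the Fock basis of $C_k(x^{2k}+p^{2k})+C_k'I-(a a^\dagger)^k$, using that the diagonal entries of $x^{2k}+p^{2k}$ grow like $n^{k}$ (dominating the diagonal $(n+1)^k$ of $(aa^\dagger)^k$ once $C_k$ is large) while the finitely many off-diagonal bands, nonvanishing only for $|n-n'|\le 2k$, are comparably sized.
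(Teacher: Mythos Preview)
Your approach is correct and the argument goes through, but it takes a genuinely longer route than the paper's. The paper avoids the detour through $\omega=\rho\boxplus\rho$ altogether: for each quadrature $H_{\operatorname{lin}}^{+}=a+a^\dagger$ (and $H_{\operatorname{lin}}^{-}=-i(a-a^\dagger)$) it observes that the restriction
\[
\varphi(t)\coloneqq\tr\!\big[\rho\,e^{itH_{\operatorname{lin}}^{\pm}}\big]
\]
is \emph{already} the classical characteristic function of a real random variable, namely the spectral measure of $H_{\operatorname{lin}}^{\pm}$ in the state $\rho$ (by the spectral theorem applied to the self-adjoint operator $H_{\operatorname{lin}}^{\pm}$). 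Since the hypothesis gives $2k$-fold differentiability of $\chi_\rho$ at $0$, the one-variable function $\varphi$ inherits this, and the classical Fatou-type argument (which the paper spells out rather than citing \cite{Ushakov}) yields $\tr\!\big[\rho\,(H_{\operatorname{lin}}^{\pm})^{2k}\big]<\infty$ directly for $\rho$ itself. In other words, positivity of the Wigner function is not needed here; the spectral theorem does the job of producing a bona fide probability measure along each coordinate direction. This bypasses the induction you need to transfer the quadrature moments from $\omega$ back to $\rho$ via the binomial expansion of $(x_j^{(a)}+x_j^{(b)})^{2k}$.

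Both routes end at the same place: finiteness of $\tr[\rho\,x_j^{2k}]$ and $\tr[\rho\,p_j^{2k}]$, from which one must still deduce $\tr[\rho\,H_m^k]<\infty$. The paper is brief here (``integration by parts and standard estimates''), while you propose the reverse Gershgorin argument analogous to Lemma~\ref{messy inequality lemma}; your plan for that step is sound and arguably more explicit than what the paper provides.
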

\begin{proof}
For simplicity, we restrict attention to $m=1$. Let $H_{\operatorname{lin}}^{+}\coloneqq a + a^{\dagger}$ and $H_{\operatorname{lin}}^{+}\coloneqq (-i)( a - a^{\dagger})$ be two Hamiltonians, and consider the spectral decomposition of the density operator $\rho = \sum_{i=1}^{\infty} \lambda_i \ketbra{e_i}$. Then, there exist unique probability measures $\mu_{e_i}$ such that 
\bbb
\braket{e_i| f(H_{\operatorname{lin}}^{\pm}) |e_i} =\int_{\sigma(H_{\operatorname{lin}}^{\pm})} f(\lambda) \  d \mu_{e_i}(\lambda) \text{ for all } f \text{ bounded measurable.}
\eee

We then define the new probability measure $\mu_{\rho}\coloneqq \sum_{i=1}^{\infty} \lambda_i \mu_{e_i}$ such that 
\bbb
\tr\left[ \rho\, f(H_{\operatorname{lin}}^{\pm}) \right] = \int_{\sigma(H_{\operatorname{lin}}^{\pm} )} f(\lambda) \  d\mu_{\rho}(\lambda)\text{ for all } f \text{ bounded measurable.}
\eee

We now proceed with an induction argument. Start by noting that for $k=0$ the result holds. For $k \ge 1$, define the auxiliary function $\varphi:\mathbb{R} \rightarrow \mathbb{C}$ as
\bbb
\varphi(t)\coloneqq \tr\left[\rho\,e^{it H_{\operatorname{lin}}^{\pm}}\right] , 
\eee
which is by assumption $2k$ times differentiable at zero and let $u(t)=\Re \varphi(t)$. Then, $u$ is also $2k$ times differentiable at zero. Since $\varphi^{2k}(0)$ exists, for $t \in (-\varepsilon, \varepsilon)$, with sufficiently small $\varepsilon>0$, the function $t \mapsto \varphi^{(2k-1)}(t)$ exists and is continuous.

We record that Taylor's formula implies that for $t \in (-\varepsilon,\varepsilon)$ 
\bbb
\left\lvert u(t)-\sum_{i=0}^{k-1}u^{(2i)}(0) \frac{t^{2i}}{(2i)!} \right\rvert \le \frac{\vert t \vert^{2k-1}}{(2k-1)!} \sup_{\theta \in (0,1]} \left\lvert u^{(2k-1)}(\theta t) \right\rvert,
\eee
where odd derivatives vanish at zero, since $u$ is even.

We then define a positive continuous function $f_k: \RR \rightarrow [0,\infty)$ with $f_k(0)=1$ and for $t\neq 0$ as
\bbb
f_k(t)\coloneqq (-1)^k (2k)!\, t^{-2k} \left(\cos(t)-\sum_{i=0}^{k-1}(-1)^i \frac{t^{2i}}{(2i)!} \right).
\eee
From Taylor's formula above we obtain the following estimate for $t$ sufficiently small
\begin{equation}
\begin{split} \tr\left[\rho\,f_k(tH_{\operatorname{lin}}^{\pm}){H_{\operatorname{lin}}^{\pm}}^{2k}\right] &= \int_{\sigma(H_{\operatorname{lin}}^{\pm})} f_k(t\lambda)\lambda^{2k} d\mu_{\rho}(\lambda)  \\
&= \frac{(-1)^k(2k)!}{t^2k} \left(u(t)-\sum_{i=0}^{k-1}(-1)^i \frac{\tr\left[\rho\,(tH_{\operatorname{lin}}^{\pm})^{2i}\right]}{(2i)!} \right)  \\
&\le 2k \sup_{\theta \in (0,1]} \frac{\vert u^{2k-1}(\theta t) \vert}{ \theta \vert t \vert} \eqqcolon g_k(t)
\end{split}
\end{equation}

Then, we have from Fatou's lemma
\begin{equation}
\begin{split}
\tr\left[\rho\, \big(H_{\operatorname{lin}}^{\pm}\big)^{2k}\right] 
&= \tr\left[\rho\,f_k(0) \big(H_{\operatorname{lin}}^{\pm}\big)^{2k}\right] \\
&\le \liminf_{t \downarrow 0}\int_{\sigma(H_{\operatorname{lin}}^{\pm})} f_k(t\lambda)\lambda^{2k} d\mu_{\rho}(\lambda) \\
&= \liminf_{t \downarrow 0} g_k(t) = 2k \vert u^{2k}(0) \vert < \infty.
\end{split}
\end{equation}
Using integration by parts and standard estimates only, it is straightfroward to verify that the finiteness of both $\tr\left[\rho\,H_{{\operatorname{lin}}^{\pm}}^{2k}\right]$ implies the finiteness of $\tr\left[\rho (aa^{\dagger})^{k}\right]$.
\end{proof}

\end{appendix}

\bibliography{biblio}

\begin{thebibliography}{10}

\bibitem{Feller}
W.~Feller.
\newblock {\em An introduction to probability theory and its applications},
  volume~II.
\newblock John Wiley \& Sons, Inc., New York-London-Sydney, 2nd edition, 1971.

\bibitem{Cushen1971}
C.D. Cushen and R.L. Hudson.
\newblock A quantum-mechanical {C}entral {L}imit {T}heorem.
\newblock {\em J. Appl. Probab.}, 8(3):454--469, 1971.

\bibitem{HL73-1}
K.~Hepp and E.H. Lieb.
\newblock Phase-transitions in reservoir-driven open systems with applications
  to lasers and superconductors.
\newblock {\em Helv. Phys. Acta}, 46(5):573--603, 1974.

\bibitem{HL73-2}
K.~Hepp and E.H. Lieb.
\newblock On the superradiant phase transition for molecules in a quantized
  radiation field: the {D}icke {M}aser model.
\newblock {\em Ann. Phys.}, 76(2):360--404, 1973.

\bibitem{GvW78}
N.~Giri and W.~von Waldenfels.
\newblock An algebraic version of the {C}entral {L}imit {T}heorem.
\newblock {\em Z. Wahrscheinlichkeitstheorie verw. Gebiete}, 42(2):129--134,
  1978.

\bibitem{estat-1}
D.~Goderis and P.~Vets.
\newblock Central {L}imit {T}heorem for mixing quantum systems and the
  {CCR}-algebra of fluctuations.
\newblock {\em Commun. Math. Phys.}, 122(2):249--265, 1989.

\bibitem{estat-1bis}
D.~Goderis, A.~Verbeure, and P.~Vets.
\newblock {\em About the mathematical theory of quantum fluctuations}.
\newblock Leuven University Press, Leuven, Belgium, 1989.

\bibitem{estat-2}
T.~Matsui.
\newblock Bosonic {C}entral {L}imit {T}heorem for the one-dimensional {XY}
  model.
\newblock {\em Rev. Math. Phys.}, 14(07n08):675--700, 2002.

\bibitem{ne-stat-1}
M.~Cramer and J.~Eisert.
\newblock A quantum {C}entral {L}imit {T}heorem for non-equilibrium systems:
  exact local relaxation of correlated states.
\newblock {\em New J. Phys.}, 12(5):055020, 2010.

\bibitem{ne-stat-2}
D.~Goderis, A.~Verbeure, and P.~Vets.
\newblock About the exactness of the linear response theory.
\newblock {\em Commun. Math. Phys.}, 136(2):265--283, 1991.

\bibitem{ne-stat-3}
V.~Jak{\v{s}}i{\'{c}}, Y.~Pautrat, and C.-A. Pillet.
\newblock {C}entral {L}imit {T}heorem for locally interacting {F}ermi gas.
\newblock {\em Commun. Math. Phys.}, 285(1):175--217, 2009.

\bibitem{Arous2013}
G.B. Arous, K.~Kirkpatrick, and B.~Schlein.
\newblock A central limit theorem in many-body quantum dynamics.
\newblock {\em Commun. Math. Phys.}, 321(2):371--417, 2013.

\bibitem{fern2015equivalence}
F.G.S.L. Brand{\~a}o and Cramer M.
\newblock Equivalence of statistical mechanical ensembles for non-critical
  quantum systems, 2015.

\bibitem{Brandao-Gour}
F.G.S.L. Brand{\~a}o and G.~Gour.
\newblock Reversible framework for quantum resource theories.
\newblock {\em Phys. Rev. Lett.}, 115:070503, 2015.

\bibitem{qft-1}
J.~Derezi\'{n}ski.
\newblock Boson free fields as a limit of fields of a more general type.
\newblock {\em Rep. Math. Phys.}, 21(3):405--417, 1985.

\bibitem{qft-2}
R.F. Streater.
\newblock Entropy and the {C}entral {L}imit {T}heorem in quantum mechanics.
\newblock {\em J. Phys. A}, 20(13):4321--4330, 1987.

\bibitem{qft-3}
T.~Michoel and B.~Nachtergaele.
\newblock Central {L}imit {T}heorems for the large-spin asymptotics of quantum
  spins.
\newblock {\em Probab. Theory Relat. Fields}, 130(4):493--517, 2004.

\bibitem{Goderis1989}
D.~Goderis, A.~Verbeure, and P.~Vets.
\newblock Non-commutative central limits.
\newblock {\em Probab. Theory Related Fields}, 82(4):527--544, 1989.

\bibitem{Jak10}
V.~Jak{\v{s}}i{\'{c}}, Y.~Pautrat, and C.-A. Pillet.
\newblock A quantum central limit theorem for sums of independent identically
  distributed random variables.
\newblock {\em J. Math. Phys.}, 51(1):015208, 2010.

\bibitem{free}
D.V. Voiculescu, K.J. Dykema, and A.~Nica.
\newblock {\em Free Random Variables}.
\newblock CRM Monograph Series. American Mathematical Society, 1992.

\bibitem{accardi1994quantum}
L.~Accardi and Y.G. Lu.
\newblock Quantum central limit theorems for weakly dependent maps. {II}.
\newblock {\em Acta Math. Hungar.}, 63(3):249--282, 1994.

\bibitem{qit-2}
M.~Hayashi.
\newblock {\em Quantum Information: An Introduction}.
\newblock Springer Berlin Heidelberg, 2006.

\bibitem{qit-1}
M.~Hayashi.
\newblock Quantum estimation and the {Q}uantum {C}entral {L}imit theorem.
\newblock {\em American Mathematical Society Translations Series},
  2(227):95--123, 2009.

\bibitem{Campbell2013}
E.T. Campbell, M.G. Genoni, and J.~Eisert.
\newblock Continuous-variable entanglement distillation and noncommutative
  central limit theorems.
\newblock {\em Phys. Rev. A}, 87:042330, 2013.

\bibitem{Lenczewski1995}
R.~Lenczewski.
\newblock {\em Quantum Central Limit Theorems}, pages 299--314.
\newblock Springer US, Boston, MA, 1995.

\bibitem{DD}
A.~Dimi and B.~Daki.
\newblock On the central limit theorem for unsharp quantum random variables.
\newblock {\em New J. Phys.}, 20, 2018.

\bibitem{Davies1969}
E.B. Davies.
\newblock Quantum stochastic processes.
\newblock {\em Commun. Math. Phys.}, 15(4):277--304, 1969.

\bibitem{G-dilatable}
L.~Lami, K.K. Sabapathy, and A.~Winter.
\newblock All phase-space linear bosonic channels are approximately {G}aussian
  dilatable.
\newblock {\em New J. Phys.}, 20(11):113012, 2018.

\bibitem{Carolan2015}
J.~Carolan, C.~Harrold, C.~Sparrow, E.~Mart{\'\i}n-L{\'o}pez, N.J. Russell,
  J.W. Silverstone, P.J. Shadbolt, N.~Matsuda, M.~Oguma, M.~Itoh, G.D.
  Marshall, M.G. Thompson, J.C.F. Matthews, T.~Hashimoto, J.L.
  O{\textquoteright}Brien, and A.~Laing.
\newblock Universal linear optics.
\newblock {\em Science}, 349(6249):711--716, 2015.

\bibitem{Rohde2015}
P.P. Rohde and J.P. Dowling.
\newblock The on-ramp to the all-optical quantum information processing
  highway.
\newblock {\em Science}, 349(6249):696--696, 2015.

\bibitem{Holevo-energy-constrained}
A.S. Holevo and M.E. Shirokov.
\newblock Continuous ensembles and the capacity of infinite-dimensional quantum
  channels.
\newblock {\em Theory Probab. Its Appl.}, 50(1):86--98, 2006.

\bibitem{Mark-energy-constrained}
M.M. Wilde and H.~Qi.
\newblock Energy-constrained private and quantum capacities of quantum
  channels.
\newblock {\em IEEE Trans. Inf. Theory}, 64(12):7802--7827, 2018.

\bibitem{tightuniform}
A.~Winter.
\newblock Tight uniform continuity bounds for quantum entropies: conditional
  entropy, relative entropy distance and energy constraints.
\newblock {\em Commun. Math. Phys.}, 347(1):291--313, 2016.

\bibitem{winter2017energy}
A.~Winter.
\newblock Energy-constrained diamond norm with applications to the uniform
  continuity of continuous variable channel capacities.
\newblock {\em Preprint arXiv:1712.10267}, 2017.

\bibitem{shirokov2017tight}
M.E. Shirokov.
\newblock Tight uniform continuity bounds for the quantum conditional mutual
  information, for the {H}olevo quantity, and for capacities of quantum
  channels.
\newblock {\em J. Math. Phys.}, 58(10):102202, 2017.

\bibitem{shirokov2018energy}
M.E. Shirokov.
\newblock On the energy-constrained diamond norm and its application in quantum
  information theory.
\newblock {\em Problems Inform. Transmission}, 54(1):20--33, 2018.

\bibitem{Giovadd}
V.~Giovannetti, R.~Garc{\'\i}a-Patr{\'o}n, N.J. Cerf, and A.S. Holevo.
\newblock Ultimate classical communication rates of quantum optical channels.
\newblock {\em Nat. Photonics}, 8(10):796--800, 2014.

\bibitem{Giovadd-CMP}
V.~Giovannetti, A.S. Holevo, and R.~Garc{\'i}a-Patr{\'o}n.
\newblock A solution of {G}aussian optimizer conjecture for quantum channels.
\newblock {\em Commun. Math. Phys.}, 334(3):1553--1571, 2015.

\bibitem{holwer}
A.S. Holevo and R.F. Werner.
\newblock Evaluating capacities of bosonic {G}aussian channels.
\newblock {\em Phys. Rev. A}, 63:032312, 2001.

\bibitem{PLOB}
S.~Pirandola, R.~Laurenza, C.~Ottaviani, and L.~Banchi.
\newblock Fundamental limits of repeaterless quantum communications.
\newblock {\em Nat. Commun.}, 8(1):15043, 2017.

\bibitem{MMMM}
M.M. Wilde, M.~Tomamichel, and M.~Berta.
\newblock Converse bounds for private communication over quantum channels.
\newblock {\em IEEE Trans. Inf. Theory}, 63(3):1792--1817, 2017.

\bibitem{Rosati2018}
M.~Rosati, A.~Mari, and V.~Giovannetti.
\newblock Narrow bounds for the quantum capacity of thermal attenuators.
\newblock {\em Nat. Commun.}, 9(1):4339, 2018.

\bibitem{Sharma2018}
K.~Sharma, M.M. Wilde, S.~Adhikari, and M.~Takeoka.
\newblock Bounding the energy-constrained quantum and private capacities of
  phase-insensitive bosonic {G}aussian channels.
\newblock {\em New J. Phys.}, 20(6):063025, 2018.

\bibitem{Noh2019}
K.~Noh, V.V. Albert, and L.~Jiang.
\newblock Quantum capacity bounds of {G}aussian thermal loss channels and
  achievable rates with {G}ottesman-{K}itaev-{P}reskill codes.
\newblock {\em IEEE Trans. Inf. Theory}, 65(4):2563--2582, 2019.

\bibitem{Noh2020}
K.~Noh, S.~Pirandola, and L.~Jiang.
\newblock Enhanced energy-constrained quantum communication over bosonic
  {G}aussian channels.
\newblock {\em Nat. Commun.}, 11(1):457, 2020.

\bibitem{schmuedgen}
K.~Schmuedgen.
\newblock {\em Unbounded Self-adjoint Operators on Hilbert Space}.
\newblock Springer, Springer Science + Business Media Dordrecht, 2012.

\bibitem{Umegaki1962}
H.~Umegaki.
\newblock {Conditional expectation in an operator algebra. IV. Entropy and
  information}.
\newblock {\em Kodai Math. Sem. Rep.}, 14(2):59--85, 1962.

\bibitem{Lindblad1973}
G.~Lindblad.
\newblock Entropy, information and quantum measurements.
\newblock {\em Commun. Math. Phys.}, 33(4):305--322, 1973.

\bibitem{Holevo1979}
A.S. Holevo.
\newblock On capacity of a quantum communications channel.
\newblock {\em Probl. Pered. Inform.}, 15(4):3--11, 1979.
\newblock (English translation: Probl. Inf. Transm. 15(4):247--253).

\bibitem{proto-HSW}
P.~Hausladen, R.~Jozsa, B.~Schumacher, M.~Westmoreland, and W.K. Wootters.
\newblock Classical information capacity of a quantum channel.
\newblock {\em Phys. Rev. A}, 54:1869--1876, Sep 1996.

\bibitem{H-Schumacher-Westmoreland}
B.~Schumacher and M.D. Westmoreland.
\newblock Sending classical information via noisy quantum channels.
\newblock {\em Phys. Rev. A}, 56:131--138, 1997.

\bibitem{Holevo-S-W}
A.S. Holevo.
\newblock The capacity of the quantum channel with general signal states.
\newblock {\em IEEE Trans. Inf. Theory}, 44(1):269--273, 1998.

\bibitem{Lloyd-S-D}
S.~Lloyd.
\newblock Capacity of the noisy quantum channel.
\newblock {\em Phys. Rev. A}, 55:1613--1622, 1997.

\bibitem{L-Shor-D}
P.~Shor.
\newblock Lecture notes.
\newblock {MSRI Workshop on Quantum Computation}, 2002.

\bibitem{L-S-Devetak}
I.~Devetak.
\newblock The private classical capacity and quantum capacity of a quantum
  channel.
\newblock {\em IEEE Trans. Inf. Theory}, 51(1):44--55, 2005.

\bibitem{GOSSON}
M.A. de~Gosson.
\newblock {\em Symplectic Geometry and Quantum Mechanics}.
\newblock Operator Theory: Advances and Applications. Birkh{\"a}user Basel,
  2006.

\bibitem{BUCCO}
A.~Serafini.
\newblock {\em Quantum Continuous Variables: A Primer of Theoretical Methods}.
\newblock CRC Press, Taylor \& Francis Group, 2017.

\bibitem{HOLEVO}
A.S. Holevo.
\newblock {\em Probabilistic and Statistical Aspects of Quantum Theory}.
\newblock Publications of the Scuola Normale Superiore. Scuola Normale
  Superiore, 2011.

\bibitem{BARNETT-RADMORE}
S.~Barnett and P.M. Radmore.
\newblock {\em Methods in Theoretical Quantum Optics}.
\newblock Oxford Series in Optical and Imaging Sciences. Clarendon Press, 2002.

\bibitem{Zygmund1947}
A.~Zygmund.
\newblock A remark on characteristic functions.
\newblock {\em Ann. Math. Statist.}, 18(2):272--276, 06 1947.

\bibitem{Ushakov}
N.G. Ushakov.
\newblock {\em Selected Topics in Characteristic Functions}.
\newblock Modern Probability and Statistics. de Gruyter, 2011.

\bibitem{Koenig2014}
R.~K\"{o}nig and G.~Smith.
\newblock The entropy power inequality for quantum systems.
\newblock {\em IEEE Trans. Inf. Theory}, 60(3):1536--1548, 2014.

\bibitem{bergh2012interpolation}
J.~Bergh and J.~L{\"o}fstr{\"o}m.
\newblock {\em Interpolation spaces: an introduction}, volume 223.
\newblock Springer Science \& Business Media, 2012.

\bibitem{KK-VV}
K.K. Sabapathy and A.~Winter.
\newblock Non-{G}aussian operations on bosonic modes of light: Photon-added
  {G}aussian channels.
\newblock {\em Phys. Rev. A}, 95:062309, 2017.

\bibitem{OBrien2009}
J.L. O{\textquoteright}Brien, A.~Furusawa, and J.~Vu{\v c}kovi{\'c}.
\newblock Photonic quantum technologies.
\newblock {\em Nat. Photonics}, 3(12):687--695, 2009.

\bibitem{Politi2009}
A.~Politi, J.C.F. Matthews, M.G. Thompson, and J.L. O'Brien.
\newblock Integrated quantum photonics.
\newblock {\em IEEE J. Sel. Top. Quantum Electron.}, 15(6):1673--1684, 2009.

\bibitem{Folland}
G.B. Folland.
\newblock {\em Harmonic Analysis in Phase Space}.
\newblock Princeton University Press, 1989.

\bibitem{Jagannathan1987}
R.~Jagannathan, R.~Simon, E.C.G. Sudarshan, and R.~Vasudevan.
\newblock Dynamical maps and nonnegative phase-space distribution functions in
  quantum mechanics.
\newblock {\em Phys. Lett. A}, 120(4):161--164, 1987.

\bibitem{durrett_2019}
R.~Durrett.
\newblock {\em Probability: Theory and Examples}.
\newblock Cambridge Series in Statistical and Probabilistic Mathematics.
  Cambridge University Press, 5 edition, 2019.

\bibitem{VV1999}
A.~Winter.
\newblock Coding theorem and strong converse for quantum channels.
\newblock {\em IEEE Trans. Inf. Theory}, 45(7):2481--2485, 1999.

\bibitem{willy}
J.~Williamson.
\newblock On the algebraic problem concerning the normal forms of linear
  dynamical systems.
\newblock {\em Am. J. Math.}, 58(1):141--163, 1936.

\bibitem{ABRAMOWITZ}
M.~Abramowitz and I.A. Stegun.
\newblock {\em Handbook of Mathematical Functions: With Formulas, Graphs, and
  Mathematical Tables}.
\newblock Applied mathematics series. Dover Publications, 1965.

\bibitem{lindeberg1922neue}
J.W. Lindeberg.
\newblock Eine neue {H}erleitung des {E}xponentialgesetzes in der
  {W}ahrscheinlichkeitsrechnung.
\newblock {\em Math. Z.}, 15(1):211--225, 1922.

\bibitem{bryc1993remark}
W.~Bryc.
\newblock A remark on the connection between the large deviation principle and
  the central limit theorem.
\newblock {\em Statist. Probab. Lett.}, 18(4):253--256, 1993.

\bibitem{bhattacharya1986normal}
R.N. Bhattacharya and R.R. Rao.
\newblock {\em Normal approximation and asymptotic expansions}, volume~64.
\newblock SIAM, 1986.

\bibitem{extendibility}
L.~Lami, S.~Khatri, G.~Adesso, and M.M. Wilde.
\newblock Extendibility of bosonic {G}aussian states.
\newblock {\em Phys. Rev. Lett.}, 123:050501, 2019.

\bibitem{Devetak-Shor}
I.~Devetak and P.W. Shor.
\newblock The capacity of a quantum channel for simultaneous transmission of
  classical and quantum information.
\newblock {\em Commun. Math. Phys.}, 256(2):287--303, 2005.

\bibitem{Caruso2006}
F.~Caruso, V.~Giovannetti, and A.S. Holevo.
\newblock One-mode bosonic {G}aussian channels: a full weak-degradability
  classification.
\newblock {\em New J. Phys.}, 8(12):310--310, 2006.

\bibitem{Wolf2007}
M.M. Wolf, D.~P{\'e}rez-Garc{\'\i}a, and G.~Giedke.
\newblock Quantum capacities of bosonic channels.
\newblock {\em Phys. Rev. Lett.}, 98:130501, 2007.

\bibitem{Gershgorin}
S.~Ger{\v{s}}gorin.
\newblock {{\"{U}}ber die Abgrenzung der Eigenwerte einer Matrix}.
\newblock {\em Izv. Akad. Nauk. S.S.S.R.}, 7:749--754, 1931.

\bibitem{VARGA}
R.S. Varga.
\newblock {\em Ger\v{s}gorin and his circles}.
\newblock Springer Science \& Business, 2010.

\bibitem{bourin2012unitary}
J.-C. Bourin and E.-Y. Lee.
\newblock Unitary orbits of hermitian operators with convex or concave
  functions.
\newblock {\em Bulletin of the London Mathematical Society}, 44(6):1085--1102,
  2012.

\end{thebibliography}

\end{document}